\documentclass[11pt]{article}

\usepackage[T5]{fontenc}

\usepackage{amsmath,amssymb,amsthm,algpseudocode,float,caption,mathtools,thmtools,thm-restate,scalerel,stackengine,verbatim}
\usepackage{algorithm,algpseudocode}
\usepackage[shortlabels]{enumitem}

\stackMath
\newcommand\reallywidehat[1]{%
\savestack{\tmpbox}{\stretchto{%
  \scaleto{%
      \scalerel*[\widthof{\ensuremath{#1}}]{\kern-.6pt\bigwedge\kern-.6pt}%
          {\rule[-\textheight/2]{1ex}{\textheight}}
            }{\textheight}%
            }{0.5ex}}%
            \stackon[1pt]{#1}{\tmpbox}%
            }
            \parskip 1ex
 
\usepackage[margin=1in]{geometry}
\usepackage{multirow}
\usepackage{subcaption}
\usepackage{afterpage}
\usepackage{bm}
\usepackage{tikz}
\usetikzlibrary{decorations.pathreplacing, positioning}

\usepackage{bbm}
\usepackage{hyperref}
\usepackage{nameref}

\newcommand{\wt}[1]{\widetilde{#1}}

\newcommand{\wh}[1]{\widehat{#1}}
\newcommand{\abs}[1]{\left|#1\right|}
\newcommand{\eps}{\varepsilon}

\newcommand{\Nbb}{\mathbb{N}}

\newcommand{\norm}[1]{\left\lVert#1\right\rVert}

\newcommand{\floor}[1]{\left\lfloor#1\right\rfloor}

\DeclarePairedDelimiter\brac{\lbrack}{\rbrack}
\DeclarePairedDelimiter\set{\lbrace}{\rbrace}
\DeclarePairedDelimiter\paren{\lparen}{\rparen}

\newcommand{\E}[2][]{\operatorname*{\mathbb{E}}_{#1 }\brac*{#2}}
\newcommand{\Pb}[2][]{\operatorname*{Pr}_{#1 }\brac*{#2}}
\newcommand{\Var}[2][]{\operatorname*{\normalfont{\text{Var}}}_{#1 }\paren*{#2}}
\newcommand{\bO}[1]{\operatorname*{O}\paren*{#1}}
\newcommand{\bOt}[1]{\operatorname*{\wt{O}}\paren*{#1}}
\newcommand{\bOm}[1]{\operatorname*{\Omega}\paren*{#1}}

\newcommand{\bTh}[1]{\operatorname*{\Theta}\paren*{#1}}
\newcommand{\lO}[1]{\operatorname*{o}\paren*{#1}}

\DeclareMathOperator*{\1bb}{\mathbbm{1}}

\newcommand{\Ab}{\mathbf{A}}
\newcommand{\Bb}{\mathbf{B}}
\newcommand{\Db}{\mathbf{D}}
\newcommand{\Sb}{\mathbf{S}}
\newcommand{\bb}{\mathbf{b}}
\newcommand{\Tb}{\mathbf{T}}
\newcommand{\Xb}{\mathbf{X}}
\newcommand{\Yb}{\mathbf{Y}}
\newcommand{\Zb}{\mathbf{Z}}
\newcommand{\Hb}{\mathbf{H}}
\newcommand{\Ub}{\mathbf{U}}
\newcommand{\Fb}{\mathbf{F}}
\newcommand{\Cb}{\mathbf{C}}
\newcommand{\vb}{\mathbf{v}}

\newcommand{\rb}{\mathbf{r}}
\newcommand{\tb}{\mathbf{t}}
\newcommand{\cb}{\mathbf{c}}

\newcommand{\gb}{\mathbf{g}}
\newcommand{\Ob}{\mathbf{O}}
\newcommand{\Mb}{\mathbf{M}}
\newcommand{\Qb}{\mathbf{Q}}

\newcommand{\Ac}{\mathcal{A}}
\newcommand{\Hc}{\mathcal{H}}
\newcommand{\Uc}{\mathcal{U}}
\newcommand{\Sc}{\mathcal{S}}
\newcommand{\Ec}{\mathcal{E}}
\newcommand{\Kc}{\mathcal{K}}
\newcommand{\Bc}{\mathcal{B}}

\newcommand{\Fc}{\mathcal{F}}
\newcommand{\Gc}{\mathcal{G}}
\newcommand{\Dc}{\mathcal{D}}

\newcommand{\Gbc}{\mathbf{\mathcal{G}}}
\newcommand{\Fbc}{\mathbf{\mathcal{F}}}

\newcommand{\bPi}{\boldsymbol{\Pi}}
\newcommand{\bpi}{\boldsymbol{\pi}}
\newcommand{\bsigma}{\boldsymbol{\sigma}}
\newcommand{\bbeta}{\boldsymbol{\beta}}

\newcommand{\bsim}{\boldsymbol{\sim}}

\DeclareMathOperator*{\bi}{\text{Bi}}

\newcommand{\ColComp}{\normalfont{\textproc{CollectComponent}}}
\newcommand{\ColCompIdeal}{\normalfont{\textproc{CollectComponentIdeal}}}
\DeclareMathOperator{\CountComp}{\normalfont{\textproc{CountComponents}}}
\DeclareMathOperator{\FindComp}{\normalfont{\textproc{FindComponent}}}

\newcommand{\SC}{\normalfont{\textsc{StreamingCycles}}}

\DeclareMathOperator{\plog}{\normalfont{polylog}}

\newcommand{\Ibb}{\mathbb{I}}

\newcommand{\return}{\textbf{return}~}

\newcommand{\zpp}{\mathbb{Z}_+^{\lbrace  \rbrace}}

\renewcommand{\next}{\text{next}}

\newcommand{\grow}{\textbf{Grow}}

\newcommand{\e}{{\varepsilon}}
\newcommand{\bool}{{\{0, 1\}}}
\newcommand{\R}{{\mathbb{R}}}

\newcommand{\expect}{{\mathbb E}}
\newcommand{\F}{{\mathcal{F}}}

\DeclareMathOperator{\poly}{poly}

\newtheorem{theorem}{Theorem}[section]
\newtheorem*{theorem*}{Theorem}
\newtheorem{lemma}[theorem]{Lemma}
\newtheorem{definition}[theorem]{Definition}
\newtheorem*{definition*}{Definition}
\newtheorem*{lemma*}{Lemma}

\newtheorem*{corollary*}{Corollary}
\newtheorem{claim}[theorem]{Claim}
\newtheorem*{claim*}{Claim}

\newcommand{\nth}{^\text{th}}

 
\newenvironment{proofof}[1]{\noindent{\bf Proof of #1:}}{$\qed$\par}

{\makeatletter
	\gdef\xxxmark{%
		\expandafter\ifx\csname @mpargs\endcsname\relax 
		\expandafter\ifx\csname @captype\endcsname\relax 
		\marginpar{xxx}
		\else
		xxx 
		\fi
		\else
		xxx 
		\fi}
	\gdef\xxx{\@ifnextchar[\xxx@lab\xxx@nolab}
	\long\gdef\xxx@lab[#1]#2{{\bf [\xxxmark #2 ---{\sc #1}]}}
	\long\gdef\xxx@nolab#1{{\bf [\xxxmark #1]}}
}

\makeatletter
\let\orgdescriptionlabel\descriptionlabel
\renewcommand*{\descriptionlabel}[1]{%
  \let\orglabel\label
  \let\label\@gobble
  \phantomsection
 \edef\@currentlabel{#1\unskip}%
  \let\label\orglabel
  \orgdescriptionlabel{#1}%
}
\makeatother

\title{Factorial Lower Bounds for (Almost) Random Order Streams}
\date{}
\author{Ashish Chiplunkar\\IIT Delhi \and John Kallaugher\\Sandia National Labs \and Michael Kapralov\\EPFL \and Eric Price\\UT Austin}

\begin{document}
\pagenumbering{gobble}
\hypersetup{pageanchor=false}
\maketitle

\begin{abstract}
In this paper we introduce and study the \textsc{StreamingCycles} problem, a
random order streaming version of the Boolean Hidden Hypermatching problem that
has been instrumental in streaming lower bounds over the past decade. In this
problem the edges of a graph $G$, comprising $n/\ell$ disjoint length-$\ell$
cycles on $n$ vertices, are partitioned randomly among $n$ players. Every edge
is annotated with an independent uniformly random bit, and the players' task is
to output, for some cycle in $G$, the sum (modulo $2$) of the bits on its
edges, after one round of sequential communication.

Our main result is an $\ell^{\Omega(\ell)}$ lower bound on the communication
complexity of \textsc{StreamingCycles}, which is tight up to constant factors
in the exponent. Applications of our lower bound for \textsc{StreamingCycles}
include an essentially tight lower bound for component collection in (almost)
random order graph streams, making progress towards a conjecture of Peng and
Sohler [SODA'18] and the first exponential space lower bounds for random walk
generation. 
\end{abstract}

\newpage
\tableofcontents
\newpage

\hypersetup{pageanchor=true}
\pagenumbering{arabic}

\section{Introduction}

The streaming model of computation has been a core model for small space
algorithms that process large datasets since the foundational paper
of~\cite{AlonMS96} showed how to (approximately) compute basic statistics of
large datasets using space only polylogarithmic in the size of the input. While
many such stream statistics are tractable to calculate in this
model---including frequency moments, number of distinct elements, and heavy
hitters---for graphs of $n$ edges even some basic problems are known to require
space $\Omega(n)$ in streams in which the edge arrival order is chosen
adversarially. At the same time, a recent line of work on testing graph
properties~\cite{KapralovKS14,CormodeJMM17,MonemizadehMPS17,PS18,KapralovMNT20,CzumajF0S20}
shows that when the edges of the input graph are presented in a uniformly
random order, one can approximate many fundamental graph properties (e.g.,
matching size, number of connected components, constant query testable
properties in bounded degree graphs) in polylogarithmic or even constant space.

These algorithms make use of small space graph exploration primitives, including:

\paragraph{Component collection.} Given a graph $G=(V, E)$ presented as a
random order stream and a budget $k$, collect the connected components of a
representative sample of vertices of $G$ assuming that many vertices in $G$
belong to components of size bounded by $k$\footnote{We note that our
definition of component collection here is not formal by design. This is
because our lower bound applies to a very weak formal version of component
collection, in which one is promised that a constant fraction of vertices of
the input graph belong to components of size at most $k$, and the task is to
output one such vertex (see Definition~\ref{def:comp-collection} in
Section~\ref{sec:lb}). Our algorithmic results, on the other hand, solve a
stronger version of the problem, that of outputting a `representative sample'
of vertices of the graph together with components that they belong to (as long
as those are of size bounded by $k$), in any graph. In particular, our
algorithmic primitive naturally leads to an algorithm that additively
approximates the number of connected components in the input graph, as
in~\cite{PS18}---see Section~\ref{sec:comp}.}. The work of~\cite{PS18} designs
a component collection primitive that uses $k^{O(k^3)}$ space, and then uses it
to obtain an additive $\e n$-approximation to the number of connected
components in $G$.

\paragraph{Random walk generation.}
Given a graph $G=(V, E)$ presented as a random order stream, a target walk
length $k$ and a budget $s$, generate a sample of (close to) independent random
walks of length $k$ from $s$ vertices in $G$ selected uniformly at random. The work of~\cite{KallaugherKP21} designs a primitive that outputs such walks
(with constant TVD distance to the desired distribution, say), using space
$2^{O(k^2)} s$.

\paragraph{$k$-disc estimation.}
Given a bounded degree graph $G=(V, E)$ presented as a random order stream, an
integer $k$, estimate $k$-disk frequencies\footnote{A $k$-disc is the subgraph induced by vertices at shortest path distance at most $k$ from a given vertex, and the set of $k$-disc frequencies corresponds to the numbers of occurrences of all such graphs, up to isomorphism. }.  The work of~\cite{MonemizadehMPS17} designs a primitive for estimating $k$-disc
frequencies and then uses it to show that any constant query testable property
of bounded degree graphs is random order streamable. 

The above primitives
perform depth-$k$ exploration in random order graph streams using space
exponential in $k$. Our work is motivated by the natural question:
\begin{quote}
Does depth-$k$ exploration in random order streams require space exponential in $k$?
\end{quote}
This question was originally raised by~\cite{PS18}, who wrote
\begin{quote}
{\em ...it will also be interesting to obtain lower bounds for random order streams. It seems to be plausible to conjecture that approximating the number of connected components requires space exponential in $1/\e$. It would be nice to have lower bounds that confirm this conjecture.}
\end{quote}

In this paper we make progress towards this conjecture, showing that this
dependence is indeed necessary, at least in graph streams that admit some
amount of correlation. Our lower bound is based on a new communication problem
that we refer to as the \SC{} problem, a relative of the well-studied Boolean
Hidden Hypermatching problem. We introduce the \SC{} problem next, then give
reductions from component collection and random walk generation. Our reduction
generates instances of the component collection and random walk generation
problem that are not quite random order streams, but rather allow for small
{\em batches} of edges as opposed to edges themselves to arrive in a random
order. We argue that this is in fact a very natural {\em robust} analog of the
idealized random order streaming model, and show that corresponding random
order streaming algorithms extend to the batch random order setting. Finally,
we give an overview of our lower bound, which is the main technical
contribution of the paper.

\paragraph{The {\normalfont\textsc{StreamingCycles}} problem.} Our main technical contribution is a tight lower bound for the \SC{} problem, which we now define. In an instance of $\SC(n, \ell)$ a graph $G = (V,E)$ made up of $n/\ell$
length-$\ell$ cycles is received as a stream of edges $e$ with bit labels
$x_e$. There are $n$ players, indexed by the edges of $G$. Upon arrival of an edge $e$ the corresponding bit label $x_e$ is given to the corresponding player as private input, together with a message from the previous player. The edges $e$ are posted on a common board as the edges arrive.
 The last player must return a vertex $v \in V$ and
the \emph{parity} of the cycle $C \subseteq E$ containing $v$, i.e.\ $\sum_{e
\in C} x_e$. We consider a distributional version of the problem, in which the bits $x_e$ are chosen independently and uniformly at random, and the ordering of the edges in the stream is uniformly random.

A na\"{\i}ve protocol for the \SC{} problem is for the players to track the
connected component of a vertex $u\in G$. This strategy succeeds if and only if
the edges of the component arrive ``in order'', which happens
with probability $\ell^{-\Theta(\ell)}$. Thus, it suffices to track
$\ell^{O(\ell)}$ vertices, which results in an $\ell^{O(\ell)}$ communication
per player protocol. Our main result shows that this is essentially best
possible:

\begin{theorem}[Main result; informal version of
Theorem~\ref{thm:cycleslb}]\label{thm:main-inf}
Any protocol for the \SC$(n, \ell)$ problem that succeeds with $2/3$
probability requires $\min(\ell^{\Omega(\ell)}, n^{0.99})$ bits of
communication from some player.  
\end{theorem}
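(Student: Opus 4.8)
The plan is to prove the contrapositive: by Yao's minimax principle fix a deterministic protocol, let $M$ be its maximum message length, and suppose $M<\min(\ell^{c\ell},n^{0.99})$ for a small absolute constant $c>0$; we will show the protocol errs with probability more than $1/3$. It is worth noting at the outset that the hard distribution must also randomise the cycle partition (equivalently, give the players vertex handles revealing adjacency but not which of the $n/\ell$ cycles a vertex lies on) --- otherwise a player could XOR the labels of the now-identifiable edges of a fixed cycle into one bit and win in $O(1)$ space. So let $G$ be the random partition, $\pi$ the random arrival order, $x\in\bool^n$ the random edge bits, all independent, and write $m_t$ for the message after $t$ edges: a function of the first $t$ edges (with endpoints), their order, and their bits, hence independent of the cycle membership of edges not yet seen. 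The parities $p_C:=\bigoplus_{e\in C}x_e$ are i.i.d.\ uniform, independent of $(G,\pi)$, so a priori the players know nothing about any $p_C$ and can only learn by accumulating the bits of $C$'s edges as they stream by.

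I would then reduce success to a counting statement. Writing the output as a vertex $v$ with a bit $b$, and letting $C^*$ be $v$'s cycle (so $C^*,b$ are determined by $(m_n,G,\pi)$ and success is $b=p_{C^*}$), a union bound gives $\Pb{\text{success}}=\sum_C\Pb{C^*=C\wedge b=p_C}$. The goal is to attach to each cycle an event $\mathrm{Capt}(C)$ (``the transcript captured $C$'') such that: (i) on $\neg\mathrm{Capt}(C)$, the bit $p_C$ is statistically close to a fair coin conditioned on $(m_n,G,\pi)$, giving $\Pb{C^*=C\wedge b=p_C}\le\tfrac12\Pb{C^*=C}+\Pb{\mathrm{Capt}(C)}+\eps_C$ with $\sum_C\eps_C=o(1)$; and (ii) $\E{N}<1/12$, where $N:=\#\{C:\mathrm{Capt}(C)\}$. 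Summing (i) over the $n/\ell$ cycles and using (ii) yields $\Pb{\text{success}}\le\tfrac12+\E{N}+o(1)<2/3$, a contradiction.

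I expect (i) --- and the correct definition of $\mathrm{Capt}$ --- to be the technical heart, and would establish it through an \emph{ideal process}. At each time the revealed part of a cycle is a disjoint union of paths, the \emph{fragments}, which are born as isolated edges and then grow and merge. The decisive feature is that when a new edge merges two hitherto-separate fragments $F_1,F_2$, the players learn only then that $F_1$ and $F_2$ lie on a common cycle, so the protocol could not have planned to combine the partial parities $\bigoplus_{F_1}x$ and $\bigoplus_{F_2}x$ --- unless it was already holding both. In the ideal process one gives each fragment a fresh uniform virtual parity at birth, updates it by the real bit at each extension, and at each merge either (when a suitable statistic of $m_{t-1}$ certifies that both parents' virtual parities are recoverable from the transcript) sets the child's to their correct combination, or else resamples it; $\mathrm{Capt}(C)$ is the event that at every time during $C$'s lifetime the transcript holds the virtual parity of every fragment of $C$ present at that time. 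In the ideal process $p_C$ --- which there equals the virtual parity of the final fragment, namely $C$ --- is then exactly uniform and independent of $(m_n,G,\pi)$ on $\neg\mathrm{Capt}(C)$, and a hybrid over the at most $n$ merges should show the joint law of $(m_n,G,\pi,x)$ moves by only $o(1)$ in total variation, transferring (i) to the real process. The delicate point is that the protocol is arbitrary and may hedge across an enormous family of overlapping partial structures at once, with no leverage beyond the $M$-bit bottleneck; one needs --- via a Parseval/entropy estimate --- that an $M$-bit summary cannot be non-trivially correlated with more than $O(M)$ of the mutually independent (because pairwise disjoint) fragment parities, sharply enough that the $\le n$ per-merge errors sum to $o(1)$ once $M=\ell^{o(\ell)}$. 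This is the main obstacle; the remaining pieces are comparatively routine.

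For (ii) --- where the factorial enters --- one writes $\E{N}=\sum_C\Pb{\mathrm{Capt}(C)}$ and argues that $\mathrm{Capt}(C)$ forces the transcript to have \emph{committed} to $C$ from the arrival of $C$'s first edge (an edge's bit is private and vanishes after its step), and that such a commitment pays off only if $C$'s $\ell$ edges arrive in one of the $\ell\cdot 2^{\ell-2}$ ``connected'' orders --- each new edge extending the current arc at one of its two ends, of probability $\ell\cdot 2^{\ell-2}/\ell!=\ell^{-\Theta(\ell)}$ --- or else all arrive within a window of $O(M)$ steps, of probability at most $n\,(M/n)^{\ell}$ per cycle; a short combinatorial argument shows bets spanning several simultaneously-held fragments fare no better. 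Since an $M$-bit message affords $O(M)$ such ``slots'', and a slot committed to a cycle is occupied for that cycle's whole lifetime (typically $\Theta(n)$ positions) and so completes $O(1)$ attempts over the stream, $\E{N}\le O(M)\bigl(\ell^{-\Theta(\ell)}+n\,(M/n)^{\ell}\bigr)$. Using $M<\ell^{c\ell}$ on the first term and $M<n^{0.99}$ on the second --- only one of which is the operative constraint, depending on the size of $\ell$ --- this is below $1/12$ for a suitable $c$ (and for all but bounded $\ell$, where the claimed bound is vacuous anyway); the $\min$ in the theorem is exactly this crossover between the two regimes.
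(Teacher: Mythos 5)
Your high-level picture is the right one --- Yao's principle, a hidden random cycle partition, a per-cycle accounting in which ``single-seed'' cycles cost $\ell^{-\Theta(\ell)}$ and the $n^{0.99}$ regime comes from a separate mechanism --- and it matches the paper's intuition for the na\"{\i}ve sampling protocol. But the proposal has a genuine gap exactly where you flag ``the main obstacle,'' and it is not a routine Parseval estimate. A binary event $\mathrm{Capt}(C)$ cannot correctly model what an arbitrary $M$-bit message knows: a protocol can hold \emph{partial} and \emph{correlated} information about fragment parities (e.g.\ the XOR of two fragments' parities without either individually, or a small bias toward many parities at once), and the dichotomy ``both parents' virtual parities are recoverable / resample'' has no sound definition that simultaneously makes (i) true on $\neg\mathrm{Capt}$ and keeps $\E{N}$ small. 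The paper replaces your binary capture by the continuous quantity $\wt{\Fb}_t(z)^2=\E[x\mid \Mb_t]{(-1)^{z\cdot x}}^2$ and tracks the total mass $\Hb^t_\beta=\sum_{z\sim_t\beta}\wt{\Fb}_t(z)^2$ over all collections of fragments of each type $\beta$; success with probability $2/3$ forces $\E{\Hb^n_{\{\ell\}}}\ge 1/432$, and the whole proof is showing $\E{\Hb^n_{\{\ell\}}}=o(1)$. Your proposed hybrid over merges would, to be made rigorous, have to control exactly these quantities, and the tool that does so is not generic entropy/Parseval but (a) hypercontractivity in the form $\sum_{|z|=k}\wt{f}(z)^2\le (O(c)/k)^k$, and (b) a ``double-ended'' decomposition $\wt{\Fb}_t(z)=\E[\Fb_s\mid\Fb_t,\Bb_t]{\wt{\Fb}_s(z_{\le s})\,\wt{\rb}(z_{[s+1:t]};\Fb_s,\Fb_t)}$ that lets one multiply bottleneck bounds across the interleaved arrival of bits and structure. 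Without an analogue of (b), the fact that the protocol's later behaviour depends on the revealed fragments breaks any attempt to treat the per-merge errors as independent and sum them.

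The second, related gap is in step (ii). Capture does \emph{not} force the cycle's edges to arrive in a connected order: the paper's multi-seed cycles are captured by holding $k$ disjoint fragment parities that merge later, and bounding this contribution is not ``a short combinatorial argument'' --- it is the bulk of Lemma~\ref{lem:hbetabound}, which runs a recursion over the down-set of component types, needs sharp bounds on merge probabilities conditioned on the observed board (proved via a swap-graph argument on the set of permutations consistent with the board), and a separate treatment of the last $\approx n^{0.99}$ arrivals where merges become likely. Your bound $\E{N}\le O(M)\bigl(\ell^{-\Theta(\ell)}+n(M/n)^{\ell}\bigr)$ has the right shape for the sampling protocol, but the multiplicative ``$O(M)$ slots'' accounting presumes the message decomposes into independent single-fragment commitments, which is precisely what a general protocol need not do. So the proposal is a correct heuristic but not a proof; the missing ideas are the Fourier-mass recursion on collection types and the two lemmas (decomposition and KKL-type level-$k$ bounds) that drive it.
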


\paragraph{Relation to the Boolean Hidden Hypermatching problem.} We note that
this is related to the search version of the Boolean Hidden Hypermatching
problem, in which a vector $x\in \bool^n$ is given to Alice, who sends a single
message to Bob. Bob, in addition to the message from Alice, is given a perfect
hypermatching with hyperedges of size $\ell$ and must output the parity of $x$
on one of the hyperedges. The Boolean Hidden Hypermatching problem admits a
protocol with $O(n^{1-1/\ell})$ communication (Alice simply sends Bob the
values of $x$ on a uniformly random subset of coordinates of size
$n^{1-1/\ell}$), and this bound is tight. Note that in the \SC{} problem the
bits of $x$ are associated with edges in the cycles, and therefore every cycle
can naturally be associated with a hyperedge in Boolean Hidden Hypermatching
problem. In contrast to the Boolean Hidden Hypermatching problem, in which the
bits are presented first and then the hyperedges are revealed, in the \SC{}
problem the bits and the identities of the hyperedges are gradually revealed to
the algorithm. Similarly to the Boolean Hidden Hypermatching problem, a
`sampling' protocol turns out to be nearly optimal. The na\"{\i}ve protocol
mentioned above, and considered in more detail in Section~\ref{sec:overview},
solves \SC$(n, \ell)$ using $\bO{\ell!}$ samples, and we prove a nearly
matching lower bound of $\min\set{\ell^{\Omega(\ell)}, n^{0.99}}$ bits.

\paragraph{Applications to component collection and random walk generation.} We now give a natural way for the players to produce a graph stream based on their inputs to the communication game:

\begin{itemize}
\item Define the vertex set $V' = V \times \bool$.
\item On receiving the $t^{\text{th}}$ edge $(uv, b_{uv})$, insert edges $(u,0)(v,b_{uv})$ and
$(u,1)(v,\overline{b_{uv}})$ into the stream.
\end{itemize}

In other words, every edge of the graph $G$ in the \SC{} problem becomes a pair of edges in $G'$. The order in which edges of $G'$ are presented is not quite random as {\em pairs of edges} as opposed to individual edges arrive in a uniformly random order.

We refer to such streams, in which {\em batches} of edges arrive uniformly at
random in the stream as opposed to edges themselves, as  {\em (hidden-)batch}
random order streams. Note that the reduction above generates a stream with
batches of size two, corresponding to the pairs of edges arriving at the same
time. We argue in Section~\ref{sec:hidden-batch} below that the batched model,
in which arrival times of edges could be correlated, but the correlations are
restricted by bounded size batches, is a very natural {\em robust} analog of
the idealized uniformly random streaming model. In particular, we show that,
surprisingly, some existing random order streaming algorithms for estimating
graph properties are quite robust, and can be made to work even when the
structure of the batches is not known to the algorithm, i.e.\ in the {\em
hidden-batch} random order model. 

Using the reduction above together with Theorem~\ref{thm:main-inf}, we get
\begin{theorem}[Component collection lower bound; informal version of Theorem~\ref{thm:compestlb}]\label{thm:comp-lb}
Component collection  requires $k^{\Omega(k)}$ bits of space in hidden-batch random order streams.
\end{theorem}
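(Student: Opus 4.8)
The plan is to reduce $\SC(n,\ell)$ to component collection with budget $k := \ell$ via the ``double cover'' construction sketched above, and then invoke Theorem~\ref{thm:main-inf}. Suppose $\mathcal{A}$ is an algorithm solving component collection in hidden-batch random order streams with budget $k$ using space $s$. Given an instance of $\SC(n,\ell)$ with $\ell=k$, the players run $\mathcal{A}$ on the derived stream $G'=(V',E')$ with $V' = V\times\bool$: upon receiving $(uv,b_{uv})$, the current player feeds the two edges $(u,0)(v,b_{uv})$ and $(u,1)(v,\overline{b_{uv}})$ into $\mathcal{A}$ and passes $\mathcal{A}$'s memory state (plus a running index, $O(\log n)$ extra bits) to the next player. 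Each player has exactly the data it needs to do this, and since the edges of $G$ arrive in uniformly random order, $G'$ is a hidden-batch random order stream with batches of size $2$; hence $\mathcal{A}$'s guarantee applies, and the communication from each player is $s + O(\log n)$.

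Next I would pin down the component structure of $G'$ by the standard voltage/lift argument: traversing a length-$\ell$ cycle $C$ of $G$ starting from $(v,0)$ flips the second coordinate by the running sum of the bits along $C$, so one trip around $C$ ends at $(v,\sum_{e\in C}x_e)$. Therefore a parity-$0$ cycle of $G$ lifts to two vertex-disjoint $\ell$-cycles in $G'$, while a parity-$1$ cycle lifts to a single $2\ell$-cycle; since $G$ is a disjoint union of its cycles, these lifts do not interact. Consequently every connected component of $G'$ has size either $\ell$ or $2\ell$, so a component of size at most $k=\ell$ is exactly the lift of a parity-$0$ cycle of $G$. Hence, when $\mathcal{A}$ outputs a vertex $(v,c)$ lying in a component of size $\le k$, the players output $v$ together with parity $0$, which is a correct answer to $\SC(n,\ell)$: the cycle of $G$ containing $v$ is public once all edges have appeared on the board.

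It then remains to verify that the derived instances meet the promise of Definition~\ref{def:comp-collection} with high probability and to track success probabilities. Since the bits $x_e$ are independent and uniform, each of the $n/\ell$ cycles of $G$ is parity-$0$ independently with probability $1/2$, so a Chernoff bound gives that, with probability $1-o(1)$, at least a $1/3$ fraction of the $2n$ vertices of $G'$ lie in components of size $\ell\le k$; on this event $\mathcal{A}$ returns a vertex in a size-$\le k$ component with probability $\ge 2/3$. Thus the protocol solves $\SC(n,\ell)$ with probability $2/3-o(1)$, which for $n$ large is a constant bounded away from $1/2$; the lower bound of Theorem~\ref{thm:main-inf} is insensitive to the precise constant (or one amplifies), so it yields $s+O(\log n)\ge \min(\ell^{\Omega(\ell)},n^{0.99}) = \min(k^{\Omega(k)},n^{0.99})$. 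Taking the number of vertices $N=2n$ of the constructed graph to be at least $k^{Ck}$ for a suitable constant $C$, this gives $s\ge k^{\Omega(k)}$, as claimed.

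The reduction itself is short; the real work is Theorem~\ref{thm:main-inf}, which we assume. Within this argument the only points needing care are (i) confirming that batched double-cover streams are legitimate \emph{hidden}-batch random order streams, so that the lower bound genuinely targets the robust model in which the paper's algorithms operate, and (ii) handling the fact that the component-collection promise holds only with high probability rather than always, which is what forces the $o(1)$ slack in the success probability. Neither is a serious obstacle, but (i) is the conceptual crux: it is the reason the statement is phrased for hidden-batch streams, and it is precisely the feature that lets a correlation-free, Boolean-Hidden-Hypermatching-style lower bound say something meaningful about (almost) random order streams.
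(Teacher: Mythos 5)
Your reduction is correct and is essentially the paper's own: the double cover $V\times\bool$, the observation that a parity-$0$ cycle lifts to two $\ell$-cycles while a parity-$1$ cycle lifts to one $2\ell$-cycle, and reading the parity off the reported component size. The only (harmless) deviation is that you take budget $k=\ell$ and decode only the parity-$0$ case, paying a Chernoff bound for the promise, whereas the paper decodes both parities from the two possible sizes and, in the formal version (Theorem~\ref{thm:compestlb}), sets $\ell'=\lfloor\ell/2\rfloor$ so that every component has size at most $\ell$ and the promise holds with $\beta=1$ deterministically.
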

\begin{proof}
If the component collection algorithm returns a vertex $(v,b)$ together with a component of size $\ell$ that contains $(v, b)$, return $v$ and $\text{parity} = 0$. If it returns $(v, b)$ and a component of size $2\ell$ containing $(v, b)$, return $v$ and $\text{parity} = 1$. Otherwise fail.
\end{proof}

The bound provided by Theorem~\ref{thm:comp-lb} is {\bf tight up to constant factors in the exponent}. We give an algorithm with $k^{O(k)}$ space complexity in Section~\ref{sec:comp}:
\begin{theorem}[Component collection upper bound; informal version of Theorem~\ref{theorem:compfinding}]
There exists a component collection algorithm in (hidden-batch) random order streams with space complexity $k^{O(k)}$ (words). 
\end{theorem}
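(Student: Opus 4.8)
The plan is to run many independent partial-component explorations in parallel and to analyze each one through a ``BFS-consistent arrival order'' argument. Model the stream by giving each edge (in the hidden-batch model, each batch, shared by its $\bO{1}$ edges) an independent uniform timestamp in $[0,1]$; the algorithm sees only the induced order. Sample a multiset $S$ of $s$ vertices of $G$ uniformly at random, and for each $v \in S$ run a copy of the subroutine \ColComp{} started from $v$: it maintains a partial component $P_v$, initialized to $\set{v}$; when an edge $uw$ arrives with exactly one endpoint (say $u$) in $P_v$, it adds $w$ and the edge $uw$ to $P_v$; when both endpoints are already in $P_v$ it merely records the edge; when neither is, it ignores the edge; and if $\abs{P_v}$ ever exceeds $k$ it marks $v$ as ``large'' and halts. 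At the end it outputs $(v, P_v)$ for every unmarked $v$. Each $P_v$ has at most $k+1$ vertices together with their incident edges, so it occupies $\bO{k}$ words, and the total space is $s\cdot\bO{k}$ words. In the hidden-batch setting the only change is that an edge deferred because neither endpoint is yet discovered is never revisited; we invoke the robustness framework of Section~\ref{sec:hidden-batch}, noting in addition that for the instances produced by our reduction the two edges of a batch lie in distinct components of $G'$, so no information is lost to such deferrals.

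\textbf{The key estimate.} We would show that $\Pb{P_v = C_v} \ge k^{-\bO{k}}$ for every $v$ whose component $C_v$ has $t := \abs{C_v} \le k$, the probability being over the random arrival order. It suffices to exhibit a single spanning tree $T$ of $C_v$ rooted at $v$ along whose every root-to-vertex path the edge timestamps are increasing: a short induction on depth shows that the exploration then discovers every vertex through its $T$-parent edge no later than that edge's arrival, so $P_v = C_v$ (and $\abs{P_v}\le t\le k$ throughout, so $v$ is never marked large). For a fixed $T$, this event is exactly that a uniformly random permutation of the $t-1$ tree edges is a linear extension of the forest order ``$e \prec e'$ iff $e$ lies on the root-to-$e'$ path''; by the hook-length formula for forests its probability equals $\prod_{e\in T}(\text{number of edges of }T\text{ at or below }e)^{-1} \ge t^{-t} \ge k^{-\bO{k}}$. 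Conversely $P_v \subseteq C_v$ always; and when $\abs{C_v} > k$ the exploration either detects this, by reaching size $k+1$, or else truncates to a small partial component, a case handled as in~\cite{PS18} by the rejection step below, under which it contributes negligibly.

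\textbf{From one exploration to a representative sample.} For a connected graph $H$ with $\abs{H}\le k$ and a vertex $r$ of $H$, let $q(H,r)$ be the probability that the exploration started from $r$ recovers all of $H$; this is a purely combinatorial quantity, computable offline, and $q(H,r)\ge \rho := k^{-\bO{k}}$ by the estimate above. Accepting each recovered component $\wh{C}$ (rooted at its sampled vertex $v$) only with probability $\rho/q(\wh{C},v)$ makes every $v$ with $\abs{C_v}\le k$ appear in the output with probability exactly $\rho$, independent of the shape of $C_v$; hence the accepted pairs form an unbiased $\rho$-thinned sample of the small-component vertices together with their components, which is the representative sample the primitive asks for. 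Choosing $s = \Theta(m_0/\rho)$ for the target sample size $m_0$ (polynomial in $k$ and the accuracy parameters, hence absorbed into $k^{O(k)}$) yields the claimed space bound of $k^{O(k)}$ words; feeding this primitive into the estimator of~\cite{PS18} then gives the additive $\e n$ approximation to the number of connected components discussed in Section~\ref{sec:comp}.

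\textbf{Main obstacle.} The crux is the probability estimate of the second step. The naive analysis charges every newly discovered vertex an independent $\poly(k)$ factor for the arrival position of each of its incident edges, losing $\poly(k)$ per edge and giving only $k^{-\bO{k^2}}$ (indeed $k^{-\bO{k^3}}$ in~\cite{PS18}); obtaining the tight $k^{-\bO{k}}$ requires the global linear-extension / hook-length viewpoint, together with the freedom to choose the spanning tree $T$ (for instance a BFS tree) so that no subtree is too large, which also matches the $k^{\Omega(k)}$ lower bound of Theorem~\ref{thm:comp-lb}. A secondary point is checking that the rejection step genuinely eliminates the shape-dependence and that truncated large components do not bias the downstream estimate.
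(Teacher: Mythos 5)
Your overall architecture (sample vertices, grow a component from each by keeping edges incident to the already-collected part, then reweight/reject by the inverse of the exact collection probability, with a $k^{-O(k)}$ single-spanning-tree lower bound on that probability) matches the paper's, and your hook-length estimate is a fine substitute for the paper's DFS-tree bound $p_v^\lambda(H)\ge(\lambda/k^2)^k$. But there is a genuine gap at precisely the step you label ``secondary'': controlling the contribution of \emph{falsely} collected subgraphs. Your exploration can terminate with $P_v=H\subsetneq C_v$ (e.g.\ on a path $v$--$a$--$b$ where $ab$ arrives before $va$, the edge $ab$ is discarded and never revisited), and this happens both when $C_v$ is large and when it is small. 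Your rejection step accepts such an $H$ with probability $\rho/q(H,v)$, i.e.\ it normalizes by the probability of collecting $H$ \emph{given that $H$ is the true component}, which says nothing about the probability of spuriously collecting $H$ as a proper subgraph. After the $1/q(H,v)=k^{O(k)}$ blow-up, these false positives are not negligible: without any time cutoff, the probability that the (possibly single) boundary edge of $H$ arrives before its inner endpoint is discovered is only $\poly(1/k)$, not $k^{-\omega(k)}$. Deferring to the handling in~\cite{PS18} does not rescue this, because their fix is exactly the step that forces $\lambda\le k^{-\Theta(k^2)}$ and hence $k^{O(k^3)}$ space; eliminating that loss is the main new content of the theorem. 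The paper's solution has two pieces you would need: (i) only accept components fully covered by a small time $\lambda=1/\poly(k)$, so that returning a wrong $H$ with $v$-boundary of size $a$ additionally forces all $a$ boundary batches to arrive before $\lambda$, contributing a factor $\lambda^a$ after normalization; and (ii) a combinatorial lemma (Lemma~\ref{lm:boundary}) showing there are at most $\binom{h+a}{a}$ candidate connected batch-subsets with boundary size $a$, so the total spurious expectation is $\sum_{h,a}\binom{h+a}{a}\lambda^a=O(\lambda k^4)$. The cutoff costs only $\lambda^k=k^{-O(k)}$ in the good case, preserving the target space bound.

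A smaller but still real issue is your treatment of hidden batches: you justify ignoring deferred batch-mates by appealing to the structure of the instances produced by the \emph{lower-bound} reduction, but Theorem~\ref{theorem:compfinding} is claimed for arbitrary $(b,w)$-hidden-batch streams. For general inputs the algorithm must buffer all edges within a window $w$ of the current timestamp and add whole guessed batches when any of their edges is picked up (this is where the $(b+wm)$ factor in the space bound comes from), and one must prove the real algorithm agrees with an idealized known-batch algorithm with high probability; this occupies Lemmas~\ref{lm:DconS}--\ref{lm:DapproxS} in the paper and cannot be waved away.
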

Similarly to the work of~\cite{PS18}, our component collection algorithm can be used to estimate the number of connected components to additive precision $\varepsilon n$. The space complexity of our estimation algorithm is $(1/\varepsilon)^{O(1/\varepsilon)}$, similarly improving upon on~\cite{PS18}. The details are provided in Section~\ref{sec:comp}.

Similarly, we obtain exponential lower bounds for the random walk generation problem:

\begin{theorem}[Random walk generation lower bound; informal version of Theorem~\ref{thm:rwlb}]
Generation of a random walk of length $k$ started at any vertex in a graph given as a hidden-batch random order stream requires $k^{\Omega(\sqrt{k})}$ space. Generation of $C\cdot 4^k$ random walks for a sufficiently large constant $C>0$ requires $k^{\Omega(k)}$ space.
\end{theorem}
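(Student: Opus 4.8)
The plan is to reduce $\SC(n,\ell)$ to random walk generation on the lifted graph $G'$, along the lines of the proof of Theorem~\ref{thm:comp-lb}. Recall that $G'$ has vertex set $V\times\bool$, that the $\SC$ players feed it as a hidden-batch random order stream whose batches are the pairs $\set{(u,0)(v,b_{uv}),(u,1)(v,\overline{b_{uv}})}$ obtained from the $\SC$-edges $(uv,b_{uv})$, and that a length-$\ell$ cycle $C$ of $G$ of parity $p_C=\sum_{e\in C}x_e$ lifts to two disjoint $\ell$-cycles if $p_C=0$ and to a single $2\ell$-cycle if $p_C=1$. Fix one cycle $C$ of $G$ and a vertex $u_0\in C$ (the last player can do this from the public board), and let $\wh{\jmath}(t)\in\Zbb$ be the $\Zbb$-lift of the projection to $G$ of a walk of $G'$ started at $(u_0,0)$. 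The one structural fact needed is: this walk visits the vertex $(u_0,1)$ at some time $t\le k$ if and only if $p_C=1$ and $\max_{t\le k}\abs{\wh{\jmath}(t)}\ge\ell$ --- for $p_C=0$ the vertex $(u_0,1)$ lies in the other $G'$-component, while for $p_C=1$ the walk lives on a $2\ell$-cycle on which $(u_0,1)$ is the antipode of $(u_0,0)$, first reached exactly when the lifted position hits $\pm\ell$. This is the monodromy computation already implicit in the proof of Theorem~\ref{thm:comp-lb}.

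Given any streaming algorithm for random walk generation using space $s$ with TVD error at most a small constant $\gamma$, the $\SC$ players run it on the $G'$-stream (passing its memory as the message), and the last player queries $m$ walks of length $k$ started at $(u_0,0)$, outputting $(u_0,1)$ if some queried walk visits $(u_0,1)$ and $(u_0,0)$ otherwise. Up to the TVD error the lifted projections of the queried walks are independent simple random walks on $\Zbb$, so, writing $q:=\Pb{\max_{t\le k}\abs{\wh{\jmath}(t)}\ge\ell}$ for one such walk, the failure probability is at most $(1-q)^m+\gamma$: there is no false positive when $p_C=0$, and a false negative when $p_C=1$ requires all $m$ walks to remain inside $(-\ell,\ell)$. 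By the standard streaming-to-communication reduction and Theorem~\ref{thm:main-inf}, taking $n$ a sufficiently large multiple of $\ell$ (say $n=\ell^{\Theta(\ell)}$) so that the $n^{0.99}$ branch is not binding, whenever $(1-q)^m+\gamma<1/3$ we conclude $s\ge\ell^{\Omega(\ell)}$.

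It remains to choose the parameters. For the bound with $C\cdot 4^k$ walks, take $k=\ell$ and $m=C\cdot 4^\ell$; using just the two ``straight'' walks gives $q=2^{1-\ell}$, hence $(1-q)^m\le e^{-qm}=e^{-2C 2^\ell}=o(1)$, so for $\gamma$ a small enough constant the protocol succeeds with probability $\ge2/3$, giving $s\ge\ell^{\Omega(\ell)}=k^{\Omega(k)}$. For the single-walk bound, take $m=1$ and $\ell=\floor{c\sqrt k}$ with $c>0$ a small constant: since $\max_{t\le k}\abs{\wh{\jmath}(t)}$ is typically of order $\sqrt k$, the reflection (or invariance) principle lets us pick $c$ small enough that $q\ge0.9$ for all large $k$, so the protocol errs with probability at most $0.1+\gamma<1/3$, giving $s\ge\ell^{\Omega(\ell)}=(\sqrt k)^{\Omega(\sqrt k)}=k^{\Omega(\sqrt k)}$. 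The argument is unchanged if the generator starts walks at uniformly random vertices instead: use, in place of $(u_0,1)$, the other endpoint of the fiber over the (algorithm-reported) start vertex.

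The only real content is the two hitting-probability estimates for $\max_{t\le k}\abs{\wh{\jmath}(t)}$ in the complementary regimes $\ell\approx\sqrt k$ and $\ell\approx k$, together with the observation that $C\cdot 4^k$ walks comfortably suffice in the second regime since $2^{1-\ell}\cdot4^\ell=2\cdot2^\ell\gg1$; everything else (the lift $G'$, its batch structure, the monodromy argument, the streaming-to-communication step) is inherited from the component-collection reduction. The one subtlety to track is that the reduction tolerates TVD slack $\gamma$ only below $1/3$ minus the ideal failure probability, which is why the lower bound is for walk generation at sufficiently small constant TVD.
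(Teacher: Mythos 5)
Your reduction is the same as the paper's: lift the $\SC(n,\ell)$ instance to $G'$, use the fact that a parity-$0$ cycle lifts to two $\ell$-cycles while a parity-$1$ cycle lifts to a single $2\ell$-cycle, and let the walk(s) distinguish the two cases; your hitting estimates (reflection principle for $\ell\approx c\sqrt k$, straight walks for $\ell\approx k$) play the same role as the paper's cover-time and straight-walk estimates. The single-walk bound is fine, including with a uniformly random start vertex: your detector (``did the walk visit the antipodal lift $(v,\overline{b})$ of its reported start $(v,b)$?'') has no false positives and, conditioned on the probed cycle having parity $1$, misses with probability at most $1-q\le 0.1$, so answering parity $0$ on a miss keeps the total error below $1/3$.

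The gap is in the $C\cdot 4^k$-walk bound when the walks start at uniformly random vertices, which is what Definition~\ref{def:rw-generation} (and hence Theorem~\ref{thm:rwlb}, part (2)) requires; the claim that ``the argument is unchanged'' does not hold there. Your amplification $(1-q)^m$ with $q=2^{1-\ell}$ needs all $m$ walks to probe the \emph{same} cycle. With uniformly random starts and $n\gg m$ each cycle is probed by $O(1)$ walks, so conditioned on no walk visiting its antipode, the cycle you are forced to name in the ``otherwise'' branch still has parity $1$ with probability close to $1/2$, and declaring parity $0$ for it fails the $2/3$ requirement. The fix---and what the paper's proof actually does by taking $\ell=k/2$ and testing whether a walk ``loops around'' its cycle---is to use a \emph{two-sided} detector: a walk certifies parity $0$ if it traverses a closed $\ell$-cycle through its start, and parity $1$ if it reaches $(v,\overline{b})$. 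A walk that steps in the same direction for $k$ consecutive steps triggers exactly one of these, whichever is correct for \emph{its own} cycle, with probability $2^{1-k}$; among $C4^k$ independent walks some walk does so with probability $1-o(1)$, and no default output is ever needed. With that one change (the fixed-common-start version of your argument is otherwise airtight) the proof is complete.
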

\begin{proof}
For the first lower bound, let $\ell=\sqrt{k/C}$ for a sufficiently large absolute constant $C$, so that $k=C\ell^2$. Generate a walk of length $k$  with precision $\e=1/10$ in total variation distance.  The walk loops around the cycle that it starts in with probability at least $2/3$. Let $(v, b)$ denote the starting vertex. If the cycle is of length $\ell$, output $v$ and $\text{parity} = 0$.  If the cycle is of length $2\ell$, output $v$ and $\text{parity}=1$.  Thus, random walk generation requires at least $\ell^{\Omega(\ell)}=k^{\Omega(\sqrt{k})}$ space.

For the second bound, let $\ell=k/2$ and run $C4^k=C2^{2\ell}$ random walks  of length $k$ started at uniformly random vertices, with precision $1/10$ in total variation distance (for the joint distribution), for a sufficiently large constant $C>0$. With probability at least $2/3$ at least one of the walks will loop around the cycle that it started in. Let $(v, b)$ denote the starting vertex.  If the cycle is of length $\ell$, output $v$ and $\text{parity} = 0$.  If the cycle is of length $2\ell$, output $v$ and $\text{parity}=1$.
\end{proof}

\paragraph{Boolean Fourier Analysis for Many-Player Games} Our lower bound for
\SC{} is based on the techniques of \emph{Boolean Fourier analysis}. The
application of these techniques to one-way communication complexity goes back
to~\cite{GavinskyKKRW07}, but previous applications have either involved two
players or at most a small number relative to the size of the input, meaning
that they can afford to lose factors polynomial or even exponential in the
player count. Our application involves $n$ players for an $\bOt{n}$-sized input,
requiring a careful consideration of how the Fourier coefficients associated
with the players' messages evolve as each player passes to the next. We give an
overview of these techniques in Sections~\ref{sec:warmup}
and~\ref{sec:overview}.

\subsection{Related work}
The random order streaming model has seen a lot of attention recently. Besides the aforementioned work of~\cite{PS18} that gives small space algorithms for component counting, small space approximations to matching size have been given in~\cite{KapralovKS14,CormodeJMM17,MonemizadehMPS17,KapralovMNT20} (naturally, the problem has also attracted significant attention in adversarial streams, but the space complexity of known algorithms in this model is significantly higher than in random order streams~\cite{EsfandiariHLMO15,BuryS15,AssadiKL17,V18,BuryGMMSVZ19,McGregorV16,ChitnisCEHMMV16,EsfandiariHM16}). The work of~\cite{MonemizadehMPS17} shows that constant query testable graph properties can be tested in constant space in random order streams in bounded degree graphs.

The Fourier-analytic approach to proving communication complexity lower bounds
pioneered by~\cite{GavinskyKKRW07} has been instrumental in lower bounds for
many graph problems, including cycle counting~\cite{VerbinY11}, estimating
MAX-CUT value~\cite{KapralovKS15,KapralovKSV17,KapralovK19} more general
CSPs \cite{GuruswamiVV17,GuruswamiT19,CGV20}, and subgraph
counting~\cite{KallaugherKP18}. Communication problems inspired by the Boolean
hidden matching problem (and therefore related to the \textsc{StreamingCycles}
problem that forms the basis of our lower bound) have also been recently used
to obtain lower bounds for multipass algorithms for several fundamental graph
streaming
problems~\cite{AssadiKL17,AssadiKSY20,DBLP:journals/corr/abs-2104-04908}. Lower
bounds for statistical estimation problems (e.g., distinct elements, frequency
moments and quantile estimation) in random order streams were given
in~\cite{ChakrabartiCM08,ChakrabartiJP08}.

\section{Warm-up: Boolean Hidden Hypermatching with interleaving}\label{sec:warmup}
Let $\Xb \in \{0,1\}^n$ be uniformly random, and revealed one bit at a
time to our algorithm.  Let our algorithm's state at time $t$ be
$\Mb_t$, which is at most $c$ bits long.  The Fourier-analytic lower
bound approach studies quantities corresponding to the following question: what
is known about the \emph{parity} of each set of bits at time $t$?  For the
indicator $z \in \{0, 1\}^t$ of any subset of bits that arrive before time $t$,
define 
\begin{align}\label{eq:Ft}
  \wt{\Fb}_t(z) \coloneqq \E[x \mid \Mb_t]{(-1)^{z \cdot x}}
\end{align}
which is $\pm 1$ if the corresponding parity is specified by the
message and $0$ if it is completely unknown.  One can think of
$\wt{\Fb}_t(z)^2 \in [0, 1]$ as an estimate of how well the parity $z$
is remembered at time $t$.  A method that stores $c$ individual bits
would have
\[
  \forall k \le c, \sum_{\abs{z} = k} \wt{\Fb}_t(z)^2 = \binom{c}{k},
\]
where $\abs{z}$ denotes the Hamming weight of $z$, because it remembers
exactly each subset of those bits.  The foundation of the
Fourier-analytic approach is that a similar bound typically holds for
\emph{any} protocol that generates $c$-bit messages $\Mb_t$:
\begin{align}\label{eq:KKL}
  \forall k \le c, \sum_{\abs{z} = k} \wt{\Fb}_t(z)^2 \leq \binom{O(c)}{k}
\end{align}
with very good probability over $\Mb_t$.  This inequality (Lemma~3
in~\cite{GavinskyKKRW07}) is a consequence of the hypercontractive inequality
(see Lemma~3.4 in~\cite{KKL88}).

In Boolean Hidden Hypermatching, one first receives the bits $x$ and
then receives $\bTh{n/\ell}$ ``important'' sets $z^{(i)}$, each of size
$\ell$.  Since the sets are uniform and independent of the message
$\Mb_t$ (and so, of $\wt{\Fb}_t$), the expected amount known about
them is
\begin{align}\label{eq:desiredBHH}
  \E[z^{(i)}]{ \sum_{i \in [n/\ell]} \wt{\Fb}_t(z^{(i)})^2} = (n/\ell)
  \frac{1}{\binom{n}{\ell}} \sum_{\abs{z} = \ell} \wt{\Fb}_t(z)^2  \leq
  (n/\ell) \frac{\binom{O(c)}{\ell}}{\binom{n}{\ell}} = (n/\ell)
  \left(\frac{O(c)}{n}\right)^\ell.
\end{align}
If $c \ll n^{1 - 1/\ell}$, this is $o(1)$ so the algorithm probably
does not remember any of the important parities.

The challenge we face in adapting this approach is that our important
sets (the components of the graph) are revealed over time, interleaved
with the bits of $x$ rather than at the end.

To see how this can be an issue, consider a two-stage version of
Boolean Hidden Hypermatching: the first $n/2$ bits of $x$ are given,
then at time $s=n/2$ we receive $z_{\leq s}^{(i)}$ (elements $z^{(i)}$ with indices at most $s$) for each $i$, then the
rest of $x$, and finally at time $t=n$ we receive the rest of the
important indices $z_{[s+1:t]}^{(i)}$ (elements $z^{(i)}$ with indices between $s+1$ and $t$).  For simplicity, suppose
$\abs{z^{(i)}_{\leq s}} = \abs{z^{(i)}_{[s+1:t]}} = \ell/2$ always.  The
algorithm that stores a random subset of bits still needs $c \gtrsim n^{1 -
1/\ell}$.  Solving either half of the stream (determining the parity of one of
the half-sets $z^{(i)}_{\leq s}$, $z^{(i)}_{[s+1:t]}$) requires only $n^{1 -
2/\ell}$ space in general, but how can we get a tight $n^{1 - 1/\ell}$ bound?

The problem is that~\eqref{eq:KKL} does not give strong enough control
over the higher-order moments to show~\eqref{eq:desiredBHH}.  The sets
$z^{(i)}$ at the end are no longer independent of $\wt{\Fb}_t$, because the
algorithm's behavior in the second half can depend on the $z_{\leq s}$. One
could instead apply~\eqref{eq:KKL} to each half of the stream and take the
product, getting
\[
  \sum_{z} \wt{\Fb}_s(z_{\leq s})^2 \wt{\Fb}_t(0^sz_{[s+1:t]})^2 = \paren*{\sum_{\abs{z_{\leq s}}=\ell/2} \wt{\Fb}_s(z_{\leq s})^2}\paren*{\sum_{\abs{z_{[s+1:t]}}=\ell/2} \wt{\Fb}_t(0^s z_{[s+1:t]})^2} \leq \binom{O(c)}{\ell/2}^2
\]
and so, on average over $z$,
\[
  \sum_{i\in [n/\ell]} \wt{\Fb}_s(z_{\leq s}^{(i)})^2 \wt{\Fb}_t(0^sz_{[s+1:t]}^{(i)})^2  \leq (n/\ell) \left(\frac{O(c)}{n}\right)^\ell.
\]
For algorithms that store individual bits this
implies~\eqref{eq:desiredBHH}, since in that case
\[
\wt{\Fb}_t(z) = \wt{\Fb}_t(z_{\leq s}0^{t-s})\wt{\Fb}_t(0^sz_{[s+1:t]})
\]
and \[
\wt{\Fb}_t(z_{\leq s}0^{t-s})^2 \leq \wt{\Fb}_s(z_{\leq s})^2.
\] However, for general
algorithms, \[\wt{\Fb}_t(z) \neq \wt{\Fb}_t(z_{\leq s}0^{t-s})\wt{\Fb}_t(0^sz_{[s+1:t]}).
\]
To solve this, we need to relate $\wt{\Fb}_t(z)$ to bounds
involving $z_{\leq s}$ and $z_{[s+1:t]}$ individually.  We define
\[
  \wt{\rb}_{s,t}(z_{[s+1:t]}) \coloneqq \E[x \mid \Mb_s, \Mb_t, \Bb_t]{(-1)^{z_{[s+1:t]} \cdot x_[s+1:t]}}
\]
as a ``double-ended'' version of~\eqref{eq:Ft}: it asks about the
knowledge of $H$ given the states before and after $H$ arrives, as
well as the ``board'' $\Bb_t$ at time $t$ (which is the information
about important sets revealed by time $t$, namely the
$z_{\leq s}^{(i)}$ ).  Since $x$ is independent of $\Bb_t$, this is
specified by $2c$ bits ($\Mb_s$ and $\Mb_t$), so it also
satisfies~\eqref{eq:KKL}.  Our key observation,
Lemma~\ref{lm:decomposition}, is that
\begin{align}\label{eq:simple-decomposition}
\wt{\Fb}_t(z) = \E[\Mb_s \mid \Mb_t,\Bb_t]{\wt{\Fb}_s(z_{\leq s}) \wt{\rb}_{s,t}(z_{[s+1:t]})}.
\end{align}
This lets us relate $\wt{\Fb}_t(z)$ to $\wt{\Fb}_s(z_{\leq s})$ and
$\wt{\rb}_{s,t}(z_{[s+1:t]})$, each of which are bounded
by~\eqref{eq:KKL}.

Specifically,  for any fixed index $i$, the average amount that is remembered about $z^{(i)}$ is:
\begin{align*}
  \E[z^{(i)}, \Mb_s, \Mb_t]{\wt{\Fb}_t(z^{(i)})^2}
  &\leq \E[\Mb_s, \Mb_t]{\E[z^{(i)}]{\wt{\Fb}_s(z^{(i)}_{\leq s})^2 \wt{\rb}_{s,t}(z^{(i)}_{[s+1:t]})^2}} \\
  &= \E[\Mb_s, \Mb_t]{\E[z_{\leq s}^{(i)}]{\wt{\Fb}_s(z^{(i)}_{\leq s})^2 \E[z_{[s+1:t]}^{(i)}]{\wt{\rb}_{s,t}(z^{(i)}_{[s+1:t]})^2}}} \\
  &\leq \frac{\binom{O(c)}{\ell/2}}{\binom{n/2}{\ell/2}} \cdot \frac{\binom{O(2c)}{\ell/2}}{\binom{n/2}{\ell/2}}  = \left(\frac{O(c)}{n}\right)^\ell.
\end{align*}
Thus, for the two-stage version of Boolean Hidden Hypermatching, we
need $c \gtrsim n^{1 - 1/\ell}$ to remember one of the $n/\ell$ important sets
on average.

These equations,~\eqref{eq:simple-decomposition} and~\eqref{eq:KKL},
form the Fourier-analytic basis of our lower bound.  The rest of the challenge for
our setting comes from random order streams having much more
complicated combinatorics than two-stage hypermatching, spread across
$n$ stages.  As edges arrive, components appear, extend, and merge to
eventually form the final cycles.

\section{Technical Overview}
\label{sec:overview}
\paragraph{A basic ``sampling'' protocol.} Suppose we only permit ourselves to remember
one parity (so using one bit of space, in addition to whatever space we need to
know which parity this is). We want to eventually learn the parity of one cycle
in the stream, and so the natural strategy is as follows:
\begin{enumerate}
\item Arbitrarily choose some edge to start with, and record its parity.
\item Whenever we see a new edge that is incident to the parity we are storing,
add that edge to the parity, and hope our parity eventually grows to encompass
an entire cycle.
\end{enumerate}
This strategy will succeed with probability $\ell^{-\bTh{\ell}}$, as it only works if no
edge of the cycle arrives before a path to it from the first edge has already
arrived (we refer to such a cycle as a ``single-seed'' cycle---see Fig.~\ref{fig:singleseed} for an illustration). So we would have
to repeat this process $\ell^{\bTh{\ell}}$ times in order to achieve a constant probability of success. 

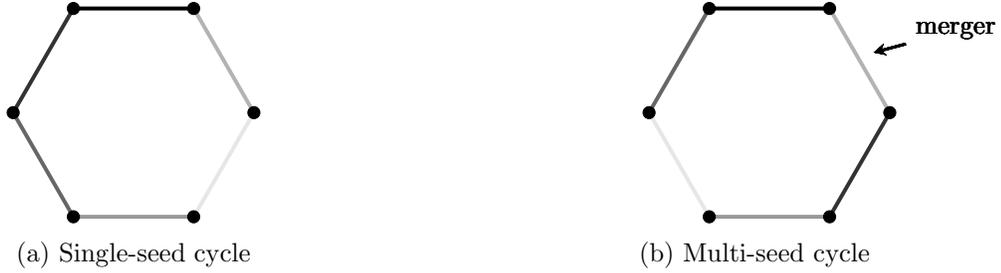
\begin{figure}
  \begin{center}
    \begin{subfigure}{0.5\textwidth}
      \centering
      \begin{tikzpicture}[scale=0.8]
        \foreach[count=\i, evaluate=\i as \x using int(\i)] \a in {1,0.8,0.6,.4,0.1,0.3} {
          \draw[fill=black!100] (\x*60: 2) circle (0.1);
          \draw[opacity=\a,line width=1.5pt] (\x*60: 2) -- (\x*60+60: 2);
        }
      \end{tikzpicture}
      \caption{Single-seed cycle}
      \label{fig:singleseed}
    \end{subfigure}%
    \begin{subfigure}{0.5\textwidth}
      \centering
      \begin{tikzpicture}[scale=0.8]
        \path (-4.5,0) -- (4,0);  
        \foreach[count=\i, evaluate=\i as \x using int(\i)] \a in {1,0.6,0.1,.4,0.8,0.3} {
          \draw[fill=black!100] (\x*60: 2) circle (0.1);
          \draw[opacity=\a,line width=1.5pt] (\x*60: 2) -- (\x*60+60: 2);
          \node[anchor=south west] (m) at (25:2.5) {\small merger};
          \draw[line width=1pt, ->, >=stealth] (m) -- (30:2);
        }
      \end{tikzpicture}
      \caption{Multi-seed cycle}
      \label{fig:multi-seed}
    \end{subfigure}
    \caption{Illustration of two possible edge arrival orders.
      Lighter edges arrive later. }
    \label{fig:seed}
  \end{center}
	
\end{figure}

Can we hope to do better by not considering the parities independently? Suppose
we maintain the parities of $c$ paths at a time, which may merge with each
other as we process the stream. We now have some chance of finding
``multi-seed'' cycles, i.e.\ cycles in which several disjoint paths arrive
before eventually being merged by later edge arrivals---see
Fig.~\ref{fig:multi-seed} for an illustration. If we happen to have remembered
the parity of each of the components that eventually merged into a given
multi-seed cycle, we will find the parity of the cycle. The chance that any $k$
of them are from the same cycle is ${\sim}\binom{c}{k}\paren*{\ell/n}^{k-1}$,
and the chance of any given cycle having only $k$ seeds is
$(\ell/k)^{-\bTh{\ell}}$, as it requires $k$ paths of average length $\ell/k$
to arrive in order. Until $c$ is $n^{1 - \bTh{1/\ell}}$, the probability of
finding the parity of a cycle will therefore by dominated by the single-seed
case.

However, so far we have assumed we can only store individual parities. To
extend these arguments to algorithms that maintain arbitrary state, we make use
of the tools of \emph{Boolean Fourier analysis}.

\paragraph{Fourier-analytic lower bound for \SC.}

We construct a hard
instance for the problem in which the bit labels $\Xb$ for the edges are chosen
uniformly at random, and in order to simplify the analysis we allow the
algorithm to remember which edges it has seen for free (although not the bit
labels). We say that these edges are posted on the ``board'' $\Bb$. The state
of the board at time $t$, i.e.\ after receiving $t$ edges, is denoted by
$\Bb_t$.

For any subset $z \in \bool^t$ of the edges that have arrived so far
(given by the appropriate bit mask), we can associate the expectation of the
parity of $z$ given $\Mb_t$ with the \emph{normalized Fourier coefficient} \[
  \wt{\Fb}_t(z) \coloneqq \E[x \mid \Mb_t]{(-1)^{z\cdot x}}.
\]

Note that if the algorithm returns $v$, and $C$ is the cycle containing $v$ (written as an
element of $\bool^n$), the algorithm's best guess for the parity of $C$ will be
$1$ if $\wt{\Fb}_n(C) > 0$ and $-1$ otherwise. Moreover, the probability that
this guess will be correct is $\frac{1 + \abs{\wt{\Fb}_n(C)}}{2}$. Therefore,
for a lower bound, it will suffice to prove that with good probability \[
 \abs{\wt{\Fb}_n(C)} = \lO{1}\text{.}
\]

\paragraph{Fourier mass on collections of component types.}  Writing $\zpp$ for the set of multisets of integers, for $\beta \in \zpp$ and a $z\in \bool^t$ we define 
$$
z \bsim_t \beta
$$ to be true iff $z$ corresponds to a set of edges which contains
$\beta\brac{a}$ components (i.e., paths or cycles) of length $a$ for each $a$
(here for $\beta\in \zpp$ and an integer $a$ we write $\beta[a]$ to denote the
number of occurrences of $a$ in $\beta$).  Our main object of study in the
lower bound proof is 
\[
\Hb_\beta^t \coloneqq \sum_{z\in \bool^t, z\bsim_t \beta} \wt{\Fb}_t(z)^2, 
\]
which can be viewed as the amount of certainty that the algorithm has about parities of unions of components whose sizes are prescribed by $\beta$. 

In order to build some intuition, we consider our prototypical sampling
protocol from the start of Section~\ref{sec:overview}. Let $c$ be the number of
edges sampled at the beginning and for every $j\geq 1$ let $Y_j^t$ denote the
number of components of size $j$ (i.e., with $j$ edges) that the sampling
algorithm was able to construct at time $t$. Then for every multiset $\beta$
and every $t$ one simply has 
\[
\Hb_\beta^t=\prod_{a} \binom{Y_a^t}{\beta[a]}.
\]
and
$$
\Hb_{\{1\}}^{c}=c,
$$ 
where the last equality holds because the sampling algorithm grows components out of the first $c$ arriving edges, and for every $j$ we have that $\Hb_{\{j\}}^t$ is the number of components of size $j$ that the algorithm knows the parity of at time $t$. In order to prove that the sampling protocol does not succeed, we need to prove that $Y_\ell^n=0$, and for general protocols we need to prove 

\begin{restatable}{lemma}{lemfinalH}
\label{lem:finalH}
For all $\varepsilon > 0$, there is a $D > 0$ depending only on $\varepsilon$
such that, if $c < \min(\ell^{\ell/D}, n^{1 - \varepsilon})$ and $D <
\ell < D^{-1}\log n $,
\[
\E[\Xb, \Bb_n]{ \Hb_{\lbrace \ell \rbrace}^n } \le \varepsilon\text{.}
\]
\end{restatable}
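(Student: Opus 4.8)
The plan is to run an induction over the arrival time $t$, maintaining for \emph{all} $\beta\in\zpp$ at once a bound of product form
\[
\E[\Xb,\Bb_t]{\Hb_\beta^t}\ \le\ \prod_{a}\binom{O\paren*{u_a^t}}{\beta[a]},
\]
where $u_a^t\ge 0$ is a deterministic quantity — morally ``the Fourier mass the protocol can have accumulated on a single length-$a$ component by time $t$'' — defined by its own recursion, with base case $u_1^t=O(c)$, which is immediate since $\Hb_{\{1\}}^t=\sum_{\abs z=1}\wt\Fb_t(z)^2\le\binom{O(c)}{1}$ by~\eqref{eq:KKL} (up to truncating the rare event on which~\eqref{eq:KKL} fails). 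Specialising the hypothesis to $\beta=\{\ell\}$ and $t=n$, and using that a length-$\ell$ component in $G$ must be an entire cycle, we get $\E{\Hb_{\{\ell\}}^n}\le O(u_\ell^n)$; unwinding the recursion for $u_\ell^n$ I expect to land on
\[
u_\ell^n\ \lesssim\ \sum_{k\ge 1}\binom{O(c)}{k}\paren*{\frac{\ell}{n}}^{k-1}\paren*{\frac{\ell}{k}}^{-\bOm\ell},
\]
the $k$-th summand being the contribution of cycles assembled from $k$ ``seed'' arcs. The two hypotheses are then exactly what is needed to bound this by $\e$: $c<n^{1-\e}$ together with $\ell<D^{-1}\log n$ makes $\sum_{k\ge 2}$ a geometric series in $\ell/n^{\e}$ dominated by its first term, while $c<\ell^{\ell/D}$ forces the $k=1$ term $\lesssim c\,\ell^{-\bOm\ell}\le\e$ once $D$ is a large enough function of $\e$; the lower bound $\ell>D$ ensures the asymptotics $\ell^{-\bOm\ell}$ genuinely kick in.

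\paragraph{The inductive step.}
For the step $t\to t+1$ I would split into cases according to what the arriving edge $e_{t+1}$ does to the board. If $e_{t+1}$ is not incident to any component of any $z\bsim_{t+1}\beta$, then applying the decomposition identity~\eqref{eq:simple-decomposition} (Lemma~\ref{lm:decomposition}) with split time $s=t$ gives $\wt\Fb_{t+1}(z)=\E[\Mb_t\mid\Mb_{t+1},\Bb_{t+1}]{\wt\Fb_t(z)}$ — the single-edge double-ended factor $\wt\rb_{t,t+1}$ is evaluated on the empty restriction, so it equals $1$ — and Jensen gives $\E{\wt\Fb_{t+1}(z)^2}\le\E{\wt\Fb_t(z)^2}$, so no mass is created. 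Otherwise $e_{t+1}$ starts a new length-$1$ component, extends one component by an edge, merges two components, or (at the last stage of a cycle) closes one; for the affected $z$'s I would again invoke~\eqref{eq:simple-decomposition}, now with $z_{t+1}=1$, and peel off the single-edge factor by Cauchy--Schwarz using $\wt\rb_{t,t+1}(1)^2\le 1$. The delicate point is that for a merge — and more generally whenever a long arc is being accounted for — one must not split one step at a time but rather split the timeline at the moments each merging arc last grew on its own: iterating~\eqref{eq:simple-decomposition} at those times $s_1\le\dots\le s_k$ writes the Fourier coefficient of the assembled component as a nested expectation of a \emph{product} of double-ended coefficients $\wt\rb_{s_{i-1},s_i}(\cdot)$, one per seed arc, each specified by only $2c$ bits and hence obeying~\eqref{eq:KKL}. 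Summing~\eqref{eq:KKL} over the choice of the $i$-th seed arc (a connected length-$\lambda_i$ sub-arc appearing within the window $[s_{i-1}+1:s_i]$) produces the Fourier factor $\binom{O(c)}{k}$, and summing over which cycle of $G$ and which ordered seed-decomposition of it is being built produces the arc-counting factor $(\ell/n)^{k-1}$ (there are $\bTh n$ sub-arcs of each fixed length in $G$, one factor of $n$ of which cancels against the $n/\ell$ choices of the cycle).

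\paragraph{The combinatorial factor.}
The remaining factor $(\ell/k)^{-\bOm\ell}$ is purely a statement about the uniformly random arrival order, extracted by taking the expectation over $\Bb_n$. I would prove a random-order lemma: conditioned on the relative order in which the $\ell$ edges of a fixed cycle are interleaved with the rest of the stream, the probability that these edges form at most $k$ seed arcs (each growing only by adjacency to itself until the first merge) is $(\ell/k)^{-\bOm\ell}$ — at each growth step the next arriving edge of the cycle must be one of the $O(1)$ edges adjacent to a current arc, out of the $\bTh{\ell-j}$ not-yet-arrived edges of the cycle, so the probability is at most a product $\prod_j O(1)/(\ell-j)=\ell^{-\bOm\ell}$ when $k=1$, and $(\ell/k)^{-\bOm\ell}$ after accounting for the $k$ free arc-starts. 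Combining this with the Fourier bound of the previous paragraph and carrying the induction through all $n$ steps yields the displayed bound on $u_\ell^n$, hence on $\E{\Hb_{\{\ell\}}^n}$, and hence the lemma.

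\paragraph{Main obstacle.}
The crux is the multi-split bookkeeping in the inductive step. Identity~\eqref{eq:simple-decomposition} is a single split, so one first needs a clean multi-split corollary; more importantly, one needs a carefully chosen notion of the ``seed decomposition'' of a component's arrival history so that the split times $s_1\le\dots\le s_k$ are well defined, the windows $[s_{i-1}+1:s_i]$ partition the relevant part of the stream, and the resulting product of~\eqref{eq:KKL} bounds can be summed against the board distribution \emph{without any loss polynomial in $n$ or in the number of players}. This is precisely the difficulty flagged in the introduction: it is why the argument hinges on~\eqref{eq:simple-decomposition} being an exact identity rather than an inequality, and why it is cleanest to track $\Hb_\beta^t$ for all $\beta$ simultaneously as the edges arrive rather than reasoning about a single cycle in isolation at the end. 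A secondary obstacle is establishing the random-order lemma with the correct exponent uniformly over the whole range $D<\ell<D^{-1}\log n$ and over all seed counts $k$.
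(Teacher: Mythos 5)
Your sketch correctly reconstructs the paper's informal overview (the target bound $\sum_k \binom{O(c)}{k}(\ell/n)^{k-1}(\ell/k)^{-\Omega(\ell)}$, the decomposition identity, the KKL bound, induction over arrival time on product-form bounds indexed by collection types), and the quantity $u_a^t$ you posit is essentially the paper's $\frac{c\,Q^a}{a!}(t/n)^{a-\lceil a/2\rceil}$ from Lemma~\ref{lem:hbetabound}. But two genuine gaps remain. First, the step you yourself flag as the crux is exactly where the naive argument fails and is not resolved by your sketch: your ``random-order lemma'' computes the seed-count probability for an unconditioned cycle, but the Fourier coefficients are conditioned on the messages, which are themselves functions of the board, so the identity of the cycle being assembled is correlated with $\wt{\Fb}_t$. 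The paper's resolution is not a multi-split of the decomposition identity but a \emph{single-split recurrence}: split only at the time $s$ of the last non-singleton growth event, bound $\E{\Hb_\beta^t}$ by $\sum_s\sum_{\alpha\in\beta-1}q(|\beta-\alpha|)\E{\Hb_\alpha^s\,p_s(\alpha,\beta,\Bb_s)}$ (Lemma~\ref{lm:evolution}), and make the growth probability $p_s(\alpha,\beta,\Bb_s)$ depend only on the collection type (Lemma~\ref{lem:isogrowth}) and concentrate for typical boards (Lemma~\ref{lm:ext-prob}, proved via the swap graph and pattern-counting). This is what lets the combinatorial factor multiply the Fourier factor without conditioning loss; your nested product of double-ended coefficients $\wt{\rb}_{s_{i-1},s_i}$ would additionally require controlling the expectation over the intermediate messages weighted by their probabilities, which is the content of Lemma~\ref{lem:rbound} and is not automatic from ``each factor obeys the KKL bound.''

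Second, the end of the stream is missing entirely. The per-step extension probability for a fixed $a$-edge path is $\Theta(1/(n-t))$ unconditionally; it is only $O(1/n)$ because, for typical boards, the adjacent underlying edge has already arrived except with probability $\Theta((n-t)/n)$ --- and this relies on concentration of pattern counts that fails for $t > n - \mathrm{poly}(2^{\ell})\,n^{5/6}\log n$. So ``carrying the induction through all $n$ steps'' breaks in the last $\approx n^{5/6+o(1)}$ steps, precisely when cycles close. The paper needs a separate endgame: stop the recurrence at $t' = n - n^{1-\varepsilon'}/\ell$, show every cycle is then missing only $O(1/\varepsilon)$ edges (Lemma~\ref{lem:almostclosed}), hence already contains an $\Omega(\ell)$-long component whose Fourier mass is at most $c^{O(1/\varepsilon)}/\Omega(\ell)!$ by the recurrence, and use monotonicity of $\E{\wt{\Fb}_t(z)^2}$ in $t$ to transfer this to time $n$. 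Without some such argument your induction does not reach $t=n$.
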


In order to establish Lemma~\ref{lem:finalH}, we bound the expected evolution of $\E{\Hb^t_\beta}$ as a function of $t$. In what follows we first analyze the evolution of $\Hb_\beta^t$ for the simple ``sampling'' protocol, and then present the main ideas of our analysis.

\paragraph{Evolution of Fourier coefficients of the ``sampling'' protocol.} Fix a component of size $j$. At time $t$ the probability that it gets extended is about $\frac{2}{n-t}$, and therefore 
$$
\expect[Y^t_j|\Bb_{t-1}, \Fb_{t-1}]\approx Y^{t-1}_j+\frac2{n-t}\cdot Y^{t-1}_{j-1}+\text{(contribution from merges of smaller components)}.
$$
In order to derive the asymptotics of $Y^t_j$, we first ignore the contribution of merges, and later verify that they do not affect the result significantly. In particular, ignoring the contribution of merges, we get
$$
\expect[Y^t_j|\Bb_{t-1}, \Fb_{t-1}]\approx Y^{t-1}_j+\frac2{n-t}\cdot Y^{t-1}_{j-1}.
$$

We assume for intuition that $t\leq n/2$, i.e .we are only looking at the first
half of the stream.  Since the initial conditions are (essentially) $Y^1_1=c$
and $Y^1_j=0$ for $j>1$, because the algorithm can remember $c$ single edges at
the beginning of the stream, and no larger components (since they typically do
not form at the very beginning of the stream). This now yields that \[
Y^t_j\leq c\cdot 4^{j-1} (t/n)^{j-1}/(j-1)!
\] 
for $t\leq n/2$ and all $j\geq 1$. This is
because $Y^1_j=c$ as required, and for $j\geq 2$
\begin{equation}\label{eq:rec-y}
\begin{split}
\expect[Y^t_j]&\leq \sum_{s=1}^{t-1} \frac2{n-s} Y^s_{j-1}\\
&\leq (c\cdot 4^{j-1}/(j-2)!) \frac{1}{n}\sum_{s=1}^{t-1} (t/n)^{j-2}\\
&\approx c\cdot 4^{j-1}/(j-2)!\cdot \int_0^{t/n} x^{j-2}dx\\
&=c \cdot 4^{j-1} (t/n)^{j-1} /(j-1)!
\end{split}
\end{equation}
This in particular implies that $Y_\ell^{t/2}\ll 1$ if $c=\ell^{o(\ell)}$, and in general that for the sampling protocol we have, at least for $t\leq n/2$,
\begin{equation*}
\begin{split}
\expect[\Hb_\beta^t]&=\expect\left[\prod_{j\in \beta} Y_j^t\right]\\
&\approx \prod_{j\in \beta} c \cdot 4^{j-1} (t/n)^{j-1} /(j-1)!\\
&\lesssim \left(\prod_{j\in \beta} \frac{1}{j!}\right) \cdot
Q^{\abs{\beta}_*} \cdot \left(\frac{t}{n}\right)^{\abs{\beta}_* -
\abs{\beta}}\cdot c^{\abs{\beta}},
\end{split}
\end{equation*}
where $Q$ is an absolute constant, $\abs{\beta}_*=\sum_{i\in \beta} i$ and
$\abs{\beta}$ is the number of elements in the multiset $\beta$ (counting
multiplicities).

\paragraph{Evolution of Fourier coefficients of a general protocol.} The outline of the simple ``sampling'' protocol above provides a good model for our general proof. Specifically, in Lemma~\ref{lem:hbetabound} (see Section~\ref{sec:main-lemma}) we show that there exists a constant $Q>0$ such that for (almost) all $\beta$ and $t$ (the near-end of
the stream and going from $\beta = \set{\ell-1}$ to $\beta = \set{\ell}$
require some special treatment) one has

\begin{equation}\label{eq:evolution}
\E{\Hb^t_\beta} \lesssim \left(\prod_{j\in \beta} \frac{1}{j!}\right) \cdot
Q^{\abs{\beta}_*} \cdot \left(\frac{t}{n}\right)^{\abs{\beta}_*-\nu(\beta)}\cdot c^{|\beta|},
\end{equation}
where $\abs{\beta}_*=\sum_{i\in \beta} i$ and $\nu(\beta)=\sum_{i\in \beta} \lceil \frac{i}{2}\rceil$. Lemma~\ref{lem:finalH} then follows by essentially summing the above bound over all component types (the proof is presented in Section~\ref{sec:main-lemma}).

The result of Lemma~\ref{lem:finalH} can then be seen to imply that the chance of successfully guessing the parity of any cycle is $\lO{1}$ whenever $c= \ell^{o(\ell)}$, and specifically that the \SC~problem requires $\ell^{\bOm{\ell}}$ space. 

 To establish~\eqref{eq:evolution}, we bound the (expected) evolution of $\Hb^t_\beta$ as a
function of $t\in [n]$. Specifically, we show in Section~\ref{sec:evolution}
that for every $t\in [n]$ the expectation of $\Hb^t_\beta$ can be upper bounded
in terms of expectations of $\Hb^s_\alpha$ for $s<t$ and $\alpha$ corresponding
to ``subsets'' of $\beta$ (see Section~\ref{sec:basic-defs} for the formal
definitions). This is a natural extension of our analysis of the ``sampling''
protocol above.  In full generality, however, this requires showing that if
the algorithm has limited information about parities of collections of type
$\alpha$ at time $s$ (i.e.,  $\Hb^s_\alpha$ is small), then it is unlikely to
know too much about collections of type $\beta$ obtained as a result of merging
several components in $\alpha$ or extending them by  edges arrived between $s$
and $t$. 

Crucially, the probability that a collection of components of type $\alpha$
grows into a collection of components of type $\beta$ at any given time depend
only on the \emph{collection type} (i.e., the multisets $\alpha$ and $\beta$).
Specifically,  for $s \in [n]$, a pair of collection types $\alpha, \beta\in
\zpp$ such that $\alpha\lbrack 1\rbrack = \beta\lbrack 1\rbrack$ (the number of
single edge components in $\alpha$ and $\beta$ is the same) and a realization
$B_s$ of the board $\Bb_s$ at time $s$ we write
\[
p_s(\alpha, \beta, B_s) = \Pb[\Bb_{s+1}]{ z\cdot 1 \bsim_{s+1} \beta | \Bb_s =
B_s}
\] for any $z \in \lbrace 0, 1\rbrace^s$ such that $z \bsim_s \alpha$ to denote
the probability that a collection of type $\alpha$ at time $s$ becomes a
collection of type $\beta$ at time $s+1$ through one of the following ``growth events'':
\begin{description}[labelindent=\parindent]
\item[Extension] An edge arrives at time $s+1$ that is incident to exactly one
component in the collection.
\item[Merge] An edge arrives at time $s+1$ that is incident to two components
in the collection.
\end{description}
We will use 
\begin{lemma}[Informal version of Lemma~\ref{lm:ext-prob}]\label{lm:ext-prob-inf}
With high probability over the board state $\Bb_s$, for $s$ not too close to $n$ one has for (almost) every $\alpha, \beta$ 
\[ 
p_{s}(\alpha, \beta, \Bb_s) \le \begin{cases}
\frac{O(\alpha\lbrack a\rbrack)}{n} &\mbox{if $\alpha\to \beta$ is an extension
of a path of size $a$}\\
\frac{O(\alpha \lbrack a\rbrack \cdot \alpha \lbrack b\rbrack)}{(n - s)^2}
&\mbox{if $\alpha\to \beta$ is a merge of paths of size $a$ and $b$.}
\end{cases}
\]
\end{lemma}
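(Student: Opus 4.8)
The plan is to condition on the realization $\Bb_s=B_s$ of the board and on any collection $z\bsim_s\alpha$, and to use the elementary fact that, conditioned on $\Bb_s=B_s$, the edge $e_{s+1}$ arriving at time $s+1$ is distributed uniformly among the $n-s$ edges of $G$ absent from $B_s$. Under this conditioning $p_s(\alpha,\beta,B_s)$ equals $(n-s)^{-1}$ times the number of absent edges $e$ whose addition to $z$ carries the collection type from $\alpha$ to $\beta$; I will call these the \emph{realizing edges} of the transition. So the lemma reduces to counting realizing edges.

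First I would dispose of the extension case, in which $\beta$ replaces a part $a$ of $\alpha$ by a part $a+1$. A realizing edge must be incident to exactly one component of $z$, and that component must be one of the $\alpha[a]$ size-$a$ paths; since $G$ is $2$-regular, each size-$a$ path has at most two edges of $G$ at its endpoints that could extend it. So there are at most $2\alpha[a]$ realizing edges, giving $p_s(\alpha,\beta,B_s)\le 2\alpha[a]/(n-s)$, and since $s$ is not too close to $n$ we have $n-s=\Omega(n)$, which is the claimed $O(\alpha[a])/n$. This half uses nothing beyond the uniformity of $e_{s+1}$.

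The merge case, where $\beta$ replaces parts $a$ and $b$ of $\alpha$ by a part $a+b+1$, is where I expect the real difficulty. A realizing edge must join an endpoint of a size-$a$ component of $z$ to an endpoint of a size-$b$ component, and $2$-regularity forces these two components onto a common cycle of $G$ with the edge being the \emph{unique} edge separating them — a ``gap edge'' of a geometrically adjacent $(a,b)$-pair. A direct count of gap edges only yields $p_s=O(\min(\alpha[a],\alpha[b]))/(n-s)$, and for an adversarially chosen $z$ this cannot be improved pointwise, since the adversary can pre-align one of its size-$a$ and one of its size-$b$ components across a single still-absent edge. The extra factor $(n-s)^{-1}$ must therefore come from using the estimate only as it is actually needed — summed against the Fourier masses $\wt{\Fb}_s(z)^2$ inside the evolution bound for $\E{\Hb^{s+1}_\beta}$. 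There, collecting the contribution of each candidate $(a,b)$-gap site, one replaces $\sum_{z:\,P,P'\text{ components of }z}\wt{\Fb}_s(z)^2$ by an upper bound in terms of the Fourier mass on collections containing the fixed edge set $P\cup P'$, which a conditional version of the mass bound~\eqref{eq:KKL} controls; combined with the structural fact that geometrically adjacent $(a,b)$-pairs with a still-absent separator form only an $O(1/(n-s))$ fraction of all pairs of a size-$a$ and a size-$b$ component in $G$ (this is where ``$s$ not too close to $n$'' enters), one recovers, after averaging over $\Bb_s$, the effective bound $O(\alpha[a]\alpha[b])/(n-s)^2$. Making this chain precise — in particular isolating the regularity event for $\Bb_s$ under which the ``small fraction'' statement holds simultaneously for all the relevant $\alpha,\beta$ — is the crux; the ``high probability over $\Bb_s$'' and ``almost every $\alpha,\beta$'' in the statement are exactly what this costs, together with excluding the last few steps of the stream and the cycle-closing transition $\{\ell-1\}\to\{\ell\}$.
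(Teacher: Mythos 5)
There is a genuine gap, and it stems from a misreading of the probability space. You treat the geometric placement of the observed components inside the underlying graph $G$ as fixed (even adversarially alignable), concluding that the pointwise merge bound ``cannot be improved'' and must instead be recovered by folding the estimate into the Fourier evolution. But $p_s(\alpha,\beta,B_s)$ is a probability conditioned only on the \emph{board}, and the board does not reveal the embedding of the observed components into the union of cycles: conditioned on $\Bb_s=B_s$ the permutation $\bpi$ is uniform over all consistent permutations (Lemma~\ref{lem:pidist}), and all $z\bsim_s\alpha$ have the same transition probability (Lemma~\ref{lem:isogrowth}). Consequently, for a fixed pair of observed components of sizes $a$ and $b$, the event that their preimages in $G$ are separated by a single absent edge is itself random with conditional probability $O(1/(n-s))$; the paper proves this via the swap-graph argument (Lemma~\ref{lm:swapgraph}) for $s\le n/8$ and via the pattern-count concentration (Lemma~\ref{lem:patternfreq}, applied to the averaged quantity $\upsilon$ over $\Ub_s$) for larger $s$. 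Multiplying by the $1/(n-s)$ chance that this separating edge arrives at time $s+1$ gives the $O(\alpha[a]\alpha[b])/(n-s)^2$ bound \emph{pointwise} on the good board event — no detour through the Fourier masses is needed, and indeed Lemma~\ref{lm:evolution} consumes the bound in exactly this pointwise form.

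Your extension case also has a quantitative gap. The lemma holds for all $s\le n-C2^{2\ell}n^{5/6}\log n$, so $n-s$ can be as small as $\widetilde{\Theta}(n^{5/6})$ and is \emph{not} $\Omega(n)$; hence $2\alpha[a]/(n-s)$ is not $O(\alpha[a])/n$ in the late regime. The paper closes this factor-of-$\frac{n}{n-s}$ deficit by observing that an extension (as opposed to a merge) additionally requires the far endpoint of the arriving edge to be isolated on the board, an event whose empirical frequency is $\Theta((n-s)/n)$ among the candidate endpoints; establishing that this frequency concentrates (uniformly over the relevant $\alpha$) is precisely what the high-probability event over $\Bb_s$ and the upper cutoff on $s$ are for. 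So the ``high probability'' and ``$s$ not too close to $n$'' qualifiers are doing concrete work in \emph{both} cases, not just the merge case.
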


We will also need

\begin{definition}[Down set of $\beta\in \zpp$]
For $\alpha, \beta\in \zpp$ we write $\alpha\in \beta-1$ if $\beta$ can be
obtained from $\alpha$ by either an {\em extension} or a {\em merge}  followed
by possibly adding an arbitrary number of $1$'s to $\alpha$.

For $\beta\in \zpp$ and $\alpha\in \beta-1$ we write $|\beta-\alpha|$ to denote the number of ones that need to be added to $\alpha$ after a merge or extension to obtain $\beta$.
\end{definition}

Equipped with the above, and writing $T$ for the set of all edge arrival times,
we can state our main bound on the evolution of Fourier coefficients:
\begin{restatable}{lemma}{lmevolution} \label{lm:evolution}
For every $t\in T$ one has for $\beta\in \zpp$ that contain at least one component of
size more than $1$ and have $\abs{\beta}_* \le \ell - 2$,
\begin{equation*}
\begin{split}
\E[\Xb, \Bb_t] {\Hb^t_\beta}&\leq \sum_{s = 1}^{t-1} \sum_{\alpha\in
\beta-1} q(|\beta-\alpha|)\cdot\E[\Xb, \Bb_s]{\Hb_\alpha^s\cdot 
p(\alpha, \beta, \Bb_s)}\\
\end{split}
\end{equation*}
and for $\beta$ with all components of size $1$
\begin{equation*}
\begin{split}
\E[\Xb, \Bb_t]{ \Hb^t_\beta } &\leq q(|\beta|),
\end{split}
\end{equation*}
where 
\[
q(k) = \begin{cases}
\paren*{\frac{8c}{k}}^k & \mbox{if $k \le c$}\\
2^c & \mbox{otherwise.}
\end{cases}
\]
\end{restatable}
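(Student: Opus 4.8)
The statement is a recursive upper bound on $\E{\Hb^t_\beta}$, so the plan is to prove it by analyzing how $\Hb^t_\beta$ is built up from the Fourier mass at earlier times. Recall $\Hb^t_\beta = \sum_{z \bsim_t \beta} \wt{\Fb}_t(z)^2$. The core idea is that any $z$ of type $\beta$ at time $t$ must, at the time $s$ its ``last structural edge'' arrived (the edge completing the merge/extension that created the last non-trivial component configuration), have arisen from some $z'$ of type $\alpha \in \beta - 1$, where $\alpha$ differs from $\beta$ by one merge-or-extension plus the removal of some singletons (the $|\beta - \alpha|$ extra $1$'s, which correspond to single-edge components the algorithm could only have acquired by seeing that edge). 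The plan is: (i) for each $z \bsim_t \beta$, identify the critical time $s$ and the ancestor $z' \bsim_s \alpha$; (ii) use the decomposition identity $\eqref{eq:simple-decomposition}$ from Lemma~\ref{lm:decomposition}, in the form $\wt{\Fb}_t(z) = \E[\Mb_s \mid \Mb_t, \Bb_t]{\wt{\Fb}_s(z_{\le s})\,\wt{\rb}_{s,t}(z_{[s+1:t]})}$, together with Cauchy--Schwarz (Jensen) to get $\wt{\Fb}_t(z)^2 \le \E[\Mb_s \mid \Mb_t,\Bb_t]{\wt{\Fb}_s(z_{\le s})^2\,\wt{\rb}_{s,t}(z_{[s+1:t]})^2}$; (iii) sum over $z$, grouping by ancestor type $\alpha$ and by the singleton part, bound $\wt{\rb}_{s,t}(\cdot)^2 \le 1$ (it is a normalized Fourier coefficient, hence in $[-1,1]$) for the edges arriving after $s$ that are not part of the structural growth, and absorb the $|\beta - \alpha|$ extra singleton edges' parity-knowledge into the factor $q(|\beta - \alpha|)$; (iv) the conditional-on-$\Bb_s$ probability that the board actually realizes the relevant growth event at time $s+1$ is exactly what $p(\alpha, \beta, \Bb_s)$ controls, producing the factor $p(\alpha,\beta,\Bb_s)$ inside the expectation.

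\textbf{Key steps in order.} First I would set up the combinatorial bookkeeping: show that every configuration $z \bsim_t \beta$ (with $\beta$ having a component of size $> 1$) has a well-defined ``last growth event'' occurring at some time $s+1 \le t$, with an ancestor configuration $z'$ on the first $s$ coordinates of type $\alpha \in \beta - 1$; this is where the definition of the down-set $\beta - 1$ is used, and the extra $1$'s account for single-edge components that must literally have arrived (as isolated edges) between times $s$ and $t$ or been among the edges the algorithm tracked. Second, apply $\eqref{eq:simple-decomposition}$ and Jensen to replace $\wt{\Fb}_t(z)^2$ by $\E[\Mb_s \mid \cdots]{\wt{\Fb}_s(z_{\le s})^2 \wt{\rb}_{s,t}(z_{[s+1:t]})^2}$. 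Third, perform the summation over $z$: fix $s$ and $\alpha$, sum over choices of $z_{\le s} \bsim_s \alpha$ (giving $\Hb^s_\alpha$), over which edge arrives at $s+1$ to trigger the growth (giving the conditional probability bounded by $p(\alpha,\beta,\Bb_s)$ once we pass to $\E_{\Bb_t}$ and condition appropriately), and over the parities of the extra singleton edges — using $\eqref{eq:KKL}$ applied to $\wt{\rb}_{s,t}$ (which depends on $2c$ bits) to bound $\sum$ over $|\beta-\alpha|$-subsets by $\binom{O(c)}{|\beta-\alpha|} \le q(|\beta-\alpha|)$, with the $\min(\cdot, 2^c)$ coming from the trivial bound $\sum_z \wt{\rb}^2 \le 2^c$. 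Fourth, handle the all-singletons base case separately: here $\Hb^t_\beta = \sum_{|z| = |\beta|}\wt{\Fb}_t(z)^2 \le \binom{O(c)}{|\beta|} \le q(|\beta|)$ directly from $\eqref{eq:KKL}$ applied to $\Mb_t$, with no recursion. Fifth, reconcile expectation orders: move $\E_{\Xb, \Bb_s}$ outside and check that the probability of the growth event at step $s+1$ (conditioned on $\Bb_s$) is precisely $p(\alpha,\beta,\Bb_s)$ by definition, yielding the product $\Hb^s_\alpha \cdot p(\alpha,\beta,\Bb_s)$ inside the inner expectation.

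\textbf{Main obstacle.} I expect the hardest part to be the careful combinatorial accounting in step one and three: an interval of coordinates $z_{[s+1:t]}$ can contain many edges, only some of which are ``structural'' (participate in the merge/extension recorded in going from $\alpha$ to $\beta$) while others are the $|\beta-\alpha|$ fresh singletons, and still others are edges irrelevant to $z$ entirely (incident to components not in the collection) which must be cleanly dropped via $\wt{\rb}_{s,t}^2 \le 1$. Making sure the decomposition $\eqref{eq:simple-decomposition}$ is applied with the right split point and that the sum over all $(s, \alpha, \text{triggering edge}, \text{singleton parities})$ exactly reconstructs an upper bound for $\sum_{z \bsim_t \beta}\wt{\Fb}_t(z)^2$ — with no double counting and no omitted configurations — is the delicate point; in particular one must verify that $\alpha[1] = \beta[1]$ is handled correctly by the definition of $p_s$ (which insists the singleton count is preserved by the growth event, with all singleton changes pushed into the $q(|\beta-\alpha|)$ factor). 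The constraint $\abs{\beta}_* \le \ell - 2$ is what lets us invoke Lemma~\ref{lm:ext-prob} cleanly and avoid the boundary regime; I would simply note we stay within its hypotheses throughout.
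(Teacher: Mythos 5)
Your proposal matches the paper's proof essentially step for step: the paper likewise defines, for each $z\bsim_t\beta$, the unique "last growth event" time $s$ (via the indicator $\grow(z,s,t)$), applies Lemma~\ref{lm:decomposition} followed by Jensen's inequality to get $\wt{\Fb}_t(z)^2\le \E[\Fb_s]{\wt{\Fb}_s(z_{\le s})^2\,\wt{\rb}(z_{[s+1:t]};\Fb_s,\Fb_t)^2\mid\Fb_t,\Bb_t}$, sums over $z$ grouped by ancestor type $\alpha\in\beta-1$, bounds the sum over the $|\beta-\alpha|$ fresh singletons by $q(|\beta-\alpha|)$ via the KKL-type bound on $\wt{\rb}$ (Lemma~\ref{lem:rbound}), converts the count of triggering arrivals into $p_s(\alpha,\beta,\Bb_s)$, and handles the all-singletons base case directly with the hypercontractive bound applied through the trivial $\Fb_0$. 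The approach and all key ingredients are correct.
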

The function $q(k)$ comes from hypercontractivity (as
in~\eqref{eq:KKL},~\cite{KKL88}), and bounds the total amount of
Fourier mass a $c$-bit message can place on size-$k$ parities; when
$\alpha \to \beta$, $q(|\beta - \alpha|)$ appears because the term
involves remembering $|\beta - \alpha|$ of the isolated edges
that arrive between $s$ and $t$.

The proof of Lemma~\ref{lm:evolution} is based on a function $\rb$ that
functions similarly to the $\rb_{s,t}$ used in the Section~\ref{sec:warmup}
warm-up.  Lemma~\ref{lm:evolution} allows us to bound the Fourier mass on
various collections of components as a function of their evolution in the
stream. We consider two prototypical examples now.

\noindent{\bf Example 1: single-seed components.}  To obtain some
intuition for Lemma~\ref{lm:evolution}, we first consider a simplified
setting where $p(\alpha, \beta, \Bb_s)=0$ unless $\alpha\to \beta$ is
an extension, i.e., we ignore the effect of merges.  Without merges,
since we eventually care about the single-element set $\{\ell\}$, we
only need to track the mass on other single-element sets
$\beta = \{a\}$, so $\beta - 1 = \{a - 1\}$.  Then

\begin{equation*}
\begin{split}
\E[\Xb, \Bb_t] {\Hb^t_\beta}&\leq \sum_{s = 1}^{t-1} \sum_{\substack{\alpha\in
\beta-1\\ \alpha\to \beta \text{~is an extension}}} q(|\beta-\alpha|)\cdot\E[\Xb, \Bb_s]{\Hb_\alpha^s\cdot 
p(\alpha, \beta, \Bb_s)}\\
&= \sum_{s = 1}^{t-1} q(0) \E[\Xb, \Bb_s]{\Hb_{\{a - 1\}}^s\cdot 
p(\{a-1\}, \{a\}, \Bb_s)}\\
&\leq \sum_{s = 1}^{t-1} 1 \cdot \frac{O(1)}{n} \E[\Xb, \Bb_s]{\Hb_{\{a-1\}}^s}.
\end{split}
\end{equation*}
where the last step uses the first bound from
Lemma~\ref{lm:ext-prob-inf}.  One observes that the above recurrence
is quite similar to~\eqref{eq:rec-y} and can be upper bounded
similarly by $c \frac{2^{O(a)}}{a!}(t/n)^a$---as needed for~\eqref{eq:evolution}.

\noindent{\bf Example 2: multi-seed components.} To illustrate the way
Lemma~\ref{lm:evolution} handles merges, suppose we want to bound
$H_{\{4\}}$.  There are three paths from $\{4\}$ via down-set
relations to our base cases:
\begin{align*}
  \{4\} \to \{3\} \to \{2\} \to \{1\}\\
  \{4\} \to \{3\} \to \{1, 1\}\\
  \{4\} \to \{2,1\} \to \{1\}
\end{align*}
The first is a series of extensions, so bounded by about $c/a!$
according to Example 1; the other two involve merges, and we show give
negligible contribution.  We show this for the
$\{4\} \to \{2,1\} \to \{1\}$ path here.

The only path to $\beta = \{2, 1\}$ is an extension from $\{1\}$, so
Lemma~\ref{lm:evolution} and Lemma~\ref{lm:ext-prob-inf} show
\begin{equation*}
\begin{split}
\E[\Xb, \Bb_t] {\Hb^t_\beta}&\leq \sum_{s = 1}^{t-1} \sum_{\alpha\in
\beta-1} q(|\beta-\alpha|)\cdot\E[\Xb, \Bb_s]{\Hb_\alpha^s\cdot 
p(\alpha, \beta, \Bb_s)}\\
&= \sum_{s = 1}^{t-1} q(1)\cdot\E[\Xb, \Bb_s]{\Hb_{\{1\}}^s\cdot 
p(\{1\}, \{2, 1\}, \Bb_s)}\\
&\leq \sum_{s = 1}^{t-1} 8c \cdot \frac{O(1)}{n}\cdot\E[\Xb, \Bb_s]{\Hb_{\{1\}}^s}\\
&\lesssim c^2 (t/n).
\end{split}
\end{equation*}
Then the contribution to $\E[\Xb, \Bb_s]{\Hb^t_\beta}$ for
$\beta = \{4\}$ from the $\{4\} \to \{2,1\} \to \{1\}$ path is at most
\begin{equation*}
\begin{split}
 \sum_{s = 1}^{t-1} q(|\{4\}-\{2,1\}|)\cdot\E[\Xb, \Bb_s]{\Hb_{\{2,1\}}^s\cdot 
p(\{2,1\}, \{4\}, \Bb_s)}
&\leq \sum_{s = 1}^{t-1} 1 \cdot O(c^2 (s/n)) \cdot \frac{O(1)}{(n-s)^2}\\
&\lesssim c^2\sum_{s = 1}^{t-1} \frac{1}{(n-s)^2}\\
&\eqsim \frac{c^2}{n-t}
\end{split}
\end{equation*}
For almost the entire stream, say $t < n - n^{0.99} < n - \omega(c)$, this
term is much less than the $\Theta(c)$ contribution from extensions.
The fact that merges are about a $c/n$ factor less likely can be used
in general to show that the evolution is ultimately dominated by
extensions and establish~\eqref{eq:evolution}, for $t < n - n^{0.99}$.

\paragraph{Handling the end of the stream.}  Our
bound~\eqref{eq:evolution} gives roughly a $\frac{c}{a!}$ bound on
$H_{\{a\}}^t$, but only up to time $t = n - n^{0.99}$.  By this time,
however, probably every single cycle will be missing only $O(1)$
edges, and so have at most $O(1)$ components.  Thus each cycle will
have an $\Omega(\ell)$-long component that has arrived,
and~\eqref{eq:evolution} gives a
$\frac{c^{O(1)}}{\Omega(\ell)!} \ll \eps / \poly(\ell)$ bound for the
Fourier mass of its collection type at $t$.  Summing over the
$\poly(\ell)$ possible types leads to Lemma~\ref{lem:finalH}.

We now discuss a key technical insight that allows us to establish
Lemma~\ref{lm:evolution}.  It is the one illustrated in the warm-up
example of Section~\ref{sec:warmup}.

\paragraph{Key tool in proving Lemma~\ref{lm:evolution}: decomposition of a typical message.} 
In order to establish Lemma~\ref{lm:evolution}, we need an approach to
expressing the Fourier transform of the typical message $\Fb_t$ at
time $t$ in terms the Fourier transform of the typical messages
$\Fb_s$ for $s<t$. This is achieved by Lemma~\ref{lm:decomposition}
below. Intuitively, this lemma allows us to exploit communication
bottlenecks arising at every $s<t$ that preclude various Fourier
coefficients from becoming large.

Let  $\rb(x_{[s+1:t]}; F_s, F_t)$ denote the indicator function for
$x_{[s+1:t]}$ taking $F_s$ to $F_t$. Note that $\rb$ is a random function
depending only on $\Bb_t$ ($F_s$ is a function on $s$ bits and $F_t$ on $t$
bits, so its dependence on $\Bb_t$ is given implicitly by its arguments).  The
decomposition of typical messages is given by

\begin{restatable}{lemma}{lmdecomposition} \label{lm:decomposition}
For every $s, t\in T$ with $s < t$, and any $z_{\leq t}$, we have:
\[
\wt{\Fb}_t(z_{\leq t})=\E[\Fb_s]{\wt{\Fb}_s(z_{\leq s})
\cdot \wt{\rb}(z_{[s+1:t]}; \Fb_s, \Fb_t) \middle| \Fb_t, \Bb_t}.
\]
\end{restatable}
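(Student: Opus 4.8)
The plan is to unfold both sides using the definitions of the normalized Fourier coefficients and the conditional expectation over the message states, and then to exploit the Markov structure of the protocol — namely that the message $\Mb_t$ (equivalently the function $\Fb_t$) is a deterministic function of $\Mb_s$ (equivalently $\Fb_s$), the bits $x_{[s+1:t]}$ arriving between times $s$ and $t$, and the board increments, which are themselves determined by $\Bb_t$. Write $x = (x_{\leq s}, x_{[s+1:t]})$, and note that $(-1)^{z_{\leq t}\cdot x} = (-1)^{z_{\leq s}\cdot x_{\leq s}}\cdot(-1)^{z_{[s+1:t]}\cdot x_{[s+1:t]}}$. The left-hand side is $\E[x\mid \Mb_t]{(-1)^{z_{\leq t}\cdot x}}$; the idea is to condition further on $\Mb_s$ inside this expectation, i.e.\ to write $\E[x\mid \Mb_t]{\cdot} = \E[\Mb_s\mid \Mb_t,\Bb_t]{\,\E[x\mid \Mb_s,\Mb_t,\Bb_t]{\cdot}\,}$, using that $x$ is independent of $\Bb_t$ so conditioning on $\Bb_t$ is free but lets us talk about the random function $\rb$.

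First I would argue that, conditioned on $\Mb_s$ and $\Bb_t$, the two halves $x_{\leq s}$ and $x_{[s+1:t]}$ are independent: $x_{\leq s}$ has the posterior distribution induced by $\Mb_s$ (the board is independent of $x$), while $x_{[s+1:t]}$ remains uniform and independent of $x_{\leq s}$ given $\Mb_s$. Then further conditioning on $\Mb_t$ only restricts $x_{[s+1:t]}$: since $\Mb_t$ is a function $g$ of $(\Mb_s, x_{[s+1:t]}, \Bb_t)$, conditioning on $\{\Mb_t = F_t\}$ given $\{\Mb_s = F_s, \Bb_t\}$ is exactly conditioning $x_{[s+1:t]}$ on the event $g(F_s, x_{[s+1:t]}, \Bb_t) = F_t$, which does not touch $x_{\leq s}$. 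Therefore inside the conditional expectation the $x_{\leq s}$-factor contributes $\E[x_{\leq s}\mid \Mb_s]{(-1)^{z_{\leq s}\cdot x_{\leq s}}} = \wt{\Fb}_s(z_{\leq s})$, and the $x_{[s+1:t]}$-factor contributes $\E[x_{[s+1:t]}\mid \Mb_s,\Mb_t,\Bb_t]{(-1)^{z_{[s+1:t]}\cdot x_{[s+1:t]}}}$, which is precisely $\wt{\rb}(z_{[s+1:t]};\Fb_s,\Fb_t)$ — the normalized Fourier coefficient of the indicator that $x_{[s+1:t]}$ takes $\Fb_s$ to $\Fb_t$ under the board $\Bb_t$. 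Taking the outer expectation over $\Mb_s$ (equivalently $\Fb_s$) conditioned on $\Mb_t, \Bb_t$ gives exactly the claimed identity.

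The main obstacle I anticipate is bookkeeping the conditioning carefully: one has to be precise that (i) $x$ is independent of the board $\Bb_t$, so introducing $\Bb_t$ into the conditioning is harmless but necessary to make $\rb$ a well-defined (board-measurable) random function; (ii) $\Mb_t$ is genuinely a deterministic function of $(\Mb_s, x_{[s+1:t]}, \Bb_t)$ and in particular carries no extra information about $x_{\leq s}$ once $\Mb_s$ is fixed — this is where the one-way, state-based nature of the protocol is used; and (iii) the factorization of the posterior of $x$ into a product of the posterior of $x_{\leq s}$ (governed by $\Mb_s$) and the conditioned-uniform $x_{[s+1:t]}$ is legitimate. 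Once these independence and measurability facts are nailed down, the computation is just the tower property plus the multiplicativity of characters, with no inequalities involved. I would also remark that this is the exact analog of~\eqref{eq:simple-decomposition} from the warm-up, now stated at the level of the message-state functions rather than a fixed two-stage split, which is what makes it applicable inductively across all $s < t$.
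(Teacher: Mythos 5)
Your proposal is correct and follows essentially the same route as the paper: condition additionally on $\Fb_s$ via the tower property, factor the character $(-1)^{z\cdot x}$ over the two coordinate blocks, use that $\Xb_{\leq s}$ and $\Xb_{[s+1:t]}$ are conditionally independent given $(\Fb_s,\Fb_t,\Bb_t)$ because the constraints imposed by $\Fb_s$ and by $\rb(\cdot;\Fb_s,\Fb_t)$ act on disjoint coordinates of a uniform $\Xb$, and identify the two resulting factors with $\wt{\Fb}_s(z_{\leq s})$ and $\wt{\rb}(z_{[s+1:t]};\Fb_s,\Fb_t)$. The independence and measurability points you flag as potential obstacles are exactly the facts the paper establishes in the discussion immediately preceding the lemma.
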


Since $\rb$ is a function partitioning its input based on at most $2c$ bits,
its normalized Fourier transform $\wt{\rb}$ is subject to a bound similar
to~\eqref{eq:KKL}, which enables us to establish Lemma~\ref{lm:evolution}.

\subsection{Hidden-Batch Random Order Streams}\label{sec:hidden-batch}

We introduce a new random order  streaming model that allows
for (limited) correlations whose structure is unknown to the algorithm (the
hidden-batch random order streaming model, Definition~\ref{def:model-informal} below).  We give
new algorithms for estimating local graph structure using small space in this
model, and show that existing results in this space translate to our model with
only a very mild loss in parameters.

\begin{definition}[Hidden-batch random order stream model; informal]\label{def:model-informal}
In the $(b,w)$-hidden-batch random order stream model the edges of the input graph $G=(V,
E)$ are partitioned adversarially into batches of size bounded by $b$, after
which every batch is presented to the algorithm in a time window of length
$w\geq 0$ starting at a uniformly distributed time in the interval $[0, 1]$.
\end{definition}
To motivate this model consider observing, say, a network traffic stream or a
stream of friendings in a social network. In each case, there are many events
(say, a login attempt, or a group of people meeting each other at a party) that
will trigger a bounded number of updates (the back and forward of packets in a
login protocol, or people adding friends they met) that might have very
complicated temporal correlations with each other, but that occur over a
bounded period and are mostly independent of other events being observed in the
stream.

To simulate this, we think of the division of observations into events (our
batches) being adversarial but limited by a maximum batch size $b$, while the
times of the events are chosen at random but the observations associated with
the events are adversarially distributed about the event time, subject to the
event duration limit $w$. Note that this means that observations from multiple
events may (and often will be) interleaved---more than one person may be
logging onto the same network at the same time and more than one party may
be taking place at once.

It is worth stressing that the partitioning of edges into batches is unknown to
the algorithm (consequently, we refer to our model as the {\em hidden} batch
model). 

Some existing random order streaming algorithm can be readily ported to the
hidden-batch random order streaming model.

\begin{restatable}[Component Collection; informal version of Theorem~\ref{theorem:compfinding}]{theorem}{compfinding-inf}
\label{theorem:compfinding-inf}
There is a $(b,w)$-hidden batch streaming algorithm that, if at least a
$\Omega(1)$ fraction of the vertices of $G$ are in components of size at most
$\ell$, returns a vertex in $G$ and the component containing it with
probability $9/10$ over its internal randomness and the order of the stream,
using $\ell^{\bO{\ell}}(b + wm)\plog n$
bits of space.
\end{restatable}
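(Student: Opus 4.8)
The plan is to design a streaming algorithm for component collection in the hidden-batch model by first reducing to the standard (unbatched) random order setting and then invoking (a robustified version of) the Peng--Sohler style component-collection primitive. The core idea is that the $(b,w)$-hidden-batch model is ``close enough'' to the idealized random order model that an algorithm which only queries \emph{local} structure around a uniformly random vertex will behave almost identically in both models, provided we can afford to buffer and unscramble the edges that arrive within a single time window of length $w$.

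\textbf{Step 1: Local exploration primitive in random order streams.} First I would recall (or re-derive, following~\cite{PS18}) the basic subroutine that, given a random order stream of $G$ and a target vertex $v$, attempts to recover the connected component of $v$ if that component has size at most $\ell$. The key structural fact is that a component of size $\le \ell$ is recovered by a ``grow from a single seed'' process whenever the edges of that component arrive in an order that keeps the explored set connected to $v$; this happens with probability $\ell^{-\Theta(\ell)}$ for a fixed choice of seed, so running $\ell^{O(\ell)}$ parallel exploration attempts from independently chosen start times drives the failure probability below $1/10$. Each exploration attempt stores at most $\ell$ edges, so the space is $\ell^{O(\ell)}\cdot\plog n$ bits in the pure random order model.

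\textbf{Step 2: From hidden-batch to random order via buffering.} Now I would handle the batching. Since every batch lives in a window of length $w$ and starts at a uniformly random time in $[0,1]$, at any moment the algorithm only needs to retain edges whose window is still ``open.'' The expected number of such edges is $O(wm)$ (each of the $m$ edges is in an active window with probability $O(w)$), and with the standard concentration argument this is $O(wm\log n)$ with high probability; together with the $\le b$ edges of any single batch this gives an $O((b+wm)\plog n)$-bit buffer. Within each completed window the algorithm re-presents the buffered edges to the Step-1 subroutine in a \emph{uniformly random} internal order (using its own private randomness). The resulting induced stream is a random order stream \emph{up to} the coarsening imposed by window boundaries; the point is that because batch \emph{start times} are genuinely uniform and windows are short, the induced order is statistically indistinguishable, for the purposes of local exploration around a random vertex, from a true random order stream --- formally, one couples the two and bounds the total variation by the probability that the coarsening ``matters'' for the relevant $\ell$-neighborhood, which is $o(1)$ when $\ell$ is not too large. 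I would carry out this coupling/TVD bound carefully, as it is the crux.

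\textbf{Step 3: Assemble and bound space.} Putting the pieces together: run $\ell^{O(\ell)}$ independent copies of the Step-1 exploration subroutine, each fed the unscrambled stream from the Step-2 buffer, started at independent uniformly random vertices (or, to get the ``at least one success'' guarantee when an $\Omega(1)$ fraction of vertices lie in small components, from a random vertex conditioned on a successful first arrival). Output any vertex whose component was fully recovered with size $\le\ell$. Total space is $\ell^{O(\ell)}\cdot\plog n$ for the explorers plus $O((b+wm)\plog n)$ for the buffer, i.e. $\ell^{O(\ell)}(b+wm)\plog n$ bits as claimed; the success probability is $9/10$ by a union bound over the $\ell^{O(\ell)}$ failure events plus the Step-2 TVD term.

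\textbf{Main obstacle.} The hard part will be Step 2: making precise the sense in which re-randomizing within windows recovers a ``random-order-like'' stream, and proving the TVD between the simulated stream and a true random order stream is small \emph{as seen by the local exploration subroutine}. One must be careful that the adversary controls both the partition into batches and the placement of edges within each window, so the edges of a single small component could all be packed into one adversarial batch --- but then they all arrive in one window and get re-shuffled uniformly by the algorithm, which is exactly what we want; the genuinely delicate case is when a component's edges straddle a few windows, where I would argue that the random window start times still make the relative arrival order of those $O(\ell)$ edges close to uniform, with error going to $0$. I expect this analysis to mirror, at a high level, the robustness arguments for the matching-size and $k$-disc estimators in~\cite{MonemizadehMPS17, KapralovMNT20}, adapted to component collection.
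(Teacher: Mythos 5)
There are two genuine gaps here, and they sit exactly where the paper's technical work is.

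First, the Step~2 reduction to a true random order stream via buffering and re-shuffling does not go through. Even after you re-randomize within each window, edges of the same batch remain contiguous in the induced order (with $w=0$ they arrive at the very same timestamp), whereas in a uniform random order two fixed edges are adjacent only with probability $O(1/m)$. An adversary who correlates the batch partition with the graph structure (e.g.\ packs a whole small component, or a path plus its boundary edges, into one batch) changes the grow process's collection probability by constant factors or more, so the total variation distance \emph{as seen by the local explorer} is not $o(1)$ --- indeed the paper's own lower bound (Theorem~\ref{thm:compestlb}) is built on the fact that $(2,0)$-batch streams are not locally indistinguishable from random order streams. The paper therefore never reduces to the random order model: it redefines the collection event in terms of \emph{batch} arrival order (Definition~\ref{dfn:batchorder}), guesses the batch partition $\Dc$ from timestamps, shows the guess matches an idealized known-batch run with high probability (Lemmas~\ref{lm:fcond}--\ref{lm:DapproxS}), and weights each collected subgraph by the inverse of its batch-dependent collection probability $p_v^\lambda(\Db)$.

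Second, and more fundamentally, your algorithm has no way to certify that a collected subgraph of size $\le\ell$ is actually the full component of $v$ rather than a proper subgraph of a much larger component whose remaining boundary edges happened to arrive too early to be grabbed. "Output any vertex whose component was fully recovered" is not an observable event. This is the crux of the upper bound: the paper imposes a cutoff $\lambda$ (discard the run unless the last new vertex is added before time $\lambda$), observes that a wrongly-collected subgraph with $a$ boundary batches forces all $a$ of them to arrive before $\lambda$ (probability $\lambda^a$), and bounds the number of candidate subgraphs with boundary size $a$ by $\binom{h+a}{a}$ (Lemma~\ref{lm:boundary}), yielding $\E{\Zb_v}=O(\lambda k^4)$ for $\lambda = 1/\poly(k)$. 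This combinatorial boundary count is precisely what improves the exponent from the $k^{O(k^3)}$ of~\cite{PS18} to $k^{O(k)}$; without it (or some substitute) your proposal neither controls false positives nor achieves the claimed space bound.
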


We show that results of~\cite{PS18} on counting connected components in random graph streams can be easily extended to our  hidden-batch random order model with only a mild loss in parameters. Specifically, in~\cite{PS18} it was shown that the
number of connected components $c(G)$ in a graph $G$ can be approximated up to an
$\varepsilon n$ additive term using
$\paren{1/\varepsilon}^{\bO{1/\varepsilon^3}}$ words of space.  We show that
this can be improved to $\paren{1/\varepsilon}^{\bO{1/\varepsilon}}$ and that
(up to log factors) it can be extended to hidden-batch streaming with only
linear loss in the parameters.

\begin{restatable}[Counting Components; informal version of Theorem~\ref{theorem:compcounting}]{theorem}{compcounting-inf}
\label{theorem:compcounting-inf}
For all $\varepsilon \in (0, 1)$, there is a $(b,w)$-hidden batch
streaming algorithm that achieves an $\varepsilon n$ additive approximation to
$c(G)$ with $9/10$ probability, using $\paren{1/\varepsilon}^{\bO{1/\varepsilon}} (b + wm)\plog(n)$ bits of space.
\end{restatable}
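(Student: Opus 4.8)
The plan is to run the Peng--Sohler \cite{PS18} reduction from additive component counting to component collection, instantiated with our improved primitive (Theorem~\ref{theorem:compfinding-inf}) and executed inside the hidden-batch model. The starting point is the identity $c(G) = \sum_{v \in V} 1/n_v$, where $n_v$ is the size of the component containing $v$. Fix a truncation threshold $k = \Theta(1/\varepsilon)$ and split the sum at $k$: vertices in components of size more than $k$ contribute exactly the number of such components, which is at most $n/k = \bO{\varepsilon n}$. So it suffices to estimate the truncated quantity $\Sigma_k \coloneqq \sum_{v : n_v \le k} 1/n_v$ up to additive error $\bO{\varepsilon n}$ and then report $\wh{\Sigma}_k + n/(2k)$, folding the unknown large-component count into the ``best guess'' term $n/(2k)$ at the cost of an additional $\bO{\varepsilon n}$ error (rescaling $\varepsilon$ by a constant at the end).

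First I would estimate $\Sigma_k$ by sampling. Invoke the component collection algorithm of Theorem~\ref{theorem:compfinding-inf} in its ``representative sample'' form (which works for an arbitrary graph, not only one with an $\Omega(1)$ fraction of small-component vertices) with parameter $\ell = k$, obtaining $r = \poly(1/\varepsilon)$ vertices drawn essentially uniformly from $V$, each labelled with its component whenever that component has size at most $k$. Set $X_v = 1/n_v$ if $n_v \le k$ and $X_v = 0$ otherwise, and output $\wh{\Sigma}_k = \tfrac{n}{r}\sum_{v} X_v$. Each $X_v$ lies in $[0,1]$ and has mean $\Sigma_k / n$ over a uniform $v$, so a Chernoff/Hoeffding bound gives $\abs{\wh{\Sigma}_k - \Sigma_k} \le \bO{\varepsilon n}$ with probability $99/100$ already for $r = \bO{1/\varepsilon^2}$; combining this with the (constant, amplifiable) success probability of the primitive via a union bound yields overall probability $9/10$.

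The main content---and the main obstacle---is checking that nothing in the Peng--Sohler argument needs more than Theorem~\ref{theorem:compfinding-inf} supplies: that the returned sample is close enough to uniform (in total variation, and over both the internal randomness and the hidden-batch stream order) for the Hoeffding step, and that the degenerate regime where few vertices lie in components of size $\le k$ is harmless, since there $c(G) = \bO{\varepsilon n}$ and $\Sigma_k$ is correspondingly small, so the sampling estimate is already accurate without a separate branch. All of this is carried out by \cite{PS18} in the pure random-order setting; the work here is to observe that the reduction is a black-box use of component collection and therefore lifts verbatim to the hidden-batch model.

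Finally, the space bound. A single run of the component collection primitive with $\ell = k = \Theta(1/\varepsilon)$ uses $k^{\bO{k}}(b + wm)\plog n = \paren*{1/\varepsilon}^{\bO{1/\varepsilon}}(b+wm)\plog n$ bits, and producing the $r = \poly(1/\varepsilon)$ labelled samples (in parallel, or as the primitive's natural batched output) multiplies this by only a $\poly(1/\varepsilon)$ factor, which is absorbed into the $\paren*{1/\varepsilon}^{\bO{1/\varepsilon}}$ prefactor. This gives the claimed $\paren*{1/\varepsilon}^{\bO{1/\varepsilon}}(b+wm)\plog n$ bound; the improvement over the $\paren*{1/\varepsilon}^{\bO{1/\varepsilon^3}}$ space of \cite{PS18} comes precisely from replacing their $k^{\bO{k^3}}$-space component collection with our $k^{\bO{k}}$-space primitive, and the $(b+wm)$ factor is the linear overhead of moving to hidden-batch streams inherited from Theorem~\ref{theorem:compfinding-inf}.
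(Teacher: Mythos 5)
There is a genuine gap: you treat the component-collection primitive as a per-vertex oracle that, for each sampled $v$, reliably reports whether $n_v \le k$ and, if so, returns $K_v$ — and you then apply Hoeffding to the resulting $[0,1]$-valued $X_v$ with $r = \bO{1/\varepsilon^2}$ samples. No such oracle exists in one pass. The actual primitive $\ColComp(v,k)$ collects the true component $K_v$ only when the batches meeting $K_v$ arrive in "batch order" and before time $\lambda$, an event of probability roughly $(\lambda/k^2)^k = (1/\varepsilon)^{-\Theta(1/\varepsilon)}$ (Lemma~\ref{lm:pmin}); since the stream is seen once, this cannot be amplified per vertex. Worse, it can return a proper subgraph of $K_v$ that looks like a complete small component (a false positive), silently corrupting $X_v$. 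Theorem~\ref{theorem:compfinding-inf} itself only guarantees returning \emph{one} vertex and its component under an $\Omega(1)$-fraction promise, and its proof already consists of exactly the machinery you are hoping to import for free; there is no established "representative sample form" to invoke, and the counting theorem is not proved in the paper as a black-box reduction to it.

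The paper's actual proof (Section~\ref{sec:comp}) addresses precisely these obstacles: it defines the importance-weighted estimator $\Xb_v = 1/p_v^\lambda(\Db_v)$ and decomposes it as $\Xb_v = \Yb_v + \Zb_v$, showing $\E{\Yb_v} = \1bb(c_v \le k)$ (Lemma~\ref{lm:Yexp}) but $\Var{\Yb_v} \le (k/\lambda)^{\bO{k}}$ (Lemma~\ref{lm:Yvar}) — which is why $r$ must be $(1/\varepsilon\delta)^{\Theta(1/\varepsilon)}$ rather than $\poly(1/\varepsilon)$, and why the exponential factor in the space bound comes from the number of parallel copies, not from a single run. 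It bounds the false-positive contribution $\E{\Zb_v} = \bO{\lambda k^4}$ via the combinatorial boundary-counting Lemma~\ref{lm:boundary}, establishes approximate independence of the $\Yb_{\vb_i}$ across samples (Lemma~\ref{lm:Yindep}, needed because sampled components may share batches), and separately proves that the real batch-guessing algorithm agrees with an idealized known-batch algorithm with probability $1 - \bO{k^4 w \log^2(1/w)}$ (Lemmas~\ref{lm:DconS}--\ref{lm:DapproxS}). Your truncation of $c(G) = \sum_v 1/n_v$ at $k = \Theta(1/\varepsilon)$ matches the paper's use of $\frac{1}{c_v}\1bb(c_v \le k)$, but everything between "sample $r$ vertices" and "average the $X_v$" is where the theorem's content lives, and the proposal does not supply it.
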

We note that the $wm$ term in the space complexity corresponds to the expected number of edges arriving in a given time window of length $w$. Since the arrival times of edges are adversarially chosen in a window of length $w$ started at the arrival time of the corresponding batch, it is natural to expect the algorithm to store these edges.

Our algorithm above, similarly to the approach of~\cite{PS18}, proceeds by first sampling a few nodes in the graph uniformly at random, and then constructing connected components incident on those nodes explicitly. Such a sample of component sizes for a few vertices selected uniformly at random from the vertex set of the input graph can then be used to obtain an estimator for the number of connected components.  The details are given in Section~\ref{sec:comp}.

\section{Lower Bounds}\label{sec:lb}

In this section we prove our main result, which is a lower bound for
$\SC$, even when the edge arrival order and bit labels are chosen uniformly at
random.
\begin{restatable}[$\SC$ Lower Bound]{theorem}{cycleslb}
\label{thm:cycleslb}
For all constants $\eps > 0$, solving the distributional version of the
\SC$(n, \ell)$ problem with probability at least $2/3$ requires at least one
player to send a message of size at least $\min\paren*{\ell^{\Omega(\ell)},n^{1
- \varepsilon}}$.
\end{restatable}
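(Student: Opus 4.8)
The plan is to pass to the Fourier-analytic framework, reduce the success probability of an arbitrary protocol to the quantity $\E[\Xb,\Bb_n]{\Hb^n_{\{\ell\}}}$ controlled by Lemma~\ref{lem:finalH}, and then clear up the boundary regimes of $\ell$. First I would reduce to the exact setting of Lemma~\ref{lem:finalH}: since the problem is distributional we may fix the protocol's random string and assume it is deterministic, and since handing the protocol the board $\Bb_t$ (the identities, though not the labels, of the edges seen so far) for free only strengthens it, a lower bound against such board-aware protocols implies one for the original model; the relevant state at time $t$ is then the pair $(\Mb_t,\Bb_t)$ with $\Mb_t$ of at most $c$ bits (we may also fold the last player's private bit into $\Mb_n$, inflating it by $O(1)$ bits). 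In this model the hypercontractive bound \eqref{eq:KKL}, the message-decomposition identity (Lemma~\ref{lm:decomposition}), the extension/merge probability estimates (Lemma~\ref{lm:ext-prob}) and the evolution bound \eqref{eq:evolution} (Lemma~\ref{lm:evolution}) all hold, and their combination is precisely Lemma~\ref{lem:finalH}: for every $\varepsilon>0$ there is $D=D(\varepsilon)$ such that $c<\min(\ell^{\ell/D},n^{1-\varepsilon})$ together with $D<\ell<D^{-1}\log n$ forces $\E[\Xb,\Bb_n]{\Hb^n_{\{\ell\}}}\le\varepsilon$.

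Next comes the reduction from the game to this quantity. On its final state $(\Mb_n,\Bb_n)$ the protocol outputs a vertex $v$, which fixes the cycle $C=C(v,\Bb_n)\subseteq E$ containing it, and a guess $g$ for the parity $\sum_{e\in C}\Xb_e\bmod 2$. Conditioned on $(\Mb_n,\Bb_n)$ this parity is $0$ with probability $\tfrac{1+\wt\Fb_n(C)}{2}$ and $1$ with probability $\tfrac{1-\wt\Fb_n(C)}{2}$, so no guess is correct with probability exceeding $\tfrac{1+|\wt\Fb_n(C)|}{2}$, and hence $\Pb{\text{success}}\le\tfrac12+\tfrac12\,\E[\Mb_n,\Bb_n]{|\wt\Fb_n(C)|}$. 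At time $n$ the graph $G$ is a disjoint union of $\ell$-cycles, so regardless of which $v$ the protocol selects, $C$ is a single length-$\ell$ component, i.e.\ $C\bsim_n\{\ell\}$; since all summands are nonnegative this gives $\wt\Fb_n(C)^2\le\sum_{z\bsim_n\{\ell\}}\wt\Fb_n(z)^2=\Hb^n_{\{\ell\}}$. Combining with Jensen's inequality and Lemma~\ref{lem:finalH},
\[
\Pb{\text{success}}\;\le\;\tfrac12+\tfrac12\,\E[\Mb_n,\Bb_n]{\sqrt{\Hb^n_{\{\ell\}}}}\;\le\;\tfrac12+\tfrac12\sqrt{\E[\Xb,\Bb_n]{\Hb^n_{\{\ell\}}}}\;\le\;\tfrac12+\tfrac12\sqrt{\varepsilon}.
\]
Given a target exponent $\varepsilon$, apply this with $\hat\varepsilon=\min(\varepsilon,1/100)$, so that $\tfrac12+\tfrac12\sqrt{\hat\varepsilon}<2/3$: then any protocol achieving success probability $2/3$ must have some player send $c\ge\min(\ell^{\ell/D(\hat\varepsilon)},n^{1-\hat\varepsilon})\ge\min(\ell^{\Omega(\ell)},n^{1-\varepsilon})$ bits, which is the claimed bound in the main regime $D<\ell<D^{-1}\log n$.

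It remains to cover the two extreme regimes of $\ell$, which I expect to be the fussiest part rather than the deepest one --- all of the substantive difficulty lives in Lemma~\ref{lem:finalH} and the lemmas feeding into it, and the steps above are essentially bookkeeping. If $\ell=O(1)$ (below $D$), the asserted bound is a constant and it suffices to note that zero communication cannot win: the last player's view is independent of the remaining $\ell-1\ge1$ labels on its own cycle, so its parity guess is correct with probability exactly $1/2$; hence at least one player sends at least one bit, which is $\ell^{\Omega(\ell)}$ once the hidden constant is chosen small enough. If $\ell\ge D^{-1}\log n$ then $\ell^{\Omega(\ell)}\ge(\log n)^{\Omega(\log n)}=n^{\omega(1)}\gg n^{1-\varepsilon}$, so $\min(\ell^{\Omega(\ell)},n^{1-\varepsilon})=n^{1-\varepsilon}$ and we only need the $n^{1-\varepsilon}$ bound; this follows from the same Fourier argument, since the right-hand side of \eqref{eq:evolution} only improves as $\ell$ grows (for $\beta=\{a\}$ with $a=\Omega(\log n)$ the factor $1/a!$ already dominates the $c^{O(1)}$ term as long as $c<n^{1-\varepsilon}$), or alternatively by a subdivision reduction embedding a hard \SC{} instance with cycle length $\ell_0=\Theta(\log n)$ --- in the valid range --- into the given one. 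The two points that require actual care in this assembly are therefore (i) justifying the free-board relaxation and the reduction to a deterministic protocol, and (ii) making the argument uniform over the full range of $\ell$.
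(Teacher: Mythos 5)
Your proposal is correct and follows essentially the same route as the paper: fix a deterministic protocol via Yao/min-max, observe that the returned cycle $C$ satisfies $C\bsim_n\{\ell\}$ so that $\wt{\Fb}_n(C)^2\le \Hb^n_{\{\ell\}}$, and derive a contradiction with Lemma~\ref{lem:finalH}; your final step (success probability $\le \tfrac12+\tfrac12\E{|\wt{\Fb}_n(C)|}$ plus Jensen) is just the contrapositive form of the paper's averaging argument that a $2/3$ success rate forces $\E{\Hb^n_{\{\ell\}}}\ge 1/432$, and in fact matches the sketch in the paper's overview. Your explicit treatment of the boundary regimes of $\ell$ is extra care the paper's formal proof omits (it implicitly works in the range $D<\ell<D^{-1}\log n$ where Lemma~\ref{lem:finalH} applies), though your claim that the same Fourier machinery extends to $\ell\gtrsim\log n$ should not be taken for granted, since the combinatorial lemmas require $t\le n-C2^{2\ell}n^{5/6}\log n$.
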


We use this to prove a lower bound for the component collection problem. 
\begin{definition}[Component Collection]\label{def:comp-collection}
In the $(\beta, \ell)$ component collection problem, we are given a graph
$G$ as a stream of edges, with at least $\beta |V(G)|$ of its vertices in
components of size at most $\ell$, and we must return a vertex $v \in V(G)$ and
the size of the component containing $v$.
\end{definition}

Our lower bound is given by Theorem~\ref{thm:compestlb}:
\begin{restatable}[Component Collection Lower Bound]{theorem}{compestlb}
\label{thm:compestlb}
For all constants $C, \varepsilon > 0$, solving the $(1,\ell)$ component
estimation problem in the $(2, 0)$-batch random order streaming model with
probability at least $2/3$ requires at least $\min\left(\ell^{\Omega(\ell)},
n^{1 - \varepsilon}\right)$ space.
\end{restatable}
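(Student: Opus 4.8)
The plan is to reduce the distributional $\SC(n',\ell')$ problem to $(1,\ell)$ component collection in the $(2,0)$-batch random order model, and then invoke Theorem~\ref{thm:cycleslb}. The reduction is exactly the one sketched right after Theorem~\ref{thm:main-inf}: given an instance of $\SC(n',\ell')$ with underlying graph $G=(V,E)$ on $n'$ vertices consisting of $n'/\ell'$ cycles, with bit labels $x_e$, we build the ``doubled'' graph $G'$ on vertex set $V'=V\times\{0,1\}$, replacing each edge $uv$ with label $b_{uv}$ by the batch $\{(u,0)(v,b_{uv}),\,(u,1)(v,\overline{b_{uv}})\}$ of two edges inserted at the same time. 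First I would check the combinatorial claim that each length-$\ell'$ cycle $C$ of $G$ with total parity $p=\sum_{e\in C}x_e$ lifts to a subgraph of $G'$ on the $2\ell'$ vertices $C\times\{0,1\}$ that is: a single cycle of length $2\ell'$ if $p=1$, and two disjoint cycles of length $\ell'$ each if $p=0$ (this is the standard ``double cover / holonomy'' computation — walking around $C$ in $G'$ flips the second coordinate exactly when the accumulated parity is odd, so after one loop the coordinate returns iff $p=0$). Consequently every vertex $(v,b)\in V'$ lies in a component of size $\ell'$ or $2\ell'$, and knowing that size determines the parity of the cycle of $G$ through $v$.

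Next I would verify that the stream of $G'$ produced this way is a legal instance of the $(2,0)$-batch random order streaming model: the edges are partitioned (adversarially — here canonically) into batches of size $2$, and since the $\SC$ stream presents the edges of $G$ in uniformly random order, the batches of $G'$ arrive in uniformly random order with window $w=0$, which is precisely Definition~\ref{def:model-informal} with $b=2$, $w=0$. Taking $\ell=2\ell'$ and $n=2n'$ (so $|V(G')|=n$), every vertex of $G'$ is in a component of size at most $\ell$, so the promise of the $(1,\ell)$ component collection problem is satisfied with $\beta=1$. Now suppose $\mathcal A$ solves $(1,\ell)$ component collection in this model with probability $\ge 2/3$ using $S$ bits of space. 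Run $\mathcal A$ on the simulated $G'$-stream (each player of the $\SC$ game, on receiving its edge $(uv,b_{uv})$, feeds the two edges of the corresponding batch to $\mathcal A$ and forwards $\mathcal A$'s state — this costs $S$ bits of communication per player plus the $O(\log n)$ bookkeeping, which is absorbed into the bound). When $\mathcal A$ outputs a vertex $(v,b)$ and a component size, the last player outputs $v$ together with $\text{parity}=0$ if the size is $\ell'$ and $\text{parity}=1$ if the size is $2\ell'$ (and fails otherwise, which never happens on valid instances). By the holonomy computation this is a correct answer for $\SC(n',\ell')$ with probability $\ge 2/3$, so Theorem~\ref{thm:cycleslb} forces $S=\min(\ell'^{\Omega(\ell')},n'^{1-\varepsilon'})=\min(\ell^{\Omega(\ell)},n^{1-\varepsilon})$ after adjusting constants ($\ell'=\ell/2$ changes $\ell^{\Omega(\ell)}$ only in the hidden constant; $n'=n/2$ and a slightly smaller $\varepsilon'$ handle the polynomial term), which is the claimed bound.

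The only genuinely substantive step is the holonomy/double-cover claim relating $\sum_{e\in C}x_e$ to the component structure of $C\times\{0,1\}$ in $G'$; everything else is a syntactic check that the simulation respects the batch model and that space translates to per-player communication. I expect the main obstacle — such as it is — to be stating that claim cleanly for a cycle (as opposed to a path): one has to be careful that the ``extra'' $\{0,1\}$-labelling is consistent around the loop, i.e. that the map $v\mapsto(v,b)$ following edges of $G'$ is well-defined precisely because $G'$ restricted to $C\times\{0,1\}$ is a $2$-regular graph whose connectedness is governed by the $\mathbb Z_2$-valued monodromy $p$. Once that is pinned down, the reduction and the application of Theorem~\ref{thm:cycleslb} are routine, and in fact this is exactly the argument already used in the proof of the informal Theorem~\ref{thm:comp-lb} in the excerpt, just carried out against the formal Definition~\ref{def:comp-collection} and the formal lower bound Theorem~\ref{thm:cycleslb}.
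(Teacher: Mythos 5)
Your proposal is correct and is essentially the paper's own proof: the same doubling reduction from $\SC(n',\ell')$ with $\ell'=\lfloor\ell/2\rfloor$, the same parity-from-component-size decoding, and the same invocation of Theorem~\ref{thm:cycleslb}. The only detail you gloss over is that the formal $(2,0)$-batch model (Definition~\ref{dfn:batchmodel}) requires the players to supply uniformly random timestamps, which the paper generates on the fly in $\bO{\log n}$ space (Appendix~\ref{app:streamstamps}); this is covered by your ``$O(\log n)$ bookkeeping'' remark.
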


The theorem gives a tight lower bound for the $(1,\ell)$ component collection problem in $(2,0)$-hidden batch
streams. 

We also prove a lower bound for the random walk generation problem in random streams, which we define formally first. Our definition matches the one in~\cite{KallaugherKP21}. 

\begin{definition}[Pointwise $\e$-closeness of
distributions]\label{def:eps-close} We say that a distribution $p\in
\R_+^{\mathcal U}$ is $\e$-close pointwise to a distribution $q\in
\R_+^{\mathcal U}$ if for every $u\in \mathcal U$ one has $$
p(u)\in [1-\e, 1+\e]\cdot q(u).
$$
\end{definition}

We now define the notion of an $\e$-approximate sample of a $k$-step random walk:
\begin{definition}[$\e$-approximate sample]\label{def:eps-sample}
Given $G=(V, E)$ and a vertex $u\in V$ we say that $(X_0, X_1,\ldots, X_k)$ is an {\em $\e$-approximate sample} of the $k$-step random walk started at $u$ if 
the distribution of $(X_0, X_1,\ldots, X_k)$ is $\e$-close pointwise to the distribution of the $k$-step walk started at $u$ (see Definition~\ref{def:eps-close}).
\end{definition}

\begin{definition}[Random walk generation]\label{def:rw-generation}
In the $(k, s, \e, \delta)$-random walk generation problem one must, given a graph $G=(V, E)$ presented as a stream, generate $s$ independent $\e$-approximate samples of the walk of length $k$ in $G$ started at a uniformly random vertex, with error bounded by $\delta$ in the total variation distance. 
\end{definition}
The work of~\cite{KallaugherKP21} designs a primitive that outputs such walks
 using space $(1/\e)^{O(k)}2^{O(k^2)} s$, with $\delta=1/10$, say.

\begin{restatable}[Random Walk Generation Lower Bound]{theorem}{rwlb}
\label{thm:rwlb}
There exists an absolute constant $C>1$ such that for sufficiently large $k\geq 1$, {\bf (1)} solving the $(k, 1, 1/10, 1/10)$ component
estimation problem in the $(2, 0)$-batch random order streaming model requires at least $\min\left(k^{\Omega(\sqrt{k})},
n^{0.99}\right)$ space and {\bf (2)} solving the $(k, C4^k, 1/10, 1/10)$ random walk generation
 problem in the $(2, 0)$-batch random order streaming model requires at least $\min\left(k^{\Omega(k)},
n^{0.99}\right)$ space. 
\end{restatable}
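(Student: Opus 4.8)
The plan is to reduce from $\SC(n,\ell)$ for a suitable $\ell$, using the vertex‑doubling (signed double cover) construction $G\mapsto G'$ introduced just before Theorem~\ref{thm:comp-lb}: on receiving edge $(uv,b_{uv})$, a player emits the size‑$2$ batch $\set{(u,0)(v,b_{uv}),\ (u,1)(v,\overline{b_{uv}})}$, producing a stream of $G'=(V\times\bool,E')$ in the $(2,0)$‑hidden‑batch model (the $n$ batches arrive in the uniformly random order inherited from $\SC$, and the batch structure being hidden only makes the target problem harder). The structural fact we use is that, tracking the second coordinate while walking around a cycle $C\ni v$ of $G$, the component of $(v,0)$ in $G'$ is a length‑$\ell$ cycle when $\sum_{e\in C}x_e=0$ (with $(v,1)$ in a disjoint $\ell$‑cycle), and a length‑$2\ell$ cycle containing both $(v,0)$ and $(v,1)$ when the parity is $1$; every component of $G'$ is therefore a cycle of length $\ell$ or $2\ell$. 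Hence, if the random‑walk generator outputs a walk $(X_0,\dots,X_k)$ that visits every vertex of the component of $X_0$, then using the board of already‑seen edges (free, exactly as in the definition of $\SC$) we recognize that walk as one whose visited vertices close a cycle in $G'$, read off whether that cycle has length $\ell$ or $2\ell$, and output $(v,\text{parity})$ where $(v,b)=X_0$, solving $\SC(n,\ell)$. Since a streaming algorithm is a one‑way $n$‑player protocol with its memory as the message, a space lower bound follows from Theorem~\ref{thm:cycleslb}.

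For part (1), set $\ell=\floor{\sqrt{k/C}}$ so $k\ge C\ell^2$, and run the algorithm once. It suffices that the single generated walk covers its starting component with probability $\ge 2/3$. A simple random walk of length $t=C\ell^2$ on a cycle of length $m\le 2\ell$ covers it unless its range on the covering line $\Zbb$ stays below $m$; a reflection/CLT estimate gives $\Pr[\text{range}_t<m]=\bO{1/\sqrt{C}}$, so the ideal walk covers with probability $\ge 1-\bO{1/\sqrt C}$, and since the output is a $(1/10)$‑approximate sample with total‑variation error $1/10$ it covers with probability $\ge (1-\tfrac1{10})(1-\bO{1/\sqrt C})-\tfrac1{10}\ge 2/3$ for $C$ a large enough absolute constant. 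Applying Theorem~\ref{thm:cycleslb} with error parameter $0.01$ then gives the bound $\min(\ell^{\Omega(\ell)},n^{0.99})$, and $\ell=\bTh{\sqrt k}$ turns this into $\min(k^{\Omega(\sqrt k)},n^{0.99})$.

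For part (2), set $\ell=\floor{k/2}$ and run the algorithm with $s=C4^k$ walks. A single length‑$k$ walk started at a uniform vertex of $G'$ covers its component (a cycle of length $m\le 2\ell\le k$) with probability at least $2^{-(m-1)}\ge 2^{-k}$, just by taking $m-1$ consecutive steps in one direction; so in each marginal $\nu_j$ of the product target distribution the walk covers with probability $\ge (1-\tfrac1{10})2^{-k}$, and under the product no walk covers with probability at most $(1-(1-\tfrac1{10})2^{-k})^{C4^k}\le\exp(-(1-\tfrac1{10})C2^{k})=\lO{1}$. Adding the $1/10$ TV error, with probability $\ge 2/3$ some walk covers its component; processing the walks as they are emitted and retaining only the first one that closes a cycle on the board (extra space $\bO{k\log n}$) we recover a cycle length and proceed as above. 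Theorem~\ref{thm:cycleslb} now yields $\min(\ell^{\Omega(\ell)},n^{0.99})=\min(k^{\Omega(k)},n^{0.99})$, and the $\bO{k\log n}$ overhead is negligible against $k^{\Omega(k)}$ (and, in the relevant regime $k\log k=\bO{\log n}$, against $n^{0.99}$).

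The main obstacle is the quantitative random‑walk analysis. For part (1) the cover‑time tail $\Pr[\text{range}_t<m]=\bO{1/\sqrt C}$ must be sharp enough to beat the combined $\e+\delta=1/5$ slack using only a constant‑factor blow‑up of $t$ — hence only a constant factor in $\ell$, which is all that part (1)'s target $k^{\Omega(\sqrt k)}$ can afford. For part (2) one only needs a per‑walk covering probability $\ge 2^{-(1-\Omega(1))k}$ so that $C4^k=2^{(2-o(1))k}$ independent trials drive the failure probability to $\lO{1}$, and the crude "straight‑line" bound already gives this. A secondary point is bookkeeping: verifying that the emitted stream is a legal $(2,0)$‑hidden‑batch random order stream, that the free board of $\SC$ is precisely what lets the observer read off component sizes at no extra cost, and that the one‑way many‑player lower bound of Theorem~\ref{thm:cycleslb} transfers to streaming space in the batched model.
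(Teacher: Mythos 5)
Your proposal is correct and follows essentially the same route as the paper: reduce from $\SC(n,\ell')$ via the vertex-doubling construction with $\ell'=\lfloor\sqrt{k/C}\rfloor$ (resp.\ $\lfloor k/2\rfloor$), argue the walk(s) cover the starting cycle with probability $\ge 2/3$ after accounting for the $\e$-approximation and TV error, and invoke the $\SC$ lower bound. The only differences are cosmetic — you give a CLT/range estimate where the paper simply asserts the $9/10$ covering probability for part (1), and the paper additionally spells out the on-the-fly timestamp generation (its Appendix~\ref{app:streamstamps}) that you flag as bookkeeping.
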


We start by setting up basic notation in Section~\ref{sec:notation}, then
define our communication problem, namely the \SC{} problem, in
Section~\ref{sec:sc}. Notation relating the underlying graph in the \SC{} problem
to the observable graph (the graph that the players see on their common board)
together with basic results on this relation is presented in
Section~\ref{sec:oaugraph}. Definitions relating to properties of collections
of components observed on the board, such as the definition of extension and
merge events, are presented in Section~\ref{sec:basic-defs}. Technical lemmas
on the probabilities of these events are presented in Section~\ref{sec:comb}. A
key lemma on the decomposition of the typical message of a player in the
\SC{} problem is presented in Section~\ref{sec:decomp}. The main recurrence on
the evolution of Fourier coefficients throughout the stream is obtained in
Section~\ref{sec:evolution}, and solved in Section~\ref{sec:main-lemma}, where
we prove of our main lemma (Lemma~\ref{lem:finalH}).  We use this to prove
Theorem~\ref{thm:cycleslb} in Section~\ref{sec:cycleslb}. Finally, the proof of
Theorem~\ref{thm:compestlb} is presented in Section~\ref{sec:main-thm}, and the
proof of Theorem~\ref{thm:rwlb} is presented in Section~\ref{sec:rwlb}.

\subsection{Notation}\label{sec:notation}

We use the notation $\brac*{a} = \lbrace 1, 2, \dots, n \rbrace$ for positive
integers $a$.  For strings $x, y$, we use $x \cdot y$ to denote the concatenation of $x$ and
$y$. For $x \in \bool^n$ and $s, t \in \brac*{n}$, we write $x_{\brac*{s:t}}$ for
the substring of $x$ starting at index $s$ and ending at index $t$ and $x_{\le
t}$ for $x_{[1:t]}$.

We will use standard permutation notation, including $S_n$ to denote the set of
permutations of $\lbrack n \rbrack$. For any set $A$ and permutations $\pi,
\phi : A \rightarrow A$ we will use $\pi\phi : A \rightarrow A$ to denote their
composition. For any such permutation $\pi$ and a tuple $(u,v) \in A \times A$
or set $B \subset A$, we will write 
\begin{align*}
\pi((u,v)) &= (\pi(u), \pi(v))\\
\pi(B) &= \lbrace \pi(B) : a \in A \rbrace
\end{align*}
For any set $A$, we will use $\Uc(A)$ to denote the uniform distribution on
$A$.   For any predicate $p$, we will use $\1bb(p)$ to denote the variable that is $1$
if $p$ is true and $0$ otherwise.

\subsection{The \SC{} problem}\label{sec:sc}
The \SC($n, \ell$) problem is an $n$-player one way communication problem
defined as follows. Informally, in this game every player is presented with an
edge $e=(u, v)\in \binom{V}{2}$ together with $x_u$ (here $x \in \{0,
1\}^V$), with the promise that the union of the edges can be partitioned into
$n/\ell$ disjoint cycles of length $\ell$. The $n\nth$ player must output a
cycle $C$ in the graph together with the parity of the cycle, namely
$\sum_{u\in C} x_u$. 

We consider a distributional version of this problem in which the bit string
$x$ is chosen uniformly at random, the edges are assigned to the players
uniformly at random, and the partitioning of the edge set into cycles is not
known to the players in advance (otherwise the problem becomes trivial).  The
edges given to the players are public input (they are posted on a board visible
to all players), whereas the bits $x_u$ are private inputs. For every $t\in
[n]$ the $t\nth$ player receives a message of $c$ bits from the $(t-1)\nth$
player and sends a message of $c$ bits to the $(t+1)\nth$ player. The message
from the $0\nth$ player is the zero string of  length $c$.  We define the
problem formally in what follows.

\paragraph{Underlying graph $G$.} Let the \emph{underlying graph} $G = (V,E)$ with $V = \lbrack n \rbrack$ be defined as follows. 
We first define $\next : V \rightarrow V$ as \[
\next(j) = \begin{cases}
j+1 & \mbox{if $j \not\equiv 0 \bmod \ell$}\\
j - \ell + 1 & \mbox{otherwise.}
\end{cases}
\]
Then for each vertex $v \in V$ define an edge 
\begin{equation}\label{eq:e-v}
e_v = (v, \next(v))
\end{equation}
and set \[
E = \lbrace e_v \rbrace_{v \in V}
\]
The graph $G$ will be revealed to the players over $n$ time steps. We will use $T = \lbrack n\rbrack$ to denote the
set of these time steps. The labels of the vertices of $G$ are permuted before they are presented to the players. We define the permutation now.

\paragraph{Permuting labels of vertices.} Choose two permutations $\bpi : T \rightarrow V$, $\bsigma : V \rightarrow V$
uniformly at random. These two permutations will determine, for each $t \in T$,
the edge that is written on the board at time $t$ (we will also say the edge
``arrives'' at time $t$, and that it is ``present'' at all $t' \ge t$). Specifically, the edge arriving at time $t$ is 
$$
\bb_t = \bsigma(e_{\bpi(t)}).
$$

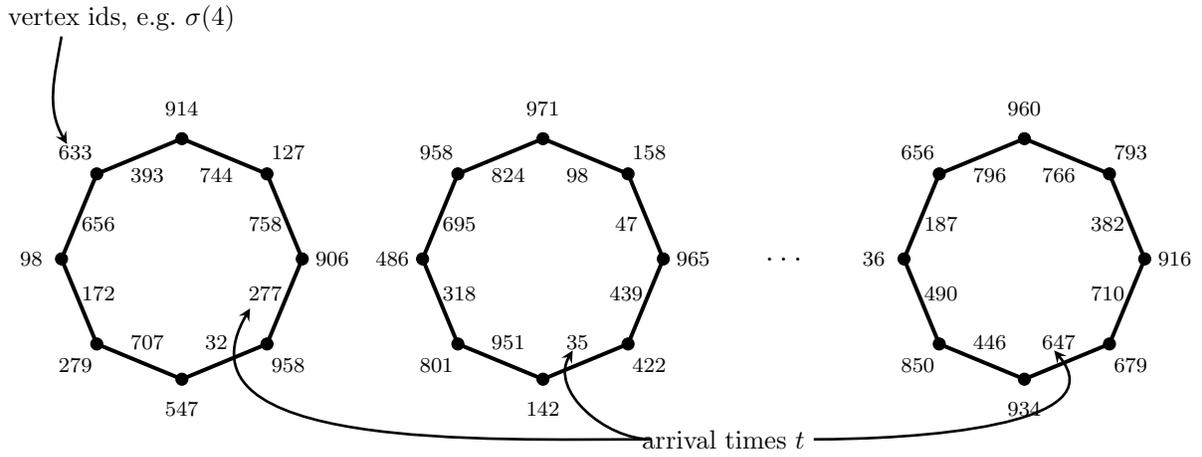
\begin{figure}
	\begin{center}
		\begin{tikzpicture}[scale=0.8]
		
		\draw (-1, 3) node {\large underlying graph $G$};
		
		\begin{scope}[shift={(-7, 0)}]
			\draw[line width=1.5pt] (0*45: 2) -- (1*45:2);
			\draw[line width=1.5pt] (1*45: 2) -- (2*45:2);
			\draw[line width=1.5pt] (2*45: 2) -- (3*45:2);
			\draw[line width=1.5pt] (3*45: 2) -- (4*45:2);
			\draw[line width=1.5pt] (4*45: 2) -- (5*45:2);
			\draw[line width=1.5pt] (5*45: 2) -- (6*45:2);
			\draw[line width=1.5pt] (6*45: 2) -- (7*45:2);
			\draw[line width=1.5pt] (7*45: 2) -- (8*45:2);
			
			\draw[fill=black!100] (0*45: 2) circle (0.1);
			\draw[fill=black!100] (1*45: 2) circle (0.1);			
			\draw[fill=black!100] (2*45: 2) circle (0.1);
			\draw[fill=black!100] (3*45: 2) circle (0.1);			
			\draw[fill=black!100] (4*45: 2) circle (0.1);
			\draw[fill=black!100] (5*45: 2) circle (0.1);			
			\draw[fill=black!100] (6*45: 2) circle (0.1);
			\draw[fill=black!100] (7*45: 2) circle (0.1);			
			
			\draw (0*45: 2.5) node {\scriptsize $1$};
			\draw (1*45: 2.5) node {\scriptsize $2$};
			\draw (2*45: 2.5) node {\scriptsize $3$};
			\draw (3*45: 2.5) node {\scriptsize $4$};
			\draw (4*45: 2.5) node {\scriptsize $5$};
			\draw (5*45: 2.5) node {\scriptsize $6$};
			\draw (6*45: 2.5) node {\scriptsize $7$};
			\draw (7*45: 2.5) node {\scriptsize $8$};
		\end{scope}
		
		\begin{scope}[shift={(-1, 0)}]
			\draw[line width=1.5pt] (0*45: 2) -- (1*45:2);
			\draw[line width=1.5pt] (1*45: 2) -- (2*45:2);
			\draw[line width=1.5pt] (2*45: 2) -- (3*45:2);
			\draw[line width=1.5pt] (3*45: 2) -- (4*45:2);
			\draw[line width=1.5pt] (4*45: 2) -- (5*45:2);
			\draw[line width=1.5pt] (5*45: 2) -- (6*45:2);
			\draw[line width=1.5pt] (6*45: 2) -- (7*45:2);
			\draw[line width=1.5pt] (7*45: 2) -- (8*45:2);
			
			\draw[fill=black!100] (0*45: 2) circle (0.1);
			\draw[fill=black!100] (1*45: 2) circle (0.1);			
			\draw[fill=black!100] (2*45: 2) circle (0.1);
			\draw[fill=black!100] (3*45: 2) circle (0.1);			
			\draw[fill=black!100] (4*45: 2) circle (0.1);
			\draw[fill=black!100] (5*45: 2) circle (0.1);			
			\draw[fill=black!100] (6*45: 2) circle (0.1);
			\draw[fill=black!100] (7*45: 2) circle (0.1);						
			
			\draw (0*45: 2.5) node {\scriptsize $9$};
			\draw (1*45: 2.5) node {\scriptsize $10$};
			\draw (2*45: 2.5) node {\scriptsize $11$};
			\draw (3*45: 2.5) node {\scriptsize $12$};
			\draw (4*45: 2.5) node {\scriptsize $13$};
			\draw (5*45: 2.5) node {\scriptsize $14$};
			\draw (6*45: 2.5) node {\scriptsize $15$};
			\draw (7*45: 2.5) node {\scriptsize $16$};
		\end{scope}
		
		\draw (+3, 0) node {\large $\ldots$};

		\begin{scope}[shift={(+7, 0)}]
			\draw[line width=1.5pt] (0*45: 2) -- (1*45:2);
			\draw[line width=1.5pt] (1*45: 2) -- (2*45:2);
			\draw[line width=1.5pt] (2*45: 2) -- (3*45:2);
			\draw[line width=1.5pt] (3*45: 2) -- (4*45:2);
			\draw[line width=1.5pt] (4*45: 2) -- (5*45:2);
			\draw[line width=1.5pt] (5*45: 2) -- (6*45:2);
			\draw[line width=1.5pt] (6*45: 2) -- (7*45:2);
			\draw[line width=1.5pt] (7*45: 2) -- (8*45:2);
			
			\draw[fill=black!100] (0*45: 2) circle (0.1);
			\draw[fill=black!100] (1*45: 2) circle (0.1);			
			\draw[fill=black!100] (2*45: 2) circle (0.1);
			\draw[fill=black!100] (3*45: 2) circle (0.1);			
			\draw[fill=black!100] (4*45: 2) circle (0.1);
			\draw[fill=black!100] (5*45: 2) circle (0.1);			
			\draw[fill=black!100] (6*45: 2) circle (0.1);
			\draw[fill=black!100] (7*45: 2) circle (0.1);						
			
			\draw (0*45: 2.55) node {\scriptsize $n-7$};
			\draw (1*45: 2.55) node {\scriptsize $n-6$};
			\draw (2*45: 2.55) node {\scriptsize $n-5$};
			\draw (3*45: 2.55) node {\scriptsize $n-4$};
			\draw (4*45: 2.55) node {\scriptsize $n-3$};
			\draw (5*45: 2.55) node {\scriptsize $n-2$};
			\draw (6*45: 2.55) node {\scriptsize $n-1$};
			\draw (7*45: 2.55) node {\scriptsize $n$};
		\end{scope}
		

		\draw (-7+6, 5-10) node {\large observed graph $\Ob_n$ with arrival times at the end of the stream};

 		\begin{scope}[shift={(0, -10)}]

			\draw (2, -3) node {\small arrival times $t$};

		\begin{scope}[shift={(-7, 0)}]
			\draw (-1, 4) node {\small vertex ids, e.g.\ $\sigma(4)$};
			\draw[line width=1pt, ->, >=stealth] (-2, 3.7) to [out=-100, in=120]  (3*45:2.7);
			
			\draw[line width=1pt, ->, >=stealth] (2+5.8, -3)  to [out=-180, in=-120]  (7.2*45:1.4);

			\draw[line width=1.5pt] (0*45: 2) -- (1*45:2);
			\draw[line width=1.5pt] (1*45: 2) -- (2*45:2);
			\draw[line width=1.5pt] (2*45: 2) -- (3*45:2);
			\draw[line width=1.5pt] (3*45: 2) -- (4*45:2);
			\draw[line width=1.5pt] (4*45: 2) -- (5*45:2);
			\draw[line width=1.5pt] (5*45: 2) -- (6*45:2);
			\draw[line width=1.5pt] (6*45: 2) -- (7*45:2);
			\draw[line width=1.5pt] (7*45: 2) -- (8*45:2);
			
			\draw[fill=black!100] (0*45: 2) circle (0.1);
			\draw[fill=black!100] (1*45: 2) circle (0.1);			
			\draw[fill=black!100] (2*45: 2) circle (0.1);
			\draw[fill=black!100] (3*45: 2) circle (0.1);			
			\draw[fill=black!100] (4*45: 2) circle (0.1);
			\draw[fill=black!100] (5*45: 2) circle (0.1);			
			\draw[fill=black!100] (6*45: 2) circle (0.1);
			\draw[fill=black!100] (7*45: 2) circle (0.1);			
			
			\draw (0*45: 2.5) node {\scriptsize $906$};
			\draw (1*45: 2.5) node {\scriptsize $127$};
			\draw (2*45: 2.5) node {\scriptsize $914$};
			\draw (3*45: 2.5) node {\scriptsize $633$};
			\draw (4*45: 2.5) node {\scriptsize $98$};
			\draw (5*45: 2.5) node {\scriptsize $279$};
			\draw (6*45: 2.5) node {\scriptsize $547$};
			\draw (7*45: 2.5) node {\scriptsize $958$};
			
			\draw (0.5*45: 1.5) node {\scriptsize $758$};
			\draw (1.5*45: 1.5) node {\scriptsize $744$};
			\draw (2.5*45: 1.5) node {\scriptsize $393$};
			\draw (3.5*45: 1.5) node {\scriptsize $656$};
			\draw (4.5*45: 1.5) node {\scriptsize $172$};
			\draw (5.5*45: 1.5) node {\scriptsize $707$};
			\draw (6.5*45: 1.5) node {\scriptsize $32$};
			\draw (7.5*45: 1.5) node {\scriptsize $277$};
		\end{scope}
		
		\begin{scope}[shift={(-1, 0)}]		
		
			\draw[line width=1pt, ->, >=stealth] (2+5.8-6, -3)  to [out=-180, in=-120]  (6.4*45:1.6);		
			
			\draw[line width=1.5pt] (0*45: 2) -- (1*45:2);
			\draw[line width=1.5pt] (1*45: 2) -- (2*45:2);
			\draw[line width=1.5pt] (2*45: 2) -- (3*45:2);
			\draw[line width=1.5pt] (3*45: 2) -- (4*45:2);
			\draw[line width=1.5pt] (4*45: 2) -- (5*45:2);
			\draw[line width=1.5pt] (5*45: 2) -- (6*45:2);
			\draw[line width=1.5pt] (6*45: 2) -- (7*45:2);
			\draw[line width=1.5pt] (7*45: 2) -- (8*45:2);
			
			\draw[fill=black!100] (0*45: 2) circle (0.1);
			\draw[fill=black!100] (1*45: 2) circle (0.1);			
			\draw[fill=black!100] (2*45: 2) circle (0.1);
			\draw[fill=black!100] (3*45: 2) circle (0.1);			
			\draw[fill=black!100] (4*45: 2) circle (0.1);
			\draw[fill=black!100] (5*45: 2) circle (0.1);			
			\draw[fill=black!100] (6*45: 2) circle (0.1);
			\draw[fill=black!100] (7*45: 2) circle (0.1);						
			
			\draw (0*45: 2.5) node {\scriptsize $965$};
			\draw (1*45: 2.5) node {\scriptsize $158$};
			\draw (2*45: 2.5) node {\scriptsize $971$};
			\draw (3*45: 2.5) node {\scriptsize $958$};
			\draw (4*45: 2.5) node {\scriptsize $486$};
			\draw (5*45: 2.5) node {\scriptsize $801$};
			\draw (6*45: 2.5) node {\scriptsize $142$};
			\draw (7*45: 2.5) node {\scriptsize $422$};
			
			\draw (0.5*45: 1.5) node {\scriptsize $47$};
			\draw (1.5*45: 1.5) node {\scriptsize $98$};
			\draw (2.5*45: 1.5) node {\scriptsize $824$};
			\draw (3.5*45: 1.5) node {\scriptsize $695$};
			\draw (4.5*45: 1.5) node {\scriptsize $318$};
			\draw (5.5*45: 1.5) node {\scriptsize $951$};
			\draw (6.5*45: 1.5) node {\scriptsize $35$};
			\draw (7.5*45: 1.5) node {\scriptsize $439$};
			
		\end{scope}
		
		\draw (+3, 0) node {\large $\ldots$};

		\begin{scope}[shift={(+7, 0)}]
		
			\draw[line width=1pt, ->, >=stealth] (2-5.5, -3)  to [out=0, in=-50]  (6.4*45:1.6);	
					
			\draw[line width=1.5pt] (0*45: 2) -- (1*45:2);
			\draw[line width=1.5pt] (1*45: 2) -- (2*45:2);
			\draw[line width=1.5pt] (2*45: 2) -- (3*45:2);
			\draw[line width=1.5pt] (3*45: 2) -- (4*45:2);
			\draw[line width=1.5pt] (4*45: 2) -- (5*45:2);
			\draw[line width=1.5pt] (5*45: 2) -- (6*45:2);
			\draw[line width=1.5pt] (6*45: 2) -- (7*45:2);
			\draw[line width=1.5pt] (7*45: 2) -- (8*45:2);
			
			\draw[fill=black!100] (0*45: 2) circle (0.1);
			\draw[fill=black!100] (1*45: 2) circle (0.1);			
			\draw[fill=black!100] (2*45: 2) circle (0.1);
			\draw[fill=black!100] (3*45: 2) circle (0.1);			
			\draw[fill=black!100] (4*45: 2) circle (0.1);
			\draw[fill=black!100] (5*45: 2) circle (0.1);			
			\draw[fill=black!100] (6*45: 2) circle (0.1);
			\draw[fill=black!100] (7*45: 2) circle (0.1);						
			
			\draw (0*45: 2.5) node {\scriptsize $916$};
			\draw (1*45: 2.5) node {\scriptsize $793$};
			\draw (2*45: 2.5) node {\scriptsize $960$};
			\draw (3*45: 2.5) node {\scriptsize $656$};
			\draw (4*45: 2.5) node {\scriptsize $36$};
			\draw (5*45: 2.5) node {\scriptsize $850$};
			\draw (6*45: 2.5) node {\scriptsize $934$};
			\draw (7*45: 2.5) node {\scriptsize $679$};
			
			\draw (0.5*45: 1.5) node {\scriptsize $382$};
			\draw (1.5*45: 1.5) node {\scriptsize $766$};
			\draw (2.5*45: 1.5) node {\scriptsize $796$};
			\draw (3.5*45: 1.5) node {\scriptsize $187$};
			\draw (4.5*45: 1.5) node {\scriptsize $490$};
			\draw (5.5*45: 1.5) node {\scriptsize $446$};
			\draw (6.5*45: 1.5) node {\scriptsize $647$};
			\draw (7.5*45: 1.5) node {\scriptsize $710$};
			
		\end{scope}
		\end{scope}
		
		\end{tikzpicture}
		\caption{Illustration of the underlying graph $G$ (top) and the observed graph (bottom). In this example $\pi^{-1}(1)=758, \pi^{-1}(2)=744, \pi^{-1}(3)=393$ etc.}	\label{fig:g-o}
	\end{center}
	
\end{figure}

\begin{definition}[Board state $\Bb_t$]\label{def:bt}
We define the ``board'' $\Bb_t$ to be the sequence of all edge arrivals
up to time $t$, i.e.\ \[
\Bb_t = (\bb_s)_{s = 1}^t \text{.}
\]
\end{definition}
We will occasionally use $\Bb_{s:t}$ to refer to $(\bb_i)_{i = s}^t$. This then
defines the \emph{observed graph} $\Ob_t = (V, \lbrace \bb_s \rbrace_{s =
1}^t)$ at time $t$. Note that $\Ob_t$ is isomorphic to $(V, \lbrace
e_v\rbrace_{\bpi(v) \le t})$, with $\bsigma$ giving an isomorphism.

Each subset $\Sb$ of $\lbrace \bb_s \rbrace_{s = 1}^t$ (or equivalently, subgraph
of $\Ob_t$) can be associated with a $t$-bit binary string $x^\Sb$ given by \[
x_s = \begin{cases}
1 & \mbox{if $\bb_s \in \Sb$}\\
0 & \mbox{otherwise.}
\end{cases}
\]
In cases where it is unambiguous, we will use $\Sb$ to refer to $x^\Sb$
directly.  Conversely, for each $x \in \bool^t$, we will write $\Ob^{x}_t$ for
the subgraph of $\Ob_t$ such that $x^{\Ob^x_t} = x$.  

Note that as $\Ob_t$ is isomorphic to a subgraph of a union of length-$\ell$
cycles, each of its components is either a length-$\ell$ cycle of a path of
length $<\ell$, and the same holds for all subgraphs $S$.

\paragraph{Player input.} Choose $\Xb\sim \Uc(\bool^T)$. At each time step $t \in T$, the $t\nth$ player
receives three pieces of input: {\bf (1)} the bit $\Xb_t\in \bool$, {\bf (2)} a
message $\Mb_{t-1}\in \bool^c$ from player $t-1$ (we let $\Mb_{0}=\mathbf{0}^c$ for convenience), and
{\bf (3)} the contents of the board $\Bb_t$. Then, if $t < n$, player $t$ sends
$\Mb_t$ to player $t+1$.

\paragraph{Objective of the game.} Player $n$, after timestep $n$, must output $v
\in V$ and $\Xb \cdot \Cb_v$, where $\Cb_v$ is
the component of $\Ob_n$ containing $v$ (which, as $G$ is a union of
length-$\ell$ cycles, will necessarily be a length-$\ell$ cycle).

As the input distribution of the problem is fixed, we will by Yao's principle
assume that the players are deterministic from now on. 
\subsection{The Underlying and the Observable Graph}\label{sec:oaugraph}
In this section we introduce terminology and some necessary lemmas for
understanding the relationship between the observed and the underlying graph.
\begin{definition}
\label{dfn:Pit}
For each $t \in \lbrack n \rbrack$, $\bPi_t$ is the set of possibilities for
$\bpi$ that are compatible with $\Bb_t$ (see Definition~\ref{def:bt}), i.e.\ \[
\bPi_t = \left\lbrace \pi \in S_n :  \exists \sigma \in S_n,
(\sigma(e_{\pi(s)}))_{s = 1}^t = \Bb_t\right\rbrace
\]
\end{definition}
\begin{lemma}
\label{lem:pidist}
For every $t\in [n]$, conditioned on $\Bb_t$, one has $\bpi \sim \mathcal{U}(\bPi_t)$.
\end{lemma}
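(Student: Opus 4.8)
The plan is a one-line application of Bayes' rule, leveraging that $\bpi$ and $\bsigma$ are independent and each uniform on $S_n$. Fix a realization $B_t$ of $\Bb_t$ in the support of $\Bb_t$. For any $\pi\in S_n$,
\[
\Pb[]{\bpi=\pi\mid \Bb_t=B_t}\;\propto\;\Pb[]{\bpi=\pi}\cdot\Pb[\bsigma]{\left(\bsigma(e_{\pi(s)})\right)_{s=1}^{t}=B_t},
\]
where the hidden proportionality constant $1/\Pb[]{\Bb_t=B_t}$ does not depend on $\pi$, and neither does the first factor since $\bpi\sim\Uc(S_n)$. So it suffices to show that $N(\pi)\coloneqq\abs{\set*{\sigma\in S_n:\left(\sigma(e_{\pi(s)})\right)_{s=1}^{t}=B_t}}$ takes a common positive value over all $\pi\in\bPi_t$; for $\pi\notin\bPi_t$ this count is $0$ by the definition of $\bPi_t$, so such $\pi$ receive zero conditional probability, exactly as under $\Uc(\bPi_t)$.

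Writing $\bb_s=(u_s,v_s)$, the event $\left(\sigma(e_{\pi(s)})\right)_{s=1}^{t}=B_t$ is equivalent to the system $\sigma(\pi(s))=u_s$, $\sigma(\next(\pi(s)))=v_s$ for $s\le t$, i.e.\ to $\sigma$ agreeing with a prescribed partial assignment on the set
\[
W_\pi\coloneqq\set*{\pi(s),\ \next(\pi(s))\;:\;s\le t}
\]
of vertices incident to the first $t$ underlying edges $\set*{e_{\pi(s)}}_{s\le t}$, with no constraint elsewhere. Since $\pi\in\bPi_t$ there is at least one valid $\sigma$, so the prescribed values are mutually consistent and define an injection $W_\pi\to V$; every valid $\sigma$ restricts to this injection, and conversely each of the $(n-\abs{W_\pi})!$ bijective extensions of it is valid, because the constraints touch $\sigma$ only on $W_\pi$. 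Hence $N(\pi)=(n-\abs{W_\pi})!$.

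Finally, $\abs{W_\pi}$ depends on $B_t$ only. Indeed, any valid $\sigma$ is a graph isomorphism from $(V,\set*{e_{\pi(s)}}_{s\le t})$ onto the observed graph $\Ob_t=(V,\set*{\bb_s}_{s\le t})$ that matches edges with equal arrival index, and therefore maps the non-isolated vertices of the former bijectively onto the non-isolated vertices of $\Ob_t$. As $W_\pi$ is precisely the set of non-isolated vertices of $(V,\set*{e_{\pi(s)}}_{s\le t})$, we get that $\abs{W_\pi}$ equals the number of non-isolated vertices of $\Ob_t$, a quantity determined by $B_t$. So $N(\pi)=(n-\abs{W_\pi})!$ is constant on $\bPi_t$, and the conditional law of $\bpi$ given $\Bb_t=B_t$ is $\Uc(\bPi_t)$, as claimed.

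I do not anticipate a serious obstacle; this is a setup lemma whose content is precisely that the construction is symmetric under relabelling vertices. The only slightly delicate points are (i) noting that consistency of the partial assignment on $W_\pi$ comes for free from $\pi\in\bPi_t$, and (ii) phrasing the ``arrival-index-preserving isomorphism'' so that $\abs{W_\pi}$ is visibly a function of the board. A slightly longer alternative would instead construct, for $\pi,\pi'\in\bPi_t$, an explicit bijection between their respective sets of compatible $\sigma$, but the counting route above is shorter.
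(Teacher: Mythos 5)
Your proof is correct, and it follows the same overall reduction as the paper: condition on $\Bb_t$, note that $(\bpi,\bsigma)$ is then uniform on the set of consistent pairs, and conclude that it suffices to show the number $N(\pi)$ of compatible $\sigma$ is the same for every $\pi\in\bPi_t$ (and zero otherwise). Where you differ is in how you establish this equal-count claim. The paper proves it indirectly: fixing $\pi_1,\pi_2\in\bPi_t$ with witnesses $\sigma_1,\sigma_2$, it maps $\sigma_1'\mapsto\sigma_1'\sigma_1^{-1}\sigma_2$, checks this is an injection from the $\sigma$'s compatible with $\pi_1$ into those compatible with $\pi_2$, and invokes symmetry. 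You instead compute $N(\pi)$ explicitly: the constraints pin down $\sigma$ as a partial injection on the non-isolated vertex set $W_\pi$ of $(V,\set{e_{\pi(s)}}_{s\le t})$ (consistency being free from $\pi\in\bPi_t$), every bijective extension works, so $N(\pi)=(n-\abs{W_\pi})!$, and $\abs{W_\pi}$ equals the number of non-isolated vertices of $\Ob_t$, a function of the board alone. Your route is slightly more work up front but yields the actual value of the count rather than just its invariance, and it avoids the symmetry step; the paper's bijection argument is shorter to state but proves only what is needed. Both are complete; the two delicate points you flag (consistency of the partial assignment, and $\abs{W_\pi}$ being board-measurable) are exactly the ones that need the care you give them.
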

\begin{proof}
Unconditionally, $(\bpi,\bsigma)$ are uniformly distributed on $S_n \times S_n$,
so conditioned on $\Bb_t$, they are distributed uniformly on \[
\lbrace (\pi, \sigma) \in S_n \times S_n : (\sigma(e_{\pi(s)}))_{s = 1}^t =
\Bb_t \rbrace
\]
and so it will suffice to prove that for every $\pi \in \bPi_t$, the number of
$\sigma \in S_n$ such that \[
(\sigma(e_{\pi(s)}))_{s = 1}^t = \Bb_t
\]
is the same. For any pair $\pi_1, \pi_2\in \bPi_t$ 
We will give an injection from $\sigma'_1$ such that \[
(\sigma'_1(e_{\pi_1(s)}))_{s = 1}^t = \Bb_t
\]
to $\sigma'_2$ such that \[
(\sigma'_2(e_{\pi_2(s)}))_{s = 1}^t = \Bb_t
\]
implying that there at least as many such $\sigma'_2$ as there are such
$\sigma'_1$, and so by symmetry there are the same number of each. 

Since $\sigma_1, \sigma_2\in \Pi_t$ by assumption, there exist $\sigma_1, \sigma_2$ such that 
\[
(\sigma_1(e_{\pi_1(s)}))_{s = 1}^t = \Bb_t =
(\sigma_2(e_{\pi_2(s)}))_{s = 1}^t.
\]
Fix a choice of such $\sigma_1$ and $\sigma_2$. The injection is defined by setting $\sigma_2' = \sigma_1'\sigma_1^{-1}\sigma_2$. For any fixing of
$\sigma_1, \sigma_2$, this is an injective function of $\sigma_1'$, as
$\sigma_1^{-1}\sigma_2$ is a permutation, and
\begin{align*}
(\sigma'_2(e_{\pi_2(s)}))_{s = 1}^t &=
(\sigma_1'\sigma_1^{-1}\sigma_2(e_{\pi_2(s)}))_{s = 1}^t\\
&= (\sigma_1'\sigma_1^{-1}\sigma_1(e_{\pi_1(s)}))_{s = 1}^t\\
&= (\sigma_1'(e_{\pi_1(s)}))_{s = 1}^t\\
&= \Bb_s
\end{align*}
completing the proof.
\end{proof}

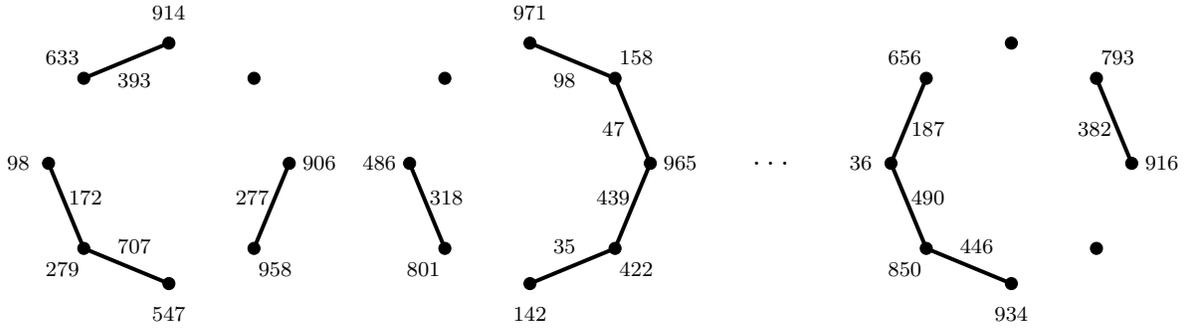
\begin{figure}[H]
	\begin{center}
		\begin{tikzpicture}[scale=0.8]
		

		\draw (-7+6, 3.5-10) node {\large observed graph $\Ob_t$ with arrival times at time $t=500$};

 		\begin{scope}[shift={(0, -10)}]

		\begin{scope}[shift={(-7, 0)}]

			\draw[line width=1.5pt] (2*45: 2) -- (3*45:2);
			\draw[line width=1.5pt] (4*45: 2) -- (5*45:2);
			\draw[line width=1.5pt] (5*45: 2) -- (6*45:2);
			\draw[line width=1.5pt] (7*45: 2) -- (8*45:2);
			
			\draw[fill=black!100] (0*45: 2) circle (0.1);
			\draw[fill=black!100] (1*45: 2) circle (0.1);			
			\draw[fill=black!100] (2*45: 2) circle (0.1);
			\draw[fill=black!100] (3*45: 2) circle (0.1);			
			\draw[fill=black!100] (4*45: 2) circle (0.1);
			\draw[fill=black!100] (5*45: 2) circle (0.1);			
			\draw[fill=black!100] (6*45: 2) circle (0.1);
			\draw[fill=black!100] (7*45: 2) circle (0.1);			
			
			\draw (0*45: 2.5) node {\scriptsize $906$};
			\draw (2*45: 2.5) node {\scriptsize $914$};
			\draw (3*45: 2.5) node {\scriptsize $633$};
			\draw (4*45: 2.5) node {\scriptsize $98$};
			\draw (5*45: 2.5) node {\scriptsize $279$};
			\draw (6*45: 2.5) node {\scriptsize $547$};
			\draw (7*45: 2.5) node {\scriptsize $958$};
			

			\draw (2.5*45: 1.5) node {\scriptsize $393$};
			\draw (4.5*45: 1.5) node {\scriptsize $172$};
			\draw (5.5*45: 1.5) node {\scriptsize $707$};
			\draw (7.5*45: 1.5) node {\scriptsize $277$};
		\end{scope}
		
		\begin{scope}[shift={(-1, 0)}]

			\draw[line width=1.5pt] (0*45: 2) -- (1*45:2);
			\draw[line width=1.5pt] (1*45: 2) -- (2*45:2);
			\draw[line width=1.5pt] (4*45: 2) -- (5*45:2);
			\draw[line width=1.5pt] (6*45: 2) -- (7*45:2);
			\draw[line width=1.5pt] (7*45: 2) -- (8*45:2);
			
			\draw[fill=black!100] (0*45: 2) circle (0.1);
			\draw[fill=black!100] (1*45: 2) circle (0.1);			
			\draw[fill=black!100] (2*45: 2) circle (0.1);
			\draw[fill=black!100] (3*45: 2) circle (0.1);			
			\draw[fill=black!100] (4*45: 2) circle (0.1);
			\draw[fill=black!100] (5*45: 2) circle (0.1);			
			\draw[fill=black!100] (6*45: 2) circle (0.1);
			\draw[fill=black!100] (7*45: 2) circle (0.1);						
			
			\draw (0*45: 2.5) node {\scriptsize $965$};
			\draw (1*45: 2.5) node {\scriptsize $158$};
			\draw (2*45: 2.5) node {\scriptsize $971$};
			\draw (4*45: 2.5) node {\scriptsize $486$};
			\draw (5*45: 2.5) node {\scriptsize $801$};
			\draw (6*45: 2.5) node {\scriptsize $142$};
			\draw (7*45: 2.5) node {\scriptsize $422$};
			
			\draw (0.5*45: 1.5) node {\scriptsize $47$};
			\draw (1.5*45: 1.5) node {\scriptsize $98$};
			\draw (4.5*45: 1.5) node {\scriptsize $318$};
			\draw (6.5*45: 1.5) node {\scriptsize $35$};
			\draw (7.5*45: 1.5) node {\scriptsize $439$};
			
		\end{scope}
		
		\draw (+3, 0) node {\large $\ldots$};

		\begin{scope}[shift={(+7, 0)}]

			\draw[line width=1.5pt] (0*45: 2) -- (1*45:2);
			\draw[line width=1.5pt] (3*45: 2) -- (4*45:2);
			\draw[line width=1.5pt] (4*45: 2) -- (5*45:2);
			\draw[line width=1.5pt] (5*45: 2) -- (6*45:2);
			
			\draw[fill=black!100] (0*45: 2) circle (0.1);
			\draw[fill=black!100] (1*45: 2) circle (0.1);			
			\draw[fill=black!100] (2*45: 2) circle (0.1);
			\draw[fill=black!100] (3*45: 2) circle (0.1);			
			\draw[fill=black!100] (4*45: 2) circle (0.1);
			\draw[fill=black!100] (5*45: 2) circle (0.1);			
			\draw[fill=black!100] (6*45: 2) circle (0.1);
			\draw[fill=black!100] (7*45: 2) circle (0.1);						
			
			\draw (0*45: 2.5) node {\scriptsize $916$};
			\draw (1*45: 2.5) node {\scriptsize $793$};
			\draw (3*45: 2.5) node {\scriptsize $656$};
			\draw (4*45: 2.5) node {\scriptsize $36$};
			\draw (5*45: 2.5) node {\scriptsize $850$};
			\draw (6*45: 2.5) node {\scriptsize $934$};
			
			\draw (0.5*45: 1.5) node {\scriptsize $382$};
			\draw (3.5*45: 1.5) node {\scriptsize $187$};
			\draw (4.5*45: 1.5) node {\scriptsize $490$};
			\draw (5.5*45: 1.5) node {\scriptsize $446$};
			
		\end{scope}
		\end{scope}
		
		\end{tikzpicture}
		\caption{Illustration of the state of the board (i.e., the observed graph annotated with arrival times) at an intermediate point $t=500$ in the stream.}	\label{fig:o-intermediate}
	\end{center}
	
\end{figure}

\subsection{Basic properties of collections of components}\label{sec:basic-defs}
We write $\zpp$ to denote the set of multisets of non-negative integers.
We will refer to each such multiset as a \emph{collection type}. For such a
multiset $\alpha$ we will write $\alpha\lbrack i \rbrack$ for the number of
times $i$ appears in $\alpha$.

\begin{definition}\label{def:sim}
  For $t\in T$, $z\in \{0, 1\}^t$ and $\alpha\in \zpp$ we write
  $z\bsim_t \alpha$ if $\Ob_t^z$ is a union of components in $\Ob_t$,
  and for each $i \in \lbrack \ell \rbrack$, the number of $i$-edge
  components in $\Ob_t^z$ is $\alpha\lbrack i \rbrack$.
\end{definition}

\begin{definition}[Weight and size of $\alpha\in \zpp$]\label{def:weight-and-size}
For $\alpha\in \zpp$ we let $|\alpha|_*:=\sum_{i\in \alpha} i$ denote the {\em weight} of $\alpha$ and let $|\alpha|$ denote the number of elements in $\alpha$, which we refer to as the {\em size} of $\alpha$.
\end{definition}

\begin{definition}[Extensions and Merges]
For any $\alpha \in \zpp$ we define an extension or merge as follows:
\begin{description}
\item[{[Extension]}] Increment one of the elements in $\alpha$ by $1$;
\item[{[Merge]}] Replace two elements $a, b\in \alpha$ with $a+b+1$.
\end{description}
Similarly, for any $t \in T$, $z \in \bool^t$, we say $\Ob_t^z$ experiences an
extension or merge at time $t+1$ if:
\begin{description}
\item[{[Extension]}] $\bb_{t+1}$ is incident to exactly one component in $\Ob_t^z$.
\item[{[Merge]}] $\bb_{t+1}$ connects two components in $\Ob_t^z$.
\end{description}
\end{definition}
Note that $\Ob_t^z$ experiences an extension or merge at time $t+1$ iff $\Ob_t
\bsim_t \alpha$, $\Ob_{t+1} \bsim_{t+1} \beta$ for $\beta$ obtained by an
extension or merge on $\alpha$, respectively. 

\begin{definition}[Down set of $\beta\in \zpp$]\label{def:downset}
For $\alpha, \beta\in \zpp$ we write $\alpha\in \beta-1$ if $\beta$ can be
obtained from $\alpha$ by either an {\em extension} or a {\em merge}  followed
by possibly adding an arbitrary number of $1$'s to $\alpha$.

For $\beta\in \zpp$ and $\alpha\in \beta-1$ we write $|\beta-\alpha|$ to denote the number of ones that need to be added to $\alpha$ after a merge or extension to obtain $\beta$.
\end{definition}

\begin{definition}[Growth event $\grow(z, s, t)$]
For every $t\in T$, every $s \in \lbrack t \rbrack$ and every $z\in \bool^t$,
let $\grow(z, s, t)=1$ if $\bb_{s+1}$ extends one or merges two components of
$\Ob_s^{z_{\le s}}$ and only single edge components of $\Ob_t^z$ arrive between
$s+1$ and $t$, and let $\grow(z, s, t)=0$ otherwise. 
\end{definition}

\begin{lemma}
\label{lem:isogrowth}
For every $t\in T$, $B_t$ in the support of $\Bb_t$, every $r, r'\in
\bool^t$ such that $r\bsim_t \alpha$ and $r' \bsim_t \alpha$ for some $\alpha\in \zpp$ when
$\Bb_t = B_t$, then for all $\beta \in \zpp$ \[
\Pb[\Bb_{t+1}]{ r \cdot 1 \bsim_{t+1} \beta \middle| \Bb_t = B_t} =
\Pb[\Bb_{t+1}]{ r' \cdot 1 \bsim_{t+1} \beta | \Bb_t = B_t} \text{.}
\]
\end{lemma}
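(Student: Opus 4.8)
The plan is a symmetry (change-of-variables) argument. Because $r$ and $r'$ each select a union of components of $\Ob_t$ with the same collection type $\alpha$, there will be a permutation $\tau\in S_n$ that is a graph automorphism of $\Ob_t$ and maps the edges selected by $r$ onto those selected by $r'$; and, conditioned on $\Bb_t=B_t$, the next edge $\bb_{t+1}$ will have a distribution concrete enough to be manifestly $\tau$-invariant, so that the lemma drops out of plugging $\tau$ into the average defining $\Pb[\Bb_{t+1}]{r'\cdot 1\bsim_{t+1}\beta\mid\Bb_t=B_t}$. For the distribution of the next edge, I would first reformulate the randomness: since $v\mapsto e_v$ is a bijection $V\to E(G)$, the graph $G$ is a vertex-disjoint union of $n/\ell$ length-$\ell$ cycles, and $\bsigma$ is uniform and independent of $\bpi$, the sequence $(\bb_1,\dots,\bb_n)=(\bsigma(e_{\bpi(t)}))_{t\in T}$ is distributed as: draw $\Ob_n=\bsigma(G)$ uniformly from the family $\Gc$ of all vertex-disjoint unions of $n/\ell$ length-$\ell$ cycles on $V$, then reveal the $n$ edges of $\Ob_n$ in uniformly random order (the edge set of $\bsigma(G)$ is uniform over $\Gc$ since all members of $\Gc$ are isomorphic, and the order induced by $\bpi$ is uniform and independent of $\bsigma$). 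Consequently, conditioning on $\Bb_t=B_t$ fixes $\Ob_t$ to the edge set of $B_t$, makes $\Ob_n$ uniform over the set $\Cc(\Ob_t):=\{O\in\Gc:E(O)\supseteq E(\Ob_t)\}$ of \emph{completions} of $\Ob_t$, and, given $\Ob_n=O$, makes $\bb_{t+1}$ uniform over the $n-t$ edges of $E(O)\setminus E(\Ob_t)$; in particular this conditional law depends on $B_t$ only through $\Ob_t$.

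\emph{Construction of $\tau$.} The hypotheses $r\bsim_t\alpha$, $r'\bsim_t\alpha$ mean that $E(\Ob_t^r)$ and $E(\Ob_t^{r'})$ are unions of edge sets of $\geq 1$-edge components of $\Ob_t$ whose size multiset is $\alpha$; hence, writing $\gamma$ for the size multiset of all $\geq 1$-edge components of $\Ob_t$, the components not used by $r$ (resp.\ by $r'$) have the common size multiset $\gamma-\alpha$. Match the components of $\Ob_t^r$ to those of $\Ob_t^{r'}$ by size, match the unused components of the two decompositions to each other by size, pick a graph isomorphism between each matched pair (two paths of equal length are isomorphic, as are two length-$\ell$ cycles, and these are the only component types in $\Ob_t$), and let $\tau$ fix every degree-$0$ vertex of $\Ob_t$. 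Then $\tau\in S_n$, $\tau(E(\Ob_t))=E(\Ob_t)$, and $\tau(E(\Ob_t^r))=E(\Ob_t^{r'})$.

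\emph{Change of variables.} Fix $\beta$, write $S=E(\Ob_t^r)$ and $S'=E(\Ob_t^{r'})$, and for an edge $e\notin E(\Ob_t)$ let $\psi(\Ob_t,S,e)$ denote the event that the edge set $S\cup\{e\}$ is a union of components of the graph $\Ob_t+e$ with size multiset $\beta$. Given $\Bb_t=B_t$, the event $\{r\cdot 1\bsim_{t+1}\beta\}$ is exactly $\psi(\Ob_t,S,\bb_{t+1})$ (since $\Ob_{t+1}^{r\cdot 1}$ has edge set $S\cup\{\bb_{t+1}\}$), and $\psi$ is an isomorphism invariant of its arguments, so $\psi(\Ob_t,S,e)$ holds iff $\psi(\tau(\Ob_t),\tau(S),\tau(e))=\psi(\Ob_t,S',\tau(e))$ holds. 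Writing the probability out via the first step and substituting $(O,e)\mapsto(\tau(O),\tau(e))$ — a bijection of the index set onto itself, as $\tau(\Ob_t)=\Ob_t$ forces $\tau(\Cc(\Ob_t))=\Cc(\Ob_t)$ and $\tau(E(O)\setminus E(\Ob_t))=E(\tau(O))\setminus E(\Ob_t)$ — gives
\begin{align*}
\Pb[\Bb_{t+1}]{r'\cdot 1\bsim_{t+1}\beta\mid\Bb_t=B_t}
&=\frac{1}{|\Cc(\Ob_t)|\,(n-t)}\sum_{O\in\Cc(\Ob_t)}\ \sum_{e\in E(O)\setminus E(\Ob_t)}\1bb\bigl(\psi(\Ob_t,S',\tau(e))\bigr)\\
&=\frac{1}{|\Cc(\Ob_t)|\,(n-t)}\sum_{O\in\Cc(\Ob_t)}\ \sum_{e\in E(O)\setminus E(\Ob_t)}\1bb\bigl(\psi(\Ob_t,S,e)\bigr)
=\Pb[\Bb_{t+1}]{r\cdot 1\bsim_{t+1}\beta\mid\Bb_t=B_t},
\end{align*}
which is the claim.

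The real work is the first step: one must check that conditioning on the \emph{ordered} board $\Bb_t=B_t$ (not just on the unordered edge set $\Ob_t$) really leaves $\bb_{t+1}$ with the clean ``uniform completion, then uniform unrevealed edge'' law — this can be read off from Lemma~\ref{lem:pidist}, or directly from the exchangeability built into $\bpi$ and $\bsigma$. After that, the construction of $\tau$ and the substitution are routine bookkeeping about unions of paths and cycles; the only mild points worth checking are components of $\Ob_t^r$ that are full length-$\ell$ cycles (no later edge is incident to them, so they survive every transition unchanged) and times $t$ near $n$ where $\Cc(\Ob_t)$ may be a singleton, neither of which needs special handling above.
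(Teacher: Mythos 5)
Your proof is correct and takes essentially the same approach as the paper's: both construct a graph automorphism of $\Ob_t$ carrying $\Ob_t^{r}$ to $\Ob_t^{r'}$ and exploit the invariance of the conditional law of $\bb_{t+1}$ under that automorphism. The only difference is bookkeeping --- the paper applies the automorphism directly to the pair $(\bsigma,\bpi)$ together with a time permutation, whereas you first recast the conditional law as a uniform completion followed by a uniformly chosen unrevealed edge and substitute there; both routes are valid.
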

\begin{proof}
Recall that the sequence of edges $\Bb_t$ is given by \[
(\bsigma(e_{\bpi(s)}))_{s = 1}^t
\]
where $\bsigma$ and $\bpi$ are uniformly chosen permutations from $\lbrack
n\rbrack$ to $\lbrack n\rbrack$, while the edge that arrives at time $t+1$ is
given by \[
\bsigma(e_{\bpi(t+1)})
\] 
Recall also that $\Bb_t$ does \emph{not} include the identity of $\bsigma, \bpi$
themselves. Now, as $r \bsim_t \alpha$ and $r' \bsim_t \alpha$, we can
construct a graph automorphism $\phi : V \rightarrow V$ of $\Ob_t$, depending
only on $\Bb_t$, such that $\phi$ swaps $\Ob_t^r$ with $\Ob_t^{r'}$ and is the
identity everywhere else. Then, as $\Bb_t$ includes the arrival time of the
edges, we can construct a permutation $\psi : T \rightarrow T$, also depending
only on $\Bb_t$, such that for all $t \in T$ \[
\phi(\bb_{\psi(t)}) = \bb_t
\]
and $\psi$ is the identity everywhere except the support of $r$ and $r'$.

Recalling that $\bb_s = \bsigma(e_{\bpi(s)})$, this means that, as the
distribution of $\bsigma, \bpi$ conditional on $\Bb_t$ is uniform on \[
\lbrace (\sigma, \pi) \in S_n \times S_n : (\sigma(e_{\pi(s)}))_{s = 1}^t =
(\bb_s)_{s=1}^t \rbrace
\]
we have, for each $\sigma, \pi \in S_n \times S_n$, \[
\Pb[\bsigma, \bpi]{ (\bsigma, \bpi) = (\sigma, \pi) } =
\Pb[\bsigma, \bpi]{ (\bsigma, \bpi) = (\phi\sigma, \psi\pi) }\text{.}
\]
Now, as $\phi$ is a graph automorphism swapping $\Ob_{t}^z$ and $\Ob_{t}^{z'}$,
and $\psi(t+1) = t+1$,  $\bb_{t+1}$ will extend or merge $\Ob_{t}^z$ iff
$\phi(\bb_{\psi(t+1)})$ causes the same extension or merge in $\Ob_{t}^{z'}$,
and so the above equation gives us the result.
\end{proof}

This allows us to define the probability of growth events using only the
\emph{collection type} of a collection of paths.

\begin{definition}
For $s \in [n]$, a pair of collection types $\alpha, \beta\in
\zpp$ such that $\alpha\lbrack 1\rbrack = \beta\lbrack 1\rbrack$ and an element
$B_s$ of the support of $\Bb_s$ we write
\[
p_s(\alpha, \beta, B_s) = \Pb[\Bb_{s+1}]{ z\cdot 1 \bsim_{s+1} \beta | \Bb_s =
B_s}
\] for any $z \in \lbrace 0, 1\rbrace^s$ such that $z \bsim_s \alpha$. Furthermore,
if $\beta \lbrack 1\rbrack > \alpha \lbrack 1 \rbrack$, we write \[
p_s(\alpha, \beta, B_s) = p(\alpha, \beta', B_s)
\]
where $\beta'$ is $\beta$ with $\beta\lbrack 1 \rbrack - \alpha\lbrack 1
\rbrack$ copies of 1 removed, whereas if $\beta \lbrack 1\rbrack < \alpha
\lbrack 1 \rbrack$, we set $p(\alpha, \beta, B_s) = 0$.
\end{definition}

\subsection{Extension and Merge Probabilities} \label{sec:comb}
In this section, we will prove the following lemma, bounding the probability
that a merge or extension event will occur at a given time.
\begin{lemma}\label{lm:ext-prob}
For some absolute constant $C > 0$, for each $t \le n - C 2^{2\ell} n^{5/6}
\log n$, there is an event $\mathcal{E}_t$ over $\Bb_t$ such that
$\Pb[\Bb_t]{ \mathcal{E}_t } \ge 1 - 1/n^{\ell + 1}$, and for any
$B_t \in \mathcal{E}_t$ and for every $\alpha, \beta$ such that $|\alpha|_* \le
\ell - 3$ (recall Definition~\ref{def:weight-and-size}) \[ 
p_t(\alpha, \beta, B_t) \le \begin{cases}
\frac{O(\alpha\lbrack a\rbrack)}{n} &\mbox{if $\alpha\to \beta$ is an extension
of a path of size $a$}\\
\frac{O(\alpha \lbrack a\rbrack \cdot \alpha \lbrack b\rbrack)}{(n - t)^2}
&\mbox{if $\alpha\to \beta$ is a merge of paths of size $a$ and $b$.}
\end{cases}
\]
\end{lemma}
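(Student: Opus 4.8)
The plan is to compute the relevant probabilities by conditioning on $\Bb_t$ and using Lemma~\ref{lem:pidist}, which tells us that $\bpi$ is uniform over $\bPi_t$. Fix $B_t \in \mathcal{E}_t$ (with $\mathcal{E}_t$ to be defined below) and a $z \bsim_t \alpha$; by Lemma~\ref{lem:isogrowth} the probability only depends on the collection type $\alpha$ and not on the particular witness $z$. The edge arriving at time $t+1$ is $\bb_{t+1} = \bsigma(e_{\bpi(t+1)})$, and an extension of a particular length-$a$ path $P$ in $\Ob_t^z$ happens precisely when $\bb_{t+1}$ is one of the (at most two) edges of $G$ incident to the two endpoints of $P$ but not yet present. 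A merge of two specified paths $P, P'$ happens when $\bb_{t+1}$ is one of the (at most four) edges joining an endpoint of $P$ to an endpoint of $P'$ in $G$. So the whole task reduces to bounding, for a fixed as-yet-unrevealed edge $f$ of $G$, the probability that $\bb_{t+1} = \bsigma(f)$, i.e.\ the probability that $\bpi(t+1)$ is the appropriate preimage, and then multiplying by the number of relevant edges ($O(\alpha[a])$ choices of path to extend, $O(\alpha[a]\alpha[b])$ choices of pair to merge) and taking a union bound.

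For the extension bound, the key point is that among the $n-t$ arrival times not yet used, the endpoint $v$ of $P$ that we want to extend has a specific successor/predecessor in $G$ that has not yet arrived, and conditioned on $\Bb_t$ each of the $\ge n - t$ remaining slots of $\bpi$ is roughly equally likely to receive it — giving $O(1/(n-t))$. But we want $O(1/n)$, which is stronger when $t$ is a constant fraction of $n$; to get this I would argue that, conditioned on $\Bb_t$, the edge we want to extend by is equally likely to arrive at \emph{any} of its compatible future times, and that with high probability over $\Bb_t$ (this is where $\mathcal{E}_t$ enters) the number of such compatible future times is $\Omega(n)$ — essentially because the number of edges of $G$ consistent with being the "next" edge at each vertex is large, so the relevant preimage under $\bpi$ is spread over $\Omega(n)$ positions rather than concentrated. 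More precisely, I would phrase the extension event as "$\bb_{t+1}$ is incident to endpoint $v$", count that there are $\Theta(n)$ remaining vertices $u$ of $G$ each contributing an equally likely candidate edge $e_u$ that would be incident to \emph{some} endpoint, and conclude the probability that it is incident to our specific $v$ is $O(1/n)$. For the merge bound, the two endpoints we want to join must be \emph{adjacent in $G$}, i.e.\ one is $\next$ of the other; conditioned on $\Bb_t$ this is an event of probability $O(1/(n-t))$ (the partner is uniform over the $\Theta(n-t)$ unmatched vertices), and then \emph{additionally} that edge must be the one arriving at time $t+1$, another $O(1/(n-t))$ factor, giving $O(1/(n-t)^2)$ per pair.

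The main obstacle I expect is \textbf{defining the good event $\mathcal{E}_t$ and verifying $\Pb{\mathcal{E}_t} \ge 1 - 1/n^{\ell+1}$}. The probability bounds above are clean only when the observed graph $\Ob_t$ is "typical" — in particular when the number of path-endpoints (equivalently, vertices of $G$ that are still waiting for an incident edge) is $\Theta(n-t)$ and, for the $O(1/n)$ extension bound, when no pathological concentration has occurred that would make a particular as-yet-unrevealed edge unusually likely to arrive next. Because the condition $t \le n - C 2^{2\ell} n^{5/6}\log n$ leaves room, one should be able to take $\mathcal{E}_t$ to be the event that the number of length-$<\ell$ components (equivalently the number of "active" vertices) is within a constant factor of its expectation $\Theta(n-t)$, together with the event that $\Ob_t$ contains no large atypical structure that breaks the isomorphism-counting; a Chernoff/Azuma bound over the randomness of $\bpi$ restricted to $[t]$, combined with the slack built into the range of $t$, should give the $1 - 1/n^{\ell+1}$ failure probability, with the $2^{2\ell}$ factor in the cutoff absorbing the union bound over the $\poly(\ell)$-many collection types. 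Once $\mathcal{E}_t$ is in hand, the per-edge probability estimates and the union bounds over $O(\alpha[a])$ resp.\ $O(\alpha[a]\alpha[b])$ edges are routine.
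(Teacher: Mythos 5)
There is a genuine gap in your extension bound. When $t$ is close to $n$ (the lemma allows $n-t$ as small as roughly $2^{2\ell}n^{5/6}\log n$), the arriving edge $\bb_{t+1}$ is uniform over the $n-t$ not-yet-arrived edges of the underlying graph, so the probability that it is incident to a \emph{specific} endpoint of a path in $\Ob_t^z$ is exactly $\Theta(1/(n-t))$, which is far larger than $O(1/n)$. Your proposed fix --- that the relevant edge is ``spread over $\Omega(n)$ compatible future times'' --- cannot work, since there are only $n-t$ future times in total. The actual source of the $O(1/n)$ bound is different: for $z\cdot 1$ to have type $\beta$ (an extension, per Definition~\ref{def:sim}), the \emph{other} endpoint of the arriving edge must still be isolated in $\Ob_t$ (otherwise the move is a merge with a component outside $z$, or $z\cdot 1$ fails to be a union of components). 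For a typical $a$-edge component this isolation holds with probability $\Theta((n-t)/n)$, and $\frac{1}{n-t}\cdot\Theta(\frac{n-t}{n}) = \Theta(1/n)$. Moreover this cannot be controlled per component, only on average over all $a$-edge components; the paper therefore conditions on a richer random variable $\Ub_t$ (the full assignment of arrival times to underlying edges), proves concentration of the counts of the relevant local ``patterns'' (Lemma~\ref{lem:patternfreq}, via a coupling with an i.i.d.\ model and a partition into $2\ell$ independence classes), bounds the \emph{average} extension probability $\upsilon$, and then transfers back to $\Bb_t$ (Lemma~\ref{lm:u2b}). None of this machinery, nor the isolation requirement that drives it, appears in your plan, and your proposed event $\mathcal{E}_t$ (number of active vertices is $\Theta(n-t)$) is too coarse to recover it.

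Your merge argument is closer to the truth in spirit (two criteria: the two preimage paths must be adjacent in $G$, and the joining edge must arrive at time $t+1$), but the step ``the partner is uniform over the $\Theta(n-t)$ unmatched vertices'' is precisely the hard part, not a routine consequence of conditioning. For $t\ge n/8$ it again requires the pattern-count concentration over $\Ub_t$; for $t\le n/8$ the paper instead needs a separate combinatorial device (the swap graph on $\bPi_t$ with minimum degree $\Omega(n)$) to show the adjacency probability is $O(1/n)$. You would need to supply one of these arguments before the union bound over $O(\alpha[a]\alpha[b])$ pairs can be applied.
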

We will prove four cases of the lemma, depending on $t$ and whether we are
considering an extension or a merge. 

We start by proving the lemma when $t \le n/8$ and $\beta$ is reached from
$\alpha$ by an extension. 
\begin{claim}
For each $t \le n/8$, every $B_t$ in the support of $\Bb_t$, and for
every $\alpha, \beta$ such that $\alpha\to \beta$ is an extension of a path of
length $a$,
 \[
p_{t}(\alpha, \beta, B_t) \le \frac{3\alpha\lbrack
a\rbrack}{n} \text{.}
\]
\end{claim}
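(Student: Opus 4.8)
The plan is to dominate the extension event by a union of at most $2\alpha\lbrack a\rbrack$ elementary events, each of the form ``the edge arriving at time $t+1$ is incident to a fixed degree-one vertex of $\Ob_t$'', and then to show that each such event has probability exactly $\tfrac{1}{n-t}$.

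First I would fix any $z \in \bool^t$ with $z \bsim_t \alpha$ and let $U \subseteq V$ be the set of all endpoints of the size-$a$ path components of $\Ob_t^z$; since a path with $a \ge 1$ edges has exactly two endpoints, $|U| \le 2\alpha\lbrack a\rbrack$ (with $U = \emptyset$ when $a = \ell$, as size-$\ell$ components are cycles, which admit no extension). The structural claim is that the event $\{z\cdot 1 \bsim_{t+1}\beta\}$, with $\beta$ obtained from $\alpha$ by an extension of a size-$a$ path, is contained in $\bigcup_{u\in U}\{\bb_{t+1}\text{ incident to }u\}$: internal path vertices of $\Ob_t^z$ already have degree $2$ in $\Ob_t$ (as $G$ is $2$-regular), so an edge extending a component of $\Ob_t^z$ must attach at one of its endpoints; and $z\cdot1\bsim_{t+1}\beta$ forces $\Ob_{t+1}^{z\cdot1}$ to be a union of \emph{whole} components of $\Ob_{t+1}$ with size multiset an extension of $\alpha$, which rules out the second endpoint of $\bb_{t+1}$ lying in a component of $\Ob_t$ that is not contained in $\Ob_t^z$. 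Hence $\bb_{t+1}$ is incident to some $u\in U$.

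Next I would prove that for any fixed degree-one vertex $u$ of $\Ob_t$ one has $\Pb[\Bb_{t+1}]{\bb_{t+1}\text{ incident to }u \mid \Bb_t = B_t} = \tfrac{1}{n-t}$. The idea is to condition additionally on $\bsigma = \sigma$ for an arbitrary $\sigma$ in the support of the conditional law of $\bsigma$ given $\Bb_t = B_t$: this fixes the relabeled underlying graph $G' = \sigma(G)$, of whose $n$ edges exactly the $t$ edges of $\Ob_t$ have arrived. Arguing as in the proof of Lemma~\ref{lem:pidist}, and using that $v\mapsto e_v$ is a bijection $V\to E$ (valid for $\ell\ge 3$), the restriction of $\bpi$ to $\lbrack t\rbrack$ is determined by $\sigma$ and $B_t$, while $\bpi$ restricted to $\{t+1,\dots,n\}$ is a uniformly random bijection onto the remaining $n-t$ vertices; consequently $\bb_{t+1} = \bsigma(e_{\bpi(t+1)})$ is uniform over the $n-t$ edges of $G'$ absent from $\Ob_t$. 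Exactly one of those is incident to $u$ (degree $2$ in $G'$, degree $1$ in $\Ob_t$), giving probability $\tfrac{1}{n-t}$; as this is independent of $\sigma$, it also holds conditioned only on $\Bb_t = B_t$.

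Finally, a union bound over $U$ yields $p_t(\alpha,\beta,B_t) \le |U|/(n-t) \le 2\alpha\lbrack a\rbrack/(n-t)$, and $t\le n/8$ gives $n-t \ge \tfrac{7}{8}n$, so this is at most $\tfrac{16}{7}\cdot\tfrac{\alpha\lbrack a\rbrack}{n} \le \tfrac{3\alpha\lbrack a\rbrack}{n}$, as claimed. The one genuinely delicate step is the third paragraph: verifying that fixing $\bsigma$ makes the arrival order of the remaining $n-t$ edges uniform, and that the resulting per-vertex bound $\tfrac{1}{n-t}$ does not depend on $\sigma$. Everything else — the endpoint count and the reduction of the extension event to incidence with $U$ — is routine bookkeeping around the $2$-regularity of $G$.
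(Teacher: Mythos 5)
Your proposal is correct and follows essentially the same route as the paper: identify the at most $2\alpha\lbrack a\rbrack$ endpoints of the size-$a$ components, observe that conditioned on $\Bb_t=B_t$ and $\bsigma=\sigma$ the value $\bpi(t+1)$ is uniform over the $n-t$ remaining positions so each endpoint contributes probability $\tfrac{1}{n-t}$, and conclude by a union bound together with $n-t\ge 7n/8$. The paper's argument is a slightly more terse version of exactly this.
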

\begin{proof}  
Let $O_t$ be the graph corresponding to the edges in the sequence $B_t$. Let $H$
be any subgraph of $B_t$ with component sizes corresponding to $\alpha$.  It
will suffice to bound the probability that a length-$a$ component of $H$ is
extended at time $t + 1$ when $\Bb_t = B_t$.

There are $2\alpha\lbrack a \rbrack$ distinct end vertices of components of
length $a$ in $H$. Call this set $S$. One of these paths is extended iff
$\bsigma(\bpi(s+1)) \in S$ or $\bsigma(\next(\bpi(s+1))) \in S$. For any
$\sigma$, suppose $\bsigma = \sigma$. Now for each element of $S$, there is
exactly one possibility for $\bpi(s+1)$ that will cause one of
$\bsigma(\bpi(s+1)) \in S$ or $\bsigma(\next(\bpi(s+1))) \in S$ to hold, and
conditioned on $\Bb_t = B_t$, $\bsigma = \sigma$, $\bpi(t+1)$ is uniformly
distributed on a set of size $n - t \ge 7n/8$ (as fixing $\Bb_t$ and $\bsigma$
fixes $(\bpi(s))_{s=1}^t$). So the result follows by taking a union bound.
\end{proof}

Next, we prove it for merges when $t \le n/8$. To do this, we will first
introduce a new concept, the swap graph $\Gbc_t^z$.  

\noindent\paragraph{The swap graph $\Gbc_t^z$.} For any $t \in T$, $z \in
\bool^t$, the vertex set of this (undirected) graph will be $\bPi_t$ as defined
in Definition~\ref{dfn:Pit}, the set of permutations in the support of $\bpi$
that are consistent with the observed board $\Bb_t$.

We now define the edge set of $\Gbc_t^z$. If $\Ob^z_t$ is \emph{not} a single
component of size at most $\ell - 2$ in $\Ob_t$, $\Gbc_t^z$ is the complete
graph for convenience.

Now suppose that  $\Ob^z_t$ is a single component of size at most $\ell-2$. We now define edges incident on a permutation $\pi \in \bPi_t$ in $\Gbc_t^z$. To define these edges, first note that the set of vertices
$$
\pi(\lbrace s \in \lbrack
t \rbrack : z_s = 1 \rbrace)
$$ induces a path $P$ in the underlying graph $G$. Let $k$ denote the length of this path. Write 
\begin{equation}\label{eq:p}
P=(u_i)_{i=1}^k, u_i\in V,
\end{equation} where for every $i\in [k-1]$ one has $u_{i+1}=\next(u_i)$. In particular, we have, as per~\eqref{eq:e-v}, that $e_{u_i} = (u_i, u_{i+1})$ for each $i \in \brac*{k - 1}$.  
For every path 
\begin{equation}\label{eq:p-prime}
P'=(u'_i)_{i=1}^k, u'_i\in V
\end{equation}
in $G$ of the same length first 
define 
\begin{equation}\label{eq:psi}
\psi : V \rightarrow V
\end{equation} to swap $u_i$ and $u'_i$ for each $i \in
\brac*{k-1}$ while being the identity everywhere else, and then add an edge $(\pi, \pi')$, where 
\begin{equation}\label{eq:pi-prime}
\pi'=\psi\pi,
\end{equation}
to $\Gbc_t^z$ if $\pi' \in \bPi_t$.  We have that $\pi'$ is consistent
with $\Bb_t$ on every edge except, possibly, the edges immediately
before and after $P'$ or P.  In particular, $\pi' \in \bPi_t$ if none
of those edges arrive in $\Bb_t$.

\begin{lemma}
\label{lm:swapgraph}
If $t \le n/8$, the minimum degree of $\Gbc_t^z$ is at least $n/8$.
\end{lemma}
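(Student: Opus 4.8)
The plan is to bound from below the degree of an arbitrary vertex $\pi\in\bPi_t$ of $\Gbc_t^z$, treating two cases.

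\textbf{The trivial case.} If $\Ob_t^z$ is not a single component of size at most $\ell-2$, then $\Gbc_t^z$ is complete, so it suffices to check $\abs{\bPi_t}\ge n/8+1$. Membership in $\bPi_t$ is determined purely by the restriction of a permutation to $[t]$ (a compatible $\sigma$ exists iff $(\sigma(e_{\pi(s)}))_{s=1}^t=\Bb_t$, which constrains only $\pi|_{[t]}$), so the true permutation $\bpi$ together with all $(n-t)!$ permutations agreeing with it on $[t]$ lie in $\bPi_t$, giving $\abs{\bPi_t}\ge(n-t)!\ge 7n/8$.

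\textbf{The path case.} Suppose $\Ob_t^z$ is a single component with $m\le\ell-2$ edges. Fix $\pi\in\bPi_t$ with a witnessing $\sigma$; under $(\pi,\sigma)$ the component corresponds to an $m$-edge path $P$ of $G$, which does not wrap around its cycle since $m\le\ell-2$, so it has two distinct \emph{boundary edges} --- edges of $G$ incident to an endpoint of $P$ but not lying on $P$ --- and, as $\Ob_t^z$ is a \emph{maximal} component of $\Ob_t$, neither boundary edge of $P$ is present on the board. I would then show that for every $m$-edge path $P'$ of $G$ lying in a cycle other than $P$'s and whose two boundary edges are also absent from $\Bb_t$, the involution $\psi=\psi_{P'}$ swapping $P$ with $P'$ along their path orders (and fixing every other vertex) satisfies $\psi\pi\in\bPi_t$, with $\sigma\psi^{-1}$ as a witness. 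The mechanism is that $\psi$ commutes with $\next$ at every vertex except the four tails of the boundary edges of $P$ and of $P'$, so that $\psi^{-1}(e_{\psi(v)})=e_v$ whenever $e_v$ is not one of those four; and for $s\le t$ the edge $e_{\pi(s)}$ is on the board, hence is none of the four (the two for $P$ by maximality, the two for $P'$ by the choice of $P'$), so $\sigma\psi^{-1}(e_{(\psi\pi)(s)})=\sigma(e_{\pi(s)})=\bb_s$ for all $s\le t$. Distinct such $P'$ move disjoint sets of ``new'' vertices, hence yield distinct neighbours $\psi_{P'}\pi$ of $\pi$ in $\Gbc_t^z$.

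\textbf{Counting the admissible $P'$.} Each length-$\ell$ cycle contains exactly $\ell$ paths with $m$ edges (for $1\le m\le\ell-1$), so $G$ has exactly $n$ of them; at most $\ell$ lie in $P$'s cycle, and since each edge of $G$ is a boundary edge of at most two $m$-edge paths while the board holds $t$ edges, at most $2t$ further $m$-edge paths are excluded. Hence at least $n-\ell-2t\ge n-\ell-n/4\ge n/8$ choices of $P'$ remain, using $t\le n/8$ and $\ell\le n/2$ (which holds in our regime, where in fact $\ell\ll n$). This bounds $\deg_{\Gbc_t^z}(\pi)$ below by $n/8$ for every $\pi$, as required.

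\textbf{Expected main obstacle.} The delicate point is the precise claim that $\psi_{P'}$ fails to commute with $\next$ only at the four boundary-edge tails; I would verify this by a short case check over vertices interior to $P$, interior to $P'$, at their endpoints, at the two vertices immediately outside each of $P,P'$, and elsewhere, combined with the maximality argument that discards $P$'s own boundary edges. Restricting $P'$ to a different cycle is what sidesteps the degenerate configurations (e.g.\ $P'$ overlapping or abutting $P$), of which there are only $O(\ell)$; since the number of cycles is at least two throughout our setting, this costs nothing.
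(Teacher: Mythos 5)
Your proposal is correct and follows the paper's proof in all essentials: the same swap permutation $\psi$ between $P$ and a candidate path $P'$, the same witness $\sigma\psi^{-1}$ (the paper writes $\psi\sigma$, an apparent typo for the same thing), and the same criterion that neither edge adjacent to $P'$ has yet appeared on the board. The only real difference is in the final count, where the paper first exhibits $n/4$ \emph{disjoint} admissible boundary-vertex pairs via a case split on the path length relative to $\ell/4$ and then removes at most $t\le n/8$ of them, whereas you count all $n$ paths of the given length and remove at most $2t$ because each arrived edge bounds at most two such paths --- a slightly cleaner route to the same bound (your restriction of $P'$ to a different cycle, and your explicit check that $|\bPi_t|$ is large in the "complete graph" case, are harmless tidy-ups).
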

\begin{proof}
The result is trivial if $\Ob^z_t$ is not a single component of size at most
$\ell - 2$ in $\Ob_t$, so we will consider only the case where it is. 

We will to show that, for any $\pi \in \bPi_t$, there are at least $n/12 - 1$
ways of choosing $\pi'$ by the process defined above such that $\pi'$ will
still be in $\bPi_t$. Let $P, P', \psi, \pi'=\psi\pi$ be as in~\eqref{eq:p},~\eqref{eq:p-prime},~\eqref{eq:psi} and~\eqref{eq:pi-prime} above. Note that $\pi'$ will be in $\bPi_t$ if there is some choice of
permutation $\sigma' : V \rightarrow V$ such that $(\sigma'(e_{\pi'(s)}))_{s =
1}^t = \Bb_t$.  As there is a permutation $\sigma$ such that
$(\sigma(e_{\pi(s)}))_{s = 1}^t = \Bb_t$, choosing 
$$
\sigma'=\psi \sigma
$$ will guarantee that $\sigma'(e_{\pi'(s)}) = \bb_s$ for each $s \in \brac*{t}$. So
$(\sigma'(e_{\pi'(s)}))_{s = 1}^t = \Bb_t$ will hold provided
\begin{equation}\label{eq:2983yt8348g}
\sigma'(e_{\pi'(s)}) = \sigma(e_{\pi'(s)}) = \sigma(e_{\pi(s)})
\end{equation}
for all $s$ such that $z_s = 0$. The condition in~\eqref{eq:2983yt8348g} will
be satisfied when there is no $s \in \brac*{t}$ such that $z_s = 0$ and
$e_{\pi(s)}$ is incident on either $P$ or $P'$. This will happen iff there is
no $s \in \brac*{t}$ such that either $\next(\pi(s)) = u'_1$ or $\pi(s) =
u'_k$. 

Therefore, the number of valid choices for the new path, and therefore the
degree of $\pi$ in $\Gbc_t^z$, is given by the number of pairs $w_0, w_k$
in $V$ such that {\bf (a}) $\pi(w_0), \pi(w_k) > t$, and {\bf (b)} there is
a path $(w_i)_{i=0}^{k}$ such that $w_{i} = \next(w_{i-1})$ for each $i \in
\brac*{k}$.  We lower bound the number of such pairs now.

We will start by lower bounding the number of disjoint pairs $w_0, w_k$ in
$V$ that satisfy {\bf (b)}. We may assume without loss of generality
that $k \le \ell/2$, as there is a one-to-one correspondence between paths
$(w_0)_{i=1}^k$ and paths $(w_i')_{i=0}^{\ell - k}$ such that $w'_0 =
w_k$ and $w'_{\ell - k} = w_0$. 

We will consider two cases to lower bound this number of disjoint pairs:
\begin{description}
\item{$k \le \ell/4$:} We may divide each cycle in the underlying graph $G$ into
$\floor{\frac{\ell}{2k}}$ disjoint blocks of $2k$ consecutive vertices.
In each such block we may fit $k$ disjoint pairs satisfying {\bf (b)}.
Therefore each cycle contains \[
\floor{\frac{\ell}{2k}} \cdot k \ge \left(\frac{\ell}{2k}- 1\right)\cdot k \ge \ell/4
\]
 disjoint pairs satisfying {\bf (b)}, and so $G$ contains at least $n/4$ of
 them.  
 \item{$k > \ell/4$} For any cycle in $G$, let $v$ be a vertex in the cycle.
 Then for each $i < k$, $(\next^i(v), \next^{i+k}(v))$ is a pair satisfying
 {\bf (b)}, and these pairs are all disjoint. So there are at least $\min(k,
 \ell - k) > \ell/4$ such disjoint pairs in the cycle, and therefore at least
 $n/4$ in $G$.
\end{description}
Therefore, as $t \le n/8$, there are at most $n/8$  pairs $(w_0, w_k)$ that do
not satisfy {\bf (a)}. Combining with the bounds above, we get that there are
at least $n/8$ pairs $(w_0, w_k)$ that satisfy both {\bf (a)} and {\bf
(b)}.  This completes the proof.
\end{proof}

We are now ready to prove the result for merges when $t \le n/8$.
\begin{claim}
For each $t \le n/8$, every $B_t$ in the support of $\Bb_t$, and for
every $\alpha, \beta$ such that $\alpha\to \beta$ is a merge of paths of
length $a,b$,
 \[
 p_{t}(\alpha, \beta, B_s) \le \frac{14\alpha\brac*{a}
 \cdot \alpha\brac*{b}}{n^2} \text{.}
 \]
 \end{claim}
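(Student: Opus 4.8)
The plan is to fix $B_t$ and a pair $(P_a,P_b)$ consisting of one length-$a$ and one length-$b$ component of a subgraph $H$ of $\Ob_t$ realizing $\alpha$, bound the probability that $\bb_{t+1}$ merges this particular pair by $O(1/n^2)$, and then union bound over the at most $\alpha[a]\,\alpha[b]$ such pairs. If $\alpha\to\beta$ is not realizable then $a+b>\ell-2$ and $p_t(\alpha,\beta,B_t)=0$, so we may assume $a,b\le\ell-2$; in particular $P_a$ is a single component of size at most $\ell-2$, so its swap graph $\Gbc_t^z$ (with $z$ the indicator of $P_a$) is not the trivial complete graph.

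First I would reduce to a purely structural event. Conditioned on $\Bb_t=B_t$, the event ``$\bb_{t+1}$ is incident to both $P_a$ and $P_b$'' depends only on $\bpi$, since the relabelling $\bsigma$ is a bijection and cannot affect incidences: writing $S_a,S_b\subseteq[t]$ for the fixed index sets of $P_a,P_b$ and $\rho_a:=\{e_{\bpi(s)}:s\in S_a\}$, $\rho_b:=\{e_{\bpi(s)}:s\in S_b\}$ for the corresponding paths in the underlying graph $G$ (these are genuine length-$a$, length-$b$ paths for every $\bpi\in\bPi_t$, and depend only on $\bpi|_{[t]}$), the merge of $P_a$ with $P_b$ occurs iff $e_{\bpi(t+1)}$ joins an endpoint of $\rho_a$ to an endpoint of $\rho_b$. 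Since membership in $\bPi_t$ depends only on $\bpi|_{[t]}$, conditioning on $\bpi|_{[t]}$ leaves $\bpi(t+1)$ uniform over the $n-t\ge 7n/8$ as yet unused vertices; and as each endpoint of $\rho_a$ has exactly one $G$-neighbour outside $\rho_a$, there are at most $2$ edges of $G$ joining an endpoint of $\rho_a$ to an endpoint of $\rho_b$. Hence, using Lemma~\ref{lem:pidist},
\[
\Pb{\bb_{t+1}\text{ merges }P_a\text{ and }P_b\ \middle|\ \Bb_t=B_t}\ \le\ \frac{2}{n-t}\cdot\Pb[\bpi\sim\Uc(\bPi_t)]{\rho_a(\bpi)\text{ and }\rho_b(\bpi)\text{ are adjacent in }G},
\]
where ``adjacent'' means they lie in a common $\ell$-cycle in one of the (at most $2$) ways that make $\rho_a\cup\rho_b$ plus a connecting edge a path.

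The main step is to bound the last probability by $O(1/n)$ via the swap graph $\Gbc_t^z$. By Lemma~\ref{lm:swapgraph} (here $t\le n/8$) every $\pi\in\bPi_t$ has at least $n/8$ neighbours in $\Gbc_t^z$, each obtained by sliding $\rho_a$ to a new position while leaving the rest of the embedding fixed; crucially such a move leaves $\rho_b$ in place, since a move disturbing the edges of $\rho_b$ cannot land in $\bPi_t$ (it would be inconsistent with $P_a,P_b$ being separate components of $\Ob_t$). Since at most $2$ positions of a length-$a$ path in a cycle are adjacent to a fixed length-$b$ path (using $a+b\le\ell-2$), every vertex of $\Gbc_t^z$ — whether or not it lies in $A:=\{\pi:\rho_a(\pi),\rho_b(\pi)\text{ adjacent}\}$ — has at most $2$ neighbours in $A$. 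Double counting the edges of $\Gbc_t^z$ between $A$ and $\bPi_t\setminus A$ then gives $|A|\,(n/8-2)\le 2\,|\bPi_t\setminus A|\le 2\,|\bPi_t|$, so $\Pb[\bpi\sim\Uc(\bPi_t)]{\bpi\in A}\le 16/(n-16)$. Plugging this in and summing over the $\le\alpha[a]\,\alpha[b]$ pairs gives $p_t(\alpha,\beta,B_t)=O(\alpha[a]\,\alpha[b]/n^2)$, and tracking the constants carefully (and noting the pairs are unordered when $a=b$) yields the stated factor $14$.

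The step I expect to be the main obstacle is making the swap-graph double count fully rigorous: one must check that the $\ge n/8$ swaps furnished by Lemma~\ref{lm:swapgraph} genuinely fix $\rho_b$ — ruling out new positions of $\rho_a$ that overlap $\rho_b$ or whose boundary edges have already arrived, which either fail to produce an element of $\bPi_t$ or form at most $O(\ell)$ exceptions, negligible against $n/8$ — and that the ``at most $2$ admissible configurations'' count is right in corner cases (small $a$ or $b$, or $a+b$ near $\ell$). The remaining ingredients (uniformity of $\bpi(t+1)$ given $\bpi|_{[t]}$, the count of connecting edges, and the union bound) are routine.
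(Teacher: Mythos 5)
Your proposal follows essentially the same route as the paper's proof: fix a pair of components, split the merge event into ``the two underlying paths are separated by exactly one edge'' times ``that edge arrives at time $t+1$'', bound the second factor by $O(1/(n-t))$ via the uniformity of $\bpi(t+1)$, and bound the first by $O(1/n)$ via the swap graph $\Gbc_t^z$, Lemma~\ref{lm:swapgraph}, and the same double-counting of edges between $\Pi^*$ and its complement. The only caveat is that the explicit constant your accounting produces is somewhat larger than $14$ (and the paper's own arithmetic for $14$ is itself slightly off), but this is immaterial since the claim is only ever invoked through the $O(\alpha[a]\alpha[b]/(n-t)^2)$ bound of Lemma~\ref{lm:ext-prob}.
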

\begin{proof}
Fix $\Bb_t = B_t$ and let $O_t$ be the corresponding value of $\Ob_t$. For any
pair of components $P_1, P_2$ of lengths $a$, $b$ in $O_t$, they will merge at
time $t+1$ iff two criteria are satisfied:
\begin{enumerate}
\item The two paths $\Qb_1 = \bsigma^{-1}(P_1)$, $\Qb_2 = \bsigma^{-1}(P_2)$ in the
underlying graph $G$ need only one edge to be connected.
\item That edge arrives at time $t+1$.
\end{enumerate}
If the first criterion is satisfied, the second will be with probability
$\frac{1}{n-t} \le \frac{8}{7n}$, so we will seek to bound the first. Let $\Pi$ be the value of $\bPi_t$ (recall Definition~\ref{dfn:Pit}) given $\Bb_t = B_t$, and let $\Pi^*$ be
the set of $\pi \in \Pi$ that would lead to the first criterion being satisfied
if $\bpi = \pi$.  We will prove that $|\Pi^*| \le \frac{8}{n}|\Pi|$, which by
Lemma~\ref{lem:pidist} suffices to prove that the probability of $\bpi$
satisfying the first criterion is at most $8/n$.

Let $z \in \bool^t$ be given by $z_s = 1$ if an edge in $P_1$ arrived at time
$s$, and $z_s = 0$ otherwise. Let $\Gc^z_t$ be the swap graph $\Gbc_t^z$ when
$\Bb_t = B_t$. Each vertex of this graph corresponds to a choice of $\pi$. 
Note
that, given $\Bb_t = B_t$, the values of $\bsigma^{-1}$ on non-isolated vertices
of $\Ob_t$ and the values of $\bpi$ on $\brac*{t}$ each uniquely determine the
other, so in particular such a vertex determines a choice of $\bsigma^{-1}$. Thus, the choice of 
a vertex in $\Gbc_t^z$ uniquely determines a choice $Q_1$, $Q_2$ of the length $a$, $b$ paths $\Qb_1$,
$\Qb_2$. Moving along an edge of $\Gc^z_t$ corresponds to choosing a different
$Q_1$ while leaving $Q_2$ the same. 

We observe that for each $\pi \in \Pi$,
there are at most $2$ neighbors of $\pi$ in $\Pi^*$, as when fixing $Q_2$ there
is at most two choices of $Q_1$ that lead to them being separated by exactly
one edge. By Lemma~\ref{lm:swapgraph}, the minimum degree of $\Gc^z_t$ is at
least $n/8$. Therefore, each vertex in $\Pi^*$ is a neighbor of at least $n/8$ 
vertices of $\Pi$, each of which neighbors at most $2$ vertices of $\Pi^*$, and
so\[
|\Pi| \ge \frac{n}{16} |\Pi^*|.
\]
Therefore the probability that $\bpi$ satisfied the first criterion given
$\Bb_t = B_t$ is at most $n/16$. We have that the probability of any
given pair of components of length $a, b$ merging is at most\[
\frac{16}{n} \cdot \frac{7}{8n} = \frac{14}{n^2}
\]
and so the result follows.
\end{proof}

For the remaining cases we will use a new random variable $\Ub_s$, that gives the
final board state, up to re-orderings of the edge arrivals from time $s+1$ to
$n$. As conditioning on $\Ub_s$ means that Lemma~\ref{lem:isogrowth} no longer
holds, we will introduce new notation for the \emph{average} growth
probability.
\begin{definition}[Average growth probabilities]\label{def:ut}
For each $s \in \lbrack n \rbrack$, the random variable $\Ub_s$ is given by $\Bb_s$
and $(\bpi(i))_{i=1}^s$. For each $\alpha, \beta \in \zpp$ such that
$\alpha\lbrack 1 \rbrack = \beta\lbrack 1 \rbrack$, \[
\upsilon_{s}(\alpha, \beta, \Ub_s) = \frac{1}{|z \in \lbrace 0, 1\rbrace^s : z
\bsim_s \alpha|}\sum_{\substack{z \in \lbrace 0, 1\rbrace^s\\ z \bsim_s \alpha}}
\Pb[\Bb_{s+1}] { z \cdot 1 \bsim_{s+1} \beta | \Ub_s }
\]
while if $\beta \lbrack 1\rbrack > \alpha \lbrack 1 \rbrack$, we write \[
\upsilon_s(\alpha, \beta, B_s) = \upsilon_s(\alpha, \beta', B_s)
\]
where $\beta'$ is $\beta$ with $\beta\lbrack 1 \rbrack - \alpha\lbrack 1
\rbrack$ copies of 1 removed, whereas if $\beta \lbrack 1\rbrack < \alpha
\lbrack 1 \rbrack$, we set $\upsilon(\alpha, \beta, B_s) = 0$.

\end{definition}

For extensions and merges we will define high probability 
events $\mathcal{E}^e_s$, $\mathcal{E}^m_s$ respectively, over $\Ub_s$, such that conditioned on these events the quantity $\upsilon(\alpha, \beta, U_t^e)$ satisfies the natural analog of bound in Lemma~\ref{lm:ext-prob}.  Lemma~\ref{lm:u2b} below shows that such events imply Lemma~\ref{lm:ext-prob}. The rest of this section is devoted to defining such events.

\begin{lemma}\label{lm:u2b}
For any $s \in \lbrack n\rbrack$, let $\mathcal{E}_s^e$, $\mathcal{E}_s^m$ be
events such that for all $(U_s^e,U_s^m) \in \mathcal{E}_s^e \times
\mathcal{E}_s^m$, and any $\alpha, \beta$ such that $\abs{\alpha}_* \le \ell
-3$,
\begin{align*}
\upsilon(\alpha, \beta, U_s^e) &\le \frac{O(\alpha\lbrack a\rbrack)}{n}
&\mbox{if $\alpha\to \beta$ is an extension of a path of size $a$}\\
\upsilon(\alpha, \beta, U_s^m) &\le  \frac{O(\alpha \lbrack a\rbrack \cdot
\alpha \lbrack b\rbrack)}{(n - s)^2} &\mbox{if $\alpha\to \beta$ is a merge of
paths of size $a$ and $b$}
\end{align*}
and each event occurs with probability at least $1 - 1/(2n^{\ell + 3})$ over
$\Ub_s$. Then there is an event $\mathcal{E}_s$ such that for any $B_s \in
\mathcal{E}_s$ and for every $\alpha, \beta$ such that $\abs{\alpha}_* \le \ell
- 3$ \[
p_{s}(\alpha, \beta, B_s) \le \begin{cases}
\frac{O(\alpha\lbrack a\rbrack)}{n} &\mbox{if $\alpha\to \beta$ is an extension
of a path of size $a$}\\
\frac{O(\alpha \lbrack a\rbrack \cdot \alpha \lbrack b\rbrack)}{(n -
s)^2} &\mbox{if $\alpha\to \beta$ is a merge of paths of size $a$ and $b$.}
\end{cases}
\]
and $\Pb[\Bb_s]{ \mathcal{E}_s } \ge 1 - 1/n^{\ell + 1}$.
\end{lemma}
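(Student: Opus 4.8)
The plan is to express the pointwise growth probability $p_s(\alpha,\beta,B_s)$ as a conditional average of the averaged quantity $\upsilon_s(\alpha,\beta,\Ub_s)$, and then to observe that for a typical board $B_s$ the conditional law of $\Ub_s$ concentrates on $\mathcal{E}_s^e\cap\mathcal{E}_s^m$. Concretely, the key identity I would establish is
\[
p_s(\alpha,\beta,B_s)=\E[\Ub_s\mid\Bb_s=B_s]{\upsilon_s(\alpha,\beta,\Ub_s)}.
\]
To derive it, note that for any fixed $z\bsim_s\alpha$ we have $p_s(\alpha,\beta,B_s)=\Pb[\Bb_{s+1}]{z\cdot1\bsim_{s+1}\beta\mid\Bb_s=B_s}$, and by Lemma~\ref{lem:isogrowth} this value does not depend on which $z\bsim_s\alpha$ we pick, so it equals the average of $\Pb[\Bb_{s+1}]{z\cdot1\bsim_{s+1}\beta\mid\Bb_s=B_s}$ over all $z\bsim_s\alpha$. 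Since $\Ub_s$ refines $\Bb_s$, each inner probability equals $\E[\Ub_s\mid\Bb_s=B_s]{\Pb[\Bb_{s+1}]{z\cdot1\bsim_{s+1}\beta\mid\Ub_s}}$, and exchanging the finite average over $z$ with this expectation produces exactly $\upsilon_s(\alpha,\beta,\Ub_s)$ as in Definition~\ref{def:ut}. The cases $\beta[1]\neq\alpha[1]$ reduce to this one via the same redirection used to define both $p_s$ and $\upsilon_s$. The one point requiring care is that, \emph{after} conditioning on $\Ub_s$, the inner probability is no longer constant over $z\bsim_s\alpha$ — which is precisely why $\upsilon_s$ is defined as an average — so Lemma~\ref{lem:isogrowth} must be invoked at the level of $\Bb_s$, before passing to $\Ub_s$.

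Given the identity, I would set $\mathcal{E}_s^\ast=\mathcal{E}_s^e\cap\mathcal{E}_s^m$, so that $\Pb[\Ub_s]{\Ub_s\notin\mathcal{E}_s^\ast}\le 1/n^{\ell+3}$ by a union bound over the two hypothesised failure probabilities, and define
\[
\mathcal{E}_s=\set*{B_s : \Pb[\Ub_s\mid\Bb_s=B_s]{\Ub_s\notin\mathcal{E}_s^\ast}\le 1/n^2}.
\]
Since $\E[\Bb_s]{\Pb[\Ub_s\mid\Bb_s]{\Ub_s\notin\mathcal{E}_s^\ast}}=\Pb[\Ub_s]{\Ub_s\notin\mathcal{E}_s^\ast}\le 1/n^{\ell+3}$, Markov's inequality applied to the nonnegative random variable $\Pb[\Ub_s\mid\Bb_s]{\Ub_s\notin\mathcal{E}_s^\ast}$ gives $\Pb[\Bb_s]{\mathcal{E}_s}\ge 1-1/n^{\ell+1}$, as required.

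Finally, fix $B_s\in\mathcal{E}_s$ and $\alpha,\beta$ with $\abs{\alpha}_*\le\ell-3$. Splitting the identity according to whether $\Ub_s\in\mathcal{E}_s^\ast$, and bounding $\upsilon_s\le1$ on the complementary event,
\[
p_s(\alpha,\beta,B_s)\le \E[\Ub_s\mid\Bb_s=B_s]{\upsilon_s(\alpha,\beta,\Ub_s)\,\1bb(\Ub_s\in\mathcal{E}_s^\ast)}+\Pb[\Ub_s\mid\Bb_s=B_s]{\Ub_s\notin\mathcal{E}_s^\ast}.
\]
On $\mathcal{E}_s^\ast$ the hypotheses bound $\upsilon_s(\alpha,\beta,\Ub_s)$ by $O(\alpha[a])/n$ for an extension of a size-$a$ path and by $O(\alpha[a]\alpha[b])/(n-s)^2$ for a merge of size-$a$ and size-$b$ paths, so the first term is bounded by the same quantity; the second term is at most $1/n^2$ since $B_s\in\mathcal{E}_s$. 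Because the relevant extension (resp.\ merge) is only defined when $\alpha[a]\ge1$ (resp.\ $\alpha[a],\alpha[b]\ge1$), and $1/n^2\le1/n$ as well as $1/n^2\le1/(n-s)^2$, the additive $1/n^2$ is absorbed into the $O(\cdot)$, yielding the claimed bounds on $p_s(\alpha,\beta,B_s)$. I do not expect a substantive obstacle here; the only real care is the bookkeeping in the first step, namely applying Lemma~\ref{lem:isogrowth} before conditioning on $\Ub_s$ and checking the edge cases of the $p_s$/$\upsilon_s$ definitions line up.
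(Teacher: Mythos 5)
Your proposal is correct and follows essentially the same route as the paper's proof: the same identity $p_s(\alpha,\beta,B_s)=\E[\Ub_s\mid \Bb_s=B_s]{\upsilon_s(\alpha,\beta,\Ub_s)}$ obtained via Lemma~\ref{lem:isogrowth}, the same definition of $\mathcal{E}_s$ as the boards where the conditional failure probability of $\mathcal{E}_s^e\cap\mathcal{E}_s^m$ is at most $1/n^2$, the same Markov/averaging step, and the same split of the expectation with the $1/n^2$ error absorbed into the $O(\cdot)$. Your remark about invoking Lemma~\ref{lem:isogrowth} at the level of $\Bb_s$ before refining to $\Ub_s$ is exactly the right point of care.
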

\begin{proof}
Define $\mathcal{E}_s$ to be the event that $\Pb[\Ub_s]{ \mathcal{E}_s^e
\cap \mathcal{E}_s^m \middle| \Bb_s } \ge 1 - 1/n^2$. Then 
\begin{align*}
1 - \Pb[\Ub_s]{\mathcal{E}_s^e \cap \mathcal{E}_s^m} &=
\E[B_s]{1 - \Pb[\Ub_s]{\mathcal{E}_s^e \cap \mathcal{E}_s^m |
\Bb_s}} \\
&\ge \frac{1}{n^2}( 1 - \Pb[\Bb_s]{\Ec_s})
\end{align*}
and so \[
\Pb[\Bb_s]{ \mathcal{E}_s } \ge 1 - 1/n^{\ell + 1}.
\]
For any $B_s \in \mathcal{E}_s$, $z \bsim_s \alpha$, by applying Lemma~\ref{lem:isogrowth},  we get
\begin{align*}
\Pb[\Bb_{s+1}]{z \cdot 1 \bsim_{s+1} \beta| \Bb_s = B_s} &= \frac{1}{|z
\in \lbrace 0, 1\rbrace^s : z \bsim_s \alpha|}\sum_{\substack{z \in \lbrace 0,
1\rbrace^s\\ z \bsim_s \alpha}} \Pb[\Bb_{s+1}] { z \cdot 1 \bsim_{s+1}
\beta | \Bb_s = B_s }\\
&=  \E[\Ub_s]{\frac{1}{|z \in \lbrace 0, 1\rbrace^s : z
\bsim_s \alpha|}\sum_{\substack{z \in \lbrace 0, 1\rbrace^s\\ z \bsim_s \alpha}}
\Pb[\Bb_{s+1}]{ z \cdot 1 \bsim_{s+1} \beta | \Ub_s } | \Bb_s = B_s}\\
&= \E[\Ub_s]{\upsilon(\alpha, \beta, \Ub_s) | \Bb_s = B_s}\\
&\le \E[\Ub_s]{\upsilon(\alpha, \beta, \Ub_s) | \Bb_s = B_s, \Ec^e_s
\cap \Ec^m_s} + \frac{1}{n^2}\\ &\le \begin{cases}
\frac{O(\alpha\lbrack a\rbrack)}{n} + \frac{1}{n^2} &\mbox{if $\alpha\to \beta$
is an extension of a path of size $a$}\\ \frac{O(\alpha \lbrack
a\rbrack \cdot \alpha \lbrack b\rbrack)}{(n - s)^2} + \frac{1}{n^2} &\mbox{if
$\alpha\to \beta$ is a merge of paths of size $a$ and $b$}
\end{cases}
\end{align*}
This gives the result.

\end{proof}

$\Ub_s$ tells us how edges visible in $\Bb_s$ correspond to edges in the underlying
graph. We will use the following lemma to obtain tight bounds on how often
$\Ub_s$ contains certain ``patterns'' of edges. Specifically, we will be interested in the following types of patterns:
\begin{description}
\item[(1)] a path $P$ with $a$ edges in the underlying graph $G$ such that all of the edges of $P$ have arrived by time $t$, but neither of the adjacent edges have (see Fig.~\ref{fig:1}, (a))
\item[(2)] a path $P$ with $a$ edges in the underlying graph $G$ such that all of the edges of $P$ have arrived by time $t$, neither of the adjacent edges have arrived, and the {\bf next} vertex in the canonical order  (defined by the $\next$ function) does not have any incident edges at time $t$ (see Fig.~\ref{fig:1}, (b))
\item[(3)] a path $P$ with $a$ edges in the underlying graph $G$ such that all of the edges of $P$ have arrived by time $t$, neither of the adjacent edges have arrived, and the {\bf previous} vertex in the canonical order (defined by the $\next$ function) does not have any incident edges at time $t$ (see Fig.~\ref{fig:1}, (c))
\item[(4)] a path $P$ with $a+b$ edges in the underlying graph $G$ such that the first $a$ and last $b$ of the edges of $P$ have arrived by time $t$, the $(a+1)\nth$ edge has not arrived, and neither of the edges adjacent to $P$ have arrived (see Fig.~\ref{fig:2}). 
\end{description} 

\begin{figure}[H]
	\begin{center}
		\begin{tikzpicture}[scale=0.9]

 		\begin{scope}[shift={(0, -10)}]

		\begin{scope}[shift={(-7, 0)}]
			
			\draw[line width=1.5pt] (0*45: 2) -- (1*45:2);
			\draw[line width=1.5pt] (1*45: 2) -- (2*45:2);
			\draw[line width=1.5pt] (2*45: 2) -- (3*45:2);
			\draw[line width=1.5pt, dashed] (3*45: 2) -- (4*45:2);
			\draw[line width=0.5pt, color=black!50] (4*45: 2) -- (5*45:2);
			\draw[line width=0.5pt, color=black!50] (5*45: 2) -- (6*45:2);
			\draw[line width=1.5pt, dashed] (6*45: 2) -- (7*45:2);
			\draw[line width=1.5pt] (7*45: 2) -- (8*45:2);
			
			\draw[fill=black!100] (0*45: 2) circle (0.1);
			\draw[fill=black!100] (1*45: 2) circle (0.1);			
			\draw[fill=black!100] (2*45: 2) circle (0.1);
			\draw[fill=black!100] (3*45: 2) circle (0.1);			
			\draw[fill=black!50] (4*45: 2) circle (0.1);
			\draw[fill=black!50] (5*45: 2) circle (0.1);			
			\draw[fill=black!50] (6*45: 2) circle (0.1);
			\draw[fill=black!100] (7*45: 2) circle (0.1);			
			
			\draw (0*45: 2.5) node {\scriptsize $906$};
			\draw (1*45: 2.5) node {\scriptsize $127$};
			\draw (2*45: 2.5) node {\scriptsize $914$};
			\draw (3*45: 2.5) node {\scriptsize $633$};
			\draw (7*45: 2.5) node {\scriptsize $958$};
			
			\draw (0.5*45: 1.5) node {\scriptsize $758$};
			\draw (1.5*45: 1.5) node {\scriptsize $744$};
			\draw (2.5*45: 1.5) node {\scriptsize $393$};
			\draw (7.5*45: 1.5) node {\scriptsize $277$};
			
			\draw (0, -3) node {(a)};			
		\end{scope}
		
		\begin{scope}[shift={(-1, 0)}]

			\draw[line width=1.5pt] (0*45: 2) -- (1*45:2);
			\draw[line width=1.5pt] (1*45: 2) -- (2*45:2);
			\draw[line width=1.5pt] (2*45: 2) -- (3*45:2);
			\draw[line width=1.5pt] (3*45: 2) -- (4*45:2);
			\draw[line width=1.5pt, dashed] (4*45: 2) -- (5*45:2);
			\draw[line width=1.5pt, dashed] (5*45: 2) -- (6*45:2);
			\draw[line width=0.5pt] (6*45: 2) -- (7*45:2);
			\draw[line width=1.5pt, dashed] (7*45: 2) -- (8*45:2);
			
			\draw[fill=black!100] (0*45: 2) circle (0.1);
			\draw[fill=black!100] (1*45: 2) circle (0.1);			
			\draw[fill=black!100] (2*45: 2) circle (0.1);
			\draw[fill=black!100] (3*45: 2) circle (0.1);			
			\draw[fill=black!100] (4*45: 2) circle (0.1);
			\draw[fill=black!50] (5*45: 2) circle (0.1);			
			\draw[fill=black!50] (6*45: 2) circle (0.1);
			\draw[fill=black!50] (7*45: 2) circle (0.1);						
			
			\draw (0*45: 2.5) node {\scriptsize $965$};
			\draw (1*45: 2.5) node {\scriptsize $158$};
			\draw (2*45: 2.5) node {\scriptsize $971$};
			\draw (3*45: 2.5) node {\scriptsize $958$};
			\draw (4*45: 2.5) node {\scriptsize $486$};
		
			\draw (0.5*45: 1.5) node {\scriptsize $47$};
			\draw (1.5*45: 1.5) node {\scriptsize $98$};
			\draw (2.5*45: 1.5) node {\scriptsize $824$};
			\draw (3.5*45: 1.5) node {\scriptsize $695$};
			
			\draw (0, -3) node {(b)};			
			
		\end{scope}

		\begin{scope}[shift={(+5, 0)}]
							
			\draw[line width=1.5pt, dashed] (0*45: 2) -- (1*45:2);
			\draw[line width=1.5pt] (1*45: 2) -- (2*45:2);
			\draw[line width=1.5pt] (2*45: 2) -- (3*45:2);
			\draw[line width=1.5pt] (3*45: 2) -- (4*45:2);
			\draw[line width=0.5pt] (4*45: 2) -- (5*45:2);
			\draw[line width=0.5pt] (5*45: 2) -- (6*45:2);
			\draw[line width=0.5pt] (6*45: 2) -- (7*45:2);
			\draw[line width=1.5pt, dashed] (7*45: 2) -- (8*45:2);
			
			\draw[fill=black!50] (0*45: 2) circle (0.1);
			\draw[fill=black!100] (1*45: 2) circle (0.1);			
			\draw[fill=black!100] (2*45: 2) circle (0.1);
			\draw[fill=black!100] (3*45: 2) circle (0.1);			
			\draw[fill=black!100] (4*45: 2) circle (0.1);
			\draw[fill=black!100] (5*45: 2) circle (0.1);			
			\draw[fill=black!50] (6*45: 2) circle (0.1);
			\draw[fill=black!50] (7*45: 2) circle (0.1);						
			
			\draw (1*45: 2.5) node {\scriptsize $793$};
			\draw (2*45: 2.5) node {\scriptsize $960$};
			\draw (3*45: 2.5) node {\scriptsize $656$};
			\draw (4*45: 2.5) node {\scriptsize $36$};
			\draw (5*45: 2.5) node {\scriptsize $850$};
			
			\draw (1.5*45: 1.5) node {\scriptsize $766$};
			\draw (2.5*45: 1.5) node {\scriptsize $796$};
			\draw (3.5*45: 1.5) node {\scriptsize $187$};
			\draw (4.5*45: 1.5) node {\scriptsize $490$};
			
			\draw (0, -3) node {(c)};
			
		\end{scope}
		\end{scope}
		
		\end{tikzpicture}
		\caption{Illustration of patterns {\bf (1), (2)} and {\bf (3)}. Solid edges must be present at time $t$, dashed edges must not be present at time $t$,}	\label{fig:1}
	\end{center}
	
\end{figure}
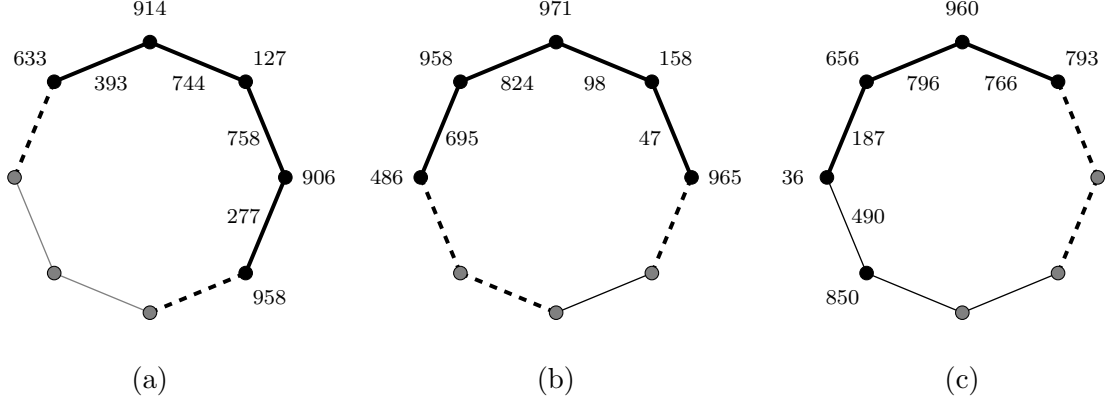

\begin{figure}[H]
	\begin{center}
		\begin{tikzpicture}[scale=0.9]

 		\begin{scope}[shift={(0, -10)}]
		\begin{scope}[shift={(-1, 0)}]

			\draw[line width=1.5pt] (0*45: 2) -- (1*45:2);
			\draw[line width=1.5pt, dashed] (1*45: 2) -- (2*45:2);
			\draw[line width=1.5pt] (2*45: 2) -- (3*45:2);
			\draw[line width=1.5pt] (3*45: 2) -- (4*45:2);
			\draw[line width=1.5pt, dashed] (4*45: 2) -- (5*45:2);
			\draw[line width=0.5pt] (5*45: 2) -- (6*45:2);
			\draw[line width=0.5pt] (6*45: 2) -- (7*45:2);
			\draw[line width=1.5pt, dashed] (7*45: 2) -- (8*45:2);
			
			\draw[fill=black!100] (0*45: 2) circle (0.1);
			\draw[fill=black!100] (1*45: 2) circle (0.1);			
			\draw[fill=black!100] (2*45: 2) circle (0.1);
			\draw[fill=black!100] (3*45: 2) circle (0.1);			
			\draw[fill=black!100] (4*45: 2) circle (0.1);
			\draw[fill=black!50] (5*45: 2) circle (0.1);			
			\draw[fill=black!50] (6*45: 2) circle (0.1);
			\draw[fill=black!50] (7*45: 2) circle (0.1);						
			
			\draw (0*45: 2.5) node {\scriptsize $755$};
			\draw (1*45: 2.5) node {\scriptsize $277$};
			\draw (2*45: 2.5) node {\scriptsize $680$};
			\draw (3*45: 2.5) node {\scriptsize $656$};
			\draw (4*45: 2.5) node {\scriptsize $163$};
			
			\draw (0.5*45: 1.5) node {\scriptsize $341$};
			\draw (2.5*45: 1.5) node {\scriptsize $224$};
			\draw (3.5*45: 1.5) node {\scriptsize $752$};
			
		\end{scope}
		\end{scope}
		
		\end{tikzpicture}
		\caption{Illustration of pattern {\bf (4)}. Solid edges must be present at time $t$, dashed edges must not be present at time $t$,}	\label{fig:2}
	\end{center}
	
\end{figure}
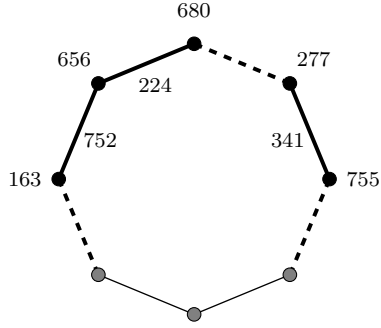

Bounds on the number of occurrences of the first three types of patterns are then used to bound the average extension probability in Claim~\ref{cl:avg-ext} below. The first and the forth are then used to bound the average merge probability in Claim~\ref{cl:avg-merge} below.

The following lemma bounds the number of ``patterns'' of the above types simultaneously.  In order to cover all patterns above, we prove a more general lemma that applies to all possible patterns as opposed to just the ones above. For an integer $k\geq 1$ we encode such patterns by a binary string $y$ of length $k$, where for $i\in [k]$ we have $y_i=1$ if the $i\nth$ edge in a consecutive segment of $k$ edges in the underlying graph must be present, and $y_i=0$ if the $i\nth$ edge must be absent.

\begin{lemma}
\label{lem:patternfreq}
For any $t \in T$, let $y \in \lbrace 0, 1\rbrace^k$ for some $k \in \lbrack
\ell \rbrack$. Suppose $|\overline{y}| \in \set*{2,3}$. For each $v \in V$, let
$(v_i)_{i=1}^{k+1}$ be the path in the underlying graph $G$ such that $v_1 = v$
and $v_{i+1} = \next(v_i)$ for each $i \in \brac*{k}$. Let $\Yb_v = 1$ if for
all $i \in \brac*{k}$, $\1bb(\bpi^{-1}(v_i) \le t) = y_i$. Let $\Yb = \sum_{v\in
V} \Yb_v$. 

Then, for every constant $C > 0$ there is a $D > 0$
depending only on $C$ such that for every \[
t \in \brac*{n/8, n - D 2^{2\ell} n^{5/6} \log n}
\]
one has \[
\Yb = \Theta\paren*{n \cdot \paren*{\frac{t}{n}}^{|y|} \cdot \paren*{\frac{n -
t}{n} }^{|\overline{y}|}}
\]
with probability at least $1 - n^{-C \ell}$ over $\Ub_t$.
\end{lemma}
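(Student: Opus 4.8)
The plan is to reduce the claim to a standard concentration bound for a sum of independent bounded variables. First I would note that $\Yb_v$ is determined by the arrival pattern of the $k$ vertices $v_1=v,\,v_2=\next(v),\dots,v_k=\next^{k-1}(v)$; since $k\le\ell$ these are distinct and all lie on the single length-$\ell$ cycle of $G$ through $v$. Hence $\Yb=\sum_{v\in V}\Yb_v$ is a deterministic function of the set $\Sb:=\bpi(\brac*{t})=\set*{v\in V:\bpi^{-1}(v)\le t}$ of arrived vertices, which unconditionally is a uniformly random $t$-element subset of $V$; in particular $\Yb$ is $\Ub_t$-measurable, so the assertion ``with probability $\ge 1-n^{-C\ell}$ over $\Ub_t$'' is simply a statement about $\Sb$. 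Grouping the summands by the cycle they belong to, $\Yb=\sum_{j=1}^{n/\ell}\Yb^{C_j}$ with $\Yb^{C_j}:=\sum_{v\in C_j}\Yb_v$ and $0\le\Yb^{C_j}\le\ell$ for each cycle $C_j$. Since every cycle has length $\ell$ and $k\le\ell$, all windows are structurally identical, so linearity of expectation gives $\E{\Yb}=n\cdot\binom{n-k}{t-|y|}\big/\binom{n}{t}$; as $t\ge n/8$, $n-t\ge n^{5/6}$ and $k\le\ell\le D^{-1}\log n$, each of the at most $\ell$ factors in this ratio is $t/n$, resp.\ $(n-t)/n$, up to a $(1\pm\lO{1})$ factor, so $\E{\Yb}=(1\pm\lO{1})\mu$ where $\mu:=n(t/n)^{|y|}\paren*{(n-t)/n}^{|\overline{y}|}$ is the target quantity.

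For the high-probability statement I would Poissonize: let $\Sb'$ include each vertex of $V$ independently with probability $q:=t/n\in\brac*{1/8,\,7/8}$ and let $\Yb'$ be the analogous count. Then $\Yb'$ is a sum of $n/\ell$ independent random variables --- one per cycle, depending on disjoint blocks of independent bits --- each lying in $\brac*{0,\ell}$, and $\E{\Yb'}=nq^{|y|}(1-q)^{|\overline{y}|}=\mu$ exactly. The multiplicative Chernoff bound for sums of $[0,\ell]$-bounded independent variables gives $\Pb{\Yb'\notin\brac*{\mu/2,\,3\mu/2}}\le 2\exp\paren*{-\bOm{\mu/\ell}}$. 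The key quantitative point is that $\mu$ is polynomially large: the interval of valid $t$ is nonempty only if $D2^{2\ell}\log n\le n^{1/6}$, so $(n-t)/n\le 1$ and hence (using $|\overline{y}|\le 3$) $\paren*{(n-t)/n}^{|\overline{y}|}\ge\paren*{(n-t)/n}^3\ge\paren*{D2^{2\ell}n^{-1/6}\log n}^3$; combined with $(t/n)^{|y|}\ge 8^{-\ell}$ this yields $\mu\ge D^3 2^{3\ell}n^{1/2}(\log n)^3\ge D^3 n^{1/2}(\log n)^3$, whence $\mu/\ell\ge D^4 n^{1/2}(\log n)^2$. Therefore, for a suitable constant $D$ (depending only on $C$) and $n$ above a threshold depending on $C$, $2\exp\paren*{-\bOm{\mu/\ell}}\le\tfrac12 n^{-C\ell-1}$, since $(C\ell+1)\log n+1=\bO{CD^{-1}(\log n)^2}$. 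Finally I would de-Poissonize: as $q$ is bounded away from $0$ and $1$, $\Pb{|\Sb'|=t}=\Pb{\bi(n,q)=t}=\bOm{n^{-1/2}}$, and conditioned on $|\Sb'|=t$ the set $\Sb'$ is distributed exactly as $\Sb$; hence $\Pb{\Yb\notin\brac*{\mu/2,\,3\mu/2}}\le\Pb{\Yb'\notin\brac*{\mu/2,\,3\mu/2}}\big/\Pb{|\Sb'|=t}\le\bO{\sqrt n}\cdot\tfrac12 n^{-C\ell-1}\le n^{-C\ell}$ for $n$ large. Since this forces $\Yb\in\brac*{\mu/2,\,3\mu/2}$ with the required probability, and $\E{\Yb}=(1\pm\lO{1})\mu$, we conclude $\Yb=\Theta(\mu)$.

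I do not expect a conceptual obstacle: the only delicate part is purely quantitative, namely checking that the lower bound $\mu\gtrsim n^{1/2}\,\plog n$ --- which uses \emph{both} the upper cutoff $t\le n-D2^{2\ell}n^{5/6}\log n$ and the hypothesis $\ell\le D^{-1}\log n$ --- is strong enough that the Chernoff failure probability $\exp(-\bOm{\mu/\ell})$ still dominates $n^{-C\ell}$ after one pays the $\bO{\sqrt n}$ de-Poissonization overhead, and that $D$ can be chosen as a function of $C$ alone. An alternative to Poissonization would be to apply Azuma/McDiarmid directly to the random permutation $\bpi$, using that a single transposition changes $\Yb$ by at most $\bO{\ell}$; this produces a bound of the same quality.
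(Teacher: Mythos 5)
Your proof is correct, and it follows the same high-level strategy as the paper's --- replace the permutation-induced indicators $\1bb(\bpi^{-1}(v)\le t)$ by independent Bernoulli$(t/n)$ ones, apply Chernoff despite the local dependence among the $\Yb'_v$, and transfer back --- but both technical steps are executed differently. For the transfer, the paper uses an approximate coupling: it resamples the number of arrived vertices as $\tb'\sim\bi(n,t/n)$, flips $|\tb'-t|$ coordinates, bounds $|\Yb-\Yb'|\le \ell|\tb'-t|=O(\sqrt{\ell n\log n})$, and then must check that this additive error plus the Chernoff fluctuation is at most a constant fraction of $\mu$ --- this is exactly where the cutoff $t\le n-D2^{2\ell}n^{5/6}\log n$ is consumed. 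You instead use the exact identity that a uniform $t$-subset is the i.i.d.\ model conditioned on the total, paying only a $\Pb{\bi(n,t/n)=t}^{-1}=O(\sqrt n)$ multiplicative overhead in the failure probability; the cutoff on $t$ then enters only through the lower bound $\mu\gtrsim n^{1/2}\plog n$, which comfortably absorbs both the $O(\sqrt n)$ factor and the target $n^{-C\ell}$ (which is $\exp(-O((\log n)^2))$ since nonemptiness of the admissible range of $t$ forces $\ell=O(\log n)$ --- note this is how your ``hypothesis'' $\ell\le D^{-1}\log n$ should be justified, as it is not literally assumed in the lemma). For the dependence, the paper partitions $V$ into $2\ell$ classes within each of which the $\Yb'_v$ are independent and union-bounds over classes, whereas you group the summands by cycle into $n/\ell$ genuinely independent $[0,\ell]$-bounded blocks and apply a single bounded-increment Chernoff bound. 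Your route is arguably cleaner: $\E{\Yb'}=\mu$ exactly, the multiplicative Chernoff bound delivers $\Yb'\in[\mu/2,3\mu/2]$ directly, and there is no additive error bookkeeping; the paper's coupling avoids the $O(\sqrt n)$ loss, but that loss is irrelevant here given how small the Chernoff tail is.
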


\begin{proof}
Let $\bbeta : V \rightarrow \bool$ be given by $\bbeta(v) = \1bb(\bpi^{-1}(v)
\le t)$. Then for all $v \in V$, $\E[\Ub_t]{\bbeta(v)} = t/n$, but the
values $\bbeta(V)$ are not independent, as $\bbeta$ is fixed to have exactly $t$
values equal to 1, and $n-t$ equal to 0.

Now, let $\bbeta'$ be defined as follows:
\begin{enumerate}
\item Draw $\tb' \sim \bi(n, t/n)$.
\item ``Correct'' $\bbeta$ to have exactly $\tb'$ values equal to 1, by either
flipping $\tb' - t$ randomly chosen zeroes to 1, or flipping $t - \tb'$
randomly chosen ones to 0, as necessary.
\end{enumerate}
Now, $\bbeta'$ will have $\tb'$ ones, and which values of $\bbeta'$ are 1 will
be chosen uniformly at random, so we have both $\E[\Ub_t,
\tb']{\bbeta'(v)} = t/n$ for all $v \in V$ and the values $\bbeta'(V)$
being independent.

Now, for each $v \in V$, let $(v_i)_{i=1}^{k+1}$ be as defined in the lemma
statement, and let $\Yb'_v$ be 1 if for all $i \in \brac*{k}$, $\1bb(\bbeta'(v))
= y_i$. Let $\Yb' = \sum_{v \in V}\Yb'_v$. Now, since for every $v\in V$ the value of $\bbeta(v)$ (resp. $\bbeta'(v)$) affects at most $k\leq \ell$ values of $\Yb(u), u\in V$ (resp. $\Yb'(u), u\in V$), we have
 \[
|\Yb' - \Yb| \le \left|\{v\in V: \bbeta(v) \not= \bbeta'(v\}\right\|\leq \ell|\tb' - t|
\]
and so by applying the Chernoff bounds to $\tb'$, with probability $1 -
n^{-C\ell}/2$ over $\tb'$,
\begin{align*}
|\Yb' - \Yb| &= O(\sqrt{\ell \cdot \E{\tb'}\log n})\\
&= O(\sqrt{\ell t \log n})\\
&= O(\sqrt{\ell n  \log n})
\end{align*}
and so we may proceed by bounding the distribution of $\Yb'$.

The events $\set*{\Yb'_v : v \in V}$ are not independent. However for each $v$,
$\Yb'_v$ is independent of the set of events that do not depend on $\bbeta'(v_i)$
for any $i \in \brac*{k}$. This is all but $2k - 2 < 2\ell - 1$ of the other
random variables in $\set*{\Yb'_v : v \in V}$. So we may construct a $2\ell$
set partition of $V$, denoted by $\brac*{V_i : i \in \brac*{2\ell}}$, such that, for each $i
\in \brac*{2\ell}$, the events $\set*{\Yb'_v : v \in V_i}$ are independent. 

Now, for each $v \in V$, $\Yb'_v = 1$ with probability
$\paren*{\frac{t}{n}}^{|y|}\paren*{\frac{n - t}{n}}^{|\overline{y}|}$, and $0$
otherwise. Now, for each $i \in \brac*{2\ell}$, let $\Yb'^{(i)} = \sum_{v \in
V_i}\Yb'_v$. By the Chernoff bounds, with probability $1 - n^{-C\ell}/4\ell$ over
$\Ub_t$ and $\tb'$, \[ 
\abs{\Yb'^{(i)} - \E[\Ub_t, \tb']{\Yb'^{(i)}}} = O\paren*{\sqrt{\ell \cdot
\E{\Yb'^{(i)}} \log n \log \ell}} 
\]
and so by taking a union bound over $\set*{\Yb'^{(i)} : i \in \brac*{2\ell}}$, with
probability $1 - n^{-C\ell}/2$ over $\Ub_t$ and $\tb'$, \[
\abs{\Yb'- \E[\Ub_t, \tb']{\Yb'}} = O\paren*{\ell^{3/2}
\sqrt{\E{\Yb'} \log n \log \ell}}.
\]
Combining this with our bound on $\abs{\Yb' -
\Yb}$ above, with probability $1 - n^{-C\ell}$ over $\Ub_t$ (as $\Yb$ is independent of
$\tb'$), \[
\abs{\Yb - \E[\Ub_t, \tb']{\Yb'}} = O\paren*{\sqrt{\ell n  \log n} +
\ell^{3/2} \sqrt{\E{\Yb'} \log n \log \ell}}.
\]
Since \[
\E[U_t, \tb']{\Yb'}  = n\paren*{\frac{t}{n}}^{|y|}\paren*{\frac{n -
t}{n}}^{|\overline{y}|}
\]
we have 
\begin{align*}
\sqrt{\ell n \log n} +\ell^{3/2} \sqrt{\E{\Yb'} \log n \log \ell} &= \sqrt{\ell n \log n} + \ell^{3/2} \sqrt{n \log n \log \ell } \paren*{\frac{t}{n}}^{|y|/2}\paren*{\frac{n -
t}{n}}^{|\overline{y}/2|}\\
&\le 2\ell^{3/2}\sqrt{n\log n \log \ell}
\end{align*}
and so, using the fact that $t \ge n/8$,
\begin{equation}\label{eq:y-wg8932gthg}
\Yb = n\paren*{\frac{t}{n}}^{|y|}\paren*{\frac{n -t}{n}}^{|\overline{y}|}\cdot \paren*{1 + \gamma},\\
\end{equation}
where
$$
\gamma=O\paren*{\ell^{3/2}\sqrt{\log n \log \ell} \paren*{\frac{t}{n}}^{-|y|}\paren*{\frac{n -t}{n}}^{-|\overline{y}|}}/\sqrt{n}.
$$
We now note that
\begin{align*}
|\gamma|&=O\paren*{\ell^{3/2}\sqrt{\log n \log \ell} 2^{3\ell}\paren*{\frac{n}{n-t}}^3}/\sqrt{n}\text{~~~~~~~(since $\paren*{\frac{t}{n}}^{-|y|}\leq 2^{3\ell}$ since $t\geq n/8$)}\\
&\le O\paren*{\sqrt{\log n } 2^{5\ell} \paren*{\frac{n}{n-t}}^3}/\sqrt{n}\text{~~~~~~~~~~~~~~~~~~~~~~(since $\ell^{3/2}\leq 2^{2\ell}$ for $\ell\geq 2$)}\\
&\le O\paren*{\paren*{\frac{2^{2\ell}n^{5/6} \log n}{n-t}}^3}\\
&\leq O(1/D^3)\text{~~~~~~~~~~~~~~~~~~~~~~~~~~~~~~~~~~~~~~~~~~~~~~~~~~~~~~~~~~~~(since $t \geq n - D2^{2\ell} n^{5/6}\log n$)}\\
&\leq 1/2\\
\end{align*}
as long as $D$ is larger than an absolute constant. Thus, we get that $|\gamma|\leq 1/2$ and the lemma follows by~\eqref{eq:y-wg8932gthg}.
\end{proof} 
We now use this result to bound $\upsilon(\alpha, \beta, U_t^e)$ in the case
where $t \ge n/8$ and $\alpha \rightarrow \beta$ is an extension.
\begin{claim}\label{cl:avg-ext} For some absolute constant $C > 0$, for each $t \in \brac*{n/8,
n - C 2^{2\ell} n^{5/6} \log n }$, there is an event
$\mathcal{E}_t^e$ such that for all $U_t^e \in \mathcal{E}_t^e $, and any
$\alpha, \beta$ such that $\abs{\alpha}_* \le \ell -3$ ($|\alpha|_*$ is the weight of $\alpha$ as per Definition~\ref{def:weight-and-size}) and $\alpha \rightarrow
\beta$ is an extension of a length-$a$ path, one has \[
\upsilon(\alpha, \beta, U_t^e) \le \frac{O(\alpha\lbrack a\rbrack)}{n}.
\]
The event $\mathcal{E}_t^e$ occurs with probability at least $1 - 1/2n^{\ell + 3}$ over
$\Ub_t$.
\end{claim}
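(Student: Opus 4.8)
The plan is to fix a ``typical'' realisation $U_t^e$ of $\Ub_t$ on which all of the pattern counts of Lemma~\ref{lem:patternfreq} are accurate, and then bound $\upsilon(\alpha,\beta,U_t^e)$ by a double-counting argument over the subsets $z\bsim_t\alpha$. Conditioned on $\Ub_t$, a symmetry argument in the style of the proof of Lemma~\ref{lem:pidist} shows that $\bpi(t+1)$ is uniform over the $n-t$ vertices $v$ with $\bpi^{-1}(v)>t$, and the edge arriving at time $t+1$ is the $\bsigma$-image of $e_{\bpi(t+1)}$. The key structural observation is that, for a fixed $z\bsim_t\alpha$ and $\beta$ obtained from $\alpha$ by replacing one part of size $a$ by a part of size $a+1$, the event $z\cdot 1\bsim_{t+1}\beta$ occurs exactly when $\bb_{t+1}$ is incident to a single component $P$ of $\Ob_t^z$, that component has size $a$, \emph{and} the other endpoint of $\bb_{t+1}$ is isolated in $\Ob_t$; otherwise $\Ob_{t+1}^{z\cdot 1}$ fails to be a union of whole components of $\Ob_{t+1}$, or $\bb_{t+1}$ performs a merge rather than an extension (no part closes into a cycle, since $\abs{\alpha}_*\le\ell-3$). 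Consequently
\[
\Pb[\Bb_{t+1}]{z\cdot 1\bsim_{t+1}\beta\mid \Ub_t=U_t^e}=\frac{1}{n-t}\cdot\#\set*{v:\bpi^{-1}(v)>t,\ e_v\text{ cleanly extends a size-}a\text{ component of }\Ob_t^z},
\]
where ``cleanly'' records the isolation condition on the far endpoint.

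Next I would sum this over all $z\bsim_t\alpha$ and swap the order of summation. A vertex $v$ triggering a clean extension is determined by the size-$a$ component $P$ of $\Ob_t^z$ it extends and by the direction, and the admissible $(P,\text{direction})$ pairs --- over \emph{any} $z$ --- are in bijection with configurations of pattern type \textbf{(2)} (forward) or \textbf{(3)} (backward) from above: a maximal length-$a$ path of $\Ob_t$ one \emph{further} flanking edge of which is also absent, which is exactly what makes the far vertex isolated. Writing $\mathcal P_2,\mathcal P_3$ for these configuration sets and $N_a$ for the number of size-$a$ components of $\Ob_t$ (a pattern of type \textbf{(1)}), and using that the number of $z\bsim_t\alpha$ containing a prescribed size-$a$ component is $\binom{N_a-1}{\alpha[a]-1}\prod_{i\neq a}\binom{N_i}{\alpha[i]}$ while $\abs{\set*{z:z\bsim_t\alpha}}=\binom{N_a}{\alpha[a]}\prod_{i\neq a}\binom{N_i}{\alpha[i]}$, dividing gives
\[
\upsilon(\alpha,\beta,U_t^e)\le\frac{1}{n-t}\cdot\bigl(\abs{\mathcal P_2}+\abs{\mathcal P_3}\bigr)\cdot\frac{\alpha[a]}{N_a}.
\]

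Finally I would take $\mathcal E_t^e$ to be the event that Lemma~\ref{lem:patternfreq} holds simultaneously for patterns \textbf{(1)}, \textbf{(2)}, \textbf{(3)} and all $a\le\ell-3$ (all pattern lengths are $a+3\le\ell$ since $\abs{\alpha}_*\le\ell-3$; the degenerate case $\alpha[a]=0$ is vacuous, and $a=1$ gives $\beta[1]<\alpha[1]$, hence $\upsilon=0$ by the convention in Definition~\ref{def:ut}); choosing the constant $C$ in Lemma~\ref{lem:patternfreq} large and union-bounding over the $O(\ell)$ patterns makes $\Pb[\Ub_t]{\mathcal E_t^e}\ge 1-1/(2n^{\ell+3})$. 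On $\mathcal E_t^e$ one has $\abs{\mathcal P_2},\abs{\mathcal P_3}=\Theta\!\bigl(n(t/n)^a((n-t)/n)^3\bigr)$ and $N_a=\Theta\!\bigl(n(t/n)^a((n-t)/n)^2\bigr)$, so $(\abs{\mathcal P_2}+\abs{\mathcal P_3})/N_a=\Theta((n-t)/n)$, the $n-t$ factors cancel, and $\upsilon(\alpha,\beta,U_t^e)\le O(\alpha[a])/n$, as claimed.

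The step I expect to be the main obstacle is the structural one: recognising that counting transitions to $\beta$ forces the arriving edge's far endpoint to be \emph{isolated}, so that the relevant configurations (patterns \textbf{(2)}/\textbf{(3)}) carry one \emph{extra} absent edge relative to the bare count of size-$a$ components (pattern \textbf{(1)}). That extra absent edge is precisely what produces the $(n-t)/n$ factor in $\abs{\mathcal P_2}/N_a$ which cancels the $1/(n-t)$ coming from a single stream step, upgrading the naive $O(\alpha[a])/(n-t)$ bound to the required $O(\alpha[a])/n$; making this go through uniformly for all $t$ up to $n-\Theta(2^{2\ell}n^{5/6}\log n)$, where $n-t$ can be far smaller than $n$, is the heart of the claim. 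A secondary point to verify carefully is the conditional uniformity of $\bpi(t+1)$ given $\Ub_t$, which follows the template of Lemma~\ref{lem:pidist}.
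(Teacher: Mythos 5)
Your proposal is correct and follows essentially the same route as the paper's proof: both identify that a clean extension requires the far endpoint of the arriving edge to be isolated, reduce $\upsilon$ to the ratio $(\Yb_2+\Yb_3)/\Yb_1$ of pattern-(2)/(3) counts to pattern-(1) counts via the uniformity of $\bpi(t+1)$ over the $n-t$ unarrived vertices, and invoke Lemma~\ref{lem:patternfreq} so the extra absent-edge factor $(n-t)/n$ cancels the $1/(n-t)$. Your explicit double-counting over $z\bsim_t\alpha$ with binomial coefficients just spells out what the paper compresses into "the average probability that an $a$-edge component is extended," multiplied by $\alpha[a]$.
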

\begin{proof}
Given $\Ub_t$, we naturally associate with each $a$-edge component in $\Ob_t$ an $(a+2)$-edge path
$(v_i)_{i=1}^{a+3}$ that consists of the component present at time $t$ together with the two adjacent edges in the underlying graph $G$ that have not arrived yet (see Fig.~\ref{fig:extension-prob}).  Specifically, we have $v_{i+1} =
\next(v_i)$ for each $i \in \brac*{a+2}$, and $\bpi^{-1}(v_1) > t$,
$\bpi^{-1}(\set*{v_i : 1 < i \le a + 1}) \subseteq \brac*{t}$,
$\bpi^{-1}(v_{a+2}) > t$ (pattern of type {\bf (1)} above, see Fig.~\ref{fig:1}). Call the number of such paths $\Yb_1$. 

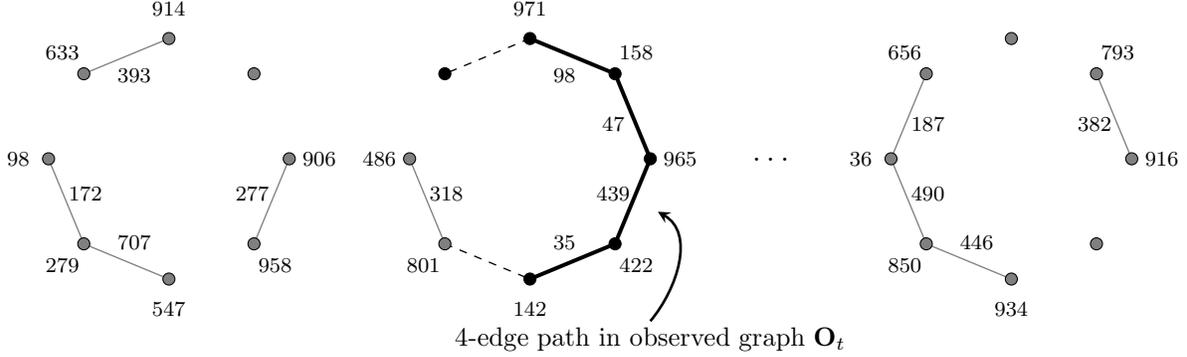
\begin{figure}[H]
	\begin{center}
		\begin{tikzpicture}[scale=0.8]
		

 		\begin{scope}[shift={(0, -10)}]

		\begin{scope}[shift={(-7, 0)}]

			\draw[line width=0.5pt, color=black!50] (2*45: 2) -- (3*45:2);
			\draw[line width=0.5pt, color=black!50] (4*45: 2) -- (5*45:2);
			\draw[line width=0.5pt, color=black!50] (5*45: 2) -- (6*45:2);
			\draw[line width=0.5pt, color=black!50] (7*45: 2) -- (8*45:2);
			
			\draw[fill=black!50] (0*45: 2) circle (0.1);
			\draw[fill=black!50] (1*45: 2) circle (0.1);			
			\draw[fill=black!50] (2*45: 2) circle (0.1);
			\draw[fill=black!50] (3*45: 2) circle (0.1);			
			\draw[fill=black!50] (4*45: 2) circle (0.1);
			\draw[fill=black!50] (5*45: 2) circle (0.1);			
			\draw[fill=black!50] (6*45: 2) circle (0.1);
			\draw[fill=black!50] (7*45: 2) circle (0.1);			
			
			\draw (0*45: 2.5) node {\scriptsize $906$};
			\draw (2*45: 2.5) node {\scriptsize $914$};
			\draw (3*45: 2.5) node {\scriptsize $633$};
			\draw (4*45: 2.5) node {\scriptsize $98$};
			\draw (5*45: 2.5) node {\scriptsize $279$};
			\draw (6*45: 2.5) node {\scriptsize $547$};
			\draw (7*45: 2.5) node {\scriptsize $958$};
			

			\draw (2.5*45: 1.5) node {\scriptsize $393$};
			\draw (4.5*45: 1.5) node {\scriptsize $172$};
			\draw (5.5*45: 1.5) node {\scriptsize $707$};
			\draw (7.5*45: 1.5) node {\scriptsize $277$};
		\end{scope}
		
		\begin{scope}[shift={(-1, 0)}]

			\draw[line width=1.5pt] (0*45: 2) -- (1*45:2);
			\draw[line width=1.5pt] (1*45: 2) -- (2*45:2);
			\draw[line width=0.5pt, dashed] (2*45: 2) -- (3*45:2);
			\draw[line width=0.5pt, color=black!50] (4*45: 2) -- (5*45:2);
			\draw[line width=0.5pt, dashed] (5*45: 2) -- (6*45:2);
			\draw[line width=1.5pt] (6*45: 2) -- (7*45:2);
			\draw[line width=1.5pt] (7*45: 2) -- (8*45:2);
			
			\draw[fill=black!100] (0*45: 2) circle (0.1);
			\draw[fill=black!100] (1*45: 2) circle (0.1);			
			\draw[fill=black!100] (2*45: 2) circle (0.1);
			\draw[fill=black!100] (3*45: 2) circle (0.1);			
			\draw[fill=black!50] (4*45: 2) circle (0.1);
			\draw[fill=black!50] (5*45: 2) circle (0.1);			
			\draw[fill=black!100] (6*45: 2) circle (0.1);
			\draw[fill=black!100] (7*45: 2) circle (0.1);						
			
			\draw (0*45: 2.5) node {\scriptsize $965$};
			\draw (1*45: 2.5) node {\scriptsize $158$};
			\draw (2*45: 2.5) node {\scriptsize $971$};
			\draw (4*45: 2.5) node {\scriptsize $486$};
			\draw (5*45: 2.5) node {\scriptsize $801$};
			\draw (6*45: 2.5) node {\scriptsize $142$};
			\draw (7*45: 2.5) node {\scriptsize $422$};
			
			\draw (0.5*45: 1.5) node {\scriptsize $47$};
			\draw (1.5*45: 1.5) node {\scriptsize $98$};
			\draw (4.5*45: 1.5) node {\scriptsize $318$};
			\draw (6.5*45: 1.5) node {\scriptsize $35$};
			\draw (7.5*45: 1.5) node {\scriptsize $439$};

			\draw (2, -3) node {\small $4$-edge path in observed graph $\Ob_t$};

			\draw[line width=1pt, ->, >=stealth] (2, -2.7)  to [out=50, in=-30]  (7.5*45:2.3);

		\end{scope}
		
		\draw (+3, 0) node {\large $\ldots$};

		\begin{scope}[shift={(+7, 0)}]

			\draw[line width=0.5pt, color=black!50] (0*45: 2) -- (1*45:2);
			\draw[line width=0.5pt, color=black!50] (3*45: 2) -- (4*45:2);
			\draw[line width=0.5pt, color=black!50] (4*45: 2) -- (5*45:2);
			\draw[line width=0.5pt, color=black!50] (5*45: 2) -- (6*45:2);
			
			\draw[fill=black!50] (0*45: 2) circle (0.1);
			\draw[fill=black!50] (1*45: 2) circle (0.1);			
			\draw[fill=black!50] (2*45: 2) circle (0.1);
			\draw[fill=black!50] (3*45: 2) circle (0.1);			
			\draw[fill=black!50] (4*45: 2) circle (0.1);
			\draw[fill=black!50] (5*45: 2) circle (0.1);			
			\draw[fill=black!50] (6*45: 2) circle (0.1);
			\draw[fill=black!50] (7*45: 2) circle (0.1);						
			
			\draw (0*45: 2.5) node {\scriptsize $916$};
			\draw (1*45: 2.5) node {\scriptsize $793$};
			\draw (3*45: 2.5) node {\scriptsize $656$};
			\draw (4*45: 2.5) node {\scriptsize $36$};
			\draw (5*45: 2.5) node {\scriptsize $850$};
			\draw (6*45: 2.5) node {\scriptsize $934$};
			
			\draw (0.5*45: 1.5) node {\scriptsize $382$};
			\draw (3.5*45: 1.5) node {\scriptsize $187$};
			\draw (4.5*45: 1.5) node {\scriptsize $490$};
			\draw (5.5*45: 1.5) node {\scriptsize $446$};
			
		\end{scope}
		\end{scope}
		
		\end{tikzpicture}
		\caption{Path with $a=4$ edges in the observed graph and the corresponding $(a+2)$-edge path in the underlying graph (incident edges that have not arrived yet are shown as dashed)}	\label{fig:extension-prob}
	\end{center}
	
\end{figure}

Such a component will be extended at time $t+1$ iff the edge that arrives at
this time is incident to it and not to any other component, that is iff either:
\begin{description}
\item[(1)] $\bpi(t+1) = v_1$ and $\bpi^{-1}(\next^{-1}(v_1)) > t$ (i.e., the other endpoint of the arriving edge does not have any incident edges in the observed graph $\Ob_t$)
\end{description}
or
\begin{description}
\item[(2)] $\bpi(t+1) = v_{a+2}$ and $\bpi^{-1}(\next(v_{a+2})) > t$ (i.e., the other endpoint of the arriving edge does not have any incident edges in the observed graph $\Ob_t$).
\end{description}
For each of these, the second criterion is determined by $\Ub_t$ (recall Definition~\ref{def:ut}), and the first
will be satisfied with probability $\frac{1}{n-t}$ over $\bpi(t+1)$. So
conditioned on $\Ub_t$, the average probability that an $a$-edge component in
$\Ob_t$ is extended is $\frac{1}{n - t} \cdot \frac{\Yb_2 + \Yb_3}{\Yb_1}$,
where $\Yb_2$ is the number of such paths where $\bpi^{-1}(\next^{-1}(v_1)) >
t$ and $\Yb_3$ is the number where $\bpi^{-1}(\next(v_{a+2})) > t$ (patterns of type {\bf (2)} and {\bf (3)} above, see Fig.~\ref{fig:1}).

By applying Lemma~\ref{lem:patternfreq} with $C = 3$ (as $\ell$ must be at
least $3$ and so $n$ must be as well, and thus $n^{C\ell} > 6n^{\ell + 3}$) and
taking a union bound,
\begin{align*}
\Yb_2 + \Yb_3  &= \Theta\paren*{n\paren*{\frac{t}{n}}^{a}
\paren*{\frac{n-t}{n}}^3}\\
\Yb_1 &= \Theta\paren*{n\paren*{\frac{t}{n}}^{a}
\paren*{\frac{n-t}{n}}^2}
\end{align*}
with probability $1 - 1/2n^{\ell + 3}$ over $\Ub_t$, provided \[
n/8 \le t \le n - D 2^{2\ell} n^{5/6} \log n
\]
for a sufficiently large constant $C$. So let $\mathcal{E}_t^e$ be the event
that this holds, then $\Pb[\Ub_t]{\mathcal{E}_t^e} \ge 1 - 1/2n^{\ell + 3}$ and
for all $U_t \in \mathcal{E}_s^e$,
\begin{align*}
\upsilon(\alpha, \beta, U_t) &= \frac{\alpha\lbrack a\rbrack}{n - t} \cdot
\Theta\left(\frac{n\left(\frac{t}{n}\right)^{a}
\left(\frac{n-t}{n}\right)^3}{n\left(\frac{t}{n}\right)^{a}
\left(\frac{n-t}{n}\right)^2}\right)\\
&= \frac{O(\alpha\lbrack a \rbrack)}{n}
\end{align*}
completing the proof.
\end{proof}
Finally, we consider the case where $t \ge n/8$ and $\alpha \rightarrow \beta$ is a
merge.
\begin{claim}\label{cl:avg-merge} For some absolute constant $C > 0$, for each $t \in \brac*{n/8, n
- C 2^{2\ell} n^{5/6} \log n}$, there is an event $\mathcal{E}_t^e$ such that
for all $U_t^e \in \mathcal{E}_t^e $, and any $\alpha, \beta$ such that
$\abs{\alpha}_* \le \ell -3$ and $\alpha \rightarrow \beta$ is an merge of a
length-$a$ and length-$b$ path, \[
\upsilon(\alpha, \beta, U_t^e) \le \frac{O(\alpha\lbrack a\rbrack \cdot
\alpha\lbrack b \rbrack)}{(n-t)^2}
\]
$\mathcal{E}_t^e$ occurs with probability at least $1 - 1/2n^{\ell + 3}$ over
$\Ub_t$.
\end{claim}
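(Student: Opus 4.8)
The plan is to follow the template of the proof of Claim~\ref{cl:avg-ext}, but counting the configurations relevant to a \emph{merge}. Fix $\Ub_t$ and, for each $j$, let $M_j$ be the number of $j$-edge components of $\Ob_t$; a length-$j$ component is exactly a maximal run of $j$ present edges, so $M_j$ is the number of occurrences of pattern~{\bf (1)} with $y = 0\,1^j\,0$. Call an (unordered) pair $(P_1,P_2)$ of components of $\Ob_t$ with $|P_1| = a$, $|P_2| = b$ \emph{mergeable} if, in the underlying graph $G$, the arcs corresponding to $P_1$ and $P_2$ lie on a common cycle and are separated there by exactly one not-yet-arrived edge. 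Such a pair is precisely an occurrence of a pattern of type~{\bf (4)}: the two edges flanking $P_1\cup P_2$ are automatically absent by maximality of $P_1$ and $P_2$, so the associated $y$-string has length $a+b+3$ with $|\overline y| = 3$, and since $a+b \le |\alpha|_* \le \ell - 3$ this is a legal input to Lemma~\ref{lem:patternfreq}.

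Next I would write out $\upsilon(\alpha,\beta,\Ub_t)$ for a merge of a length-$a$ and a length-$b$ path. For $z \bsim_t \alpha$, the edge arriving at time $t+1$ turns $z$ into a collection of type $\beta$ if and only if it is the (unique) connecting edge of some mergeable pair $(P_1,P_2)$ with $|P_1| = a$, $|P_2| = b$ both of whose members are selected by $z$: any other incidence of the arriving edge either produces a different collection type or leaves $z\cdot 1$ not a union of whole components of $\Ob_{t+1}$. As the arriving edge is uniform over the $n-t$ absent edges, $\Pb[\Bb_{t+1}]{z\cdot 1 \bsim_{t+1} \beta \mid \Ub_t} = N(z)/(n-t)$, where $N(z)$ counts the mergeable $(a,b)$-pairs with both members selected by $z$. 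Averaging over $z\bsim_t\alpha$ and swapping the two sums, $\upsilon(\alpha,\beta,\Ub_t) = \frac{1}{n-t}\sum_{(P_1,P_2)} r(P_1,P_2)$, where $r(P_1,P_2)$ is the fraction of $z\bsim_t\alpha$ that select both $P_1$ and $P_2$; the identity $\binom{M-1}{k-1}/\binom{M}{k} = k/M$ gives $r(P_1,P_2) = \frac{\alpha[a]\,\alpha[b]}{M_a M_b}$ when $a \ne b$, and $r \le \frac{2\alpha[a]^2}{M_a^2}$ when $a = b$ and $M_a \ge 2$ (the term vanishes otherwise). Hence $\upsilon(\alpha,\beta,\Ub_t) = O\!\left(\frac{\alpha[a]\,\alpha[b]}{(n-t)\,M_a M_b}\right)\cdot\#\{\text{mergeable }(a,b)\text{-pairs}\}$.

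Finally I would substitute the pattern counts. Applying Lemma~\ref{lem:patternfreq} with a large absolute constant $C$ to the $O(\ell^2)$ relevant strings and taking a union bound, there is an event $\mathcal{E}_t^m$ over $\Ub_t$ with $\Pb[\Ub_t]{\mathcal{E}_t^m} \ge 1 - 1/(2n^{\ell+3})$ on which, simultaneously for all $a,b$ with $a+b\le\ell-3$,
\[
\#\{\text{mergeable }(a,b)\text{-pairs}\} = \Theta\!\left(n\left(\tfrac{t}{n}\right)^{a+b}\!\left(\tfrac{n-t}{n}\right)^{3}\right),\qquad M_j = \Theta\!\left(n\left(\tfrac{t}{n}\right)^{j}\!\left(\tfrac{n-t}{n}\right)^{2}\right)\ (j=a,b).
\]
Plugging in, the powers of $t/n$ cancel and those of $(n-t)/n$ combine to $\tfrac{((n-t)/n)^3}{((n-t)/n)^4} = \tfrac{n}{n-t}$, so
\[
\upsilon(\alpha,\beta,\Ub_t) = O\!\left(\frac{\alpha[a]\,\alpha[b]}{n-t}\cdot\frac{1}{n}\cdot\frac{n}{n-t}\right) = O\!\left(\frac{\alpha[a]\,\alpha[b]}{(n-t)^2}\right),
\]
which is the claimed bound (here $\mathcal{E}_t^m$ is the event written $\mathcal{E}_t^e$ in the statement, and the range of $t$ is exactly the one required by Lemma~\ref{lem:patternfreq}). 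The step I expect to be the main obstacle is the combinatorial bookkeeping behind the first two paragraphs: confirming that ``mergeable pair'' coincides \emph{exactly} with an occurrence of pattern~{\bf (4)} (so that Lemma~\ref{lem:patternfreq} applies with the correct exponents) and that precisely these arrivals effect the transition into $\beta$ and into no other collection type; the degenerate cases $a = b$ and $a = 1$ or $b = 1$ require a short separate check but are either covered by the same estimate up to constant factors or rendered vacuous by the $\beta[1] < \alpha[1]$ convention of Definition~\ref{def:ut}. Everything past that is the cancellation of $t/n$- and $(n-t)/n$-powers already performed in Claim~\ref{cl:avg-ext}.
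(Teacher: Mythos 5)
Your proposal is correct and follows essentially the same route as the paper: identify mergeable pairs with occurrences of pattern \textbf{(4)}, count components via pattern \textbf{(1)}, apply Lemma~\ref{lem:patternfreq} to both, and cancel the powers of $t/n$ and $(n-t)/n$ to land on $O(\alpha[a]\alpha[b]/(n-t)^2)$. Your handling of the averaging over $z \bsim_t \alpha$ (the explicit $r(P_1,P_2)=\alpha[a]\alpha[b]/(M_aM_b)$ computation and the $a=b$ case) is if anything slightly more careful than the paper's terser phrasing, but it is the same argument.
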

\begin{proof}
Given $\Ub_t$, we naturally associate with each pair of an $a$-edge and a $b$-edge component in $\Ob_t$
an $(a+2)$-edge path $(u_i)_{i=1}^{a+3}$ and a $(b+2)$-edge path $(v_i)_{i=1}^{b+3}$ in the underlying
graph $G$ that consist of the edges of the two components together with the not yet arrived adjacent edges (see Fig.~\ref{fig:merge-prob}). Specifically, $u_{i+1} = \next(u_i)$ for each $i \in \brac*{a+2}$,
$v_{i+1} = \next(v_i)$ for each $i \in \brac*{b+2}$, and $ \bpi^{-1}(u_1) > t,
\bpi^{-1}(v_1) > t$, \[
\bpi^{-1}(\set*{u_i : 1 < i \le a +1} \cup \set*{v_i : 1 < i \le b + 1})
\] is contained in $\brac*{t}$, $\bpi^{-1}(u_{a+2}) > t, \bpi^{-1}(v_{b+2})
> t$. Both correspond to patterns of type {\bf (1)} above, see Fig.~\ref{fig:1}. Call the number of such paths $\Yb_1$, $\Yb_2$, respectively.

\begin{figure}[H]
	\begin{center}
		\begin{tikzpicture}[scale=0.8]
		

 		\begin{scope}[shift={(0, -10)}]

		\begin{scope}[shift={(-7, 0)}]

			\draw[line width=0.5pt, color=black!50] (2*45: 2) -- (3*45:2);
			\draw[line width=0.5pt, dashed] (3*45: 2) -- (4*45:2);
			\draw[line width=1.5pt, color=black!100] (4*45: 2) -- (5*45:2);
			\draw[line width=1.5pt, color=black!100] (5*45: 2) -- (6*45:2);
			\draw[line width=0.5pt, dashed] (6*45: 2) -- (7*45:2);
			\draw[line width=0.5pt, color=black!50] (7*45: 2) -- (8*45:2);
			
			\draw[fill=black!50] (0*45: 2) circle (0.1);
			\draw[fill=black!50] (1*45: 2) circle (0.1);			
			\draw[fill=black!50] (2*45: 2) circle (0.1);
			\draw[fill=black!50] (3*45: 2) circle (0.1);			
			\draw[fill=black!100] (4*45: 2) circle (0.1);
			\draw[fill=black!100] (5*45: 2) circle (0.1);			
			\draw[fill=black!100] (6*45: 2) circle (0.1);
			\draw[fill=black!50] (7*45: 2) circle (0.1);			
			
			\draw (0*45: 2.5) node {\scriptsize $906$};
			\draw (2*45: 2.5) node {\scriptsize $914$};
			\draw (3*45: 2.5) node {\scriptsize $633$};
			\draw (4*45: 2.5) node {\scriptsize $98$};
			\draw (5*45: 2.5) node {\scriptsize $279$};
			\draw (6*45: 2.5) node {\scriptsize $547$};
			\draw (7*45: 2.5) node {\scriptsize $958$};
			

			\draw (2.5*45: 1.5) node {\scriptsize $393$};
			\draw (4.5*45: 1.5) node {\scriptsize $172$};
			\draw (5.5*45: 1.5) node {\scriptsize $707$};
			\draw (7.5*45: 1.5) node {\scriptsize $277$};
			
			\draw[line width=1pt, ->, >=stealth] (3.4, -3)  to [out=180, in=-120]  (5.5*45:2.3);
						
		\end{scope}
		
		\begin{scope}[shift={(-1, 0)}]

			\draw[line width=0.5pt] (0*45: 2) -- (1*45:2);
			\draw[line width=0.5pt] (1*45: 2) -- (2*45:2);
			\draw[line width=0.5pt, color=black!50] (4*45: 2) -- (5*45:2);
			\draw[line width=0.5pt] (6*45: 2) -- (7*45:2);
			\draw[line width=0.5pt] (7*45: 2) -- (8*45:2);
			
			\draw[fill=black!50] (0*45: 2) circle (0.1);
			\draw[fill=black!50] (1*45: 2) circle (0.1);			
			\draw[fill=black!50] (2*45: 2) circle (0.1);
			\draw[fill=black!50] (3*45: 2) circle (0.1);			
			\draw[fill=black!50] (4*45: 2) circle (0.1);
			\draw[fill=black!50] (5*45: 2) circle (0.1);			
			\draw[fill=black!50] (6*45: 2) circle (0.1);
			\draw[fill=black!50] (7*45: 2) circle (0.1);
			
			\draw (0*45: 2.5) node {\scriptsize $965$};
			\draw (1*45: 2.5) node {\scriptsize $158$};
			\draw (2*45: 2.5) node {\scriptsize $971$};
			\draw (4*45: 2.5) node {\scriptsize $486$};
			\draw (5*45: 2.5) node {\scriptsize $801$};
			\draw (6*45: 2.5) node {\scriptsize $142$};
			\draw (7*45: 2.5) node {\scriptsize $422$};
			
			\draw (0.5*45: 1.5) node {\scriptsize $47$};
			\draw (1.5*45: 1.5) node {\scriptsize $98$};
			\draw (4.5*45: 1.5) node {\scriptsize $318$};
			\draw (6.5*45: 1.5) node {\scriptsize $35$};
			\draw (7.5*45: 1.5) node {\scriptsize $439$};

			\draw (2, -3) node {\small $2$-edge path and a $3$-edge path in observed graph $\Ob_t$};

		\end{scope}
		
		\draw (+3, 0) node {\large $\ldots$};

		\begin{scope}[shift={(+7, 0)}]

			\draw[line width=0.5pt, color=black!50] (0*45: 2) -- (1*45:2);
			\draw[line width=0.5pt, dashed] (2*45: 2) -- (3*45:2);
			\draw[line width=1.5pt, color=black!100] (3*45: 2) -- (4*45:2);
			\draw[line width=1.5pt, color=black!100] (4*45: 2) -- (5*45:2);
			\draw[line width=1.5pt, color=black!100] (5*45: 2) -- (6*45:2);
			\draw[line width=0.5pt, dashed] (6*45: 2) -- (7*45:2);
			
			\draw[fill=black!50] (0*45: 2) circle (0.1);
			\draw[fill=black!50] (1*45: 2) circle (0.1);			
			\draw[fill=black!50] (2*45: 2) circle (0.1);
			\draw[fill=black!100] (3*45: 2) circle (0.1);			
			\draw[fill=black!100] (4*45: 2) circle (0.1);
			\draw[fill=black!100] (5*45: 2) circle (0.1);			
			\draw[fill=black!100] (6*45: 2) circle (0.1);
			\draw[fill=black!50] (7*45: 2) circle (0.1);						
			
			\draw (0*45: 2.5) node {\scriptsize $916$};
			\draw (1*45: 2.5) node {\scriptsize $793$};
			\draw (3*45: 2.5) node {\scriptsize $656$};
			\draw (4*45: 2.5) node {\scriptsize $36$};
			\draw (5*45: 2.5) node {\scriptsize $850$};
			\draw (6*45: 2.5) node {\scriptsize $934$};
			
			\draw (0.5*45: 1.5) node {\scriptsize $382$};
			\draw (3.5*45: 1.5) node {\scriptsize $187$};
			\draw (4.5*45: 1.5) node {\scriptsize $490$};
			\draw (5.5*45: 1.5) node {\scriptsize $446$};
			
			\draw[line width=1pt, ->, >=stealth] (-6, -2.8)  to [out=70, in=180]  (4.5*45:2.3);			
			
		\end{scope}
		\end{scope}
		
		\end{tikzpicture}
		\caption{A pair of paths with $a=2$ and $b=3$ edges in the observed graph respectively, together with the corresponding $(a+2)$-edge and $(b+2)$-edge paths in the underlying graph (incident edges that have not arrived yet are shown as dashed)}	\label{fig:merge-prob}
	\end{center}
	
\end{figure}
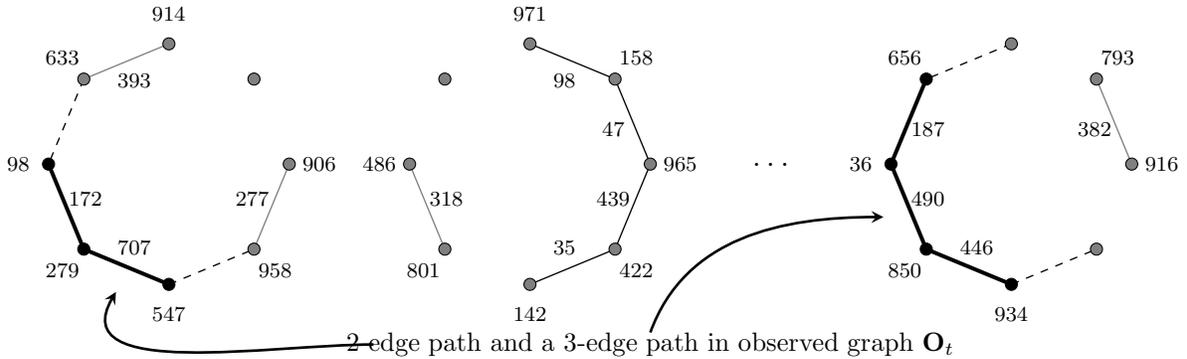

Such a component will be merged at time $t+1$ iff the edge that arrives at
this time is incident to both of them, that is iff either:
\begin{enumerate}
\item $\bpi(t+1) = v_1$ and $u_{a+2} = v_1$.
\item $\bpi(t+1) = u_1$ and $v_{b+2} = u_1$.
\end{enumerate}
So conditioned on $\Ub_t$, the average probability that a pair of an $a$-edge
and a $b$-edge component in $\Ob_t$ is $\frac{1}{n-t}$ times the fraction of
those pairs such that either $u_{a+2} = v_1$ or $v_{b+2} = u_1$. 

Let $\Yb_3$ be the number of paths $(w_i)_{i=1}^{a+b+4}$ such that $w_{i+1} =
\next(w_i)$ for each $i \in \brac*{a+b+3}$, $\bpi^{-1}(w_1) > t$,
$\bpi^{-1}(w_{a+2}) >t$, $\bpi^{-1}(w_{a+b+3}) > t$, and \[
\bpi^{-1}(\set*{w_i : 1 < i < a + 2 \vee a + 2 < i < a + b + 3})
\] is contained in $\brac*{t}$. Let $\Yb_4$ be
the number of paths $(w_i)_{i=1}^{a+b+4}$ such that $w_{i+1} = \next(w_i)$ for
each $i \in \brac*{a+b+3}$, $\bpi^{-1}(w_1) > t$, $\bpi^{-1}(w_{b+2}) >t$,
$\bpi^{-1}(w_{a+b+3}) > t$, and \[ \bpi^{-1}(\set*{w_i : 1 < i < b + 2 \vee b +
2 < i < a + b + 3})\] is contained in $\brac*{t}$. Both correspond to patterns
of type {\bf (4)} above, see Fig.~\ref{fig:2}.

Then the fraction of these pairs such that $u_{a+2} = v_1$ or $v_{b+2} = u_1$ is either $\frac{\Yb_3 + \Yb_4}{\Yb_1 \Yb_2}$ (if $a \not= b$) or $\frac{\Yb_3}{\binom{\Yb_1}{2}}$ (if $a = b$). If $a = b$ and $\Yb_1 \le 1$ the lemma holds trivially, so we may assume $\Yb_1 > 1$ and thus both of these are bounded by \[
2 \frac{\Yb_3 + \Yb_4}{\Yb_1 \Yb_2}
\]
and by applying
Lemma~\ref{lem:patternfreq} with $C = 3$ (as $\ell$ must be at least $3$ and so
$n$ must be as well, and thus $n^{C\ell} > 8n^{\ell + 3}$) and taking a union
bound,
\begin{align*}
\Yb_1 + \Yb_2  &= \Theta\paren*{n\paren*{\frac{t}{n}}^{a + b}
\paren*{\frac{n-t}{n}}^3}\\
\Yb_3 &= \Theta\paren*{n\paren*{\frac{t}{n}}^{a}
\paren*{\frac{n-t}{n}}^2}\\
\Yb_4 &= \Theta\paren*{n\paren*{\frac{t}{n}}^{b}
\paren*{\frac{n-t}{n}}^2}
\end{align*}
with probability $1 - 1/2n^{\ell + 3}$ over $\Ub_t$, provided \[
n/8 \le t \le n - C 2^{10\ell} n^{5/6} \log n
\]
for a sufficiently large constant $C$. So let $\mathcal{E}_t^m$ be the event
that this holds, then $\Pb[\Ub_t]{\mathcal{E}_t^m} \ge 1 - 1/2n^{\ell +
3}$ and for all $U_t \in \mathcal{E}_s^m$,
\begin{align*}
\upsilon(\alpha, \beta, U_t) &= \Theta\paren*{\frac{n\paren*{\frac{t}{n}}^{a +
b} \paren*{\frac{n-t}{n}}^3}{n\paren*{\frac{t}{n}}^a
\paren*{\frac{n-t}{n}}^2 n\paren*{\frac{t}{n}}^b
\paren*{\frac{n-t}{n}}^2}} \cdot \frac{\alpha\lbrack a \rbrack \cdot
\alpha\lbrack b \rbrack}{n-t}\\ 
&= \frac{O\paren*{\alpha\lbrack a \rbrack \cdot \alpha\lbrack b
\rbrack}}{(n-t)^2}
\end{align*}
completing the proof.
\end{proof}

By combining these four claims, Lemma~\ref{lm:ext-prob} is proved.

\subsection{A decomposition of a typical message}\label{sec:decomp}
The main result of this section is Lemma~\ref{lm:decomposition}, a central tool in our analysis. For every $t\in T$ and $1\leq s<t$ the lemma gives a decomposition of a typical message sent by player $t$ (equivalently, the state of the streaming algorithm after processing the $t\nth$ part of the input) in terms of the message sent by player $s$ and an auxiliary function that represents the mapping from the message of player $t$ to the message of player $t$.

As the players are deterministic, we may assume that the $t\nth$ player sends
$\Mb_t:=\gb_t(\Xb_t; \Mb_{t-1})$ for some function \begin{equation}\label{eq:g-def}
\gb_t(\cdot , \cdot) : \{0, 1\}\times \{0, 1\}^c \to \{0, 1\}^c,
\end{equation}
depending only on $\Bb_t$, where $c$ is the size of the messages that the players exchange. We let
$\Fb_t:\bool^t \to \bool$ denote the \emph{indicator function of the message}
$\Mb_t$ sent by the $t\nth$ player, defined as \[
\Fb_t(x)= \begin{cases}
1& \mbox{if, conditioned on $\Bb_t$, $\Xb_t = x$ would cause player $t$ to send $\Mb_t$}\\
0& \mbox{otherwise.}
\end{cases}
\]
For convenience we let $\Fb_0$ denote the trivial function from $\emptyset$ to $\{0, 1\}$, and $\Mb_0=0^c$. Note that $\Fb_t, \Mb_t, \gb_t$ above are random variables depending on $\Xb_{\leq t}$ and $\Bb_t$; we let $F_t$ denote a realization of $\Fb_t$, and let $m_t$ denote a realization of $\Mb_t$ (as usual, we use boldface for random variables and regular font to represent their realizations). We note that indicator functions $F_t$ are in one to one correspondence with
messages $m_t$ . 

Let $\Fbc_t$ denote the collection of possible indicator
functions at time $t$ given $\Bb_t$, and note that for every $t$ such messages
partition $\bool^t$: \[
\sum_{F_t\in \Fbc_t} F_t(x)=1.
\]
Since $F_t$'s are in one to one correspondence with possible messages of the
$t\nth$ player, the function $\gb_t$ from~\eqref{eq:g-def} naturally extends to
map $\bool\times \Fbc_{t-1}$ to $\Fbc_t$:
\begin{equation}
\gb_t(\cdot , \cdot) : \{0, 1\}\times \Fc_{t-1} \to \Fc_t.
\end{equation}

The proof of Lemma~\ref{lm:decomposition} relies on the observation that for every
$t\in T$, $F_t \in \Fbc_t$, and every $x_{\leq t}$ one has
\begin{equation}\label{eq:decomp}
\begin{split}
F_t(x_{\leq t})&=\sum_{F_s\in \Fc_s} F_s(x_{\leq s})\cdot \rb_t(x_{[s+1:t]}; F_s, F_t),
\end{split}
\end{equation}
where $\rb_t(x_{[s+1:t]}; F_s, F_t)$ is the indicator function for
$x_{[s+1:t]}$ taking $F_s$ to $F_t$, i.e.\
\begin{equation}\label{eq:r-def}
\rb(x_{[s+1:t]}; F_s, F_t)=\sum_{(F_{s+1}, \ldots, F_{t-1})\in \Fc_{s+1}\times
\ldots \times \Fc_{t-1}} \mathbf{I}\left[\bigwedge_{j=s+1}^t \gb_j(F_{j-1},
x_j)=F_j\right].
\end{equation}
Note that $\rb$ is a random function depending only on $\Bb_t$. The function $\rb$ satisfies
\begin{claim}\label{cl:prop-r}
For every $1\leq s<t\leq n$, every $F_s\in \Fc_s$ one has for every
$x_{[s+1:t]}$ that \[
\rb(x_{[s+1:t]}; F_s, F_t)\in \bool
\] for every $F_t\in \Fc_t$
and
\[
\sum_{F_t\in \Fc_t} \rb(x_{[s+1:t]}; F_s, F_t)=1.
\]
\end{claim}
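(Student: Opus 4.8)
The plan is to use the fact that, once the board $\Bb_t$ is fixed, each $\gb_j$ is a genuine deterministic function $\bool\times\Fc_{j-1}\to\Fc_j$, so the entire future sequence of message-indicators is pinned down by $F_s$ together with $x_{[s+1:t]}$. Concretely, fix $F_s\in\Fc_s$ and $x_{[s+1:t]}$, and define a sequence by $\wh{F}_s:=F_s$ and $\wh{F}_j:=\gb_j(\wh{F}_{j-1},x_j)$ for $j=s+1,\dots,t$. A one-line induction (using that $\gb_j$ maps into $\Fc_j$) shows $\wh{F}_j\in\Fc_j$ for each such $j$; in particular $\wh{F}_t\in\Fc_t$.

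The key observation is that, for any fixed $F_t\in\Fc_t$, a tuple $(F_{s+1},\dots,F_{t-1})\in\Fc_{s+1}\times\cdots\times\Fc_{t-1}$ satisfies $\bigwedge_{j=s+1}^{t}\gb_j(F_{j-1},x_j)=F_j$ (with $F_s$ and $F_t$ the fixed endpoints of the chain) if and only if $F_j=\wh{F}_j$ for every $j\in\{s+1,\dots,t\}$. The ``if'' direction is immediate from the recursive definition of $\wh{F}_\bullet$. For ``only if'', induct on $j$: the base case $j=s+1$ reads $F_{s+1}=\gb_{s+1}(F_s,x_{s+1})=\gb_{s+1}(\wh{F}_s,x_{s+1})=\wh{F}_{s+1}$, and if $F_{j-1}=\wh{F}_{j-1}$ then $F_j=\gb_j(F_{j-1},x_j)=\gb_j(\wh{F}_{j-1},x_j)=\wh{F}_j$. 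Since $\gb_j$ is a function, each step admits a unique consistent value, so there is no freedom in the tuple.

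Consequently the sum defining $\rb(x_{[s+1:t]};F_s,F_t)$ contains exactly one term equal to $1$ — namely the one with $(F_{s+1},\dots,F_{t-1})=(\wh{F}_{s+1},\dots,\wh{F}_{t-1})$ — when $F_t=\wh{F}_t$, and contains no nonzero term otherwise; hence $\rb(x_{[s+1:t]};F_s,F_t)=\1bb(F_t=\wh{F}_t)\in\bool$, which is the first assertion. Summing over $F_t\in\Fc_t$ and using $\wh{F}_t\in\Fc_t$ gives $\sum_{F_t\in\Fc_t}\rb(x_{[s+1:t]};F_s,F_t)=1$, the second assertion. I do not expect any real obstacle here: the only thing to be careful about is the bookkeeping of the iterated composition and the observation that conditioning on $\Bb_t$ makes every $\gb_j$ deterministic, which is precisely what makes the auxiliary sequence $\wh{F}_\bullet$ well defined.
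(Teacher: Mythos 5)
Your proof is correct and is exactly the argument the paper has in mind: the paper states Claim~\ref{cl:prop-r} without proof, treating it as immediate from the definition~\eqref{eq:r-def} of $\rb$ together with the determinism of the $\gb_j$ once $\Bb_t$ is fixed, which is precisely what your forward sequence $\wh{F}_\bullet$ makes explicit. No gaps.
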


We now consider the random variable $\Xb \in \Uc(\bool^n)$, and
corresponding random variables $\Fb_t:\bool^t \to \bool$ for each
$t \in [n]$ to be the indicator of the message that results from
$\Xb_{\leq t}$.  Because $\Xb$ is uniform and the messages are
deterministic, $(\Xb \mid \Fb_t=F_t)$ is uniform over the support of
$F_t$.  Therefore for each $t \in [n]$ and $F_t$ if we define
\[
\wt{F}_t(z) := \frac{2^s}{||F_t||_1} \wh{F}_t(z).
\]
(with $\wt{F}_0$ being the trivial function again) this satisfies
\[
 \wt{F}_t(z) = \E[\Xb ]{(-1)^{z \cdot \Xb_{\le t}}\middle| \Fb_t=F_t, \Bb_t}.
\]

Now consider the distribution of $(\Xb \mid \Fb_s, \Fb_t, \Bb_t)$ for any $s <
t$.  $\Xb = x$ is consistent with $(\Fb_s, \Fb_t, \Bb_t)$ if and only if
$\Fb_s(x_{\leq s}) = 1$ and $\rb(x_{[s+1:t]}; \Fb_s, \Fb_t) = 1$.  Since $\Xb$
is uniform, $(\Xb \mid \Fb_s, \Fb_t, \Bb_t)$ is uniform over those $x$
consistent with $(\Fb_s, \Fb_t, \Bb_t)$; and since the $\Fb_s$ and $\rb$
constraints are on disjoint sets of coordinates, this means that $\Xb_{\leq s}$
and $\Xb_{[s+1:t]}$ are independent conditioned on $(\Fb_s, \Fb_t, \Bb_t)$,
with $(\Xb_{\leq s} \mid \Fb_s, \Fb_t, \Bb_t) = (\Xb_{\leq s} \mid \Fb_s,
\Bb_t)$ and $(\Xb_{[s+1:t]} \mid \Fb_s, \Fb_t, \Bb_t)$ being uniform over the
support of $\rb(\cdot; \Fb_s, \Fb_t)$.  Therefore if we define
\[
\wt{\rb}(z; F_s, F_t):= \frac{2^{t-s}}{||\rb(\cdot ; F_s, F_t)||_1} \wh{\rb}(z;
F_s, F_t)
\]
where the Fourier transform $\wh{\rb}$ is with respect to its first argument,
this satisfies
\[
\wt{\rb}(z; F_s, F_t)=  \E[\Xb]{(-1)^{z \cdot
\Xb_{[s+1:t]}} \middle| \Fb_s=F_s, \Fb_t=F_t, \Bb_t}.
\] 

These definitions lead to the following:
\lmdecomposition*

\begin{proof}
  As discussed above, $\Xb_{\leq s}$ and $\Xb_{[s+1:t]}$ are independent
  conditioned on $\Fb_s$, $\Fb_t$, $\Bb_t$.  Therefore
  \begin{align*}
    \E[\Xb]{(-1)^{z \cdot \Xb} \middle| \Fb_t} &= \E[\Fb_s]{\E[\Xb
    ]{(-1)^{z \cdot \Xb} \middle|\Fb_s, \Fb_t, \Bb_t} \middle|\Fb_t, \Bb_t}\\
    &= \E[\Fb_s]{\E[\Xb_{\leq s}, \Xb_{[s+1:t]}]{(-1)^{z_{\leq s}
    \cdot \Xb_{\leq s}} (-1)^{z_{[s+1:t]} \cdot \Xb_{[s+1:t]}} \middle|\Fb_s,
    \Fb_t, \Bb_t} \middle|\Fb_t, \Bb_t}\\
    &= \E[\Fb_s]{ \E[\Xb_{\leq s}]{(-1)^{z_{\leq s} \cdot \Xb_{\leq
    s}} \middle|\Fb_s, \Fb_t, \Bb_t}\E[\Xb_{[s+1:t]}]{ (-1)^{z_{[s+1:t]}
    \cdot \Xb_{[s+1:t]}} \middle|\Fb_s, \Fb_t, \Bb_t} \middle|\Fb_t, \Bb_t}\\
    &= \E[\Fb_s]{\wt{\Fb}_s(z_{\leq s}) \cdot \wt{\rb}(z_{[s+1:t]}; \Fb_s,
    \Fb_t)  \middle|\Fb_t, \Bb_t}.
  \end{align*}
\end{proof}

In the next section we combine Lemma~\ref{lm:decomposition} with bounds on the probability of extensions and merges from Section~\ref{sec:comb}  to obtain Lemma~\ref{lm:evolution}, our key lemma on the expected evolution of Fourier coefficients as a function of $t$.

\subsection{Evolution of Fourier coefficients (Lemma~\ref{lm:evolution})}\label{sec:evolution}

Define, for every $t\in [n]$ and $\beta\in \zpp$:
\begin{equation}\label{eq:fu-def}
\Hb_\beta^t=\sum_{z\in \{0, 1\}^t, z\bsim_t \beta} \wt{\Fb}_t(z)^2
\end{equation}

The main result of this section is Lemma~\ref{lm:evolution}, restated here for convenience of the reader: 
\lmevolution*

Before proving Lemma~\ref{lm:evolution}, we derive several useful bounds on Fourier coefficients of Boolean functions, which we later apply to $\Fb_t$.
To bound sums of Fourier coefficients of $\Fb_t$, we need a corollary of the KKL
Lemma of~\cite{KKL88}, an extended version of that of~\cite{GavinskyKKRW07}.

\begin{lemma}[\cite{KKL88}]
\label{lem:kkl}
Let $f$ be a function $f : \lbrace 0, 1\rbrace^n \rightarrow \lbrace -1, 0, 1
\rbrace$. Let $A = \lbrace x | f(x) \not= 0 \rbrace$. Then for every $\delta
\in \brac*{0,1}$ we have \[
\sum_{s \in \bool^n} \delta^{|s|} \widehat{f}(s)^2 \le
\paren*{\frac{|A|}{2^n}}^\frac{2}{1 + \delta}\text{.}
\]
\end{lemma}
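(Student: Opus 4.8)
The plan is to derive Lemma~\ref{lem:kkl} directly from the Bonami--Beckner hypercontractive inequality. Recall the noise operator $T_\rho$ on $\bool^n$, given on the Fourier side by $\widehat{T_\rho f}(s) = \rho^{|s|}\widehat f(s)$, and recall that hypercontractivity states: for every $1 \le p \le 2$ and $\rho = \sqrt{p-1}$ one has $\norm{T_\rho f}_2 \le \norm{f}_p$, where $\norm{g}_q := \paren*{\E[x]{|g(x)|^q}}^{1/q}$ is taken with respect to the uniform measure on $\bool^n$. First I would instantiate this with $p = 1 + \delta$; since $\delta \in \brac*{0,1}$, we have $p \in \brac*{1,2}$, so the inequality applies with $\rho = \sqrt{p-1} = \sqrt{\delta}$.

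Next I would evaluate both sides. For the left-hand side, Parseval's identity gives
\[
\norm{T_{\sqrt{\delta}} f}_2^2 = \sum_{s \in \bool^n} \paren*{\sqrt{\delta}}^{2|s|} \widehat f(s)^2 = \sum_{s \in \bool^n} \delta^{|s|}\widehat f(s)^2,
\]
which is precisely the quantity in the statement. For the right-hand side I would use the hypothesis that $f$ is $\set{-1, 0, 1}$-valued: this forces $|f(x)|^{1+\delta} = \1bb(f(x) \ne 0) = \1bb(x \in A)$ pointwise, hence
\[
\norm{f}_{1+\delta}^2 = \paren*{\E[x]{|f(x)|^{1 + \delta}}}^{\frac{2}{1+\delta}} = \paren*{\frac{|A|}{2^n}}^{\frac{2}{1+\delta}}.
\]
Combining these two displays with $\norm{T_{\sqrt{\delta}} f}_2^2 \le \norm{f}_{1+\delta}^2$ yields the claimed bound.

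I do not expect any real obstacle: the entire content is a specialization of the hypercontractive inequality (this is essentially Lemma~3.4 of~\cite{KKL88}). The only points deserving a line of care are (i) verifying that $\delta \in \brac*{0,1}$ is exactly the range of $\delta$ for which $p = 1+\delta$ lies in $\brac*{1,2}$, so that hypercontractivity applies with $\rho = \sqrt{\delta}$, and (ii) noting that the $\set{-1,0,1}$-valuedness of $f$ is what collapses $\norm{f}_{1+\delta}$ to (a power of) the density of its support $A$; without this hypothesis one would only get $\norm{f}_{1+\delta}$ on the right.
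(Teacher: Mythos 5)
Your proof is correct and is exactly the standard derivation the paper has in mind: the lemma is stated as a citation to \cite{KKL88} with no proof given, and the surrounding text explicitly describes it as a consequence of the hypercontractive inequality. Your instantiation ($p=1+\delta$, $\rho=\sqrt{\delta}$, Parseval on the left, and the $\{-1,0,1\}$-valuedness collapsing $\norm{f}_{1+\delta}$ to the support density on the right) is the standard argument and is consistent with the paper's normalization $\widehat{f}(s)=\E[x]{f(x)(-1)^{s\cdot x}}$; the endpoint checks $\delta=0,1$ both give equality, confirming the constants.
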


We will use 

\begin{lemma}[Lemma F.3 of~\cite{KapralovK19}]\label{xorKKL}
Let $f$ be a function $f : \lbrace 0, 1\rbrace^n \rightarrow \lbrace 0, 1
\rbrace$. Let $A = \lbrace x | f(x) \not= 0 \rbrace$, and let $|A| \ge 2^{n -
c}$.
Then for every $y\in \{0,1\}^n$ and every $q\leq c$ one has 
\[
\sum_{\substack{x\in \{0,1\}^n \\ |x\oplus y|=q}} \wt{f}^2(x) \leq \left(\frac{4c}{q}\right)^{q}.
\]
\end{lemma}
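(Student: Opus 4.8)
The plan is to reduce the statement to the ``level-$q$ inequality at the origin'' (the $y = 0^n$ case) by a standard character twist, then feed this into the KKL estimate of Lemma~\ref{lem:kkl} and optimize the free parameter $\delta$.

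First I would set $g(z) := (-1)^{y\cdot z} f(z)$, a function $\bool^n \to \set*{-1,0,1}$ whose nonzero set is exactly $A$ and whose Fourier transform satisfies $\wh{g}(s) = \wh{f}(s\oplus y)$ (a one-line check, since $(-1)^{y\cdot z}(-1)^{s\cdot z} = (-1)^{(s\oplus y)\cdot z}$). Applying Lemma~\ref{lem:kkl} to $g$ and re-indexing by $x = s\oplus y$ gives, for any $\delta\in(0,1]$,
\[
\sum_{x\in\bool^n} \delta^{\abs{x\oplus y}}\,\wh{f}(x)^2 \;\le\; \paren*{\frac{\abs{A}}{2^n}}^{\frac{2}{1+\delta}}.
\]
Since every term on the left is nonnegative, keeping only those with $\abs{x\oplus y} = q$ and dividing through by $\delta^q$, then passing to the normalized coefficients via $\wt{f}(x) = \tfrac{2^n}{\abs{A}}\wh{f}(x)$ and writing $\rho := \abs{A}/2^n \in [2^{-c},1]$, yields
\[
\sum_{\substack{x\in\bool^n\\ \abs{x\oplus y}=q}} \wt{f}(x)^2 \;\le\; \rho^{-2}\,\delta^{-q}\,\rho^{\frac{2}{1+\delta}} \;=\; \delta^{-q}\,\rho^{-\frac{2\delta}{1+\delta}} \;\le\; \delta^{-q}\,2^{2c\delta},
\]
where the last step uses $\rho \ge 2^{-c}$ together with $1+\delta\ge1$.

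It remains to choose $\delta$. Minimizing $-q\ln\delta + (2\ln 2)\,c\,\delta$ over $\delta>0$ gives $\delta = \tfrac{q}{2c\ln 2}$, which lies in $(0,1]$ precisely because $q\le c$ (and $2\ln2>1$), so it is an admissible choice; this is the only place the hypothesis $q\le c$ is used. (The edge case $q=0$ is trivial, as $\abs{\wt{f}(x)}\le1$ always.) Substituting this $\delta$ bounds the sum by $\bigl(\tfrac{2e\ln 2\cdot c}{q}\bigr)^q$, and since $2e\ln 2 < 4$ this is at most $\paren*{\tfrac{4c}{q}}^q$, as claimed.

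I do not expect a genuine obstacle: the whole argument is a short consequence of Lemma~\ref{lem:kkl}. The only points needing care are (i) checking that the optimal $\delta$ stays in $(0,1]$ under $q\le c$ — which is exactly why that hypothesis is imposed — and (ii) tracking the normalization factor $2^n/\abs{A}$ when moving between $\wh{f}$ and $\wt{f}$, so that the constant in the conclusion comes out as $4$. (A smaller constant is available by using $2^{2c\delta/(1+\delta)}$ in place of $2^{2c\delta}$, but $4$ already suffices.)
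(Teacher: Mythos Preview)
Your proof is correct. The paper itself does not prove this lemma, citing it instead from~\cite{KapralovK19}, so there is no in-paper proof to compare against directly. That said, the paper does prove the closely related Lemma~\ref{lm:fouriercoeffs-paths} by the same mechanism (apply KKL to a transformed function and optimize $\delta$), and there the choice is simply $\delta = k/c$ rather than your optimized $\delta = q/(2c\ln 2)$; both choices land on the same $(4c/q)^q$ bound, yours via the sharper $2e\ln 2 < 4$ and the paper's via $2^{2c\delta/(1+\delta)} \le 2^{2k} = 4^k$.
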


The following lemma is an immediate consequence of Lemma~\ref{xorKKL}:
\begin{lemma}\label{lm:fouriercoeffs}
Let $f$ be a function $f : \lbrace 0, 1\rbrace^n \rightarrow \lbrace 0, 1
\rbrace$. Let $A = \lbrace x | f(x) \not= 0 \rbrace$, and let $|A| \ge 2^{n -
c}$. Then for every $k \in \lbrace 1, \dots, c \rbrace$, we have 
\[
\sum_{v \in \{0, 1\}^{n-1}, |v|=k} \wt{f}(1\cdot v)^2 \le \left(\frac{4c}{k}\right)^k\text{.}
\]
\end{lemma}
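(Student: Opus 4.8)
The plan is to deduce the statement directly from Lemma~\ref{xorKKL} by choosing the shift vector appropriately. The sum we must bound ranges over strings of the form $1\cdot v$ with $v\in\bool^{n-1}$ of Hamming weight exactly $k$, so the natural move is to ``cancel'' the leading coordinate: I would apply Lemma~\ref{xorKKL} with $y=e_1 := 1\cdot 0^{n-1}$, the indicator of the first coordinate, and with $q=k$. This is a legal choice of parameters since $1\le k\le c$ and $|A|\ge 2^{n-c}$ by hypothesis.

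First I would record the elementary observation that for any $v\in\bool^{n-1}$ we have $(1\cdot v)\oplus e_1 = 0\cdot v$, hence $|(1\cdot v)\oplus e_1| = |v|$. Therefore the index set $\set*{1\cdot v : v\in\bool^{n-1},\ |v|=k}$ is a subset of $\set*{x\in\bool^n : |x\oplus e_1| = k}$ (the latter set additionally contains the strings $0\cdot v$ with $|v|=k-1$, but including those only adds further nonnegative terms). Since each summand $\wt f(x)^2$ is nonnegative, enlarging the index set can only increase the sum, so
\[
\sum_{\substack{v\in\bool^{n-1}\\ |v|=k}} \wt f(1\cdot v)^2 \ \le\ \sum_{\substack{x\in\bool^n\\ |x\oplus e_1|=k}} \wt f(x)^2 \ \le\ \left(\frac{4c}{k}\right)^{k},
\]
where the final inequality is exactly Lemma~\ref{xorKKL} applied with $y=e_1$ and $q=k$. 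This completes the proof.

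There is essentially no obstacle: the statement is a pure specialization of Lemma~\ref{xorKKL}, and the only content is the one-line Hamming-weight bookkeeping identifying $\set*{1\cdot v : |v|=k}$ with (a subset of) a shifted Hamming sphere. The normalization used in defining $\wt f$ is irrelevant to the argument, which relies only on monotonicity of a sum of nonnegative terms under enlarging the index set.
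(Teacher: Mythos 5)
Your proof is correct and matches the paper's intent: the paper states Lemma~\ref{lm:fouriercoeffs} as an ``immediate consequence'' of Lemma~\ref{xorKKL} without writing out the derivation, and the specialization $y = 1\cdot 0^{n-1}$, $q = k$ together with the observation that $\{1\cdot v : |v|=k\}$ sits inside the shifted Hamming sphere $\{x : |x\oplus y|=k\}$ is exactly the one-line argument being invoked. No issues.
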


\begin{lemma}\label{lm:fouriercoeffs-paths}
Let $f$ be a function $f : \lbrace 0, 1\rbrace^n \rightarrow \lbrace 0, 1
\rbrace$. Let $A = \lbrace x | f(x) \not= 0 \rbrace$, and let $|A| \ge 2^{n -
c}$. Then for every $k \in \lbrace 1, \dots, c \rbrace$, any partition $P$ of
$\lbrack n\rbrack$, and any set $S$ consisting of size-$k$ unions of elements
of $P$, we have \[
\sum_{v \in S} \wt{f}(v)^2 \le \left(\frac{4c}{k}\right)^k\text{.}
\]
\end{lemma}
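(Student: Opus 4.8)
The plan is to reduce the statement to Lemma~\ref{xorKKL} (in the special case of center $y=0^n$) by an invertible linear change of coordinates on $\{0,1\}^n$ that carries the Fourier characters indexed by unions of $k$ blocks of $P$ to plain Hamming‑weight‑$k$ characters. Write $P=\{P_1,\dots,P_m\}$ with $m=|P|$, and let $\mathbf{1}_{P_i}\in\{0,1\}^n$ denote the indicator vector of $P_i$. Since the blocks are nonempty and pairwise disjoint, $\mathbf{1}_{P_1},\dots,\mathbf{1}_{P_m}$ are linearly independent over $\mathbb{F}_2$; completing them to a basis of $\{0,1\}^n$ produces a linear bijection $\Phi:\{0,1\}^n\to\{0,1\}^n$ whose $i$-th row equals $\mathbf{1}_{P_i}$ for each $i\in[m]$ (so the first $m$ coordinates of $\Phi x$ are the block parities $\bigoplus_{j\in P_i}x_j$, and the remaining rows are an arbitrary completion). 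Consequently $\Phi^{\top}e_i=\mathbf{1}_{P_i}$ for $i\in[m]$, hence $\Phi^{\top}\mathbf{1}_T=\mathbf{1}_{\bigcup_{i\in T}P_i}$ for every $T\subseteq[m]$.

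Next I would set $g:=f\circ\Phi^{-1}:\{0,1\}^n\to\{0,1\}$ and record two facts about it. First, since $\Phi$ is a bijection, $\supp(g)=\Phi(\supp(f))$ has size $|A|\ge 2^{n-c}$, so $g$ carries the same $\ell_1$‑norm and hence the same normalization constant $2^n/|A|$ as $f$ in the definition of $\wt{\cdot}$. Second, the substitution $x\mapsto\Phi^{-1}x$ in $\wh{g}(w)=2^{-n}\sum_x f(\Phi^{-1}x)(-1)^{w\cdot x}$ gives the standard identity $\wh{g}(w)=\wh{f}(\Phi^{\top}w)$ for all $w$, and therefore $\wt{g}(w)=\wt{f}(\Phi^{\top}w)$. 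Now every element of $S$ is $\bigcup_{i\in T}P_i=\Phi^{\top}\mathbf{1}_T$ for a unique $T\subseteq[m]$ with $|T|=k$ (uniqueness because the $P_i$ are disjoint and nonempty), so, writing $\mathcal{T}$ for this family of $k$‑subsets of $[m]$ and viewing each $\mathbf{1}_T$ as an element of $\{0,1\}^n$ of Hamming weight $k$,
\[
\sum_{v\in S}\wt{f}(v)^2 \;=\; \sum_{T\in\mathcal{T}}\wt{g}(\mathbf{1}_T)^2 \;\le\; \sum_{\substack{w\in\{0,1\}^n\\ |w| = k}}\wt{g}(w)^2 \;\le\; \left(\frac{4c}{k}\right)^{k},
\]
the last step being Lemma~\ref{xorKKL} applied to $g$ with $y=0^n$ and $q=k\le c$.

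The only substantive ingredient is the change‑of‑variables identity $\wh{f\circ L}(w)=\wh{f}(L^{-\top}w)$ for $L\in\mathrm{GL}_n(\mathbb{F}_2)$, which is the one‑line computation indicated above, and I do not anticipate a genuine obstacle. The two points that deserve care are (i) keeping the transposes straight, so that block‑unions really do correspond to coordinate‑weight‑$k$ vectors — one should double‑check that precomposing with $\Phi^{-1}$ rather than $\Phi$ is what makes $\wt{g}(\mathbf{1}_T)=\wt{f}(\bigcup_{i\in T}P_i)$ — and (ii) observing that $T\mapsto\bigcup_{i\in T}P_i$ is injective on $k$‑subsets of $[m]$, so that the first equality above involves no double counting and $S$ genuinely injects into the weight‑$k$ slice of $\{0,1\}^n$.
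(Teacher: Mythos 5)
Your proposal is correct and follows essentially the same route as the paper: the paper also performs an invertible $\mathbb{F}_2$-linear change of variables (a triangular map sending each block's parity to its first coordinate) so that unions of $k$ blocks become weight-$k$ characters, and then invokes the level-$k$ hypercontractive bound. The only cosmetic differences are that you complete the block indicators to an arbitrary basis rather than using the explicit triangular matrix, and you cite Lemma~\ref{xorKKL} with $y=0^n$ where the paper re-derives the final inequality directly from Lemma~\ref{lem:kkl} with $\delta=k/c$.
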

\begin{proof}
For each $s \in P$, write $j_s$ for the first index in $s$. For each $z \in \lbrace 0, 1\rbrace^n$, let $z'$ be defined by: \[
z_j' = \begin{cases}
z \cdot s \bmod 2 & \mbox{if $j = j_s$ for some $s \in P$}\\
z_j & \mbox{otherwise.}
\end{cases}
\]
Then, define: \[
A' = \lbrace z' : z \in A \rbrace
\]
The transformation $z \rightarrow z'$ is bijective, as it is given by a
triangular matrix with ones along the diagonal. Therefore, $|A'| = |A|$, and so
$|A'|/2^n = 2^{-c}$. Now, let $f'$ be the indicator of $A'$. By the definition
of $z \rightarrow z'$, for all $V \subseteq P$,
\begin{align*}
\wt{f}\left(\bigcup V\right) &= \frac{1}{|A|}\sum_{z \in A} \prod_{s \in V}
(-1)^{z \cdot s}\\
&= \frac{1}{|A|}\sum_{z \in A} \prod_{s \in V} (-1)^{z'_{j_s}}\\
&= \frac{1}{|A'|}\sum_{z \in A'} \prod_{s \in V} (-1)^{z \cdot \lbrace j_s
\rbrace}\\
&= \wt{f'}\left(\bigcup_{s \in V}\lbrace j_s\rbrace \right)
\end{align*}
and so by applying Lemma~\ref{lem:kkl} with $\delta = k/c$,
\begin{align*}
\sum_{v \in S} \wt{f}(v)^2 &\le \sum_{v \in \bool^n : |v| = k} \wt{f'}(v)^2\\
&= \frac{2^{2n}}{|A'|^2} \sum_{v \in \bool^n : |v| = k} \widehat{f}(v)^2\\
&\le \delta^{-k} \frac{2^{2n}}{|A'|^2} \sum_{v \in \bool^n} \delta^{|v|}
\widehat{f}(v)^2\\
&\le \delta^{-k} \frac{2^{2n}}{|A'|^2} \paren*{\frac{|A'|}{2^n}}^\frac{2}{1 + \delta}\\
&= \delta^{-k} 2^\frac{2c\delta}{1 + \delta}\\
&\le \paren*{\frac{4c}{k}}^k\text{.}
\end{align*}
\end{proof}

We will use this with Parseval's identity to obtain a bound on coefficients of
$r$. Here $\norm{.}_2$ is the functional $\ell_2$-norm given by $\norm{f}_2^2
= 2^{-n}\sum_{x \in \bool^n} \abs{f(x)}^2$.
\begin{lemma}[Parseval]
For every function $f : \bool^n \rightarrow \mathbb{R}$, \[
\sum_{v \in \bool^n}\widehat{f}(v)^2 = \norm{f}_2^2\text{.}
\]
\end{lemma}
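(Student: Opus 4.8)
The plan is to prove Parseval's identity by the standard argument: the characters $\chi_v : x \mapsto (-1)^{v\cdot x}$, for $v \in \bool^n$, form an orthonormal basis of the space of real-valued functions on $\bool^n$ under the inner product $\langle f, g\rangle := 2^{-n}\sum_{x\in\bool^n} f(x)g(x)$, and the claimed identity is simply the Pythagorean theorem expressed in this basis.

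First I would record the orthogonality relation $2^{-n}\sum_{x\in\bool^n}(-1)^{w\cdot x} = \1bb(w = 0^n)$, which follows by factoring the sum over $x$ coordinatewise, each factor being $2$ when $w_i = 0$ and $0$ when $w_i = 1$. Applying this with $w = u \oplus v$ yields $\langle \chi_u, \chi_v\rangle = \1bb(u = v)$, so the $2^n$ characters are orthonormal and hence a basis of the $2^n$-dimensional function space; moreover the coordinate of $f$ along $\chi_v$ is exactly $\langle f, \chi_v\rangle = 2^{-n}\sum_x f(x)(-1)^{v\cdot x} = \widehat{f}(v)$, matching the Fourier-coefficient normalization implicit in the paper's earlier definitions of $\wt{F}_t$ and $\wt{\rb}$.

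Then I would conclude by expanding $f = \sum_v \widehat{f}(v)\chi_v$ and using bilinearity of $\langle\cdot,\cdot\rangle$ together with the orthonormality just established: $\norm{f}_2^2 = \langle f, f\rangle = \sum_{u,v}\widehat{f}(u)\widehat{f}(v)\langle\chi_u,\chi_v\rangle = \sum_v \widehat{f}(v)^2$. Equivalently, and perhaps simplest to write out, one can skip the basis language and compute directly: $\sum_v \widehat{f}(v)^2 = 2^{-2n}\sum_{x,y}f(x)f(y)\sum_v(-1)^{v\cdot(x\oplus y)}$, at which point the orthogonality relation collapses the inner sum to $2^n\1bb(x=y)$, leaving $2^{-n}\sum_x f(x)^2 = \norm{f}_2^2$. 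There is no genuine obstacle here; the only point requiring any care is keeping the $2^{-n}$ normalization factors in the right places so that the convention $\widehat{f}(v) = 2^{-n}\sum_x f(x)(-1)^{v\cdot x}$ used throughout the paper is the one applied.
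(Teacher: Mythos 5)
Your proof is correct, and the normalization you use ($\widehat{f}(v) = 2^{-n}\sum_x f(x)(-1)^{v\cdot x}$ together with $\norm{f}_2^2 = 2^{-n}\sum_x f(x)^2$) is exactly the one the paper relies on, as confirmed by its remark that $\sum_v \wt{f}(v)^2 = 2^c$ for the indicator of a set of density $2^{-c}$. The paper states Parseval as a classical fact and gives no proof, so there is nothing to compare against; your direct computation via the orthogonality relation $\sum_v (-1)^{v\cdot(x\oplus y)} = 2^n\1bb(x=y)$ is the standard argument and is complete.
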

Note that if, as in the previous lemma statements, $f$ is the indicator
function of a set $A$ such that $|A| = 2^{n-c}$, this implies that \[
\sum_{v \in \bool^n}\wt{f}(v)^2 = 2^c
\]
as $\wt{f} = 2^{n-c}\widehat{f}$.
\begin{lemma}
\label{lem:rbound}
For any $s < t \in T$, and any
$k \in \lbrack t- s-1\rbrack$, \[
\sum_{\substack{z \in \lbrace 0, 1\rbrace^{t-s-1} \\ \text{$|z|=k$}}}
\E[\Xb_{\lbrack s+1:t \rbrack}]{ \wt{\rb}(1\cdot z; \Fb_s, \Fb_t)^2 |
\Xb_{\le s}, \Bb_t } \le q(k)
\] 
where $q(k) = \begin{cases}
\left(\frac{8c}{k}\right)^k & \mbox{if $k \le c$}\\
2^c & \mbox{otherwise.}
\end{cases}$
\end{lemma}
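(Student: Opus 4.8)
The plan is to first rewrite the left-hand side as an average, over the message $\Fb_t$ reached at time $t$, of ordinary (single-ended) normalized Fourier coefficients of an indicator function, and then apply the hypercontractive bounds established above message-by-message. \textbf{Step 1 (reduction to a per-message sum).} Fix a realization of $\Xb_{\le s}$ and $\Bb_t$. Since $\Xb\sim\Uc(\bool^n)$ is independent of the permutations $\bpi,\bsigma$, and hence of $\Bb_t$, the string $\Xb_{[s+1:t]}$ is uniform on $\bool^{t-s}$ conditioned on $(\Xb_{\le s},\Bb_t)$; moreover $\Fb_s$ is determined by $(\Xb_{\le s},\Bb_s)$, so it equals some fixed $F_s$, and $\Fb_t$ is the message obtained by running $\gb_{s+1},\dots,\gb_t$ from $F_s$ on input $\Xb_{[s+1:t]}$. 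Writing $f_{F_t}:=\rb(\cdot;F_s,F_t)$ for the indicator on $\bool^{t-s}$ of the set of $x_{[s+1:t]}$ that take $F_s$ to $F_t$, this gives $\Pb{\Fb_t=F_t\mid\Xb_{\le s},\Bb_t}=\|f_{F_t}\|_1/2^{t-s}=:p_{F_t}$, while $\wt\rb(1\cdot z;F_s,F_t)$ is exactly the normalized Fourier coefficient $\wt{f_{F_t}}(1\cdot z)$. Hence the left-hand side equals $\sum_{F_t}p_{F_t}\sum_{z:|z|=k}\wt{f_{F_t}}(1\cdot z)^2$, a sum with at most $2^c$ nonzero terms (one per possible message of player $t$) and $\sum_{F_t}p_{F_t}=1$.

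\textbf{Step 2 (the case $k>c$).} Here I would just use Parseval: for each $F_t$ one has $\sum_w\wt{f_{F_t}}(w)^2=2^{t-s}/\|f_{F_t}\|_1=1/p_{F_t}$, so $\sum_{z:|z|=k}\wt{f_{F_t}}(1\cdot z)^2\le 1/p_{F_t}$ and the total is at most $\sum_{F_t}p_{F_t}\cdot(1/p_{F_t})\le 2^c=q(k)$.

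\textbf{Step 3 (the case $k\le c$).} For a message $F_t$ write $p_{F_t}=2^{-c_{F_t}}$, i.e.\ $\|f_{F_t}\|_1=2^{(t-s)-c_{F_t}}\ge 2^{(t-s)-c'}$ for every $c'\ge c_{F_t}$; applying Lemma~\ref{lm:fouriercoeffs} with $c'=\max(\lceil c_{F_t}\rceil,k)$ gives $\sum_{z:|z|=k}\wt{f_{F_t}}(1\cdot z)^2\le\bigl(4\max(\lceil c_{F_t}\rceil,k)/k\bigr)^k$. I would then split the sum over $F_t$ according to whether $c_{F_t}\le c$. The messages with $c_{F_t}\le c$ contribute at most $(4c/k)^k\sum_{F_t}p_{F_t}\le(4c/k)^k$. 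The messages with $c_{F_t}>c$ I would group into bands $c_{F_t}\in(j,j+1]$, $j\ge c$ integer: each band holds at most $2^{j+1}$ messages (disjoint fibers of size $\ge 2^{(t-s)-j-1}$) and there are at most $2^c$ messages overall, so this part is at most $\sum_{j\ge c}\min(2^{j+1},2^c)\,2^{-j}\bigl(4(j+1)/k\bigr)^k$, a geometric-type sum in which the exponentially decaying weights $2^{-j}$ dominate the polynomially growing bound $(4(j+1)/k)^k$; carrying it out (using $k\le c$ to compare $(4(c+O(1))/k)^k$ with $(4c/k)^k$) bounds it by a constant times $(4c/k)^k$. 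Adding the two contributions and absorbing the constant into the factor $2^k$ built into $q(k)=(8c/k)^k=2^k(4c/k)^k$ gives the claim; the split at $c_{F_t}=c$ (rather than Parseval for all small fibers, which would cost an extra $2^c$) is what keeps the constant at $8$.

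\textbf{Main obstacle.} The delicate step is Step 3: there can be up to $2^c$ messages, and a message whose fiber has large codimension carries Fourier mass on degree-$(k+1)$ characters far exceeding $q(k)$, so the bound cannot be obtained term by term. The resolution is to weight each term by the probability $p_{F_t}$ of actually reaching that message — which decays exponentially in the fiber's codimension — and to balance this decay against the merely polynomial growth of the hypercontractive bound while respecting the cap of $2^c$ messages; pinning the constant down so the final bound is exactly $q(k)$, and not just a constant multiple of it, is what forces the case analysis.
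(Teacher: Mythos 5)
Your Step 1 is exactly the paper's reduction (partition $\bool^{t-s}$ into fibers $A(F_t)$, weight by $p_{F_t}=2^{-c_{F_t}}$, apply Lemma~\ref{lm:fouriercoeffs} per fiber), your Step 2 is the paper's crude Parseval/counting bound, and your split in Step 3 at $c_{F_t}\le c$ versus $c_{F_t}>c$ is the paper's split of the fiber family into $\mathcal{A}^-$ and $\mathcal{A}^+$. So the architecture matches. The difference is in how the high-codimension fibers are handled and, crucially, in where the case boundary sits: the paper runs the high-codimension argument only when $c\ge 2k$, and disposes of the entire range $c<2k$ (not just $k>c$) by the crude bound $\sum_A p_A q_A(k)\le|\mathcal{A}|\le 2^c\le q(k)$, using $q(k)=(8c/k)^k\ge 8^{c/2}\ge 2^c$ when $k>c/2$.

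The gap is in your Step 3 as stated. First, the claim that the banded sum $\sum_{j\ge c}\min(2^{j+1},2^c)\,2^{-j}(4(j+1)/k)^k$ is ``a constant times $(4c/k)^k$'' is false in the regime $k\le c<2k$: the summand $2^{-j}(4(j+1)/k)^k$ is \emph{increasing} in $j$ up to $j\approx k/\ln 2\approx 1.44k$, which exceeds $c$ there, so the sum is dominated by bands beyond $c$ and is of order $2^{c}\bigl(4/(e\ln 2)\bigr)^k$ — exponentially larger than $(4c/k)^k$ (e.g.\ for $c=k$ it is about $(4.25)^k$ versus $4^k$). The conclusion $\le q(k)$ still happens to hold there, but for the paper's reason ($2^c\le 8^{c/2}\le q(k)$), not yours. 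Second, even when $c\ge 2k$ and the series does decay geometrically, summing it band by band costs a ratio $2^{-1}e^{k/c}$ per band and an extra $(1+1/c)^k$ from replacing $c+1$ by $c$, yielding an absolute constant around $10$ in front of $(4c/k)^k$; this cannot be absorbed into the factor $2^k$ of $q(k)=2^k(4c/k)^k$ for $k\le 3$. The paper avoids both problems by (i) treating all of $c<2k$ with the $2^c$ bound and (ii) for $c\ge 2k$ using monotonicity of $x\mapsto(4x/k)^k2^{-x}$ on $[2k,\infty)$ to bound \emph{every} high-codimension term by $(4c/k)^k2^{-c}$ and multiplying by the total message count $2^c$, which loses only a factor $2$. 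Your argument is repairable by adopting exactly those two adjustments, but as written the series estimate does not close.
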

\begin{proof}
Condition on $\Bb_t = B_t$, $\Xb_{\le s} = y$, for any $B_t, y$ in the support of
$\Bb_t, \Xb_{\le s}$. Let $F_s$ be the resulting value of $\Fb_s$. Conditioned on
these, each $x \in \bool^{t-s}$ results in a fixed value of $\Fb_t$ when
$\Xb_{\brac*{s+1 : t}} = x$.  As $\Fb_t$ is supported on at most $2^c$ elements,
this gives a size at most $2^c$ partition of $\bool^{t-s}$. Call this partition
$\Ac$, and for each $F_t$ in the support of $\Fb_t$ conditioned on $\Bb_t = B_t$,
$\Xb_{\le s} = y$, call the (unique) corresponding element of the partition
$A(F_t)$.

For each $A \in \Ac$, write $c_A$ for $n - \log |A|$, and $p_A$ for
$\Pb{\Fb_t = F_t | \Xb_{\le s} = y, \Bb_t = B_t}$, where $A(F_t) = A$.  As
$\Fb_t = F_t$ iff $\Xb_{\brac*{s+1:t}} \in A$, $p_A = |A|2^{-n} = 2^{-c_A}$.  Note that
by Lemma~\ref{lm:fouriercoeffs} and Parseval's equality, for each $F_t$ in the
support of $\Fb_t$ conditioned on $\Bb_t = B_t$, $\Xb_{\le s} = y$, \[
\sum_{\substack{z \in \lbrace 0, 1\rbrace^{t-s-1} \\ \text{$|z|=k$}}} \wt{\rb}(1\cdot z; F_s, F_t)^2 \le q_{A(F_t)}(k)
\]
where for each $A \in \Ac$, \[
q_A(k) = \begin{cases}
\min\left(2^{c_A}, \left(\frac{4c_A}{k}\right)^k\right) & \mbox{if $k \le
c_A$}\\ 2^{c_A} & \mbox{otherwise}
\end{cases}
\] (note the different constants from $q$ in the first line). We will
consider two cases, based on the value of $k = |z|$.

\paragraph{Case 1: $c < 2k$} For each $A \in \mathcal{A}$, $q_A(k) \le 2^{c_A}$
and so $p_Aq_A(k) \le 1$. Therefore,
\begin{align*}
\sum_{\substack{z \in \lbrace 0, 1\rbrace^{t-s-1} \\ \text{$|z|=k$}}} \E[\Xb_{\lbrack s+1:t \rbrack}] { \wt{\rb}(1\cdot z;
\Fb_s, \Fb_t)^2| \Xb_{\le s} = y, \Bb_t = B_t} &\le \sum_{A \in \mathcal{A}}
p_Aq_A(k)\\
&\le |\mathcal{A}|\\
&\le 2^{c}\\
&\le q(k)
\end{align*}
as if $k \ge c$, $q(k) = 2^c$, and if $k \in (c/2, c\rbrack$, $q(k) \ge
8^{c/2} \ge 2^c$.

\paragraph{Case 2: $c \ge 2k$} Let $\mathcal{A} = \mathcal{A}^+ \cup \mathcal{A}^-$, with
$\mathcal{A}^+$ containing all $A$ in $\mathcal{A}$ such that $c_A > c$ and $\mathcal{A}^-$
containing everything else. Then $q_A(k) \le \left(\frac{4c}{k}\right)^k$ for
each $A$ in $\mathcal{A}^-$ and so \[
\sum_{a \in A^-} p_Aq_A(k) \le \left(\frac{4c}{k}\right)^k.
\]
At the same time for each $A \in A^+$, $c_A > c \ge 2k$, so
\begin{align*}
p_Aq_A &= \left(\frac{4c_A}{k}\right)^k 2^{-c_A}\\
&\le \left(\frac{4c}{k}\right)^k2^{-c}
\end{align*}
as for $x \ge 2k$, $\frac{d}{dx}(x^k2^{-x}) = (k - x)x^{k - 1}2^{-x} < 0$.

Therefore,
\begin{align*}
\sum_{\substack{z \in \lbrace 0, 1\rbrace^{t-s-1} \\ \text{$|z|=k$}}} \E[\Xb_{\lbrack s+1:t \rbrack}] { \wt{\rb}(1\cdot z; \Fb_s, \Fb_t)^2 |
\Xb_{\le s} = y, \Bb_t = B_t } &\le \sum_{A \in \mathcal{A}^+} p_Aq_A(k) +
\sum_{A \in \mathcal{A}^-} p_Aq_A(k)\\ &\le \left(\frac{4c}{k}\right)^k +
|\mathcal{A}|\left(\frac{4c}{k}\right)^k2^{-c}\\
&\le 2\left(\frac{4c}{k}\right)^k\\
&\le q(k)
\end{align*}
and so the result follows.
\end{proof}

\begin{lemma}
\label{lem:rbound-paths}
For any $s \le t \in T$, and any
$k \in \lbrack t- s\rbrack$, \[
\sum_{\substack{z \in \lbrace 0, 1\rbrace^{t-s} \\ \text{$z$ a union of $k$
paths}}}
\E[\Xb_{\lbrack s+1:t \rbrack}]{ \wt{\rb}(z; \Fb_s, \Fb_t)^2 |
\Xb_{\le s}, \Bb_t } \le q(k)
\] 
where $q(k) = \begin{cases}
\left(\frac{8c}{k}\right)^k & \mbox{if $k \le c$}\\
2^c & \mbox{otherwise.}
\end{cases}$
\end{lemma}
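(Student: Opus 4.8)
The plan is to prove Lemma~\ref{lem:rbound-paths} by running the proof of Lemma~\ref{lem:rbound} essentially verbatim, with Lemma~\ref{lm:fouriercoeffs-paths} (the ``size-$k$ union of blocks of a fixed partition'' hypercontractive bound) in place of Lemma~\ref{lm:fouriercoeffs} (the ``weight-$k$ tail after a fixed coordinate'' bound). First I would condition on $\Bb_t = B_t$ and $\Xb_{\le s} = y$ for arbitrary $B_t$, $y$ in the support, and let $F_s$ be the induced value of $\Fb_s$ and $O_t$ the graph described by $B_t$. Exactly as in Lemma~\ref{lem:rbound}, each $x \in \bool^{t-s}$ induces a fixed value of $\Fb_t$ when $\Xb_{[s+1:t]} = x$, so $\bool^{t-s}$ is partitioned into at most $2^c$ classes $\Ac$; for the class $A$ corresponding to a value $F_t$ we set $c_A = (t-s) - \log \abs{A}$, so that $\Pb{\Fb_t = F_t \mid \Bb_t = B_t, \Xb_{\le s} = y} = 2^{-c_A}$ and $\rb(\cdot\,; F_s, F_t)$ is the indicator of $A$, a set of density $2^{-c_A}$ in $\bool^{t-s}$.

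The only new ingredient is the choice of partition. Conditioned on $\Bb_t = B_t$ the subgraph $O_{[s+1:t]}$ of $O_t$ spanned by the (distinct) edges $\bb_{s+1}, \dots, \bb_t$ arriving after time $s$ is determined, and I would take $P$ to be the partition of $[t-s]$, identified with this edge set, into the edge sets of the connected components of $O_{[s+1:t]}$. Since a vertex-disjoint union of $k$ path components of $O_{[s+1:t]}$ is precisely a union of $k$ blocks of $P$, the set of $z$ ranged over in the lemma is a set of size-$k$ unions of elements of $P$; so applying Lemma~\ref{lm:fouriercoeffs-paths} to the indicator $\rb(\cdot\,; F_s, F_t)$ (with ambient dimension $t-s$ and density parameter $c_A$), together with Parseval ($\sum_v \wt\rb(v; F_s, F_t)^2 = 2^{c_A}$), gives $\sum_{z \text{ a union of } k \text{ paths}} \wt\rb(z; F_s, F_t)^2 \le q_A(k)$, where $q_A(k) = \min\paren*{2^{c_A}, (4c_A/k)^k}$ if $k \le c_A$ and $q_A(k) = 2^{c_A}$ otherwise. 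Averaging over classes with weight $p_A = 2^{-c_A}$ then yields $\sum_{z \text{ a union of } k \text{ paths}} \E[\Xb_{[s+1:t]}]{\wt\rb(z; \Fb_s, \Fb_t)^2 \mid \Xb_{\le s} = y, \Bb_t = B_t} \le \sum_{A \in \Ac} p_A q_A(k)$.

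The rest is the two-case estimate copied from Lemma~\ref{lem:rbound}: if $c < 2k$ then $p_A q_A(k) \le p_A 2^{c_A} = 1$ for every $A$, so the sum is at most $\abs{\Ac} \le 2^c \le q(k)$; if $c \ge 2k$ then, splitting $\Ac$ according to whether $c_A > c$ and using $q_A(k) \le (4c/k)^k$ on the small part and $p_A q_A(k) \le 2^{-c}(4c/k)^k$ on the large part (monotonicity of $x \mapsto x^k 2^{-x}$ for $x \ge 2k$, as in Lemma~\ref{lem:rbound}), the sum is at most $2(4c/k)^k \le (8c/k)^k = q(k)$. This bound is uniform in $y$ and $B_t$, so the outer expectation over $(\Xb_{\le s}, \Bb_t)$ preserves it. I do not anticipate a genuine obstacle: the hypercontractive content is already isolated in Lemma~\ref{lm:fouriercoeffs-paths} and the combinatorics are inherited from Lemma~\ref{lem:rbound}. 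The single point that needs care is the partition step — one must read ``$z$ a union of $k$ paths'' as ``$O^z_{[s+1:t]}$ is a union of $k$ whole path-components of $O_{[s+1:t]}$'' (rather than of arbitrary sub-paths), which is exactly the form in which the lemma is used downstream, since for $z \bsim_t \beta$ the edges of $z$ arriving after time $s$ always constitute a union of whole components of $O_{[s+1:t]}$.
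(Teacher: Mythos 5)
Your proposal is correct and follows essentially the same route as the paper, which likewise proves this lemma by repeating the proof of Lemma~\ref{lem:rbound} with Lemma~\ref{lm:fouriercoeffs-paths} substituted for Lemma~\ref{lm:fouriercoeffs}. Your explicit identification of the partition $P$ as the component edge-sets of the post-time-$s$ board (a detail the paper leaves implicit) is a sensible and accurate reading of how Lemma~\ref{lm:fouriercoeffs-paths} is meant to be invoked.
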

\begin{proof}
The proof follows the proof of Lemma~\ref{lem:rbound} except for its reliance on Lemma~\ref{lm:fouriercoeffs-paths} as opposed to Lemma~\ref{lm:fouriercoeffs}.
Condition on $\Bb_t = B_t$, $\Xb_{\le s} = y$, for any $B_t, y$ in the support of
$\Bb_t, \Xb_{\le s}$. Let $F_s$ be the resulting value of $\Fb_s$. Conditioned on
these, each $x \in \bool^{t-s}$ results in a fixed value of $\Fb_t$ when
$\Xb_{\brac*{s+1 : t}} = x$.  As $\Fb_t$ is supported on at most $2^c$ elements,
this gives a size at most $2^c$ partition of $\bool^{t-s}$. Call this partition
$\Ac$, and for each $F_t$ in the support of $\Fb_t$ conditioned on $\Bb_t = B_t$,
$\Xb_{\le s} = y$, call the (unique) corresponding element of the partition
$A(F_t)$.

For each $A \in \Ac$, write $c_A$ for $n - \log |A|$, and $p_A$ for
$\Pb{\Fb_t = F_t | \Xb_{\le s} = y, \Bb_t = B_t}$, where $A(F_t) = A$.  As
$\Fb_t = F_t$ iff $\Xb_{\brac*{s+1:t}} \in A$, $p_A = |A|2^{-n} = 2^{-c_A}$.  Note that
by Lemma~\ref{lm:fouriercoeffs-paths} and Parseval's equality, for each $F_t$ in the
support of $\Fb_t$ conditioned on $\Bb_t = B_t$, $\Xb_{\le s} = y$, \[
\sum_{\substack{z \in \lbrace 0, 1\rbrace^{t-s} \\ \text{$z$ a union of $k$
paths}}} \wt{\rb}(z; F_s, F_t)^2 \le q_{A(F_t)}(k)
\]
where for each $A \in \Ac$, \[
q_A(k) = \begin{cases}
\min\left(2^{c_A}, \left(\frac{4c_A}{k}\right)^k\right) & \mbox{if $k \le
c_A$}\\ 2^{c_A} & \mbox{otherwise}
\end{cases}
\] (note the different constants from $q$ in the first line). We will
consider two cases, based on the value of $k = |z|$.

\paragraph{Case 1: $c < 2k$} For each $A \in \mathcal{A}$, $q_A(k) \le 2^{c_A}$
and so $p_Aq_A(k) \le 1$. Therefore,
\begin{align*}
\sum_{\substack{z \in \lbrace 0, 1\rbrace^{t-s} \\ \text{$z$ a union of $k$
paths}}} \E[\Xb_{\lbrack s+1:t \rbrack}] { \wt{\rb}(z;
\Fb_s, \Fb_t)^2| \Xb_{\le s} = y, \Bb_t = B_t} &\le \sum_{A \in \mathcal{A}}
p_Aq_A(k)\\
&\le |\mathcal{A}|\\
&\le 2^{c}\\
&\le q(k)
\end{align*}
as if $k \ge c$, $q(k) = 2^c$, and if $k \in (c/2, c\rbrack$, $q(k) \ge
8^{c/2} \ge 2^c$.

\paragraph{Case 2: $c \ge 2k$} Let $\mathcal{A} = \mathcal{A}^+ \cup \mathcal{A}^-$, with
$\mathcal{A}^+$ containing all $A$ in $\mathcal{A}$ such that $c_A > c$ and $\mathcal{A}^-$
containing everything else. Then $q_A(k) \le \left(\frac{4c}{k}\right)^k$ for
each $A$ in $\mathcal{A}^-$ and so \[
\sum_{a \in A^-} p_Aq_A(k) \le \left(\frac{4c}{k}\right)^k.
\]
At the same time for each $A \in A^+$, $c_A > c \ge 2k$, so
\begin{align*}
p_Aq_A &= \left(\frac{4c_A}{k}\right)^k 2^{-c_A}\\
&\le \left(\frac{4c}{k}\right)^k2^{-c}
\end{align*}
as for $x \ge 2k$, $\frac{d}{dx}(x^k2^{-x}) = (k - x)x^{k - 1}2^{-x} < 0$.

Therefore,
\begin{align*}
\sum_{\substack{z \in \lbrace 0, 1\rbrace^{t-s} \\ \text{$z$ a union of $k$
paths}}} \E[\Xb_{\lbrack s+1:t \rbrack}] { \wt{\rb}(z; \Fb_s, \Fb_t)^2 |
\Xb_{\le s} = y, \Bb_t = B_t } &\le \sum_{A \in \mathcal{A}^+} p_Aq_A(k) +
\sum_{A \in \mathcal{A}^-} p_Aq_A(k)\\ &\le \left(\frac{4c}{k}\right)^k +
|\mathcal{A}|\left(\frac{4c}{k}\right)^k2^{-c}\\
&\le 2\left(\frac{4c}{k}\right)^k\\
&\le q(k)
\end{align*}
and so the result follows.
\end{proof}

We now give a proof of Lemma~\ref{lm:evolution}. 

\begin{proofof}{Lemma~\ref{lm:evolution}}
By Lemma~\ref{lm:decomposition}, for every $t\in T$, every $s<t$, and every
$z\in \bool^t$, we have
\[
\wt{\Fb}_t(z)=\E[\Fb_s]{\wt{\Fb}_s(z_{\leq s}) \cdot
\wt{\rb}(z_{[s+1:t]}; \Fb_s, \Fb_t)\middle|\Fb_t, \Bb_t}.
\]
 As exactly
 one extension event $\grow(z, s, t)$ will hold, we have
\begin{equation*}
\begin{split}
\wt{\Fb}_t(z)=\sum_{s = 1}^{t-1}    \grow(z, s, t)\cdot\E[\Fb_s]
{\wt{\Fb}_s(z_{\leq s}) \cdot \wt{\rb}(z_{[s+1:t]}; \Fb_s, \Fb_t)\middle|\Fb_t,
\Bb_t},
\end{split}
\end{equation*}
where we condition on $\Bb_t$ on both sides, so $\grow(z, s, t)$ is well-defined.
Since the above sum contains only one nonzero term, we have by Jensen's
inequality
\begin{equation}\label{eq:8h238fg23D}
\begin{split}
\wt{\Fb}_t(z)^2\leq \sum_{s = 1}^{t-1}  \grow(z, s, t)\cdot  \E[\Fb_s]
{\wt{\Fb}_s(z_{\leq s})^2 \cdot \wt{\rb}(z_{[s+1:t]}; \Fb_s, \Fb_t)^2|\Fb_t,
\Bb_t}.
\end{split}
\end{equation}

Summing~\eqref{eq:8h238fg23D} over all $z$ such that $z\bsim_t \beta$, taking
expectations over $\Xb$ and noting that for every $s\in T$ such that $\grow(z,
s, t) = 1$ one has $z_{\leq s}\bsim_s \alpha$ for some $\alpha\in \beta-1$ (recall Definition~\ref{def:sim})), we
get
\begin{equation}\label{eq:9324hg923g}
\begin{split}
\E[\Xb]{ \Hb_\beta^t | \Bb_t } &= \E[\Xb] { \sum_{z: z\bsim_t \beta}
\wt{\Fb}_t(z)^2 \middle| \Bb_t  }\\
&= \sum_{s = 1}^{t-1} \sum_{z: z\bsim_t \beta} \grow(z, s, t)\cdot
\E[\Fb_t] { \E[\Fb_s]{\wt{\Fb}_s(z_{\leq s})^2 \cdot
\wt{\rb}(z_{[s+1:t]}; \Fb_s, \Fb_t)^2 \middle| \Fb_t, \Bb_t}\middle| \Bb_t}\\
&= \sum_{s = 1}^{t-1} \sum_{z: z\bsim_t \beta} \grow(z, s, t)\cdot
\E[\Fb_s, \Fb_t] {\wt{\Fb}_s(z_{\leq s})^2 \cdot
\wt{\rb}(z_{[s+1:t]}; \Fb_s, \Fb_t)^2 \middle| \Bb_t}\\
&= \sum_{s = 1}^{t-1} \sum_{z: z\bsim_t \beta} \grow(z, s, t)\cdot
\E[\Fb_s] {\wt{\Fb}_s(z_{\leq s})^2 \cdot \E[\Fb_t]
{\wt{\rb}(z_{[s+1:t]}; \Fb_s, \Fb_t)^2 \middle| \Fb_s, \Bb_t} \middle| \Bb_t}\\
\end{split}
\end{equation}

For every $\beta\in \zpp$ and $\alpha\in \beta-1$ we let $\beta^{-\alpha}$ be given by setting $\beta\lbrack 1 \rbrack =
\alpha\lbrack 1 \rbrack$. Splitting each $z
\bsim_t \beta$ into the co-ordinates before $\grow(z,s,t)$ holds, namely $z_{\leq s}$, and those after, namely $z_{[s+1:t]}$ (note that the latter all corresponding to isolated edges in $\Ob_t$), we get

\begin{equation}\label{eq:0923ht902jgsfadfd}
\begin{split}
&\sum_{z: z\bsim_t \beta} \grow(z, s, t)\cdot \E[\Fb_s] {\wt{\Fb}_s(z_{\leq
s})^2 \cdot \E[\Fb_t] {\wt{\rb}(z_{[s+1:t]}; \Fb_s, \Fb_t)^2 \middle| \Fb_s,
\Bb_t} \middle| \Bb_t}\\
&\le \sum_{a=0}^{|\beta|} \sum_{\substack{\alpha\in \beta-1,\\
|\beta-\alpha|=a}} \sum_{\substack{r\in \{0, 1\}^s:\\ r\bsim_s \alpha\\
r\cdot 1 \bsim_{s+1} \beta^{-\alpha}}} \E[\Fb_s] {\wt{\Fb}_s(r)^2
\sum_{\substack{p\in \{0, 1\}^{t-s-1}, |p|=a,\\ r \cdot 1 \cdot p \bsim_t
\beta}}\E[\Xb]{\wt{\rb}(1\cdot p; \Fb_s, \Fb_t)^2 \middle|\Fb_s, \Bb_t } \middle|
\Bb_t }\\
\end{split}
\end{equation}

Using Lemma~\ref{lem:rbound} we upper bound the sum in the expectation in~\eqref{eq:0923ht902jgsfadfd} by
\begin{equation}\label{eq:r-kkl}
\begin{split}
\sum_{\substack{p\in \bool^{t-s-1}, |p|=a,\\ r\cdot 1 \cdot p
\bsim_t\beta}}\E[\Xb]{\wt{\rb}(1\cdot p; \Fb_s, \Fb_t)^2|\Fb_s, \Bb_t}\leq
\sum_{\substack{p\in \{0, 1\}^{[s+2:t]} \\|p|=a}}\E[\Xb]{\wt{\rb}(1\cdot p;
\Fb_s, \Fb_t)^2 \middle|\Fb_s, \Bb_t}\leq q(a).
\end{split}
\end{equation}
Indeed, note that the bound of Lemma~\ref{lem:rbound} holds conditioned on \emph{any} value
of $X_{\le s}$, and thus certainly holds conditioned on $\Fb_s$. Substituting
the above into~\eqref{eq:9324hg923g}, we get 
\begin{equation}\label{eq:9ihfifhFIHC3g}
\begin{split}
\E[\Xb]{\Hb_\beta^t \middle| \Bb_t}
&\leq \sum_{s = 1}^{t-1}  \sum_{a=0}^{|\beta|} q(a) \sum_{\substack{\alpha\in \beta-1,\\
|\beta-\alpha|=a}} \sum_{\substack{r\in \{0, 1\}^s:\\ r\bsim_s \alpha\\
r\cdot 1 \bsim_{s+1} \beta^{-\alpha}}}
\E[\Fb_s] {\wt{\Fb}_s(r)^2 \middle| \Bb_t }\\
&=  \sum_{s = 1}^{t-1}  \sum_{a=0}^{|\beta|} q(a) \sum_{\substack{\alpha\in
\beta-1,\\ |\beta-\alpha|=a}} \sum_{\substack{r\in \{0, 1\}^s:\\ r\bsim_s
\alpha\\ r\cdot 1 \bsim_{s+1} \beta^{-\alpha}}} \E[\Xb]
{\wt{\Fb}_s(r)^2 \middle| \Bb_s}
\end{split}
\end{equation}
as $\Fb_s$ is independent of $\Bb_{s+1:t}$.  Note that the condition $r\bsim_s \alpha$ and $r\cdot 1 \bsim_{s+1} \beta^{-\alpha}$ above makes sense since we condition on $\Bb_t$ on both sides.   Taking expectations with
respect to $\Bb_t$ of both sides of~\eqref{eq:9ihfifhFIHC3g}, we get
\begin{equation*}
\begin{split}
\E[\Xb, \Bb_t]{\Hb_\beta^t}&\leq \sum_{s = 1}^{t-1} \sum_{a=0}^{|\beta|} q(a)
\sum_{\substack{\alpha\in \beta-1,\\ |\beta-\alpha|=a}}  \E[\Bb_s]
{ \E[\Bb_{s+1:t}]{ \sum_{\substack{r\in \{0,
1\}^s:\\ r\bsim_s \alpha\\ r\cdot 1 \bsim_{s+1} \beta^{-\alpha}}}\E[\Xb]
{ \wt{\Fb}_s(z_{\leq s})^2 \middle| \Bb_s} \middle| \Bb_s
} }\\
&\leq \sum_{s = 1}^{t-1} \sum_{a=0}^{|\beta|} q(a) \sum_{\substack{\alpha\in
\beta-1,\\ |\beta-\alpha|=a}}  \E[\Bb_s, \Fb_s] {
\sum_{\substack{r\in \{0, 1\}^s:\\ r\bsim_s \alpha}}\Pb[\bb_{s+1}]{
r\cdot 1 \bsim_{s+1} \beta^{-\alpha} | \Bb_s } \wt{\Fb}_s(z_{\leq s})^2 \middle|
\Bb_s }\\
&= \sum_{s = 1}^{t-1} \sum_{\substack{\alpha\in \beta-1}}  q(|\beta-\alpha|)
\sum_{z: z\bsim_s \alpha} \E[\Xb, \Bb_s] {\wt{\Fb}_s(z_{\leq s})^2
\cdot p_s(\alpha, \beta, \Bb_s)}\\
&= \sum_{s = 1}^{t-1} \sum_{\substack{\alpha\in \beta-1}}  q(|\beta-\alpha|)
\cdot \E[\Xb, \Bb_s] {\Hb_\alpha^s \cdot p_s(\alpha, \beta, \Bb_s)}
\end{split}
\end{equation*}
as required. 

For the base case where $\beta$ is all size 1 components, and as $\Fb_0$ is
always the trivial function on $\emptyset$, we apply
Lemma~\ref{lm:decomposition} to get
\begin{align*}
\Hb_\beta^t &\le \sum_{z : z \bsim_t \beta} \E[\Xb, \Bb_t]{ \wt{\rb}(z, \Fb_0,
\Fb_t)^2 }\\
&\le q(|\beta|)
\end{align*}
by Lemma~\ref{lem:rbound}.
\end{proofof}

\subsection{Main lemma (Lemma~\ref{lem:finalH})}\label{sec:main-lemma}

The main result of this section is Lemma~\ref{lem:finalH}, restated here for convenience of the reader:
\lemfinalH*

Before proving the lemma, we introduce a useful definition of potential function $p$ below, prove a useful property (Claim~\ref{cl:s-prop}), as well as establish two technical lemmas. The first lemma,  namely Lemma~\ref{lem:hbetabound} below, bounds the expected evolution of $\Hb^t_\beta$ until almost the end of the stream. Lemma~\ref{lem:almostclosed}  provides a useful combinatorial characterization of the board towards the end of the stream.
\begin{definition}
For every $\beta\in \zpp$ define $\nu(\beta) = \sum_{i \in \beta} \left\lceil \frac{i}{2} \right\rceil$.
\end{definition}

We will need the following properties of $\nu(\beta)$:
\begin{claim}\label{cl:s-prop}
For every $\beta\in \zpp$ and every $\alpha\in \beta-1$ one has $\nu(\alpha)\leq \nu(\beta)$.
\end{claim}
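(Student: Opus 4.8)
The plan is to reduce the claim to the fact that $\nu$ does not decrease under any of the elementary moves out of which the relation $\alpha \in \beta - 1$ is built. By Definition~\ref{def:downset}, $\alpha \in \beta - 1$ means that $\beta$ is obtained from $\alpha$ by a single extension or a single merge, followed by adding some number of copies of $1$. Hence it suffices to prove three things: (i) an extension does not decrease $\nu$; (ii) a merge does not decrease $\nu$; (iii) adding a $1$ does not decrease $\nu$. The claim then follows by composing these steps along any witnessing sequence from $\alpha$ to $\beta$.

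Item (iii) is immediate, since adding a $1$ increases $\nu$ by $\lceil 1/2\rceil = 1$. For item (i), replacing an element $i\in\alpha$ by $i+1$ changes $\nu$ by $\lceil (i+1)/2\rceil - \lceil i/2\rceil$, which is $\ge 0$ because $x \mapsto \lceil x/2\rceil$ is non-decreasing. The only move that needs a computation is the merge in item (ii): it replaces two elements $a, b$ of $\alpha$ by the single element $a+b+1$, so the change in $\nu$ is $\lceil (a+b+1)/2\rceil - \lceil a/2\rceil - \lceil b/2\rceil$, and I would verify this is non-negative by a three-case parity analysis. If $a$ and $b$ are both even then $\lceil a/2\rceil + \lceil b/2\rceil = (a+b)/2$ while $\lceil (a+b+1)/2\rceil = (a+b)/2 + 1$, a gain of $1$; if exactly one of $a,b$ is odd then $\lceil a/2\rceil + \lceil b/2\rceil = (a+b+1)/2 = \lceil (a+b+1)/2\rceil$, no change; and if both are odd then $\lceil a/2\rceil + \lceil b/2\rceil = (a+b)/2 + 1$ and $\lceil (a+b+1)/2\rceil = (a+b)/2 + 1$, again no change. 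In every case the merge does not decrease $\nu$.

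Combining (i)--(iii), any sequence of moves taking $\alpha$ to $\beta$ can only increase $\nu$, so $\nu(\alpha) \le \nu(\beta)$, as claimed. I do not anticipate any real difficulty here: the entire content is the elementary parity arithmetic in the merge case, and the identity $\lceil a/2\rceil + \lceil b/2\rceil = \bigl(a + b + (a\bmod 2) + (b\bmod 2)\bigr)/2$ together with $\lceil (a+b+1)/2\rceil = \bigl(a + b + 2 - ((a+b)\bmod 2)\bigr)/2$ makes even that bookkeeping routine, should one prefer to avoid explicit cases.
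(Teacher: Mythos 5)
Your proof is correct and follows essentially the same route as the paper's: decompose the passage from $\alpha$ to $\beta$ into the elementary moves (extension, merge, added $1$'s) and check that each is non-decreasing for $\nu$, with the merge case settled by the same parity analysis of $\lceil (a+b+1)/2\rceil - \lceil a/2\rceil - \lceil b/2\rceil$. No gaps.
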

\begin{proof}
We first consider {\bf extensions}, i.e.\ suppose that $\beta$ can be obtained from $\alpha$ by incrementing one of the elements in $\alpha$ by $1$, followed by possibly adding an arbitrary number of $1$'s to $\alpha$. It suffices to note that every added $1$ contributes $0$ or $1$ to $\nu(\beta)-\nu(\alpha)$, and the increment similarly contributes either $1$ or $0$.

We now consider {\bf merges}, i.e.\ suppose that $\beta$ is obtained from $\alpha$ by replacing two elements $a, b\in \alpha$ with $a+b+1$, followed by possibly adding an arbitrary number of $1$'s to $\alpha$. As before, every added $1$ contributes $0$ or $1$ to $\nu(\beta)-\nu(\alpha)$. We now verify that $\left\lceil \frac{a+b+1}{2}\right\rceil \geq \left\lceil {\frac{a}{2}}\right\rceil+ \left\lceil {\frac{b}{2}}\right\rceil$ for all non-negative integers $a, b$. If one of $a$ and $b$ is even (suppose it is $a$), we get 
$$
\left\lceil \frac{a+b+1}{2}\right\rceil = \left\lceil\frac{a}{2}\right\rceil+\left\lceil \frac{b+1}{2}\right\rceil \geq \left\lceil {\frac{a}{2}}\right\rceil+ \left\lceil {\frac{b}{2}}\right\rceil.
$$
If both are odd, then 
$$
\left\lceil \frac{a+b+1}{2}\right\rceil=\left\lceil \frac{a+b+2}{2}\right\rceil =\left\lceil {\frac{a+1}{2}}\right\rceil+ \left\lceil {\frac{b+1}{2}}\right\rceil \geq \left\lceil {\frac{a}{2}}\right\rceil+ \left\lceil {\frac{b}{2}}\right\rceil,
$$
as required.
\end{proof}

\begin{lemma}
\label{lem:hbetabound}
For every $\beta\in \zpp$ such that $\abs{\beta}_*
< \ell$, $t\in \lbrack n - 2^\ell\sqrt{n\cdot c} - Q n^{5/6}
2^{2\ell} \log n\rbrack$ one has $$
\E{\Hb^t_\beta} \leq \left(\prod_{j\in \beta} \frac{1}{j!}\right) \cdot
Q^{\abs{\beta}_*} \cdot \left(\frac{t}{n}\right)^{\abs{\beta}_* -
\nu(\beta)}\cdot c^{|\beta|}  + Qn^{\abs{\beta}_* - \ell}
$$
for some absolute constant $Q>1$.
\end{lemma}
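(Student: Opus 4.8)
### Proof plan for Lemma~\ref{lem:hbetabound}

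\textbf{Overall strategy.} The plan is to prove the bound by strong induction on the pair $(\abs{\beta}_*, \abs{\beta})$, ordered so that every $\alpha \in \beta - 1$ is strictly smaller than $\beta$ (this holds because an extension leaves $\abs{\beta}_*$ the same and decreases $\abs{\beta}$ if $1$'s were stripped, or keeps it; in fact a cleaner ordering is by $\abs{\beta}_*$ with ties broken by $-\abs{\beta}$, since a merge strictly decreases $\abs{\beta}$ while keeping $\abs{\beta}_*$ fixed, and an extension increases $\abs{\beta}_*$ — so $\alpha \in \beta-1$ always has smaller $\abs{\beta}_*$ or equal $\abs{\beta}_*$ with larger $\abs{\beta}$). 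The base case is $\beta$ consisting entirely of $1$'s: then Lemma~\ref{lm:evolution} gives $\E{\Hb^t_\beta} \le q(\abs{\beta})$, and since $\abs{\beta}_* = \abs{\beta}$ and $\nu(\beta) = \abs{\beta}$, the target bound reads $\prod_j \frac{1}{j!} \cdot Q^{\abs{\beta}} \cdot 1 \cdot c^{\abs{\beta}} + Qn^{\abs{\beta} - \ell}$; since $q(\abs{\beta}) \le (8c/\abs{\beta})^{\abs{\beta}} \le \frac{c^{\abs{\beta}}}{\abs{\beta}!} \cdot (8e)^{\abs{\beta}}$ by the standard $k! \ge (k/e)^k$ bound, this follows for $Q \ge 8e$.

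\textbf{Inductive step.} For $\beta$ containing a component of size $>1$, I would split the range of $t$. For the main case $t \le n/8$ (or more generally $t$ bounded away from $n$ as in the hypothesis), apply Lemma~\ref{lm:evolution}:
\[
\E{\Hb^t_\beta} \le \sum_{s=1}^{t-1} \sum_{\alpha \in \beta - 1} q(\abs{\beta - \alpha}) \cdot \E[\Xb, \Bb_s]{\Hb^s_\alpha \cdot p_s(\alpha, \beta, \Bb_s)}.
\]
Condition on the high-probability event $\mathcal{E}_s$ from Lemma~\ref{lm:ext-prob}; on its complement, bound $\Hb^s_\alpha \le 2^c$ trivially (Parseval) and $p_s \le 1$, contributing at most $n \cdot \text{poly}(\ell) \cdot 2^c / n^{\ell+1}$, which is absorbed into the $Qn^{\abs{\beta}_* - \ell}$ slack term since $2^c \le 2^{\ell^{\ell/D}} \ll n$ for $D$ large (using $\ell < D^{-1}\log n$, so $c < n^{o(1)}$... actually more carefully $c < n^{1-\varepsilon}$, so $2^c$ is NOT polynomial — this is the subtle point; I would instead bound $\Hb^s_\alpha \le 2^c$ only when $c$ is small and otherwise use that the bad-event probability $n^{-\ell-1}$ beats $2^c$ whenever $c < (\ell+1)\log_2 n$, i.e. restrict attention to $c < \ell^{\ell/D} < \log n$ region where the lemma is invoked). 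On $\mathcal{E}_s$, substitute the extension bound $p_s \le O(\alpha[a])/n$ and merge bound $p_s \le O(\alpha[a]\alpha[b])/(n-s)^2 \le O(\alpha[a]\alpha[b])/n^2$ (valid since $s \le n/8$), then plug in the inductive bound for $\E{\Hb^s_\alpha}$ and sum over $s$. For an extension $\alpha \to \beta$ raising a size-$(a-1)$ path to size $a$, the sum $\sum_s (s/n)^{\abs{\alpha}_* - \nu(\alpha)}/n$ becomes $\approx \frac{t}{n} \cdot (t/n)^{\abs{\alpha}_* - \nu(\alpha)} \cdot \frac{1}{\abs{\alpha}_* - \nu(\alpha) + 1}$ (an integral as in~\eqref{eq:rec-y}); since an extension adds $1$ to $\abs{\beta}_*$ and $\nu(\beta) - \nu(\alpha) \in \{0,1\}$, one checks the exponent $\abs{\alpha}_* - \nu(\alpha) + 1 = \abs{\beta}_* - \nu(\beta)$ when the extension lands on an odd-length target (the case contributing to growth toward $\{\ell\}$), and the combinatorial factors $\frac{\alpha[a]}{a!}$ versus $\frac{1}{(a+1)!}$ and the $Q$ powers work out with room to spare. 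For a merge of size-$a$ and size-$b$ into size-$(a+b+1)$, the extra $1/n$ factor relative to an extension (Lemma~\ref{lm:ext-prob}) plus Claim~\ref{cl:s-prop} ($\nu$ does not decrease) gives a strictly smaller power of $t/n$, so merges are dominated — as illustrated in Example~2 of the overview.

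\textbf{Handling the range $n/8 < t \le n - 2^\ell\sqrt{nc} - Qn^{5/6}2^{2\ell}\log n$.} Here I cannot directly iterate Lemma~\ref{lm:evolution} down to $t$ because the merge bound $O(\alpha[a]\alpha[b])/(n-s)^2$ is only useful while $n - s$ is large; but the summation $\sum_{s=n/8}^{t-1}$ of merge terms still converges (it is $O(c^{\abs{\alpha}_*}/(n-t))$-type, cf. the Example~2 computation), and $1/(n-t) \le 2^{-\ell} n^{-5/6} \text{poly}(\ell)^{-1}$ by the hypothesis on $t$, which I would show is subsumed in the $Qn^{\abs{\beta}_*-\ell}$ term or in the main term. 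For extensions in this range I again apply Lemma~\ref{lm:evolution} with the event $\mathcal{E}_s$ from Lemma~\ref{lm:ext-prob} (which requires $s \le n - C2^{2\ell}n^{5/6}\log n$, satisfied in our range), carry the induction through, and note $(t/n)^{\text{positive}} \le 1$ keeps the main term bounded. The cleanest route is to prove the statement uniformly in $t$ over the whole allowed interval by a single induction, observing that every use of Lemma~\ref{lm:ext-prob} / Lemma~\ref{lm:u2b} is licensed since $s < t \le n - C2^{2\ell}n^{5/6}\log n$.

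\textbf{Main obstacle.} The hardest part will be the bookkeeping on exponents: tracking how $\abs{\beta}_* - \nu(\beta)$, the product $\prod_j 1/j!$, and the power $c^{\abs{\beta}}$ transform under extensions versus merges, and verifying that the integral $\int_0^{t/n} x^{\abs{\alpha}_* - \nu(\alpha)}\,dx$ produces exactly the exponent $\abs{\beta}_* - \nu(\beta)$ needed (with the denominator $\abs{\beta}_* - \nu(\beta)$ absorbed by adjusting $Q$), while simultaneously confirming that $q(\abs{\beta-\alpha})$ — which is $(8c/a)^a$ for $a$ added $1$'s — combines with the inductive $c^{\abs{\alpha}}$ to give at most $c^{\abs{\beta}}$ up to the $Q$-factor. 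A secondary technical nuisance is that $\nu$ can be flat under an extension ($\lceil (a-1)/2\rceil = \lceil a/2 \rceil$ when $a$ is even), so the $t/n$ power does not always grow; I would need to check that in exactly those steps the target exponent $\abs{\beta}_* - \nu(\beta)$ also stays put relative to $\abs{\alpha}_* - \nu(\alpha)$ shifted by the integral, which it does since $\abs{\beta}_* = \abs{\alpha}_* + 1$ while $\nu(\beta) = \nu(\alpha) + 1$ in that case — so the net change to $\abs{\beta}_* - \nu(\beta)$ is $0$, matching the fact that the integral $\int_0^{t/n} x^{m}\,dx \cdot (\text{from summing } 1/n)$ raises the power by one but here we also... — this needs care and is where I expect to spend the most effort.
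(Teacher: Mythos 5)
Your overall route is the same as the paper's: induction on $\abs{\beta}_*$ with the all-ones base case handled by $q(\abs{\beta})$, the inductive step driven by Lemma~\ref{lm:evolution} together with the extension/merge probabilities of Lemma~\ref{lm:ext-prob}, the integral $\frac{1}{n}\sum_s (s/n)^{\abs{\beta}_*-\nu(\beta)-1} \approx \frac{1}{\abs{\beta}_*-\nu(\beta)}(t/n)^{\abs{\beta}_*-\nu(\beta)}$ for extensions, and the observation that merges are suppressed by the extra $1/(n-s)$ and the constraint $n-t \ge 2^{\ell}\sqrt{nc}$. The exponent bookkeeping you flag as the main obstacle is indeed where the work is, and the paper resolves it exactly as you anticipate (Claim~\ref{cl:s-prop} plus the bound $\sum_{y\in\beta,\, y>1} y/(\abs{\beta}_*-\nu(\beta)) \le 3$, with the denominator $\abs{\beta}_*-\nu(\beta)\ge 1$ guaranteed because $\beta$ has a component of size more than one).

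There is, however, one genuine gap: your treatment of the low-probability event $\overline{\mathcal{E}_s}$. You bound $\E{\Hb^s_\alpha \mid \overline{\mathcal{E}_s}}$ by $2^c$ and hope that the factor $n^{-\ell-1}$ absorbs it, falling back on ``restrict to $c < (\ell+1)\log_2 n$.'' Neither works. The lemma must hold for all $c$ up to $\min(\ell^{\ell/D}, n^{1-\varepsilon})$ (this is what Lemma~\ref{lem:finalH} and Theorem~\ref{thm:cycleslb} consume), and with $\ell$ as large as $D^{-1}\log n$ one has $\ell^{\ell/D} = \exp(\Theta(\log n \cdot \log\log n))$, which vastly exceeds $\ell\log_2 n$; hence $2^c \cdot n^{-\ell-1}\cdot n$ is astronomically large in exactly the regime where the result is needed. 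The fix, which is what the paper does, is to bound the conditional expectation not by Parseval but by the hypercontractive bound: $\wt{\Fb}_s(z) = \wt{\rb}(z; \Fb_0, \Fb_s)$, and Lemma~\ref{lem:rbound-paths} gives $\sum_{z \text{ a union of } k \text{ paths}} \E[\Xb]{\wt{\Fb}_s(z)^2 \mid \Bb_s = B_s} \le q(k) \le (8c/k)^k$ for \emph{every} realization $B_s$, in particular conditioned on $\overline{\mathcal{E}_s}$. Summing against $q(\abs{\beta-\alpha}) \le c^{\abs{\beta}-k}$ yields $O(c^{\abs{\beta}})$ rather than $2^c$, and then $O(c^{\abs{\beta}}) \cdot n \cdot n^{-\ell-1} = O(n^{\abs{\beta}_*-\ell})$ is absorbed into the additive slack term. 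Without this step the induction does not close for large $c$.
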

\begin{proof}
We will proceed by induction on $\abs{\beta}_*$. Recall that as per Definition~\ref{def:weight-and-size} we denote the weight of $\beta$ by 
$$
|\beta|_*:=\sum_{i\in \beta} i
$$
and the size of $\beta$ (i.e., the number of elements in $\beta$) by $|\beta|$. Applying
Lemma~\ref{lm:evolution}. For our base case, note that if $\abs{\beta}_* = 1$
then $\beta = \lbrace 1\rbrace$, and so, \[
\Hb_\beta^t \le q(|\beta|) \le (8c)^{|\beta|}
\]
for all $t$. Now suppose $\abs{\beta}_* > 1$ and the result holds for all
$\alpha$ with $\abs{\alpha}_* < \abs{\beta}_*$. If every element of $\beta$ is
$1$, we again have \[
\Hb_\beta^t \le q(|\beta|) \le (8c)^{|\beta|}
\]
for all $t$. Otherwise, we have 
\begin{equation}\label{eq:0283t382gf}
\E[\Xb, \Bb_t] {\Hb^t_\beta} \leq \sum_{s = 1}^t \sum_{\alpha\in
\beta-1} \E[\Xb, \Bb_s]{ \Hb_\alpha^s\cdot q(|\beta-\alpha|)\cdot
p(\alpha, \beta, \Bb_s)}
\end{equation}
with every term of the outer sum being zero if $\beta$ cannot be made from
$\alpha$ by either an extension or a merge (with possible addition of ones in
either case). Let $A^e$ represent the set of $\alpha$ that can become $\beta$
through extension (and adding ones) and $A^m$ represent the set of $\alpha$
that can become $\beta$ through merges (and adding ones). Let $\mathcal{E}_s$ denote the event from Lemma~\ref{lm:ext-prob} such that 

\[
p_{s}(\alpha, \beta, B_t) \le \begin{cases}
\frac{O(\alpha\lbrack a\rbrack)}{n} &\mbox{if $\alpha\to \beta$ is an extension
of a path of size $a$}\\
\frac{O(\alpha \lbrack a\rbrack \cdot \alpha \lbrack b\rbrack)}{(n - s)^2}
&\mbox{if $\alpha\to \beta$ is a merge of paths of size $a$ and $b$.}
\end{cases}
\]
for each $B_s \in \mathcal{E}_s$. Substituting this bound in~\eqref{eq:0283t382gf} and using the fact that $\Hb_\alpha^s$ is always non-negative, we get
\begin{align*}
\E{ \Hb_\beta^t } &\le \sum_{\alpha \in A^e}\sum_{s = 0}^{t}
q(|\beta -\alpha|)
\E{ \Hb_\alpha^s } p_s(\alpha, \beta)
+  \sum_{\alpha \in A^m}\sum_{s = 0}^{t}q(|\beta - \alpha|)
\E{ \Hb_\alpha^s } p_s(\alpha, \beta)\\
&+ \sum_{\alpha \in \beta - 1}\sum_{s = 0}^{t} q(|\beta -
\alpha|) \E{ \Hb_\alpha^s \middle| \overline{\mathcal{E}_s} } \Pb
{ \overline{\mathcal{E}_s}} 
\stepcounter{equation}\tag{\theequation}\label{eq:threecontribs}
\end{align*}
where 
\begin{equation}\label{eq:p-bounds}
p_s(\alpha, \beta) = \begin{cases}
\frac{O(\alpha\lbrack a\rbrack)}{n} &\mbox{if $\alpha\to \beta$ is an extension
of a path of size $a$}\\
\frac{O(\alpha \lbrack a\rbrack \cdot \alpha \lbrack b\rbrack)}{(n -
s)^2} &\mbox{if $\alpha\to \beta$ is a merge of paths of size $a$ and $b$.}
\end{cases}
\end{equation}
We will proceed to bound each of these three terms---the contribution from
extensions, merges, and $\overline{\mathcal{E}_t}$---in turn.
\paragraph{Bounding the contribution of extensions.} Let $o = \beta\lbrack
1\rbrack$. The $\alpha$ that can be made into $\beta$ by an extension are given
by removing up to $o$ ones from $\beta$ and then choosing one non-one element
of $\beta$ to decrement. For each $x \in \lbrace 0, \dots, o\rbrace$ and $y \in
\lbrack n\rbrack \setminus \lbrace 1\rbrace$ such that $\beta\lbrack j\rbrack >
0$, let $\beta'$ be $\beta$ with $x$ ones removed and one $y$ replaced with $y
- 1$.   

By our inductive hypothesis,
\begin{equation*}
\begin{split}
\E{\Hb_{\beta'}^s} &\le \left(\prod_{j\in \beta'} \frac{1}{j!}\right) \cdot
Q^{\abs{\beta'}_*} \cdot \left(\frac{s}{n}\right)^{\abs{\beta'}_* -
\nu(\beta')}\cdot c^{|\beta'|}  + Qn^{\abs{\beta'}_* - \ell}\\
 &\le \frac{y}{Q^{x+1}c^x}\left(\prod_{j\in \beta} \frac{1}{j!}\right) \cdot
 Q^{\abs{\beta}_*} \cdot \left(\frac{s}{n}\right)^{\abs{\beta}_* - \nu(\beta)
 - 1}\cdot c^{|\beta|}  + n^{-x - 1}Qn^{\abs{\beta}_* - \ell},
\end{split}
\end{equation*}
where we used the fact that $|\beta'|_*=|\beta|_*-x-1$ and the fact that $\nu(\beta) \ge \nu(\beta')$ by Claim~\ref{cl:s-prop}.
Using~\eqref{eq:p-bounds}, we can therefore bound the contribution of extensions that take $\beta'$ to
$\beta$ to
the sum by
\begin{align}\label{eq:iwebgewg24saffaf3ef3}
\sum_{s=1}^{t-1} q(x)\E{\Hb_{\beta'}^s} p_s(\beta', \beta) &\le
\frac{O(q(x)y\alpha\lbrack y\rbrack)}{Qc^x}\left(\prod_{j\in \beta}
\frac{1}{j!}\right) \cdot Q^{\abs{\beta}_*} \cdot c^{|\beta|} \cdot
\frac{1}{n}\sum_{s=1}^{t-1} \left(\frac{s}{n}\right)^{\abs{\beta}_* - \nu(\beta)
- 1} \\
&+ O(n^{-x-1})q(x)\alpha\lbrack y\rbrack Q n^{\abs{\beta}_* -
\ell}\frac{t}{n}\\
\end{align}
Since 
$$
\frac{1}{n}\sum_{s=1}^{t-1} \left(\frac{s}{n}\right)^{\abs{\beta}_* - \nu(\beta) - 1}\leq \frac{1}{n} \int_2^t
\left(\frac{s}{n}\right)^{\abs{\beta}_* - \nu(\beta) - 1}ds\leq \frac1{\abs{\beta}_* - \nu(\beta)} \left(\frac{s}{n}\right)^{\abs{\beta}_* - \nu(\beta)},
$$
we upper bound the lhs in~\eqref{eq:iwebgewg24saffaf3ef3} by
\begin{align*}
\sum_{s=1}^{t-1} q(x)\E{\Hb_{\beta'}^s} p_s(\beta', \beta)&\le
\frac{O(q(x)y\alpha\lbrack y\rbrack)}{Qc^x(\abs{\beta}_* - \nu(\beta))}
\left(\prod_{j\in \beta} \frac{1}{j!}\right) \cdot Q^{\abs{\beta}_*} \cdot
\left(\frac{s}{n}\right)^{\abs{\beta}_* - \nu(\beta)} \cdot c^{|\beta|} \\
&+
O(n^{-x-1})q(x)Q\ell n^{\abs{\beta}_* - \ell}.
\end{align*}
Summing the above over $x\in \set*{0, 1,\ldots, o}$ and $y > 1$ such that
$\beta[y]>0$, we get 
\begin{align}\label{eq:283g8ihfigwefgh09wfgdfsfd124}
\sum_{\alpha \in A^e}\sum_{s=1}^t \E{ \Hb_\alpha^s }
p_t(\alpha, \beta)
&\le  O(Q^{-1})\left(\prod_{j\in \beta}
\frac{1}{j!}\right) \cdot Q^{\abs{\beta}_*} \cdot
\left(\frac{s}{n}\right)^{\abs{\beta}_* - \nu(\beta)} \cdot c^{|\beta|}
\left(\sum_{x = 0}^o \frac{q(x)}{c^x}\right) \sum_{\substack{y \in
\beta :\\ y > 1}} \frac{y}{\abs{\beta}_* - \nu(\beta)}\\
\label{eq:extcontadditive}
&+ O(Q)n^{\abs{\beta}_* - \ell}\frac{|\beta| \ell
}{n}\sum_{x=0}^oq(x)n^{-x}\text{.}
\end{align}
Recall that $\beta$ is a multiset of integers, and  the above sum over $y\in
\beta$ goes over all elements of $\beta$, taking multiplicities into account.

We now bound the multiplicative terms
in~\eqref{eq:283g8ihfigwefgh09wfgdfsfd124}. For the first multiplicative terms
we have
\begin{align*}
\sum_{x=0}^oq(x)c^{-x} &\le \sum_{x = 0}^c \left(\frac{8x}{x}\right)^kc^{-x} +
\sum_{x=c}^\infty2^cc^{-x}\\
&\le \sum_{x=0}^\infty \left(\frac{O(1)}{x}\right)^x + O(1)\\
&= O(1). \stepcounter{equation}\tag{\theequation}\label{eq:sumcbound}
\end{align*}
For the second term we have
\begin{align*}
\sum_{\substack{y \in
\beta :\\ y > 1}} \frac{y}{\abs{\beta}_* - \nu(\beta)} &= \frac{\sum_{\substack{y \in
\beta :\\ y > 1}}y}{\sum_{y \in \beta} \left\lfloor \frac{y}{2} \right\rfloor }\\
&\le \frac{\sum_{\substack{y \in
\beta :\\ y > 1}}y}{\sum_{\substack{y \in \beta:\\ y>1}} \left\lfloor \frac{y}{2} \right\rfloor }\\
&\le 3.
\end{align*}
Finally, for the last additive term in~\eqref{eq:extcontadditive} we have
\begin{align*}
\sum_{x=0}^oq(x)n^{-x} &\le \sum_{x=0}^oq(x)c^{-x}\\
&= O(1) \stepcounter{equation}\tag{\theequation}\label{eq:sumnbound}
\end{align*}
by~\eqref{eq:sumcbound} and the fact that $c \le n$ whenever the interval of permitted $t$'s is non-empty, provided $Q \ge 1$. Therefore,
\begin{align*}
O(Q)n^{\abs{\beta}_* - \ell}\frac{|\beta|\ell}{n}\sum_{x=0}^o q(x)n^{-x} &\le
O(Q)n^{\abs{\beta}_* - \ell}\frac{2|\beta|\ell}{n}\\ 
&\le \frac{1}{3}Qn^{\abs{\beta}_* - \ell}
\end{align*}
as $|\beta|\ell < n/D$ for any constant $D > 0$ whenever the interval of permitted
$t$'s is non-empty, provided $Q$ is chosen to be sufficiently large. We therefore have \[
\sum_{\alpha \in A^e}\sum_{s = 1}^t \E{ \Hb_\alpha^s }
p_t(\alpha, \beta) \le \frac{1}{2} \left(\prod_{j\in \beta} \frac{1}{j!}\right) \cdot
Q^{\abs{\beta}_*} \cdot c^{|\beta|} \cdot
\left(\frac{s}{n}\right)^{\abs{\beta}_* - \nu(\beta)} + \frac{Q}{3}
n^{\abs{\beta}_* - \ell}
\]
provided $Q$ is chosen to be sufficiently large.

\paragraph{Bounding the contribution of merges.} Let $o = \beta\lbrack
1\rbrack$. The $\alpha$ that can be made into $\beta$ by a merge are given by
removing up to $o$ ones from $\beta$ and then choosing a $y > 2$ in $\beta$ to
replace with $a, b$, where $a + b + 1 = y$. 

For each $x \in \lbrace 0, \dots, o\rbrace$, $y > 2$ such that $\beta\lbrack
y\rbrack > 0$, and $a, b$ such that $a + b + 1 = y$, let $\beta_x^{
y\rightarrow a,b}$ be $\beta$ with $x$ ones removed and one $y$ replaced with
$a,b$. Note that $|\beta_x^{y\rightarrow a, b}|_*=|\beta|_*-1-x\leq |\beta|_*-1$, $|\beta_x^{y\rightarrow a, b}|=|\beta|-x+1$ and that $\nu(\beta) \ge \nu(\beta')$ by Claim~\ref{cl:s-prop}. By our
inductive hypothesis we thus have
\begin{align*}
\E{\Hb_{\beta_x^{y\rightarrow a,b}}^s} &\le c^{1 -
x}\frac{y!}{a!b!}\left(\prod_{j\in \beta} \frac{1}{j!}\right) \cdot
Q^{\abs{\beta}_* - 1} \cdot \left(\frac{s}{n}\right)^{\abs{\beta}_* - \nu(\beta)
- 1}\cdot c^{|\beta|}  + n^{-1-x}Qn^{\abs{\beta}_* - \ell}\\
&= c^{1 - x}(b+1){\binom{y}{a}}\left(\prod_{j\in \beta} \frac{1}{j!}\right)
\cdot Q^{\abs{\beta}_* - 1} \cdot \left(\frac{s}{n}\right)^{\abs{\beta}_* -
\nu(\beta) - 1}\cdot c^{|\beta|}  + n^{-1-x}Qn^{\abs{\beta}_* - \ell}
\end{align*}
and so we can bound the contribution of $a,b$ to $a + b + 1$ merges by
\begin{align*}
\sum_{s=1}^{t-1} q(x) \E{ \Hb_{\beta_x^{y\rightarrow a,b}}^s}
p_s(\beta_x^{y\rightarrow a,b}, \beta) &\le   O(c^{1 - x})q(x)b{\binom{y}{a}}
\left(\prod_{j\in \beta} \frac{1}{j!}\right) \cdot Q^{\abs{\beta}_* - 1} \cdot
c^{|\beta|} \cdot \frac{\alpha\lbrack a \rbrack \cdot \alpha\lbrack b
\rbrack}{(n-t)^2}&\\
&~~~~~\cdot\sum_{s=1}^{t-1}\left(\frac{s}{n}\right)^{\abs{\beta}_* - \nu(\beta)
- 1} + O(tn^{-1-x})q(x)Qn^{\abs{\beta}_* - \ell}\frac{\alpha\lbrack a \rbrack
\cdot \alpha\lbrack b \rbrack}{(n - t)^2}\\
&\le \frac{O(c \cdot n \cdot \ell^2)}{Q (n -t)^2} q(x)c^{- x}b\binom{y}{a}
\left(\prod_{j\in \beta} \frac{1}{j!}\right) Q^{\abs{\beta}_*}
c^{|\beta|} 
\left(\frac{t}{n}\right)^{\abs{\beta}_* - \nu(\beta)} \\
&~~~~~+ O(n^{-x})Qq(x) n^{\abs{\beta}_* - \ell}\frac{\ell^2}{\paren*{n-t}^2},
\end{align*}
where we used the bound 
\begin{equation*}
\begin{split}
\sum_{s=1}^{t-1}\left(\frac{s}{n}\right)^{\abs{\beta}_* - \nu(\beta)- 1}&\leq \int_2^t \left(\frac{s}{n}\right)^{\abs{\beta}_* - \nu(\beta)- 1} ds\\
&=\frac{n}{\abs{\beta}_*-\nu(\beta)}\left(\frac{t}{n}\right)^{\abs{\beta}_* - \nu(\beta)}\\
&\leq n\left(\frac{t}{n}\right)^{\abs{\beta}_* - \nu(\beta)},
\end{split}
\end{equation*}
since $\abs{\beta}_* - \nu(\beta)=\sum_{i\in \beta} (i-\lceil i/2\rceil)\geq 1$, as $\beta$ contains at least one component of size more than $1$ (the others are taken care of by the base case).

Now since $b \le \ell$ and $\sum_{a = 1}^y \binom{y}{a} \le 2^y \le 2^\ell$
and $|\beta| < \ell$, we can sum over $y \in \beta$ and $x = 0, \dots, o$ to
get 
\begin{align*}
\sum_{\alpha \in A^m}\sum_{s=1}^{t-1} q(|\beta - \alpha|) \E{
\Hb_\alpha^s } p_s(\alpha, \beta) &\le  \frac{c \cdot n\cdot \ell^4 2^\ell}{
Q(n-t)^2} \left(\prod_{j\in \beta} \frac{1}{j!}\right) \cdot Q^{\abs{\beta}_*}
\cdot c^{|\beta|} \cdot \left(\frac{t}{n}\right)^{\abs{\beta}_* - \nu(\beta)}
\left(\sum_{x = 0}^o q(x)c^{-x}\right)\\
&+ \frac{\ell^4 2^\ell}{\paren*{n-t}^2}Qn^{\abs{\beta}_* - \ell} \paren*{\sum_{x
= 0}^o q(x)n^{-x}}\\
&\le \frac{c\cdot n\cdot \ell^4 2^\ell}{
Q(n-t)^2} \left(\prod_{j\in \beta} \frac{1}{j!}\right) \cdot Q^{\abs{\beta}_*}
\cdot c^{|\beta|} \cdot \left(\frac{t}{n}\right)^{\abs{\beta}_* - \nu(\beta)}\\
&+ \frac{Q}{3}n^{\abs{\beta}_* - \ell} 
\end{align*}
by~\eqref{eq:sumcbound},~\eqref{eq:sumnbound}, and the fact that, by the bound
on $t$ in the lemma statement, $\frac{\ell^4 2^\ell}{(n-t)^2}$ can be bounded
above by any constant if $Q$ is chosen to be large enough. Finally,
as $n - t \ge 2^\ell\sqrt{c\cdot n}$, \[
\sum_{\alpha \in A^m}\sum_{s=1}^{t-1}q(|\beta - \alpha|) \E{
\Hb_\alpha^s } p_s(\alpha, \beta) \le \frac{1}{2} \left(\prod_{j\in \beta}
\frac{1}{j!}\right) \cdot Q^{\abs{\beta}_*} \cdot c^{|\beta|} \cdot
\left(\frac{s}{n}\right)^{\abs{\beta}_* - \nu(\beta)} +
\frac{Q}{3}n^{|\beta|_* - \ell}
\]
provided $Q$ is chosen to be large enough.

\paragraph{Bounding the contribution of the low probability event} By
Lemma~\ref{lm:ext-prob}, $\Pb{ \overline{\mathcal{E}_s}} \le
1/n^{\ell + 1}$ for each $s$. Then we have, using Lemma~\ref{lem:rbound-paths},
\begin{align*}
\sum_{\alpha \in \beta - 1}\sum_{s = 1}^{t-1} q(|\beta - \alpha|)\E{
\Hb_\alpha^s \middle| \overline{\mathcal{E}_s} } \Pb{
\overline{\mathcal{E}_s} } &\le \frac{1}{n^{\ell + 1}}\sum_{s = 1}^{t-1} \sum_{k =
1}^{|\beta|}c^{|\beta| - k} \sum_{\substack{z \in \lbrace 0, 1\rbrace^s\\\text{$z$
a collection of $k$ paths}}} \E[\Xb,\Bb_s]{ \wt{\Fb}_s^2(z) \middle|
\overline{\Ec_s}} \\
&= \frac{1}{n^{\ell + 1}}\sum_{s = 1}^{t-1} \sum_{k = 1}^{|\beta|}c^{|\beta| -
k} \sum_{\substack{z \in \lbrace 0, 1\rbrace^s\\\text{$z$ a collection of $k$
paths}}} \E[\Xb,\Bb_s]{ \wt{\rb}(z, \Fb_0, \Fb_s) \middle|
\overline{\Ec_s}}\\
&\le \frac{1}{n^{\ell + 1}}\sum_{s = 1}^{t-1} \sum_{k = 1}^{|\beta|}c^{|\beta|
- k} q(k)\\
&\le \frac{1}{n^{\ell}} \paren*{\sum_{k = 1}^{c}c^{|\beta| - k} (c/k)^k +
\sum_{k=c}^\infty c^{|\beta| - k} 2^c}\\
&\le \frac{1}{n^{\ell}} \paren*{\sum_{k = 1}^{\infty} k^{-k} +
\sum_{k=c}^\infty c^{c-k}}\\
&= \frac{O\paren*{c^{|\beta|}}}{n^{\ell}}\\
&\le O\paren*{n^{|\beta|_* - \ell}}\\
&\le \frac{Q}{3}n^{|\beta|_*-\ell}
\end{align*}
provided $Q$ is chosen to be large enough (which in particular implies $c \le
n$). Here we assumed $c$ is at least 2---if it is 1 the result follows from
$\sum_{s = 1}^{t-1} \sum_{k = 1}^{|\beta|}q(|\beta| - k) q(k) \le 4\ell^2$.

With these three bounds in hand, we return to equation~\eqref{eq:threecontribs}.
\begin{align*}
\E{ \Hb_\beta^t} &\le \sum_{\alpha \in A^e}\sum_{s = 0}^{t}
q(|\beta -\alpha|)
\E{ \Hb_\alpha^s } p_s(\alpha, \beta)
+  \sum_{\alpha \in A^m}\sum_{s = 0}^{t}q(|\beta - \alpha|)
\E{ \Hb_\alpha^s } p_s(\alpha, \beta)\\
&+ \sum_{\alpha \in \beta - 1}\sum_{s = 0}^{t} q(|\beta -
\alpha|) \E{ \Hb_\alpha^s \middle| \overline{\mathcal{E}_s} } \Pb{
 \overline{\mathcal{E}_s} }\\
&\le \frac{1}{2} \left(\prod_{j\in \beta} \frac{1}{j!}\right) \cdot
Q^{\abs{\beta}_*} \cdot c^{|\beta|} \cdot
\left(\frac{s}{n}\right)^{\abs{\beta}_* - \nu(\beta)} + \frac{Q}{3}
n^{\abs{\beta}_* - \ell}\\
&+ \frac{1}{2} \left(\prod_{j\in \beta}
\frac{1}{j!}\right) \cdot Q^{\abs{\beta}_*} \cdot c^{|\beta|} \cdot
\left(\frac{s}{n}\right)^{\abs{\beta}_* - \nu(\beta)} +
\frac{Q}{3}n^{|\beta|_* - \ell}\\
&~~~~~~~~~~~~~~~~~~~~~~~~~~~~~~~~~~~~~~~~~~~~~~~~~~~~~~~~~~~~~~~~~~~~~~ +
\frac{Q}{3}n^{|\beta|_*-\ell}\\
&= \paren*{\prod_{j\in \beta} \frac{1}{j!}} \cdot Q^{\abs{\beta}_*} \cdot
c^{|\beta|} \cdot \left(\frac{s}{n}\right)^{\abs{\beta}_* - \nu(\beta)} + Q
n^{\abs{\beta}_* - \ell}
\end{align*}

\end{proof}

\begin{lemma}
\label{lem:almostclosed}
For all $\varepsilon > 0$, with probability $1 - 1/n^2$ over $\Bb_n$, every cycle
has at least $\ell - 3/\varepsilon$ edges present at time $n -
n^{1-\varepsilon}/\ell$.
\end{lemma}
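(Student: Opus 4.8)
The plan is a union bound over the $n/\ell$ cycles, combined with a one-line tail bound for the number of ``missing'' edges in a fixed cycle, using the fact that the set of edges not yet present at time $t$ is a uniformly random subset of $E$ of size $n-t$. First I would dispose of the degenerate case: if $3/\varepsilon \ge \ell$ then the conclusion holds vacuously, since every cycle always has a nonnegative number of present edges, so assume $3/\varepsilon < \ell$ and put $j := \lfloor 3/\varepsilon\rfloor + 1$. This choice gives $j \le \ell$ and, crucially, $\varepsilon j > 3$. Write $t$ for the relevant time, so $n - t \le n^{1-\varepsilon}/\ell$ and hence $(n-t)/n \le n^{-\varepsilon}/\ell$; the bad event that some cycle has fewer than $\ell - 3/\varepsilon$ present edges at time $t$ coincides with the event that some cycle has at least $j$ of its $\ell$ edges absent from $\Ob_t$ (since the number of absent edges is an integer).

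Next, I would fix a single cycle $C$ of the underlying graph. Since $\bpi$ is a uniformly random bijection (by construction of the model), the set of underlying edges absent from $\Ob_t$ is a uniformly random $(n-t)$-element subset of $E$; note that only $\bpi$, not $\bsigma$, is relevant, so ``over $\Bb_n$'' may be read as ``over $\bpi$''. For any fixed $j$-element set $S$ of edges of $C$, the probability that all of $S$ is absent at time $t$ equals $\prod_{i=0}^{j-1}\frac{(n-t)-i}{n-i}$, which is at most $((n-t)/n)^j$ because each factor is non-increasing in $i$. A union bound over the $\binom{\ell}{j} \le \ell^j/j!$ choices of $S$, together with $(n-t)/n \le n^{-\varepsilon}/\ell$, then yields
\[
\Pb{C \text{ has} \ge j \text{ absent edges at time } t} \le \binom{\ell}{j}\left(\frac{n-t}{n}\right)^j \le \frac{\ell^j}{j!}\cdot\frac{n^{-\varepsilon j}}{\ell^j} = \frac{n^{-\varepsilon j}}{j!}.
\]
Summing this over the $n/\ell$ cycles of the graph gives
\[
\Pb{\exists\text{ a cycle with} \ge j \text{ absent edges at time } t} \le \frac{n}{\ell}\cdot\frac{n^{-\varepsilon j}}{j!} \le n^{1-\varepsilon j} < \frac{1}{n^{2}},
\]
where the final inequality uses $\varepsilon j > 3$. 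Taking complements proves the lemma.

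I do not expect a genuine obstacle here; the argument is a routine union bound with no second moment needed. The only point that needs a little care is the choice of $j$: it must be a strict integer overshoot of $3/\varepsilon$, i.e.\ $j = \lfloor 3/\varepsilon\rfloor + 1$, so that $\varepsilon j$ is strictly above $3$ and the exponent $1 - \varepsilon j$ is strictly below $-2$ — this makes the bound $\le 1/n^{2}$ hold for all $n$ with no ``$n$ sufficiently large'' hypothesis, and the possibility $j > \ell$ is already folded into the degenerate case handled at the outset.
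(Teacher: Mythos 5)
Your proof is correct and follows essentially the same route as the paper: bound the probability that a fixed cycle is missing at least $\approx 3/\varepsilon$ edges by $\binom{\ell}{j}\left(\frac{n-t}{n}\right)^j$ and union bound over the $n/\ell$ cycles. The only difference is cosmetic — you derive the correlation inequality directly from the product formula for sampling without replacement, where the paper simply invokes negative association, and you are more careful about the integrality of $j$ and the degenerate case $3/\varepsilon\ge\ell$.
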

\begin{proof}
The probability of any edge \emph{not} being present at time $t$ is $(n - t)/n
= 1/\ell n^\varepsilon$. Moreover, these events are negatively associated, so
for any cycle the probability that at least $k$ edges are not present is at
most $\ell^k/(\ell n^\varepsilon)^k = n^{-\varepsilon k}$. So by setting $k =
3/\varepsilon$ and taking a union bound over all $n/\ell$ cycles the result
follows.
\end{proof}

\begin{proofof}{Lemma~\ref{lem:finalH}} We will assume $\varepsilon \le 1/24$ (as if not, it will suffice
to use the $Q$ that would be chosen for $\varepsilon = 1/24$). Let $\varepsilon'
= 2\varepsilon$, and let $D$ be chosen to be large enough that \[
2^\ell\sqrt{n\cdot c} + Q n^{5/6} 2^{2\ell} \log n < n^{1 - \varepsilon'}/\ell
\]
where $Q$ is the universal constant from Lemma~\ref{lem:hbetabound}.  Our
bounds on $c$ and $\ell$ allow this to be done with $D$ only depending on
$\varepsilon$.

Let $t' = n - n^{1 -\varepsilon'}/\ell$. Let $\Ec$ denote the event
(over $\Bb_n$) that at time $t'$ no cycle had more than $3/\varepsilon'$ edges
missing. By Lemma~\ref{lm:decomposition}, for
all $z \in \lbrace 0, 1\rbrace^n$, $t \in T$, \[ 
\wt{\Fb}_n(z)=\E[\Fb_{t}] {\wt{\Fb}_{t}(z_{\leq t}) \cdot
\wt{\rb}(z_{[t+1:n]}; \Fb_{t}, \Fb_n)  \middle| \Fb_n, \Bb_n}
\]
and so, choosing any $B_n \in \Ec$, \[
\E[\Xb]{ \wt{\Fb}_n(z)^2 \middle| \Bb_n = B_n}\rbrack \le
\E[\Xb]{ \wt{\Fb}_{t}(z)^2 \middle| \Bb_n = B_n} \text{.}
\]
Now, for each of the $n/\ell$ cycles present at time $n$, either at most $\ell
- 3$ of its edges are present at time $t'$ or there is a time $t < t'$ when the
$(\ell - 2)\nth$ edge of the cycle arrives. Furthermore, this implies
that there are fewer than $3/\varepsilon'$ different paths present in the cycle
at time $t'$. We may therefore write 
\begin{align*}
\E[\Xb]{ \Hb_{\lbrace \ell \rbrace} \middle| \Bb_n = B_n} &\le
\sum_{\substack{\alpha \in \zpp\\ \ell - 3/\varepsilon' \le \abs{\alpha}_* \le
\ell - 3\\ |\alpha| \le 3/\varepsilon' }}
\E[\Xb]{ \Hb_\alpha^{t'} | \Bb_n = B_n }\\
&+ \sum_{t = 1}^{t'} \sum_{\substack{\beta \in \zpp\\ \abs{\beta}_* = \ell -
2\\|\beta| \le 2}}\sum_{\alpha \in \beta - 1}\sum_{\substack{z \in \lbrace 0, 1\rbrace^t\\ z
\bsim_t \alpha\\ z \cdot 1 \bsim_{t+1} \beta}} \E[\Xb]{ \wt{\Fb}(z)^2 | \Bb_n =
B_n }.
\end{align*}

Now noting that $\Hb_{\ell} \le n/\ell$ with probability $1$ and $\Hb_\alpha \ge 0$ for all $\alpha$, we take 
expectation over $\Bb_n$, getting
\begin{align*}
\E[\Xb,\Bb_n]{ \Hb_{\lbrace \ell\rbrace } } &\le \sum_{\substack{\alpha \in
\zpp\\ \ell - 3/\varepsilon' \le \abs{\alpha}_* \le \ell - 3\\ |\alpha| \le
3/\varepsilon'}} \E[\Xb, \Bb_{t'}]{ \Hb_\alpha^{t'} \middle| \Ec} \cdot \Pb {
\Ec }\\
&+ \sum_{t = 1}^{t'} \sum_{\substack{\beta \in \zpp\\ \abs{\beta}_* = \ell -
2\\ |\beta| \le 2}}\sum_{\alpha \in \beta - 1} \E[\Xb, \Bb_t] { \Hb_\alpha^t \cdot
p(\alpha, \beta, \Bb_t) \middle |\Ec } \Pb{ \Ec }\\
&+ n\Pb{ \overline{\Ec} }\\
&\le \sum_{\substack{\alpha \in \zpp\\ \ell - 3/\varepsilon' \le \abs{\alpha}_*
\le \ell - 3\\ |\alpha| \le 3/\varepsilon'}} \E[\Xb, \Bb_{t'}] {
\Hb_\alpha^{t'}} + \sum_{t = 1}^{t'} \sum_{\substack{\beta \in \zpp\\
\abs{\beta}_* = \ell - 2\\ |\beta| \le 2}}\sum_{\alpha \in \beta - 1} \E[\Xb,
\Bb_t] { \Hb_\alpha^t \cdot p(\alpha, \beta, \Bb_t) } + 1/n\text{.}
\end{align*}
We now proceed to bound the first term in this sum. By
Lemma~\ref{lem:hbetabound}, for all $\alpha$, and for some universal constant $Q$, \[
\E{\Hb^{t'}_\alpha} \leq \left(\prod_{j\in \alpha} \frac{1}{j!}\right) \cdot
{Q}^{\abs{\alpha}_*} \cdot \left(\frac{t}{n}\right)^{\abs{\alpha}_* -
\nu(\alpha)}\cdot c^{|\alpha|}  + {Q}n^{\abs{\alpha}_* - \ell}
\]
and for $\alpha$ with $\ell - 3/\varepsilon' \le \abs{\alpha}_* \le \ell - 3$,
all terms in the inside product are at most $1$ and at least one is at most
$\frac{1}{(\ell\cdot \varepsilon'/3-1)!}$, and so \[
\E{\Hb^{t'}_\alpha} \leq  \frac{{Q}^\ell \cdot
c^{3/\varepsilon'}}{(\ell\cdot \varepsilon'/3 - 1)!} + {Q}n^{-3}
\]
and so as there are at most $2^\ell$ distinct $\alpha$ with $\abs{\alpha}_* \le \ell$,
\begin{align*}
\sum_{\substack{\alpha \in \zpp\\ \ell - 3/\varepsilon' \le \abs{\alpha}_* \le
\ell - 3\\ |\alpha| \le 3/\varepsilon'}} \E[\Xb, \Bb_{t'}] {
\Hb_\alpha^{t'} } &\le \frac{(2{Q})^\ell \cdot
c^{3/\varepsilon'}}{(\ell\cdot \varepsilon'/3 - 1)!} + {Q}2^\ell n^{-3}\le \varepsilon/4
\end{align*}
provided $D$ is chosen to be a sufficiently large constant. Next, we bound the
second term. By Lemma~\ref{lm:ext-prob}, there is an event $\mathcal{E}_t$ for
each $t$ such that, for any $B_t \in \mathcal{E}_t$, \[
p(\alpha, \beta, B_t)\le \begin{cases}
\frac{O(\alpha\lbrack a\rbrack)}{n} &\mbox{if $\alpha\to \beta$ is an extension
of a path of size $a$}\\
\frac{O(\alpha \lbrack a\rbrack \cdot \alpha \lbrack b\rbrack)}{(n -
t)^2} &\mbox{if $\alpha\to \beta$ is a merge of paths of size $a$ and $b$.}
\end{cases} 
\] 
and so for each $\alpha$ such that a $\beta$ with $\abs{\beta}_* = \ell - 2$ is
reachable by an extension, the sum of $p(\alpha, \beta, B_t)$ over all such
$\beta$ is at most \[
\sum_{a \in \alpha} \frac{O(\alpha\lbrack a\rbrack)}{n} = \frac{O(\ell)}{n}.
\]
For $\alpha$ such that a $\beta$ with $\abs{\beta}_* = \ell - 2$ is
reachable by a merge we use the fact that $t \le n - n^{1 - 2\varepsilon} \le
n - \sqrt{n}$ and so $\frac{1}{(n - t)^2} \le 1/n$ to obtain that the sum of
$p(\alpha, \beta, B_t)$ over all such
$\beta$ is at most\[
\sum_{a, b \in \alpha} \frac{O(\alpha\lbrack a\rbrack \cdot \alpha\lbrack b\rbrack)}{n}
 = \frac{O(\ell^2)}{n}
 \]
 and so (again using the fact that $\Hb_\alpha^t$ is never negative)
 \begin{align*} 
 \sum_{t = 1}^{t'} \sum_{\substack{\beta \in \zpp\\ \abs{\beta}_* = \ell - 2\\
 |\beta| \le 2}}\sum_{\alpha \in \beta - 1} \E[\Xb, \Bb_t] {
 \Hb_\alpha^t \cdot p(\alpha, \beta, \Bb_t) } &\le  \frac{O(\ell^2)}{n}\sum_{t = 1}^{t'}
 \sum_{\substack{\alpha \in \zpp\\ \abs{\alpha}_* = \ell - 3\\ |\alpha| \le 3}}
 \E[\Xb, \Bb_t] { \Hb_\alpha^t } \\
 &+ \ell \sum_{t = 1}^{t'}\sum_{\substack{\alpha \in \zpp\\ \abs{\alpha}_* =
 \ell - 3 \\ |\alpha| \le 3}} \E[\Xb, \Bb_t] { \Hb_\alpha^t \middle|
 \overline{\mathcal{E}}_t}
 \Pb[\Bb_t]{\overline{\mathcal{E}}_t}
 \end{align*}
with the second sum coming from the fact that there are at most $\ell$
different $\beta$ that can be reached from any given $\alpha$ by an extension
or merge. For the first of these sums, we note that for any $\alpha$ with
$\abs{\alpha}_* = \ell - 3$ and $|\alpha| \le 3$, at least one path in $\alpha$
is length at least $\ell/3 - 1$ and so \[
\E[\Xb,\Bb_t] { \Hb_{\alpha}^t } \le \frac{D^\ell \cdot
c^3}{(\ell/3 - 1)!} + Dn^{-3}
\]
for all $t \le t'$. So by summing over the $t' \le n$ time steps and at most $2^\ell$ choices of $\alpha$, we get
\begin{align*}
\frac{O(\ell^2)}{n}\sum_{t = 1}^{t'}
 \sum_{\substack{\alpha \in \zpp\\ \abs{\alpha}_* = \ell - 3\\ |\alpha| \le 3}}
 \E[\Xb, \Bb_t] { \Hb_\alpha^t } &\le O(\ell^2)\frac{(2D)^\ell \cdot c^3}{(\ell/3 - 1)!} + \frac{O(\ell^2 D)}{n^3}\\
 &\le \varepsilon/4
\end{align*}
if $D$ is chosen to be large enough. For the second sum, note that by
Lemma~\ref{lm:ext-prob}, $\Pb{ \overline{\mathcal{E}_t}
} \le 1/n^{\ell+1}$, and so by applying
Lemma~\ref{lm:decomposition} and Lemma~\ref{lem:rbound},
\begin{align*}
\ell \sum_{t = 1}^{t'}\sum_{\substack{\alpha \in \zpp\\ \abs{\alpha}_* = \ell -
3 \\ |\alpha| \le 3}} \E[\Xb, \Bb_t] { \Hb_\alpha^t \middle|
\overline{\mathcal{E}}_t }
\Pb[\Bb_t]{\overline{\mathcal{E}}_t} &\le \frac{\ell}{n^{\ell +
1}} \sum_{t=1}^{t'} \sum_{\substack{\alpha \in \zpp\\ \abs{\alpha}_* = \ell - 3
\\ |\alpha| \le 3}} \E[\Xb, \Bb_t] { \Hb_\alpha^t \middle|
\overline{\mathcal{E}}_t}\\
&\le \frac{\ell}{n^{\ell + 1}}
\sum_{t=1}^{t'} \sum_{\substack{z \in \lbrace 0, 1\rbrace^t\\ \text{$z$ a union
of 1, 2, or 3 components}}}  \E[\Xb, \Bb_t] { \wt{\Fb}_t(z)^2 \middle|
\overline{\mathcal{E}}_t}\\
&= \frac{\ell}{n^{\ell + 1}} \sum_{t=1}^{t'} \sum_{\substack{z \in \lbrace 0,
1\rbrace^t\\ \text{$z$ a union of 1, 2, or 3 components}}}  \E[\Xb, \Bb_t]
{ \wt{\rb}(z;\Fb_0, \Fb_t)^2 \middle| \overline{\mathcal{E}}_t}\\
&\le \frac{\ell}{n^{\ell + 1}} \sum_{t=1}^{t'} \left(q(1) + q(2) + q(3)\right)\\
&\le \frac{3\ell \cdot c^3}{n^\ell}\\
&\le \varepsilon / 4
\end{align*}
provided $D$ is chosen to be large enough. Finally, $1/n \le \varepsilon/4$ if $D$ is chosen to be large enough, giving us \[
\E[\Xb, \Bb_n]{ \Hb_{\lbrace \ell \rbrace} } \le \varepsilon
\]
by summing these four bounds together.
\end{proofof}

\subsection{Proof of Theorem~\ref{thm:cycleslb}}\label{sec:cycleslb}
\begin{proofof}{Theorem~\ref{thm:cycleslb}}
Suppose there was a protocol solving the distributional version of \SC$(n,\ell)$
with probability $2/3$ and using $\min\left(\ell^{\ell/D}, n^{1 -
\varepsilon}\right)$ space, for some $D$ to be chosen later. By the min-max
theorem, there is a deterministic algorithm that, given a uniformly random
instance of the communication problem, returns the identity of a cycle and its
parity with probability $2/3$. Let $\Fb_t$ be the random indicator function
associated with the messages of this protocol, as in the discussion in the
previous sections. Let $Z \in \lbrace 0, 1\rbrace^n$ be the random variable
denoting the coefficient of the cycle whose parity the protocol returns if the
final message is $\Fb_n$, and let $\Fb=\F_n$ to simplify notation. Let $\Ab =
\lbrace x \in \lbrace
0, 1\rbrace^n : \Fb(x) = 1 \rbrace$. Then
\begin{align*}
\E[\Xb, \Bb_n]{ \Hb^n_{\lbrace \ell' \rbrace} } &\ge
\E[\Xb, \Bb_n]{  \wt{\Fb}(Z)^2 }\\
&= \E[\Xb, \Bb_n]{ \left(|\Ab|^{-1}\sum_{x \in \Ab} (-1)^{x \cdot
Z}\right)^2 } \text{.}
\end{align*}
Now, as $\Xb$ and $\Bb_n$ are both uniformly distributed, $2/3$ of the possible
pairs result in the algorithm giving a correct answer when given as input. Call
these ``good'' pairs. Then \[
\E[\Bb_n]{ \Pb[\Xb] { \text{$(X, \Bb_n)$ is good } \middle|
\Bb_n } } \ge 2/3
\]
and so as, for each value of $\Bb_n$, the possible realizations $A$ of $\Ab$ conditional on $\Bb_n$ partition the
possible values of $\Xb$, \[
\E[\Bb_n]{ \Pb[\Xb] { \text{$(x,\Bb_n)$ is good for at least $7/12$ of the $x
\in \Ab$ } \middle| \Bb_n } } \ge 1/12\text{.}
\]
If at least $7/12$ of the $x \in \Ab$ are good, in particular they all give the
same value of $(-1)^{x \cdot Z}$, so \[
\left| |\Ab|^{-1}\sum_{x \in \Ab} (-1)^{x \cdot Z} \right| \ge 1/6
\]
and therefore
\begin{align*}
\E[\Xb, \Bb_n]{ \Hb^n_{\lbrace \ell' \rbrace} } &\ge \E[\Bb_n]{ \E[\Xb] {  \left(|\Ab|^{-1}\sum_{x \in \Ab} (-1)^{x \cdot Z}\right)^2 \middle| \Bb_n } }\\
&\ge \E[\Bb_n]{ 1/12 \cdot (1/6)^2 }\\
&= 1/432
\end{align*}
which contradicts Lemma~\ref{lem:finalH} if $D$ is chosen to be large enough.
\end{proofof}

\subsection{Proof of Theorem~\ref{thm:compestlb}}\label{sec:main-thm}

\begin{proofof}{Theorem~\ref{thm:compestlb}}
Suppose we had an algorithm solving $(1,\ell)$ component estimation in the
$(2,0)$-batch random order streaming model with probability at least $2/3$ and
using $\min\paren*{\ell^{\bOm{\ell}}, n^{1 - \varepsilon}}$ space. Let $\zeta =
\poly(n^{-1})$ denote the timestamp precision assumed by the algorithm (as per
the discussion in Definition~\ref{dfn:batchmodel}, it is assumed that time
stamps are presented at this resolution, so they can be expressed in $\bO{\log
n}$ bits). We design a protocol for \SC$(n,\ell)$ using only $\bO{\log n}$
extra bits of space.

Let $V$ be the vertex set associated with the \SC($n, \ell'$) problem, where
$\ell' = \lfloor \ell/2 \rfloor$. We will use $V \times \lbrace 0, 1 \rbrace$
as the vertex set of our component estimation input. When player $i$ receives the
$i\nth$ edge $uv$ with bit label $x$, they give the component
estimation algorithm two edges (in random order, say), $(u,0)(u,x)$ and
$(u,1)(u,\overline{x})$ with timestamp $\tb_i$.  The timestamps $\tb_i$ are
given by drawing $n$ uniform random variables with precision $\zeta$ and
presenting them in ascending order. See Appendix~\ref{app:streamstamps} for how
this can be done in $\bO{\log n}$ space in the stream, by storing only the most
recent timestamp generated. They then send the state of the algorithm, along
with the most recent timestamp generated, to player $i+1$. The final player
reads off the vertex returned by the algorithm and the size of the component it
is reported to contain, returning $0$ if the size is $2\ell$ and $1$ if it is
$\ell$.

The stream ingested by the component collection algorithm will be a
$(2,0)$-batch random order stream, as the edges given to us were in random
order, and we are generating 2 edges for each one, and the batches are ordered
by randomly drawn timestamps (we let the timestamps of the two edges be equal
to the timestamp of the corresponding batch).  Moreover, the graph it generates
will, for each cycle in the communication problem, have one cycle of length
$2\ell'$ (if the parity of the cycle in the communication problem is even) or
two cycles of length $\ell'$ (otherwise).

So this graph has no component of size more than $2\ell' \le \ell$, so with
probability $2/3$ the component collection algorithm will correctly return a
vertex in $V \times \lbrace 0, 1\rbrace$ and the corresponding component. The
answer returned by the final player will therefore be a correct solution to the
\SC instance.
\end{proofof}

\subsection{Proof of Theorem~\ref{thm:rwlb}}\label{sec:rwlb}
\begin{proofof}{Theorem~\ref{thm:rwlb}}
Suppose we had an algorithm solving $(k, s, 1/10, 1/10)$ random walk generation for some $k$ and $s$ in the
$(2,0)$-batch random order streaming model. Let $\zeta = \poly(n^{-1})$ denote the timestamp precision
assumed by the algorithm.

Let $V$ be the vertex set associated with the \SC($n, \ell'$) problem, where we choose $\ell'=\lfloor \sqrt{k/C}\rfloor$ for a large constant $C>1$ for the first lower bound and $\ell'=\lfloor k/2\rfloor$ for the second lower bound. We will
use $V \times \lbrace 0, 1 \rbrace$ as the vertex set of our component
estimation input. When we receive the $i\nth$ edge $uv$ with bit label
$x$, we will give the component estimation algorithm two edges (in random
order, say), $(u,0)(u,x)$ and $(u,1)(u,\overline{x})$ with timestamp $\tb_i$.
The timestamps $\tb_i$ are given by drawing $n$ uniform random variables with
precision $\zeta$ and presenting them in ascending order. See
Appendix~\ref{app:streamstamps} for how this can be done in $\bO{\log n}$ space
in the stream.

The stream ingested by the random walk generation algorithm will be a
$(2,0)$-batch random order stream, as the edges given to us were in random
order, and we are generating 2 edges for each one, and the batches are ordered
by randomly drawn timestamps (we let the timestamps of the two edges be equal to the timestamp of the corresponding batch).  Moreover, the graph it generates will, for each
cycle in the communication problem, have one cycle of length $2\ell'$ (if the
parity of the cycle in the communication problem is
even) or two cycles of length $\ell'$ (otherwise).

For the first lower bound, let $\ell'=\lfloor \sqrt{k/C}\rfloor$ for a sufficiently large absolute constant $C$, so that $k=C(\ell')^2$. Generate a $1/10$-approximate sample of the walk of length $k$, with error at most $1/10$ in total variation distance.  The walk loops around the cycle that it starts in with probability at least $2/3$ as long as the constant $C$ is sufficiently large (indeed, this happens with probability at least $9/10$ for a true sample of the random walk of length $k$, and therefore at least with probability $(1-1/10)9/10-1/10\geq 2/3$ for a $1/10$-approximate sample with TVD error bounded by $1/10$). Let $(v, b)$ denote the starting vertex. If the cycle is of length $\ell'$, output $v$ and $\text{parity} = 0$.  If the cycle is of length $2\ell'$, output $v$ and $\text{parity}=1$.

For the second lower bound, let $\ell'=\lfloor k/2\rfloor$ and run $C4^k$ $1/10$-approximate random walks  of length $k$ started at uniformly random vertices, with error at most $1/10$ in total variation distance (for the joint distribution), for a sufficiently large constant $C>0$. With probability at least $2/3$ at least one of the walks will loop around the cycle that it started in (it suffices for the walk to take a step in the same direction for $k$ consecutive steps, which happens with probability $(1-1/10)2^{-k}\geq 4^{-k}$, so $C4^k$ independent repetitions suffice for one of the walks to cover the cycle with probability $9/10$; accounting for the at most $1/10$ error in total variation distance gives the result). Let $(v, b)$ denote the starting vertex.  If the cycle is of length $\ell$, output $v$ and $\text{parity} = 0$.  If the cycle is of length $2\ell'$, output $v$ and $\text{parity}=1$.

Now, this is the hard instance of Theorem~\ref{thm:compestlb}, with $\ell'$ larger than a sufficiently large absolute constant since $k$ is. Thus, solving it requires $\min\{(\ell')^{\Omega(\ell')}, n^{0.99}\}$ space, setting the $\varepsilon$ parameter to $0.01$.
Expressing this lower bound in terms of $k$, we now get that $(k, 1, 1/10)$-random walk generation requires at least $\min\{(\ell')^{\Omega(\ell')}, n^{0.99}\}=\min\{k^{\Omega(\sqrt{k})}, n^{0.99}\}$ space, and $(k, C4^k, 1/10)$-random walk generation requires at least $\min\{(\ell')^{\Omega(\ell')}, n^{0.99}\}=\min\{k^{\Omega(k)}, n^{0.99}\}$ space, as required.
\end{proofof}


\section{Component Collection and Counting}\label{sec:comp}

 \subsection{Algorithmic Techniques}
Many random-order streaming algorithms work, at a high level, in the following
way:
\begin{itemize}
\item Sample some connected structure from the stream in an order-dependent way
(for instance, ``growing'' a component by randomly choosing a vertex and then
keeping every edge either incident to the vertex or to an already-sampled
edge).
\item Weight the sampled structures by the inverse of the prior probability of
sampling them.
\end{itemize}
Such techniques make use of the fact that, in a fully random-order stream, the
probability of any given set of edges arriving in any given order can be
determined exactly and without any additional information about other edges in
the graph. Now consider a stream divided into \emph{known} batches of size $b$.
Such techniques can be applied here by increasing the number of edges we sample by a factor of $b$:
\begin{itemize}
\item Whenever we would keep an edge, instead keep the entire \emph{batch}
containing that edge.
\item When weighting a structure, adjust the prior probability of sampling it
accordingly.
\end{itemize}
This is possible because we know the batches---when we see an edge we know
which batch it was in, and the probability of a given set of batches arriving
in any given order can still be determined exactly.

But what can we do when those batches are unknown? The key observation
we apply is that, if the structures we are sampling are not too large and there
are not too many of them, we can \emph{guess} the batches and only err on
``irrelevant'' edges:
\begin{itemize}
\item Maintain a buffer of all edges with timestamps less than $w$ before the
present edge.
\item Whenever we would keep an edge, instead keep every edge within $w$ of it
in either direction.
\item When weighting structures, assume that any pair of edges that we kept and
that had timestamps separated by at most $2w$ were in the same batch.
\end{itemize}
It is clear that, at least, when we keep an edge we will keep every other edge
in the same batch. Furthermore, as long as our structures are not too large and
there are not too many of them, any pair of edges in the same structure will,
with probability $w$, either be in the same batch, or at least $2w$ away from
each other. So as long as the total number of these edges is not much larger
than $1/\sqrt{w}$, our batch guesses will probably be correct, and so we may
proceed as if we were in the known-batch setting.

\paragraph{Component Collection and Counting.} We apply the ``batch guessing'' technique described above to the problem of
counting and collecting bounded-size components in a batch random order graph stream.

Similarly to the component counting strategy of~\cite{PS18}, we approach this
problem by first uniformly sampling a set of vertices, and then for each such
vertex ``growing'' a connected subgraph $\Db_v$ containing $v$ by keeping every
edge with a path to $v$ in our already-sampled edges, as long as $V(\Db_v)$
never exceeds a given limit $k$. We then construct random variables $\Xb_v$
with the following properties:
\begin{enumerate}
\item $\E{\Xb_v \cdot \1bb(\text{$\Db_v = K_v$})} = 1$ whenever $V(K_v) \le k$.
\item $\E{\Xb_v \cdot \1bb(\text{$\Db_v \not= K_v$})}$ is small.
\end{enumerate}
Here $\1bb$ denotes the indicator function and $K_v$ denotes the actual
component containing $G$. This will allow us to approximately count the number
of size $\le k$ components, and therefore obtain a $\varepsilon n$ additive
approximation to the component count if $k \ge 1/\varepsilon$. Furthermore, if
a large enough fraction of the vertices of $G$ are in size $\le k$ components,
it will let us sample a size $k$ component (by sampling vertices $v$ from $G$
and then choosing a sampled subgraph $\Db_v$ with probability proportional to
$\Xb_v$). The total number of samples required will go as the inverse of the
variance of $\Xb_v$.

The approach of~\cite{PS18} was based on defining a canonical spanning tree for
each component, and then keeping components iff this spanning tree was
collected first \emph{and} it was entirely collected in the first $\lambda$
fraction of the stream. 

The weighting of this component in their estimator ($\Xb_v$ in our formulation)
is then given by $(k-1)!/\lambda^{k-1}$ when the component is kept. They then
make use of the fact that any subgraph $H$ that is \emph{not} the entirety of
$K_v$ will have a non-empty boundary, and therefore cannot be collected if any
of the boundary edges arrive in the last $1-\lambda$ fraction of the stream (as
they are incident to the spanning tree of $H$ but are not in $H$). This means
that if $\lambda$ is small enough, $\E{\Xb_v \cdot \1bb(\text{$\Db_v \not=
K_v$})}$ is small too. However, they need $\lambda$ to be at most
$k^{-\bTh{k^2}}$ (which is based on counting the number of possible
``canonical'' spanning trees with a given boundary size that can be rooted at
$b$), which in turn means that the probability of successfully collecting a
$k$-vertex component is at most $k^{-\bTh{k^3}}$, and so their algorithm needs
$\paren*{1/\varepsilon}^{\bO{1/\varepsilon^3}}\log n$ bits of space.

We instead define our weighting based on explicitly calculating the prior
probability of collecting a component, and by a different combinatorial
analysis of the number of subgraphs  rooted at $v$ with a given boundary
size, are able to have $\lambda = 1/\poly(k)$, for a
$(1/\varepsilon)^{\bO{1/\varepsilon}}\plog n$ space cost. 

This also has the virtue of translating easily to the known-batches model: when
growing a component, we keep an entire batch whenever we would keep an edge in
it, and then calculate the prior collection probability of a component based on
our knowledge of how it was collected.it was collected. We extend this to the
\emph{hidden} batch model through the ``batch guessing'' technique discussed
earlier in this section.

Our main results in this section are Theorem~\ref{theorem:compcounting} and Theorem~\ref{theorem:compfinding} below.

\begin{restatable}[Counting Components]{theorem}{compcounting}
\label{theorem:compcounting}
For all $\varepsilon, \delta \in (0, 1)$, there is a $(b,w)$-hidden batch
random order streaming algorithm that achieves an $\varepsilon n$ additive approximation to
$c(G)$ with $1 - \delta$ probability, using \[ \paren{1/\varepsilon
\delta}^{\bO{1/\varepsilon}} (b + wm)\plog(n) \] bits of space.
\end{restatable}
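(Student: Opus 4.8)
The plan is to implement the Peng--Sohler framework~\cite{PS18} with two changes: a sharper combinatorial count of rooted subgraphs of a component by boundary size, which lets the ``keep a component only if it was fully collected in the first $\lambda$ fraction of the stream'' cutoff be taken as $\lambda=1/\poly(k)$ instead of $k^{-\Theta(k^2)}$, and the ``batch guessing'' device of this section to cope with unknown batches. Throughout set $k=\lceil 3/\varepsilon\rceil$. The first step is the standard reduction of counting to sampling: writing $K(v)$ for the component of $v$, we have $c(G)=\sum_{v\in V}1/|K(v)|$, and the components of size more than $k$ contribute at most $n/k\le \varepsilon n/3$ in total to this sum, so it suffices to estimate $S:=\sum_{v:\,|K(v)|\le k}1/|K(v)|$ to additive error $\varepsilon n/2$ with probability $1-\delta$.

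To estimate $S$, sample vertices $v$ uniformly and for each run the growth procedure underlying Theorem~\ref{theorem:compfinding}: start the collected blob at $\{v\}$, keep an arriving edge iff it is incident to the current blob, and abort the branch if the blob would exceed $k$ vertices or once more than a $\lambda$ fraction of the stream has passed; let $\Db_v$ be the resulting subgraph (possibly empty/aborted). For a connected $H\ni v$ with $|V(H)|\le k$ let $p_H=\Pb{\Db_v=H\mid v\in V(H)}$, which the algorithm can compute exactly from $H$ and the observed arrival order (and batch structure) of $H$'s edges. Define the per-sample estimator $\Yb_v=\frac{n}{|V(\Db_v)|\,p_{\Db_v}}$ when $\Db_v$ is collected and $\Yb_v=0$ otherwise. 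Then $\E{\Yb_v\mid v}=\sum_{H:\,v\in V(H)\subseteq V(K(v)),\,|V(H)|\le k}\frac{n}{|V(H)|}\,\1bb(H\text{ is a possible final blob})$, which is $\frac{n}{|K(v)|}\1bb(|K(v)|\le k)$ plus a spurious term $\sum_{H\subsetneq K(v)}\frac{n}{|V(H)|}p_H$; averaged over $v$ this is exactly $S$ plus $n$ times the average spurious term (this is the ``$\E{\Xb_v\1bb(\Db_v=K_v)}=1$'' / ``$\E{\Xb_v\1bb(\Db_v\ne K_v)}$ small'' dichotomy stated above).

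The two quantitative ingredients are then: (i) $\sum_{H\subsetneq K(v)}\frac1{|V(H)|}p_H\le \varepsilon/4$ for every $v$, so the spurious contribution is at most $\varepsilon n/4$; and (ii) $\Var{\Yb_v}\le k^{O(k)}n^2$. Both follow from the estimate $p_H\le(\text{good-order fraction of }H)\cdot\lambda^{|E(H)|+|\partial H|}$ — valid because a proper subgraph $H\subsetneq K(v)$ can be the final blob only if each of its $|\partial H|$ boundary edges arrives before its attachment vertex joins the blob, hence within the first $\lambda$ fraction — combined with the improved counting bound that the number of connected rooted subgraphs $(H,v)$ of a component with $|V(H)|=i$, $|\partial H|=d$ is at most $k^{O(k)}c_0^{\,d}$ for an absolute constant $c_0$. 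Taking $\lambda=(c_0 k)^{-2}$ makes $\sum_{i,d}k^{O(k)}c_0^{\,d}\lambda^{\,i-1+d}\le \varepsilon/4$ while $p_{K(v)}\ge k^{-O(k)}$ bounds $\E{\Yb_v^2}\le k^{O(k)}n^2$. Averaging $s=k^{O(k)}/(\varepsilon^2\delta)$ independent samples and applying Chebyshev (or, to trim the $\delta$ dependence, median-of-means over $O(\log\tfrac1\delta)$ blocks of $k^{O(k)}/\varepsilon^2$ samples each) yields an estimate of $S$ within $\varepsilon n/4$, hence of $c(G)$ within $\varepsilon n$, with probability $1-\delta$.

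For the hidden-batch implementation, maintain a buffer of all edges whose timestamp is within $w$ of the current one; whenever the growth procedure would keep an edge, keep every edge within $2w$ of it (so the whole batch is captured), and when computing $p_H$ treat two kept edges whose timestamps differ by at most $2w$ as lying in the same batch. Each collected structure touches $O(k)$ vertices, hence at most $O(bk)$ edges after batch completion, and a pair of its edges violates the ``same batch or more than $2w$ apart'' dichotomy with probability $O(w)$; a union bound over the $O((bk)^2)$ pairs inside each of the $s$ structures shows that with probability $1-\delta/2$ every batch guess is correct, after which the analysis is exactly the known-batch one (when $w$ is too large for this, the $wm$ term already dominates the claimed bound and we may simply store the entire current window). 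Adding up: $s$ copies of the growth state ($k^{O(k)}$ words), the window buffer ($O(wm)$ edges), and $O(b)$ words per kept edge, with $O(\log n)$-bit timestamps, gives the stated $(1/(\varepsilon\delta))^{O(1/\varepsilon)}(b+wm)\plog(n)$ bits. The main obstacle is ingredient (i)--(ii): proving the sharp $k^{O(k)}c_0^{\,d}$ bound on rooted subgraphs by boundary size and deducing that $\lambda=1/\poly(k)$ suffices — this is precisely where we improve on~\cite{PS18} — with the batch-guessing union bound's compatibility with the $b+wm$ budget a secondary technical point.
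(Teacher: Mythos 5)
Your proposal follows the paper's approach closely: sample vertices, grow components while buffering a width-$w$ window to guess batches, weight each collected subgraph by the inverse of its collection probability, split the estimator into the contribution of the true component and of spurious proper subgraphs, bound the latter by counting candidates by boundary size, and finish with Chebyshev. However, two of your steps have genuine problems beyond omitted routine detail.

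First, your growth procedure ``aborts once more than a $\lambda$ fraction of the stream has passed,'' but your key claim --- that a proper subgraph $H \subsetneq K(v)$ can be the final blob only if every boundary edge arrives within the first $\lambda$ fraction, which is what yields the $\lambda^{|\partial H|}$ suppression --- fails for that version of the algorithm: if you stop listening at time $\lambda$, a boundary edge arriving at time $0.9$ is never examined, so $H$ survives as the final blob with no penalty and the spurious expectation is of order $\sum_{H} n/|V(H)|$ rather than $O(\lambda k^4)\,n$. The growth must run to the end of the stream (still capping at $k$ vertices), with rejection decided afterwards by whether the last vertex joined before time $\lambda$; only then does acceptance of a wrong $H$ force each boundary batch to have arrived before its attachment point joined, hence before $\lambda$. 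Second, the combinatorial bound you flag as the main obstacle is indeed the crux and is left unproved; the paper's version (Lemma~\ref{lm:boundary}) states that the number of size-$h$ batch-subsets rooted at $v$ with $v$-boundary exactly $a$ is at most $\binom{h+a}{a}$, proved by contracting a batch incident to $v$ and inducting on $h+a$. With $h \le \binom{k}{2}$ this is $\le (e(k^2+1))^a$, so $\lambda \le 1/(2ek^2)$ suffices; your claimed form $k^{O(k)}c_0^{\,a}$ with an absolute constant $c_0$ is not what one actually gets, though either shape would close the argument with $\lambda = 1/\poly(k)$. Finally, your batch-guessing union bound over pairs of edges \emph{inside} each collected structure is insufficient: the event to exclude is two \emph{boundary} batches (batches merely incident to the collected subgraph, not contained in it) landing within $2w$ of each other, and their number is not bounded by $O(bk)$ when $v$ lies in a large component. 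The paper handles this by showing that when the boundary batch set exceeds roughly $k^2 + \log(1/w)$ batches, both the real and idealized runs have their completion times concentrate above $1/2 \ge \lambda$ and hence both reject, so the pairwise-separation bound is only needed in the small-boundary regime.
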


\begin{restatable}[Component Collection]{theorem}{compfinding}
\label{theorem:compfinding}
For all $\delta \in (0, 1)$, there is a $(b,w)$-hidden batch random order  streaming algorithm
such that, if at least a $\beta$ fraction of the vertices of $G$ are in
components of size at most $\ell$, returns a vertex in $G$ and the component
containing it with probability $1 - \delta$ over its internal randomness and
the order of the stream, using \[
(\ell/\beta\delta)^{\bO{\ell}}(b + wm)\plog n
\]
bits of space.
\end{restatable}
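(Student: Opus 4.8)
\textbf{Proof proposal for Theorem~\ref{theorem:compfinding}.}
The plan is to realise, in the $(b,w)$-hidden-batch model, the component-growing sampler outlined at the start of Section~\ref{sec:comp}, with the size cap $k=\ell$. I would first set up and analyse a \emph{known-batch} version and then wrap it in the ``batch-guessing'' reduction. In the known-batch algorithm we sample $s=(\ell/\beta\delta)^{\bO{\ell}}$ vertices $v_1,\dots,v_s$ of $G$ independently and uniformly; for each $v_i$ we grow a subgraph $\Db_{v_i}$ during the first $\lambda n$ steps of the stream by adding every arriving \emph{batch} incident to the current $\Db_{v_i}$, aborting $v_i$ the moment $|V(\Db_{v_i})|$ would exceed $\ell$; over the remaining $(1-\lambda)n$ steps we mark $v_i$ \emph{kept} iff no further edge incident to $\Db_{v_i}$ ever arrives and $|V(\Db_{v_i})|\le\ell$. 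To each kept $v_i$ we attach a weight $\Xb_{v_i}$ equal to the reciprocal of the prior probability -- a quantity computable from the edge set and (batch) timestamps of $\Db_{v_i}$ alone -- that this growth process would collect exactly $\Db_{v_i}$'s edges, in a connected-from-$v_i$ order, within the first $\lambda n$ steps. We then output $(v_i,\Db_{v_i})$ for a kept $v_i$ chosen with probability proportional to $\Xb_{v_i}$ (the bare statement only needs one kept sample, but weighting the output is what additionally makes the sampled vertex near-uniform, i.e.\ representative). Here $\lambda=\poly(\beta\delta/\ell)\le 1$ is a parameter fixed by the combinatorial step below.

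The analysis rests on three facts about a single uniformly random $v$. First, \emph{unbiasedness}: a direct importance-sampling identity gives $\E{\Xb_v\,\1bb(\Db_v=K_v)}=\Pb{v\text{ lies in a component of size }\le\ell}\ge\beta$, where $K_v$ is the true component of $v$; moreover $\Xb_v$ is bounded deterministically by $\lambda^{-(\ell-1)}=(\ell/\beta\delta)^{\bO{\ell}}$. Second, \emph{correctness of kept samples}: if $v$ is kept then $\Db_v$ is a connected component of the graph observed up to time $\lambda n$ that attracts no later incident edge, hence $\Db_v$ is a component of $G$; it could still be a proper connected subgraph $H\subsetneq K_v$ with $v\in H$, but then $H$ has nonempty edge-boundary $\partial_G H$ in $G$, and one bounds $\E{\Xb_v\,\1bb(\Db_v=H\text{ for some }H\subsetneq K_v,\ v\in H)}\le\sum_{H}\lambda^{|\partial_G H|}$ (each boundary edge must arrive, before its $H$-endpoint is collected, inside the first $\lambda n$ steps, which both costs a factor $\lambda$ and is cancelled against the denominator of $\Xb_v$). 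Third, \emph{concentration}: since all weights are at most $(\ell/\beta\delta)^{\bO{\ell}}$, for $s=(\ell/\beta\delta)^{\bO{\ell}}$ Chebyshev gives $\sum_i\Xb_{v_i}\1bb(\Db_{v_i}=K_{v_i})\ge s\beta/2$ with probability $\ge 1-\delta/3$, while by the second fact and the choice of $\lambda$ we get $\sum_i\Xb_{v_i}\1bb(\Db_{v_i}\ne K_{v_i},\ \text{kept})\le(\delta/3)\cdot s\beta/2$ with probability $\ge 1-\delta/3$; on the intersection the weighted output is a valid $(v,K_v)$ pair with probability $\ge 1-\delta/3$.

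The second fact is where the real work is, and I expect it to be the main obstacle: it requires a bound of the shape $\#\{H:\ v\in H\subsetneq K_v\text{ connected},\ |\partial_G H|=d\}\le(\poly(\ell))^{d}$ on the number of connected subgraphs rooted at $v$ with a prescribed boundary size, so that $\sum_{d\ge1}\#\{\cdots\}\,\lambda^{d}$ can be driven below any prescribed constant multiple of $\beta\delta$ while keeping $\lambda$ only $\poly(\beta\delta/\ell)$. This is precisely the improved combinatorial estimate advertised in Section~\ref{sec:comp}; the crude bound implicit in~\cite{PS18} would instead force $\lambda\le\ell^{-\Theta(\ell^{2})}$ and hence $s\ge\ell^{\Omega(\ell^{2})}$, so getting the exponent of $\ell$ down to linear in the space bound hinges on this counting argument (and on carefully pairing it with the cancellation of $\lambda$-powers in $\Xb_v$).

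Finally I would remove the assumption that the batches are known via the batch-guessing technique of Section~\ref{sec:comp}: run the known-batch algorithm on top of a sliding buffer holding every edge whose timestamp is within $w$ of the current one (costing $\bO{(b+wm)\plog n}$ bits), keep, whenever a structure would keep an edge, every edge within $w$ of it in either direction, and treat two kept edges with timestamps differing by at most $2w$ as co-batched. The $s$ structures together ever touch $s\cdot\bO{\ell}$ edges, each blown up to an $\bO{b+wm}$-sized neighbourhood, for a total of $(\ell/\beta\delta)^{\bO{\ell}}(b+wm)\plog n$ bits (the ascending timestamps, generated on the fly with only the most recent one stored, add $\bO{\log n}$, as in Appendix~\ref{app:streamstamps}). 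Every batch actually meeting some $\Db_{v_i}$ is then kept in full, and a guess can err only if two edges of a common structure lie in distinct batches yet within $2w$ of each other -- an event of probability $\bO{w}$ per pair -- so a union bound over the $s\cdot\bO{\ell^{2}}$ relevant pairs bounds this failure by $\delta/3$ whenever $w$ is small enough; and if $w$ is so large that this fails, then $wm$ is already large enough that $(\ell/\beta\delta)^{\bO{\ell}}(b+wm)\plog n$ exceeds $m\log n$ and the whole stream can be stored outright. Conditioned on correct guesses the known-batch analysis applies verbatim; folding the $\delta/3$ from batch-guessing into the $\delta$ budget of the previous paragraph and adjusting constants completes the proof.
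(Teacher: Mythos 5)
Your architecture is the same as the paper's: grow a component from each of $(\ell/\beta\delta)^{\bO{\ell}}$ sampled vertices, weight each collected subgraph by the inverse of its prior collection probability within a time threshold $\lambda = \poly(\beta\delta/\ell)$, bound the contribution of wrong subgraphs via their boundary, and reduce hidden batches to known batches by keeping every edge within $w$ of a kept edge and merging timestamps within $2w$. But the argument does not close as written. The combinatorial estimate you yourself flag as the crux --- that the number of connected subgraphs containing $v$ with boundary of size $d$ is at most $(\poly(\ell))^{d}$ --- is left unproved, and it is precisely the new ingredient that brings the exponent down from the $\Theta(\ell^3)$ of~\cite{PS18} to $\Theta(\ell)$. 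The paper proves it as Lemma~\ref{lm:boundary}: the number of size-$h$ sets of batches reachable from $v$ whose $v$-boundary has exactly $a$ batches is at most $\binom{h+a}{a}$, by induction on $h+a$, splitting on whether a fixed batch incident to $v$ is included (contract its edges) or excluded (delete it, charging one unit of boundary). Summing $\binom{h+a}{a}\lambda^{a}$ over $h \le \binom{\ell}{2}$ and $a \ge 1$ then yields $\E{\Zb_v} = \bO{\lambda \ell^4}$ once $\lambda \le 1/2e\ell^2$ (Lemma~\ref{lm:Zexp}). Without this, your bound on the bad contribution is a conjecture, not a proof.

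There are two further steps that need repair. Your Chebyshev step treats the estimators $\Xb_{v_i}\1bb(\Db_{v_i}=K_{v_i})$ as independent, but they are correlated through the shared stream: two sampled vertices in the same component, or in components meeting a common batch, have perfectly correlated collection events, so the variance of the sum is not controlled by the individual variances alone. The paper handles this by showing the sampled components' batch sets are pairwise disjoint except with probability $r^2 b \ell^3/n$ (Lemma~\ref{lm:Yindep}), conditioning on which the estimators really are independent, and absorbing the complementary case into a regime where $n$ is small enough to store the whole graph. Separately, your batch-guessing union bound over pairs of edges ``of a common structure'' is too narrow: an edge \emph{outside} the ideal component can be erroneously absorbed whenever its timestamp falls within $w$ of a kept edge, so the bad events involve the boundary batches as well, and a structure sitting inside a large component can have an unboundedly large boundary. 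The paper's Lemma~\ref{lm:DapproxS} deals with this by showing that when the boundary batch set exceeds $\bO{\ell^2 + \log 1/w}$ batches, both the real and the ideal run are rejected by the $\lambda$-threshold with probability $1-w$, so the pairwise union bound only ever needs to be applied to $\bO{\ell^2+\log 1/w}$ batches. You need some version of both arguments before ``the known-batch analysis applies verbatim.''
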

We start by stating a formal definition of the $(b, w)$-hidden batch stream model:
\begin{definition} [Hidden-batch random order stream model; formal definition]\label{dfn:batchmodel}
In the $(b,w)$-hidden-batch random order stream model the edge set of the input graph $G=(V, E)$ is presented as
follows:
\begin{enumerate}
\item An adversary partitions $E$ into batches $\Bc=\{B_1,\ldots, B_q\}$ of size at most $b$, so that
\[
E=\bigcup_{B\in \Bc} B,\text{~and~}|B|\leq b\text{~for all~}B\in \Bc.
\]
\item Each batch $B\in \Bc$ is assigned an uniformly distributed starting
time $\tb_B\sim \Uc\paren{\brac{0, 1}}$.
\item For each $B\in \Bc$, the adversary assigns each edge $e \in B$ a timestamp $\tb_e \in
\brac{\tb_B, \tb_B + w}$.
\item The items and timestamps $\set{(e, \tb_e) : e \in E}$ are presented to the
algorithm in non-decreasing order of $\tb_e$, with the adversary breaking ties.
\end{enumerate}
\end{definition}
While the batch timestamps are continuous random variables, we assume the
adversary presents the edge timestamps with precision $\poly(1/n)$, so each
timestamp can be stored in $\bO{\log n}$ space. We do, however, require that
the adversary present the edges in the order given by the edge timestamps, so in
particular if $w = 0$ the adversary does not have the option of re-arranging
batches that happen to fall within $\poly(n^{-1})$ of each other.

When $b = 1$ and $w = 0$ this therefore collapses to standard random order
streaming, as an algorithm presented with an ordinary random order stream can
generated a sequence of appropriate timestamps ``on the fly'' (see
Appendix~\ref{app:streamstamps} for details).

\subsection{Notation}
We will use $\sigma = (e, \tb_e)_{e \in E}$ to denote a $(b,w)$-hidden batch
stream, received in order of the timestamps $(\tb_e)_{e \in E}$.

Throughout we will use $K_v$ to refer to the component of $G$ containing $v$,
and $\Kc_v$ to refer to $\set{B \cap K_v : B \in \Bc, B \cap K_v \not=
\emptyset}$, the partitioning of $K_v$ into batches. When $v$ is unambiguous we
will sometimes drop the subscript.

We will use $\Ibb(p)$ to denote the variable that is 1 if the predicate $p$
holds and $0$ otherwise.

\subsection{Component Collection}

\subsubsection{The Real and the Idealized Algorithm}

In this section we describe an algorithm $\ColComp(v,k)$ for collecting a
subset $\Db$ of $K_v$, the component containing $v$ in $G$, along with a guess
$\Dc$ of how $\Db$ is partitioned into batches in $\Bc$. This algorithm will be
a primitive in our component counting and collection algorithms. 

To aid with the analysis of this algorithm, we will define a second
``idealized'' algorithm \ColCompIdeal$(v,k)$. This algorithm will be allowed to
know how the stream is partitioned into batches $B$ and have direct access to
the batch timestamps $\tb_B$. We will show that typically both algorithms will
have almost the same output, allowing us to analyze $\ColComp$ by way of
$\ColCompIdeal$.

$\ColComp$ will work as follows:
\begin{itemize}
\item Grow a subgraph $\Db$ from a vertex $v$, keeping any edge that connects
to $v$ through a path in the edges already added to $\Db$.
\item Whenever a new edge $e$ is added to the component as described above,
ensure that all edges from the batch containing $e$ are added to $\Db$ as well,
by adding every edge with a timestamp up to $w$ before or after $\tb_e$. In
order to facilitate this, we keep a buffer $W$ of all edges with timestamps up
to $w$ before the edge currently being processed.
\item If the component of $v$ in $\Db$ ever has more than $k$ vertices, return
$\perp$.
\item Otherwise, return the component of $v$ in $\Db$ (now discarding edges
that do not connect to $v$), the timestamp of the last edge to add a new
\emph{vertex} to it, and a guess $\Dc$ at how it is partitioned into batches
(based on assuming that any two edges that arrived within $w$ of each other
were in the same batch).
\end{itemize}

We now describe the algorithm formally -- see Algorithm~\ref{alg:collectcomp} below. The parameter $s$ is used to track which edges should be added to $\Db$ on the
grounds of having timestamps up to $w$ \emph{after} an edge in the component
containing $v$.

\begin{algorithm}[H]
\begin{algorithmic}[1]
\Procedure{CollectComponent}{$v, k$}
\State $W \gets \emptyset$ \Comment{Buffering edges from up to $w$ ago.}
\State $\Db\gets (V, \emptyset)$ \Comment{The subgraph we are building up.} 
\State \Comment{$\Db$ has at most $d\binom{k}{2}$ edges so can be stored as a sparse graph.}
\State $\Tb \gets 0$ \Comment{The last time at which a new vertex was added
to the component of $v$ in $\Db$.}
\State $s \gets 0$ \Comment{The last time an edge was added to the component of
$v$ in $\Db$.}
\For{$(e, \tb_e)$ from $\sigma$}
\State Remove all edges from $W$ with time stamps before $\tb_e - w$.
\State Add $(e, \tb_e)$ to $W$.
\If{an endpoint of $e$ is connected to $v$ in $\Db$}
\State \Comment{Add $e$ to subgraph  $\Db$ together with all edges up to $w$ before and after.}
\State $\Db \gets \Db \cup \set{f : (f, \tb_f) \in W}$ 
\State \Comment{Record timestamps for edges added to $\Db$.}
\State $s \gets \tb_e$
\ElsIf{$\tb_e \le s + w$} 
\State \Comment{Add $e$ since it might be in the same batch as $f$ based on timestamp}
\State $\Db \gets \Db \cup \set{e}$
\EndIf
\If{the component of $\Db$ containing $v$ has more than $k$ vertices}
\State \return $(\perp, \perp, \perp)$
\EndIf\ 
\EndFor
\State $\Db \gets$ the component containing $v$ in $\Db$.
\State $\Dc \gets$ the finest partition of $\Db$ such that $\forall e, f
\in \Db, |\tb_e - \tb_f| \le w \Rightarrow \exists P \in \Dc, e, f \in P$.
\State \return $(\Db, \Dc, \Tb)$ \Comment{A subcomponent of $K_v$, a guess at
how it is partitioned, and the last time a \emph{vertex} was added to it.}
\EndProcedure
\end{algorithmic}
\caption{Collecting a component of size at most $k$ in a hidden batch
stream.}
\label{alg:collectcomp}
\end{algorithm}

$\ColCompIdeal$ will be almost identical to $\ColComp$, except now we will know
the batches and so we will not need to guess which edges are in which batch.
Let $\sigma'$ denote the stream of batches and time stamps $(B, \tb_B)$,
ordered by $\tb_B$.

\begin{algorithm}[H]
\begin{algorithmic}[1]
\Procedure{CollectComponentIdeal}{$v, k$}
\State $\Sb\gets (V, \emptyset)$ \Comment{The subgraph we are building up.}
\State $\Sc \gets \emptyset$ \Comment{The set of batches intersecting $\Sb$.}
\State $\Ub \gets 0$ \Comment{The last time at which a new vertex was added to
the component of $v$ in $\Sb$.}
\For{$(B, \tb_B)$ from $\sigma'$}
\If{$\exists e \in B$ with at least one endpoint connected to $v$ in $\Sb$} 
\If{$\exists e \in B$ with exactly one endpoint connected to $v$ in $\Sb$}
\State $\Ub \gets \tb_B$
\EndIf
\State $\Sb \gets \Sb \cup B$
\State $\Sc \gets \Sc \cup \set{B}$
\If{the component of $\Sb$ containing $v$ has more than $k$ vertices}
\State \return $(\perp, \perp, \perp)$
\EndIf
\EndIf
\EndFor
\State $\Sb \gets$ the component containing $v$ in $\Sb$.
\For{$B \in \Sc$}
\State $B \gets B \cap \Sb$
\EndFor\ \Comment{To match $\ColComp$, we take all the batches we added to
$\Sb$ and intersect them with the final value of $\Sb$ (that is, the component
containing $v$ in $\Sb$).}
\State \return $(\Sb, \Sc, \Ub)$  \Comment{A subcomponent of $K_v$, its
partitioning into batches, and the last time a \emph{vertex} was added to it.}
\EndProcedure
\end{algorithmic}
\caption{Collecting a component of size at most $k$ in a stream with known
batches.}
\label{alg:collectcompid}
\end{algorithm}

Note that $\Sb$ uniquely determines $\Sc$, as $\Sc$ is the set of batches
containing an edge from $\Sb$, with all edges not in $\Sb$ removed.

\subsubsection{Correspondence to Ideal Algorithm}
In this section, we prove that $\ColComp$ is a close approximation of
\ColCompIdeal, which we will then analyze in the subsequent section.

For these lemmas, we will need to consider the ``boundary'' batches of $\Sb$.
\begin{definition}
\label{dfn:batchboundary}
The boundary batch set $\Fc$ of $\Sb$ consists of every batch $B \in \Bc$ such
that at least one of the following holds:
\begin{enumerate}
\item $B \cap \Sb \not= \emptyset$ (i.e.\ $B \in \Sc$).
\item $\exists e \in B$ such that $e$ is incident to either $v$ or some edge in
$\Sb$.
\end{enumerate}
\end{definition}
Note that this is determined uniquely by $\Sb$.

First we show that the component collected by $\ColComp$ is always at least the
component collected by $\ColCompIdeal$.
\begin{lemma}
\label{lm:DconS}
Let $\Db$, $\Sb$ be the subgraphs returned by \ColComp, \ColCompIdeal,
respectively. Then
\[
\Db \supseteq \Sb\text{.}
\]
\end{lemma}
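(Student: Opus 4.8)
The plan is to prove this by induction on the runs of the two algorithms, showing that at every stage the subgraph maintained by $\ColComp$ contains the one maintained by $\ColCompIdeal$. The driving observation is that any two edges of the same batch $B$ have timestamps within $w$ of each other (both lie in $[\tb_B,\tb_B+w]$), and that $\ColComp$'s buffer $W$ together with its threshold variable $s$ enforces the following \emph{whole-batch property}: if at some point $\ColComp$ processes an edge of $B$ one of whose endpoints is already connected to $v$ in the current $\Db$ (call this ``grabbing $B$''), then every edge of $B$ is eventually added to $\Db$ --- the edges of $B$ that arrived up to $w$ earlier are still in $W$ and are dumped in when $B$ is grabbed, while the edges of $B$ that arrive later have timestamp at most $s+w$ at the time they are processed and so are added via the \texttt{ElsIf} branch. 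I would isolate and prove this whole-batch property as a standalone claim first, since it is the workhorse of the argument.

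Next I would run an induction over the order $B_1,B_2,\dots$ in which $\ColCompIdeal$ adds batches to $\Sb$, with the invariant: once $\ColComp$ has processed all stream items with timestamp at most $\tb_{B_i}+w$, its $\Db$ contains $B_1\cup\dots\cup B_i$, and in particular the component of $v$ in $\ColCompIdeal$'s $\Sb$ after $B_i$ is added is contained in the component of $v$ in $\Db$. The base case is immediate: $\ColCompIdeal$ adds $B_1$ because some edge of it is incident to $v$, and since $v$ is trivially connected to $v$, $\ColComp$ grabs $B_1$ as soon as it processes that edge, and the whole-batch property finishes it. For the inductive step, when $\ColCompIdeal$ adds $B_{j+1}$ there is an edge $e\in B_{j+1}$ with an endpoint $u$ connected to $v$ through edges of $B_1\cup\dots\cup B_j$; those edges all have timestamps at most $\tb_{B_j}+w\le\tb_{B_{j+1}}+w$, so by the inductive hypothesis they lie in $\Db$ (connected to $v$) by that time, whence $u$ becomes connected to $v$ in $\Db$; one then concludes that $\ColComp$ grabs $B_{j+1}$ by time $\tb_{B_{j+1}}+w$ and reapplies the whole-batch property. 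Taking $i$ to be the last batch $\ColCompIdeal$ adds and using that the stream then ends yields $\Db\supseteq\Sb$. (If $\ColComp$ returns $\perp$ because $\Db$ exceeds $k$ vertices one reads the statement as applying whenever $\ColComp$ does not return $\perp$; the same invariant shows that in that case $\ColCompIdeal$ does not return $\perp$ either, so the conclusion is meaningful in the regime where it is used.)

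The hard part is the final step of the inductive step: reconciling the two different stream orders. $\ColCompIdeal$ ingests whole batches at once, ordered by $\tb_B$, whereas $\ColComp$ ingests individual edges, ordered by $\tb_e$, and a batch $B_i$ that $\ColCompIdeal$ added earlier may begin as late as within $w$ before $B_{j+1}$, so $\ColComp$ can still be incorporating $B_i$'s late-arriving edges while it is already processing $B_{j+1}$'s edges. The argument must therefore track timing carefully --- using the whole-batch property for the earlier batches, the $w$-width of the buffer $W$, and the monotonicity of $s$ --- to show that whichever edge of $B_{j+1}$ is processed once the connecting path has been fully assembled in $\Db$ is grabbed, or else that $B_{j+1}$'s edges are pulled into $\Db$ through the $W$-dump or \texttt{ElsIf} mechanism triggered by a temporally adjacent batch, so that $B_{j+1}$ winds up entirely inside $\Db$ and connected to $v$. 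I expect essentially all of the work to be in getting this timing bookkeeping right, rather than in the (routine) connectivity combinatorics.
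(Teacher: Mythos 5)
Your plan is correct and matches the paper's proof in all essentials: the paper also proves, by strong induction on the batch timestamps, the invariant that $\Db_{\tb_B + w} \supseteq \Sb_{\tb_B}$ for every batch $B$, and resolves the timing subtlety you flag via exactly the mechanism you describe (a case split on whether the connecting edge entered $\Db$ before or after $\tb_B$, using the $W$-dump and the \texttt{ElsIf} window to sweep in the whole batch). Isolating the whole-batch property as a standalone claim is a reasonable organizational choice but does not change the argument.
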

\begin{proof}
For $t \in \brac{0,1}$, let $\Db_t$, $\Sb_t$ be the states of $\Db$ and $\Sb$,
respectively, after every edge (for $\Db$) or batch (for $\Sb$) with a
timestamp no greater than $t$ has been processed. It will therefore suffice to
prove that $\Db_{1+w} \supseteq \Sb_1$. (As ultimately $\Db$, $\Sb$ will be the
components of $v$ in $\Db_{1+w}$, $\Sb_1$, respectively, and taking the
component containing $v$ will preserve the superset relation.)

Fix any assignment of timestamps $(\tb_B)_{B \in \Bc}$, $(\tb_e)_{e \in E}$. We
will prove the following by (strong) induction on the order of the time stamps
$(\tb_B)_{B \in \Bc}$: for all $B \in \Bc$, $\Db_{\tb_B + w} \supseteq
\Sb_{\tb_B}$. As no more edges are added to $\Sb$ after $\max\set{\tb_B : B \in
\Bc}$, this will suffice.

For any $B \in \Bc$, suppose that this holds for all $B'$ with $\tb_{B'} < \tb_B$.
Then we have $\Db_{t + w} \supseteq \Sb_{t}$ for all $t < \tb_B$, as if $B', B''$
are any pair of badges with no batches arriving between them, $\Sb_{t}$ is
unchanged in the interval $\lbrack \tb_{B'}, \tb_{B''})$, while $\Db_{t}$ is
non-decreasing.

Now, if the batch $B$ was not added to $\Sb$ at time $\tb_B$, the result holds
immediately. So suppose B was added. Then $B$ contains at least
one edge $e$ that is incident to some edge $f$ in a batch $B'$ with $\tb_{B'} <
\tb_B$. By our inductive hypothesis, $f \in \Db_{\tb_{B'} + w}$. 

Suppose $f$ was added to $\Db$ at some time after $\tb_B$. Then as $t_{B'} < t_B$, this time was was in $\brac{t_B, t_B + w}$, and so every edge in $\brac{t_B, t_B + w}$ will be added to $\Db$ by the end of that window. So $B \subseteq \Db_{t_{B} + w}$ and therefore $\Db_{t_{B} + w} \supseteq \Sb_{t_B}$. 

Now suppose instead $f$ was added to $\Db$ before $\tb_B$. Then in particular
$f \in \Db_{s}$ for all $s < \tb_e$. Therefore, when $e$ arrives, it is added
to $\Db$ along with every edge that arrives within $w$ of $\tb_e$, including
all of $B$. So $B \subseteq \Db_{t_{B} + w}$ and therefore $\Db_{t_{B} + w}
\supseteq \Sb_{t_B}$.
\end{proof}

Next, we show that if the boundary batches of $\Sb$ are sufficiently
well-separated in time, \ColComp\ returns the same subgraph as \ColCompIdeal,
with the right partitioning and almost the same final time.
\begin{lemma}
\label{lm:fcond}
Let $(\Db, \Dc, \Tb)$ and $(\Sb, \Sc, \Ub)$ be returned by \ColComp,
\ColCompIdeal, respectively, and let $\Fc$ be as defined in
Definition~\ref{dfn:batchboundary}. If no pair $B, B' \in \Fc$ has $\abs{\tb_B
- \tb_{B'}} \le 2w$, then $\Db = \Sb,
\Dc = \Sc$, and $\abs{\Tb - \Ub} \le w$.
\end{lemma}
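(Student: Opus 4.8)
The plan is to deduce the three conclusions from Lemma~\ref{lm:DconS}, which already supplies the inclusion $\Db\supseteq\Sb$, by establishing the reverse inclusion $\Db\subseteq\Sb$ under the separation hypothesis. The heart of the argument is the claim that, under that hypothesis, \emph{every} edge that $\ColComp$ ever places in the component of $v$ in $\Db$ lies in some batch of $\Sc$. Granting this, intersecting with the component of $v$ gives $\Db\subseteq\bigcup_{B\in\Sc}B$ and hence $\Db\subseteq\Sb$, so $\Db=\Sb$ by Lemma~\ref{lm:DconS}.

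To prove the claim I would order the ``joining events'' — the processing steps of Algorithm~\ref{alg:collectcomp} at which an edge first becomes connected to $v$ in $\Db$ — and argue by induction that the edge joining at each step lies in an $\Sc$-batch. Suppose not, and take a first counterexample edge $e$ (breaking ties by proximity to $v$ along a path realizing the connection), with batch $B_e\notin\Sc$. At the moment $e$ joins, it is incident to an edge $g$ of the current component of $v$, and $g$ lies in an $\Sc$-batch $B_g$ by the inductive hypothesis; since each edge lies in exactly one batch and $B_e\notin\Sc$, this forces $B_e$ to be incident to an edge of $\Sb$ and not to meet $\Sb$, so $B_e\in\Fc$ by Definition~\ref{dfn:batchboundary}. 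Now split on how $e$ entered $\Db$. If $e$ was added while being processed (the first branch of the loop), one checks that $B_g$ cannot have arrived before $B_e$ — otherwise $\ColCompIdeal$, whose $\Sb$ stays connected to $v$, would have added $B_e$ upon its arrival, forcing $B_e\in\Sc$ — so $\tb_{B_e}\le\tb_{B_g}$, while $\tb_{B_g}\le\tb_g<\tb_e\le\tb_{B_e}+w$; hence $0\le\tb_{B_g}-\tb_{B_e}<w$. If instead $e$ was pulled in by the $\pm w$ window of a processed edge $e'$ (either as a buffered edge in $W$, or via the $s+w$ test), then $e'$ lies in an $\Sc$-batch $B'$ — again by minimality of $e$, since $e'$ joins the component no later than $e$ — and from $\abs{\tb_e-\tb_{e'}}\le w$, $\tb_{e'}\in[\tb_{B'},\tb_{B'}+w]$, and $\tb_{B_e}\in[\tb_e-w,\tb_e]$ one gets $\abs{\tb_{B_e}-\tb_{B'}}\le 2w$. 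In every case this produces two distinct boundary batches (distinct because $B_e\notin\Sc$ while $B_g,B'\in\Sc$) within $2w$ of one another, contradicting the hypothesis; this proves the claim and hence $\Db=\Sb$.

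Given $\Db=\Sb$, the identity $\Dc=\Sc$ follows quickly: any two edges of a common batch lie within a window of length $w$, so $\Dc$ refines the batch partition; conversely, if $e,f\in\Db=\Sb$ lay in distinct batches $B_e\neq B_f$ with $\abs{\tb_e-\tb_f}\le w$, then, since $B_e,B_f\in\Sc\subseteq\Fc$ and $\tb_{B_e}\in[\tb_e-w,\tb_e]$, $\tb_{B_f}\in[\tb_f-w,\tb_f]$, we would have $\abs{\tb_{B_e}-\tb_{B_f}}\le 2w$, contradicting the hypothesis; thus the ``within $w$'' relation on $\Db$ never links different batches but does link all edges inside a batch, so its transitive closure is exactly $\Sc$. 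For $\abs{\Tb-\Ub}\le w$, note that the hypothesis (now via $\Dc=\Sc$) forces the batches of $\Sc$ to be pairwise more than $2w$ apart in time. Let $B^{*}$ be the last batch contributing a new vertex to the component in $\ColCompIdeal$, so $\Ub=\tb_{B^{*}}$; every earlier $\Sc$-batch has all its edges before time $\tb_{B^{*}}-w$, and by the timed form of Lemma~\ref{lm:DconS} (the intermediate bound $\Db_{\tb_B+w}\supseteq\Sb_{\tb_B}$ from its proof) any later $\Sc$-batch adds no new vertex in $\ColComp$ either, its vertices being present once $\ColComp$ reaches it. Hence the last new vertex of $\Db$ is contributed by an edge of $B^{*}$, whose timestamp lies in $[\tb_{B^{*}},\tb_{B^{*}}+w]$, giving $\Tb\in[\Ub,\Ub+w]$.

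The main obstacle is the claim in the second paragraph: the window mechanism of $\ColComp$ can in principle pull in edges that $\ColCompIdeal$ never sees, and one must verify that under the $2w$ separation such an edge can never become connected to $v$. Making the induction over joining events precise — in particular setting up the tie-breaking so that the inductive hypothesis is genuinely available for both $g$ and $e'$ — is the delicate point; the remaining $w$-versus-$2w$ slack estimates are routine once that combinatorial bookkeeping is in place.
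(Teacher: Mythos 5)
Your overall strategy is the paper's: take $\Db\supseteq\Sb$ from Lemma~\ref{lm:DconS}, prove the reverse inclusion by an induction that exploits the $2w$-separation of the boundary batches, and then read off $\Dc=\Sc$ and $\abs{\Tb-\Ub}\le w$. The difference is in the induction hypothesis, and that is exactly where the gap sits. The paper inducts on edge arrival times and maintains the \emph{time-indexed} containment $\Db_t\cap\Fb\subseteq\Sb_t\cap\Fb$ (with $\Fb=\bigcup\Fc$); this is precisely what certifies, at the moment a batch arrives, that the relevant endpoint is already connected to $v$ in the working subgraph $\Sb$, so that \ColCompIdeal{} really does absorb that batch. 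Your hypothesis --- ``every edge that has joined the $\Db$-component lies in an $\Sc$-batch'' --- is time-free, and the pivotal step of your Case A does not follow from it. Knowing that $B_g$ arrived before $B_e$ only tells you that $g$ sits in the working subgraph $\Sb_{\tb_{B_g}}$; \ColCompIdeal{} adds a batch only when one of its edges is incident to the \emph{component of $v$} in $\Sb$, and $g$ need not be connected to $v$ in $\Sb$ at time $\tb_{B_e}$ even though its batch has been absorbed (a batch is added wholesale once any one edge touches the component, so $g$ can lie in $\Sb$ disconnected from $v$ for a long while). Hence ``otherwise \ColCompIdeal{} would have added $B_e$ upon its arrival'' is unjustified as written; the same issue recurs in Case B.

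The gap is repairable inside your framework, but it needs a further use of the separation hypothesis that you have not supplied: at the joining event of $e$ there is a path $P$ from $v$ to $g$ in $\Db$ whose edges all have timestamps at most $\tb_e\le\tb_{B_e}+w$ and, by minimality, all lie in $\Sc$-batches; for each such batch $B_f\in\Sc\subseteq\Fc$ one has $\tb_{B_f}\le\tb_f\le\tb_{B_e}+w$, so separation forces $\tb_{B_f}<\tb_{B_e}$ for \emph{every} batch meeting $P$, whence all of $P$ lies in the working $\Sb$ just before time $\tb_{B_e}$ and $g$ is genuinely connected to $v$ there. Only then does the ``\ColCompIdeal{} would have added $B_e$'' conclusion go through. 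Equivalently, strengthen your induction hypothesis to the paper's time-indexed form, which makes this path argument automatic. A second, smaller soft spot: in Case B you invoke the hypothesis for $e'$ because it ``joins no later than $e$,'' but when they join at the same event $e'$ need not lie on the path realizing $e$'s connection to $v$, so your proximity tie-breaking does not hand you the hypothesis for $e'$; this too is cured by the time-indexed formulation. The remaining parts ($\Dc=\Sc$ and the timestamp bound) are fine once $\Db_t\subseteq\Sb_t$ is in hand.
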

\begin{proof}
First we will prove that, under these conditions, $\Db = \Sb$. By
Lemma~\ref{lm:DconS}, it will suffice to prove that $\Db \subseteq \Sb$. As in
the previous proof, define $\Db_t$, $\Sb_t$ to be the states of $\Db$ and
$\Sb$, respectively, after every edge (for $\Db$) or batch (for $\Sb$) with a
timestamp no greater than $t$ has been processed. Let $\Fb = \bigcup \Fc$. It
will suffice to prove that $\Db_{1+w} \cap \Fb \subseteq \Sb_{1+w} \cap \Fb$,
as the component of $\Sb_{1+w}$ containing $v$ is contained in $\Fb$, and if
the component of $\Db_{1+w}$ containing $v$ included any edge not in $\Sb_{1+w}
\cap \Fb$, it would also include at least one edge in $\Fb \setminus
\Sb_{1+w}$, as $\Fb$ contains the entire boundary of the component of
$\Sb_{1+w}$ containing $v$ (recalling that $\Db$, $\Sb$ are the components of
$\Db_{1+w}$, $\Sb_1$ containing $v$, respectively, and $\Sb_1 = \Sb_{1+w}$
trivially).

Fix any assignment of timestamps $(\tb_B)_{B \in \Bc}$, $(\tb_e)_{e \in E}$ such
that the lemma criterion holds. We will prove the following by (strong)
induction on the order of the time stamps $(t_e)_{e \in E}$: for all $e \in G$,
$\Db_{\tb_e} \cap \Fb \subseteq \Sb_{t_e} \cap \Fb$. As no more edges arrive in
$(\max_{e \in E} \tb_e, w\rbrack$, this will give us $\Db_{1+w} \subseteq
\Sb_{1+w}$.

For any $e \in G$, suppose that this holds for all $f$ with $\tb_f \le \tb_e$.
Then we have $\Db_t \cap \Fb \subseteq \Sb_t \cap \Fb$ for all $t < \tb_e$, as
edges are only added to $\Db$ at times corresponding to the timestamp of some
edge.

Now, if no edges in $\Fb$ were added to $\Db$ at the time $\tb_e$, the result
holds immediately. So suppose $f \in \Fb$ was added. Then, one of the following holds:
\begin{enumerate}
\item $e$ is connected to $v$ through some path in $\Db_s$ for some $s < \tb_e$,
and $\tb_f \in \brac{\tb_e - w, \tb_e}$.
\item $f = e$, and there is some $g$ such that $\tb_g \in \brac{\tb_e - w,
\tb_e}$, and $g$ is connected to $v$ through some path in $\Db_s$ for some $s <
\tb_g$
\end{enumerate}
In the first case, the path connecting $e$ to $v$ in $\Db_s$ must be contained
in $\Fb$. To see this, note that every edge incident to $v$ is in $\Fb$ along
with every edge incident to the component containing $v$ in $\Sb_1$. So if the
paths was not containing in $\Fb$, consider the first edge of the path not in
$\Fb$. This cannot be the first edge of the path, as that edge is incident to
$v$. So consider the edge immediately preceding it. This edge is in $\Fb$ but
not in $\Sb_1$, as if it were in $\Sb_1$ every edge incident to it would be in
$\Fb$. So it is not in $\Sb_s$ and thus by our inductive hypothesis it is not
in $\Db_s$, contradiction.

As the path is contained in $\Fb$, it is contained in $\Sb_s$ by our inductive
hypothesis. In that case, $e$ is connected to $v$ through some path in $\Sb_s$,
and so $e \in \Fb$. Therefore, as the batches in $\Fc$ have timestamps
separated by $2w$, and the batch $B$ containing $e$ has $\tb_B \in \brac{\tb_e
- w, \tb_e}$, there is no other batch intersecting $\Sb_s$ with a time stamp
after $\tb_e-2w$, and so $e$ is also connected to $v$ through a path in
$\Sb_{\tb_e-2w} \subseteq \tb_{B-w}$. Therefore, $B$ was added to $\Sb$ at the
time $\tb_B \in \brac{\tb_e-w,\tb_e}$. As $f \in \Fb$, it is in $B$, as
otherwise the batch $B'$ containing it would have to have $\tb_{B'} < \tb_e -
2w$, which is inconsistent with $\tb_f$. So $f \in \Db_{\tb_B} \subseteq
\Db_{\tb_e}$, completing the proof for this case.

Now consider the second case. By the same argument as for $f$ in the first
case, $g$ and a path connecting it to $v$ are in $\Fb$, and so the path is in
$\Sb_s$. Furthermore, no batch intersecting $\Sb_s$ arrives in $\brac{\tb_g -
2w, \tb_g}$, as the batch $B$ containing $g$ is in $\Fc$ and $\tb_B \in
\brac{\tb_g - w, \tb_g}$. Therefore, there is a path connecting $g$ to $v$ in
$\Sb_{\tb_B - w}$ and so $B \subseteq \Sb_{\tb_B}$. As $f \in \Fb$, it is in
$B$, as otherwise the batch $B'$ containing it would have to have $\tb_{B'} <
\tb_e - 2w$, which is inconsistent with $\tb_f$. So $f \in \Db_{\tb_B}
\subseteq \Db_{\tb_e}$, completing the proof.

Now we will prove that the partitioning $\Dc = \Sc$. This follows from the fact
that $\Db = \Sb$ and the separation of the batch timestamps---if $e,f \in B
\in \Sc$, then $\abs{\tb_e - \tb_f} \le w$ and so they are in the same
partition in $\Dc$. Conversely, if $e, f$ are in the same partition in $\Dc$,
$\abs{\tb_e - \tb_f}$, then they are in the same batch $B$, as if they were in
different batches, both batches would be in $\Fc$ and would have timestamps
within $2w$ of each other. So $e, f$ are in the same partition in $\Sc$.

Finally, we prove that $\abs{\Tb - \Ub} \le w$. Note that these are the final
time a vertex is added to the component containing $v$ in $\Db$ or $\Sb$,
respectively. This will therefore follow directly from our proof that $\Db_s
\subseteq \Sb_s$, and the Lemma~\ref{lm:DconS} proof that $\Db_{s + w}
\supseteq \Sb_s$.
\end{proof}

We now show that the criterion of Lemma~\ref{lm:fcond} holds, and therefore
$\ColComp$ ``almost'' matches $\ColCompIdeal$, with high probability whenever
$\Fc$ is not too large.
\begin{lemma}
\label{lm:fprob}
Let $(\Db, \Dc, \Tb)$ and $(\Sb, \Sc, \Ub)$ be returned by \ColComp,
\ColCompIdeal, respectively, and let $\Fc$ be as defined in
Definition~\ref{dfn:batchboundary}. Then
n 
\[
\Pb{(\Db = \Sb) \wedge (\Dc = \Sc) \wedge (\abs{\Tb - \Ub} < w) \middle| \Sb}
\ge 1 - 2w\abs{\Fc}^2\text{.}
\]
\end{lemma}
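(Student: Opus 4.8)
The statement to prove is Lemma~\ref{lm:fprob}: conditioned on the output $\Sb$ of $\ColCompIdeal$, with probability at least $1 - 2w\abs{\Fc}^2$ the outputs of $\ColComp$ and $\ColCompIdeal$ agree (up to a $w$ slack in the final timestamp). The plan is to combine Lemma~\ref{lm:fcond} with a union bound over pairs of boundary batches. Lemma~\ref{lm:fcond} already tells us that the desired conclusion $(\Db=\Sb)\wedge(\Dc=\Sc)\wedge(\abs{\Tb-\Ub}<w)$ holds \emph{deterministically} whenever no pair $B,B'\in\Fc$ has $\abs{\tb_B-\tb_{B'}}\le 2w$. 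So it suffices to bound the probability of the bad event that some pair of distinct boundary batches land within $2w$ of each other in time.

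\textbf{Key steps.} First I would observe that $\Fc$ is determined by $\Sb$ (this is stated right after Definition~\ref{dfn:batchboundary}), so conditioning on $\Sb$ fixes the set of boundary batches and in particular $\abs{\Fc}$; the only remaining randomness relevant to the bad event is the batch start times $\tb_B$ for $B\in\Fc$. Next, I would recall from the model (Definition~\ref{dfn:batchmodel}) that each batch start time $\tb_B$ is drawn independently and uniformly from $[0,1]$. Hence for any fixed pair of distinct batches $B,B'$, $\Pb{\abs{\tb_B-\tb_{B'}}\le 2w}\le 2w$ — this is the elementary fact that the difference of two independent uniform $[0,1]$ variables lies in an interval of length $4w$ (i.e.\ $[-2w,2w]$) with probability at most $4w\cdot\tfrac12\cdot 2 = \dots$; more simply, conditioned on $\tb_{B'}$, the variable $\tb_B$ lands in the length-$(\le 4w)$ window $[\tb_{B'}-2w,\tb_{B'}+2w]$ with probability at most $4w$, and one can be slightly more careful to get $2w$, or just absorb the constant. (I would double-check whether the paper wants exactly $2w$ or is content with a constant multiple; the statement says $2w\abs{\Fc}^2$, so I would use the bound $\Pb{\abs{\tb_B-\tb_{B'}}\le 2w}\le 2w$ directly, noting that conditioning on one variable leaves the other uniform over an interval of length $1$ and the target set has measure at most $2w$ after accounting for boundary truncation — actually the cleanest is: the set of $(\tb_B,\tb_{B'})$ with $\abs{\tb_B-\tb_{B'}}\le 2w$ has area at most $4w$ in $[0,1]^2$, but a per-pair bound of $2w$ is obtainable and I will verify the constant in the write-up.) Then I would union-bound over all unordered pairs $\{B,B'\}\subseteq\Fc$, of which there are $\binom{\abs{\Fc}}{2}\le\abs{\Fc}^2/2$, giving that the bad event has probability at most $\tfrac{\abs{\Fc}^2}{2}\cdot 2w \cdot 2 = 2w\abs{\Fc}^2$ — again I will pin down the constant so it comes out as exactly $2w\abs{\Fc}^2$, using that there are fewer than $\abs{\Fc}^2/2$ pairs and each contributes at most $4w$, or $\abs{\Fc}^2$ pairs each contributing $2w$, whichever bookkeeping is cleanest. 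Finally, on the complement of the bad event, Lemma~\ref{lm:fcond} gives the conclusion, completing the proof.

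\textbf{Main obstacle.} There is no deep obstacle here — the lemma is essentially a corollary of Lemma~\ref{lm:fcond} plus a union bound, and the only thing requiring a little care is getting the constant in the per-pair collision probability right so that the final bound reads exactly $2w\abs{\Fc}^2$ rather than, say, $4w\abs{\Fc}^2$. The one subtlety worth flagging in the write-up is the conditioning: one must make sure that conditioning on $\Sb$ does not distort the distribution of the boundary batch timestamps. Since $\Sb$ (and hence $\Fc$) is a deterministic function of the stream, conditioning on $\Sb$ is conditioning on an event that depends on the $\tb_B$'s, so strictly speaking the $\tb_B$ for $B\in\Fc$ are \emph{not} unconditionally uniform given $\Sb$. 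The right way to handle this is to note that $\ColCompIdeal$ only inspects a batch's timestamp through the \emph{order} in which batches arrive, and the relative order of the boundary batches among themselves — together with whatever constraints $\Sb$ imposes — still leaves a collision between two boundary batches a low-probability event; more carefully, one conditions additionally on the ordering and uses that, given the order statistics, the actual gaps are spacings of uniform order statistics, each of which is stochastically at most a single uniform variable, so $\Pb{\text{adjacent boundary gap}\le 2w}\le 2w$ still holds and a union bound over pairs finishes it. I would make sure the exact conditioning argument is stated cleanly, as that is the only place the proof is more than a one-liner.
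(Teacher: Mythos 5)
Your overall structure matches the paper's proof exactly: reduce to Lemma~\ref{lm:fcond}, then union-bound over pairs of boundary batches the probability that two of them have timestamps within $2w$. Your constant bookkeeping is also fine — the paper uses $\binom{\abs{\Fc}}{2}\cdot 4w \le 2w\abs{\Fc}^2$, i.e.\ a per-pair collision probability of $4w$ (the band $\abs{x-y}\le 2w$ in $[0,1]^2$ has area at most $4w$), compensated by the $\binom{\abs{\Fc}}{2}\le\abs{\Fc}^2/2$ count.

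The one place you go astray is in your proposed resolution of the conditioning issue, which you correctly identify as the crux. The paper's resolution is: condition on the \emph{order} in which the batches of $\Fc$ arrive. Any assignment of timestamps producing that order yields the same $\Fc$ and $\Sb$, so conditioning on $\Sb$ (given the order) imposes no further constraint on the timestamps; and conditioned only on the ordering of i.i.d.\ uniforms, the \emph{unlabelled set} of values is still distributed as $\abs{\Fc}$ i.i.d.\ uniform samples, so the unconditional per-pair bound of $4w$ applies verbatim. Your alternative — passing to spacings of uniform order statistics and claiming each spacing is "stochastically at most a single uniform variable", hence $\Pb{\text{gap}\le 2w}\le 2w$ — is wrong in the direction you need: the $i$th spacing of $m$ uniform order statistics is $\mathrm{Beta}(1,m)$, which is stochastically \emph{smaller} than a uniform, so $\Pb{\text{spacing}\le 2w} = 1-(1-2w)^m \approx 2wm$, not $2w$. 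That route can still be salvaged (union-bound the $\abs{\Fc}-1$ adjacent gaps, each failing with probability at most $2w\abs{\Fc}$, giving $2w\abs{\Fc}^2$ again), but as written your claim is false; I'd recommend adopting the paper's unlabelled-set argument, which makes the whole proof the pairwise union bound you sketched in your main paragraph.
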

\begin{proof}
Condition on the order in which the batches in $\Fc$ arrive. First note that
\emph{any} value of $(\tb_B)_{B \in \Bc}$ such that these particular batches
arrive in the given order is sufficient to fix the value of $\Fc$ and $\Sb$.

This means that, conditioned on $\Sb$, $\Fc$, and this order, the
unlabelled set of timestamps $\set{\tb_B : B \in \Fc}$ is distributed as
$\abs{\Fc}$ independent and uniform samples from $\brac{0,1}$. Therefore, the
probability that any pair of them are within $2w$ is at most
$\binom{\abs{\Fc}}{2}4w \le 2w\abs{\Fc}^2$, and so the result follows by
Lemma~\ref{lm:fcond}.
\end{proof}
 Our component
counting and collection algorithms, Algorithms~\ref{alg:countcomps}
and~\ref{alg:findcomp}, will use \ColComp{} by setting some small threshold
$\lambda$ and throwing away the result whenever $\Tb > \lambda$. Now, when
$\Fc$ is large, $\Tb$ and $\Ub$ will concentrate near $1$.  We use this, along
with Lemma~\ref{lm:fprob}, to show that, if $\lambda < 1/2$ and $w$ is small
enough in terms of $k$, either the batches and partitions will match between
$\ColComp$ and $\ColCompIdeal$, or they will both be thrown away.
\begin{lemma}
\label{lm:DapproxS}
Let $(\Db, \Dc, \Tb)$ and $(\Sb, \Sc, \Ub)$ be returned by \ColComp,
\ColCompIdeal, respectively.  For any $\lambda \le 1/2$, with probability $1 -
\bO{k^4w\log^2 1/w}$, either both $\Tb$ and $\Ub$ are greater than $\lambda$,
or $\Tb$ and $\Ub$ are both smaller than $\lambda$ and $(\Db, \Dc) = (\Sb,
\Sc)$.
\end{lemma}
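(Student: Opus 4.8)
The plan is to combine Lemma~\ref{lm:fprob} with a tail bound showing that when the boundary batch set $\Fc$ is large, both $\Tb$ and $\Ub$ are forced to be close to $1$, hence in particular larger than $\lambda \le 1/2$. First I would split on the size of $\Fc$ using a threshold $M$ to be chosen (something like $M = C k^2 \log(1/w)$ for a suitable constant $C$). If $\abs{\Fc} \le M$, then Lemma~\ref{lm:fprob} gives that $(\Db,\Dc) = (\Sb,\Sc)$ and $\abs{\Tb - \Ub} < w$ except with probability at most $2wM^2 = \bO{k^4 w \log^2(1/w)}$; on this event $\Tb$ and $\Ub$ are within $w \le 1$ of each other, so (after absorbing the $w$ slack, which is harmless since $\lambda \le 1/2$ and we only need the qualitative dichotomy, adjusting the failure probability by an $\bO{w}$ term for the boundary case $|\Tb - \lambda| \le w$) they fall on the same side of $\lambda$, giving exactly the conclusion.

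The remaining case is $\abs{\Fc} > M$. Here I would argue that a large boundary set forces many distinct batches to have arrived \emph{before} the component of $v$ stopped growing, which pins $\Ub$ (and hence, via $\abs{\Tb - \Ub}$ or via Lemma~\ref{lm:DconS}, $\Tb$) near $1$. The cleanest route: condition on $\Sb$ (equivalently on the final component $K$ collected, together with the combinatorial structure of which batches lie in $\Fc$), so that $\Fc$ is determined. Then $\Ub$ is the timestamp $\tb_B$ of the last batch $B \in \Sc \subseteq \Fc$ whose arrival added a \emph{new vertex}; but in fact $\Ub \ge \max\{\tb_B : B \in \Sc, B \text{ is the last batch to touch some vertex of } \Sb\}$, and since $\Sc$ alone may be small I would instead use the boundary batches: every batch in $\Fc \setminus \Sc$ arrives \emph{after} $\Ub$ (otherwise it would have been pulled into $\Sb$ by $\ColCompIdeal$), so $\Ub \le \min\{\tb_B : B \in \Fc \setminus \Sc\}$ — wait, that is the wrong direction, so the correct statement is that $\Ub \ge \tb_B$ for the last-arriving batch in $\Sc$, and separately all boundary batches not in $\Sc$ arrive after the component stopped acquiring the vertex they are incident to. The point I would actually exploit is: conditioned on $\Sb$ and $\Fc$, the timestamps $\{\tb_B : B \in \Fc\}$ are i.i.d.\ uniform on $[0,1]$ (as in the proof of Lemma~\ref{lm:fprob}), and $\Ub$ equals $\tb_{B^\star}$ for a \emph{specific} one of them (the last batch in $\Sc$ to be added). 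But with $|\Fc| > M$ of these uniforms, the \emph{second largest}, indeed the top $\Omega(M)$ of them, exceed $1 - \bO{(\log(1/w))/M}$ except with probability $\poly(w)$; and because all boundary batches in $\Fc \setminus \Sc$ must arrive after their incident vertex's batch was added, we get that all but $\bO{1}$ (the $k$ batches actually in $\Sc$, plus their boundary) — more carefully, $|\Fc| \le \abs{\Sc} \cdot (\text{max degree stuff}) = \bO{k \cdot b}$ edges, so the combinatorics bounds $|\Fc|$ in terms of $k$ and $b$ — hmm.

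Let me restate the core argument more carefully, since this is the main obstacle: the real claim is that if $|\Fc| > M$ then with probability $1 - \poly(w)$ over the batch timestamps (conditioned on $\Sb$), the component of $v$ in $\Sb$ can only have been completed very late, so $\Ub > \lambda$. The mechanism: each batch $B \in \Fc$ is incident to $\Sb \cup \{v\}$; for $\ColCompIdeal$ to have stopped growing before time $\lambda$, \emph{every} boundary batch not already in $\Sc$ must arrive after the last growth step, i.e.\ after $\Ub$. So $\Ub \le \min_{B \in \Fc \setminus \Sc} \tb_B$. Now I claim also $\Ub \ge $ the time the last vertex was added, and the last vertex was added by a batch in $\Sc$; among the $\le \abs{\Sc}$ batches in $\Sc$, the one realizing $\Ub$ is fixed given $\Sb$. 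With $|\Fc \setminus \Sc| \ge M - \abs{\Sc} \ge M/2$ (once $M$ exceeds $2\max_v |\Sc_v| = \bO{k}$, which holds by our choice of $M$), the minimum of $M/2$ i.i.d.\ uniforms is at most $\bO{\log(1/w)/M} \le \lambda$? That gives $\Ub$ \emph{small}, the opposite of what I want.

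So the correct reading of the lemma must be the symmetric one: I would instead observe that a large $\Fc$ is itself an \emph{unlikely} event once we also demand the growth finished early — more precisely, for the component to finish growing before $\lambda$ \emph{and} have $|\Fc| > M$, we'd need $M/2$ boundary batches to all arrive in $(\Ub, 1]$, which is automatic, but we'd also need the $\Theta(k)$ batches comprising $\Sc$ to all land in $[0,\lambda]$ — that is not rare. I think the intended argument is cleaner than my flailing: when $|\Fc|$ is \emph{large}, $\ColCompIdeal$ (and $\ColComp$) is \emph{overwhelmingly likely to return $\perp$ from the size cap} or the structure is anyway irrelevant; but the statement as written is about $\Tb, \Ub$ vs.\ $\lambda$, so let me commit to: \textbf{Step 1}, handle $|\Fc| \le M$ via Lemma~\ref{lm:fprob} exactly as above; \textbf{Step 2}, for $|\Fc| > M$, show both $\Tb, \Ub$ concentrate above $\lambda$ by the following: the component of $v$ has at most $k$ vertices, so at most $\binom{k}{2}b = \bO{k^2 b}$ edges can be incident to it, hence $|\Fc| = \bO{k^2 b}$ deterministically — so if $M$ is chosen as $\bO{k^2 b \log(1/w)}$ the case $|\Fc| > M$ combined with early stopping is structurally constrained; conditioning on $\Sb$, the $|\Fc|$ boundary batch timestamps are i.i.d.\ uniform, and for both $\ColComp$ and $\ColCompIdeal$ the last-vertex-added time is the max over the $\Sc$-batch timestamps of those that attach a new vertex, which given a large such set (when $\Sb$ is large, which is forced when $|\Fc|$ is large via the $k$-vertex cap being nearly saturated) exceeds $\lambda$ except with probability $\lambda^{\Omega(k)} = \poly(w)$ after choosing $w$ small. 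I would write the proof committing to this second reading, with $M$ an explicit polynomial in $k$ and $\log(1/w)$, and fold all the $\poly(w)$ and $wM^2$ error terms into the stated $\bO{k^4 w \log^2 (1/w)}$ bound, noting that the hypothesis ``$w$ small enough in terms of $k$'' (implicit in the surrounding text) lets the $\lambda^{\Omega(k)}$ term be absorbed.

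\medskip

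\noindent\emph{Remark on the main obstacle.} The delicate point is pinning down the precise mechanism by which $|\Fc| > M$ forces $\Ub > \lambda$: one must argue that a large boundary set is only consistent with the component having grown to nearly full size, and that growing to nearly full size before time $\lambda$ requires $\Omega(k)$ independent uniform timestamps to all land in $[0,\lambda]$, an event of probability $\lambda^{\Omega(k)}$. Getting the quantitative trade-off between $M$, $k$, $\lambda$, and $w$ right — so that both the $2wM^2$ term from Lemma~\ref{lm:fprob} and the $\lambda^{\Omega(k)}$ tail are $\bO{k^4 w \log^2(1/w)}$ — is the crux, and it relies on the (standing) assumption that $w$ is taken sufficiently small as a function of $k$.
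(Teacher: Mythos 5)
Your Step 1 (the case $\abs{\Fc} \le M$ with $M = \Theta(k^2 + \log(1/w))$, handled via Lemma~\ref{lm:fprob} plus an $\bO{w\abs{\Fc}}$ correction for the boundary event $\abs{\Ub - \lambda} \le w$) is exactly the paper's argument. The gap is in Step 2, the case $\abs{\Fc} > M$: you never land on the correct mechanism, and the fallback you commit to does not give the stated bound.

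The direction you kept second-guessing is the crux, and the right one is the opposite of the one you settled on. A batch $B \in \Fc$ containing an edge with exactly one endpoint $u$ in the final component must arrive \emph{before} $u$ joins the component of $v$: if it arrived after, $\ColCompIdeal$ (resp.\ $\ColComp$) would absorb $B$ and the outside endpoint of that edge would become a new vertex, contradicting that the vertex set is already final. Hence every such batch has $\tb_B$ below $\Ub$ (and below $\Tb$), so $\Ub$ and $\Tb$ are at least the \emph{maximum} of these timestamps. At most $\binom{k}{2}$ batches of $\Fc$ can lack such a half-in edge, so when $\abs{\Fc} > 5k^2 + 10\log(1/w)$, having $\Ub \le 1/2$ forces at least a $9/10$ fraction of the $\abs{\Fc}$ i.i.d.\ uniform timestamps into $[0,1/2]$, an event of probability at most $w$ by a binomial tail bound. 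This is how the paper gets $\Tb, \Ub > 1/2 \ge \lambda$ with probability $1-w$, uniformly in $\lambda$ and with no smallness assumption on $w$.

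Your fallback---that $\Ub \le \lambda$ requires the vertex-adding batches of $\Sc$ to all land in $[0,\lambda]$, probability $\lambda^{\Omega(k)}$---fails on two counts. First, there are at most $k-1$ vertex-adding batches, so the bound is at best $\lambda^{k-1}$, which for $\lambda = 1/2$ is $2^{-O(k)}$ and not $\bO{k^4 w \log^2(1/w)}$; the lemma provides no hypothesis under which this term can be absorbed. Second, the premise that a large $\abs{\Fc}$ forces $\Sb$ to saturate the $k$-vertex cap is false: $G$ carries no degree bound, so even a two-vertex component can have an arbitrarily large boundary batch set (your deterministic claim $\abs{\Fc} = \bO{k^2 b}$ is unwarranted for the same reason). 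The binomial argument on the half-in boundary batches sidesteps both issues, which is why the paper's bound depends only on $k$ and $w$.
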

\begin{proof}
Condition on $\Sb$ and therefore $\Fc$. Furthermore, condition on the order
in which the batches of $\Fc$ arrive. First, suppose $\abs{\Fc} \le 5k^2 + 10\log
1/w$. Then by Lemma~\ref{lm:fprob}, with probability $1 - \bO{wk^4 \log^2
1/w}$, \[
\Db = \Sb, \Dc = \Sc, \abs{\Tb - \Ub} < w
\]
and so the result will hold provided $\abs{\Ub - \lambda} > w$.  $\Ub$ is always
$\bb_B$ for some $B \in \Fc$, so again using the fact that the unlabelled set
$\set{\bb_B : B \in \Fc}$ is distributed as $\abs{\Fc}$ independent and uniform
samples, this happens with probability at least $1 - \bO{w\abs{\Fc})}$. So for any realization $S$ of $\Sb$ such that $\Sb = S$ implies
\[
\abs{\Fc} \le 5k^2 + 10\log 1/w
\]
we have
\[
\Pb{\Db = \Sb, \Dc = \Sc, (\Tb, \Ub < \lambda \vee \Tb, \Ub >
\lambda) \middle| \Sb = S} \ge 1 - \bOt{wk^4}\text{.}
\]
Now suppose $\abs{\Fc} > 5k^2 + 10\log 1/w$. In particular, as $\Db$ and $\Sb$
have edges incident to at most $k$ vertices, and $\Db \supseteq \Sb$, at least
$9k^2/2 + 10\log 1/w$ of the batches in $\Fc$ have an edge with exactly one
endpoint in $\Sb$ and $\Db$. Call this set $\Fc'$.

Each batch in $\Fc'$ must arrive before $\Tb$ and $\Ub$, as otherwise it
would've been included in $\Sb$ or $\Db$ (since after these times they each
have reached their final vertex set), and so $\Tb, \Ub \ge \max_{B \in \Fc'}
\tb_B$.

Using again the fact that the unlabelled set $\set{\bb_B : B \in \Fc}$ is
distributed as $\abs{\Fc}$ independent and uniform samples, this means that
$\Tb, \Ub \le 1/2$ only if at least $9/10$ of $\Fc$ has timestamps $\le 1/2$,
which happens with probability at most
\begin{align*}
\binom{\abs{\Fc}}{9\abs{\Fc}/10}2^{-9\abs{\Fc}/10} &=
\binom{\abs{\Fc}}{\abs{\Fc}/10} 2^{-9\abs{\Fc}/10}\\
&\le
\paren*{\frac{e\abs{\Fc}}{\abs{\Fc}/10}}^{\abs{\Fc}/10}2^{-9\abs{\Fc}/10}\\
&= \paren*{\frac{10e}{2^9}}^{\abs{\Fc}}\\
&\le 2^{-\log 1/w}\\
&= w
\end{align*}
and so  for any realization $S$ of $\Sb$ such that $\Sb = S$ implies \[
\abs{\Fc} > 5k^2 + 10\log 1/w 
\]
we have
\[
\Pb{\Tb, \Ub > 1/2 \ge\lambda \middle| \Sb = S} \ge 1 - w\text{.}
\]
As $\Sb$ uniquely determines $\Fc$, the lemma therefore holds when conditioning
on any realization of $\Sb$.
\end{proof}

\subsubsection{Space complexity of \ColComp}{}
\begin{lemma}
\label{lm:compcollectionspace}
Algorithm~\ref{alg:collectcomp} can be implemented in $\bO{k^2(b + mw)\log^2
n}$ bits of space in expectation.
\end{lemma}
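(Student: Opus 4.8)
The plan is to bound separately the three parts of the state that Algorithm~\ref{alg:collectcomp} maintains: the buffer $W$, the subgraph $\Db$ (the partition $\Dc$ is only recomputed from $\Db$ at the very end, so it is never stored in addition to $\Db$), and the $\bO{1}$ scalars $\Tb$, $s$ together with the loop variables. By the precision assumption of Definition~\ref{dfn:batchmodel} every timestamp fits in $\bO{\log n}$ bits, and every edge is a pair of vertex labels, hence also $\bO{\log n}$ bits; so it suffices to bound the expected \emph{number} of (edge, timestamp) pairs held in $W$ and in $\Db$ over the course of the run, and multiply by $\bO{\log n}$. The scalars contribute only $\bO{\log n}$.

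First I would bound $|W|$. When a given edge of timestamp $t$ is being processed, $W$ holds exactly the edges whose timestamps lie in $[t-w,t]$. For any fixed $t$ and any edge $f$ lying in a batch $B'$, the event $\tb_f \in [t-w,t]$ forces $\tb_{B'}\in[t-2w,t]$, which has probability $\bO{w}$ over the uniform start time $\tb_{B'}$; summing over the $m$ edges gives $\E{|W|}=\bO{mw}$ at every step, uniformly in $t$. To control the maximum of $|W|$ over the at most $m$ steps, I would combine a Chernoff bound with a union bound over the steps: with probability $1-n^{-\Omega(1)}$ the buffer never exceeds $\bO{mw+\log n}$ edges, and in the complementary event I charge the trivial worst case of storing the entire stream ($\bO{m\log n}$ bits), which contributes $o(1)$ to the expectation. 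Hence the expected space for $W$ is $\bO{(mw+\log n)\log n}$.

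Next I would bound $|\Db|$, which dominates. The key structural fact is that, until the algorithm returns $(\perp,\perp,\perp)$ (and at the step it does), the component of $v$ in $\Db$ has at most $k+1$ vertices. Edges enter $\Db$ only through the two ``keep'' branches. The first branch (which dumps all of $W$ into $\Db$) fires only when the arriving edge $e$ already has an endpoint connected to $v$ in $\Db$; immediately after $e$ is processed, both endpoints of $e$ lie in the $v$-component, so every triggering edge has both endpoints inside a vertex set of size at most $k+1$, and since $G$ is simple there are at most $\binom{k+1}{2}=\bO{k^2}$ such triggering edges over the whole stream. Each firing at time $\tb_e$ dumps the current $W$ (edges with timestamps in $[\tb_e-w,\tb_e]$) and, because $s$ is then set to $\tb_e$ and only updated at the next firing, the second branch subsequently appends only edges with timestamps at most $\tb_e+w$; thus each firing is responsible for edges whose timestamps lie in the length-$2w$ window $[\tb_e-w,\tb_e+w]$. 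Exactly as in the $W$ computation, the expected number of such edges is $\bO{b+mw}$: the window certainly contains the batch of $e$ (at most $b$ edges), plus an expected $\bO{mw}$ further edges. Summing over the $\bO{k^2}$ firings, $\E{|\Db|}=\bO{k^2(b+mw)}$, again handling the negligible-probability event of an abnormally dense window by charging the $\bO{m\log n}$ trivial bound.

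Combining the two bounds and multiplying by the $\bO{\log n}$ cost of a stored (edge, timestamp) pair gives expected space $\bO{k^2(b+mw)\log n}$ up to the additive lower-order terms produced by the tail handling, which are absorbed into the stated $\bO{k^2(b+mw)\log^2 n}$ (using $b+mw\ge 1$). I expect the main obstacle to be the $\Db$ bound: one must argue carefully that the $(k+1)$-vertex cap on the $v$-component really does limit the number of ``dump'' events, even though a dump can itself enlarge the $v$-component, and one must marry the per-window expectation with a tail bound so that rare dense windows do not inflate the expected space.
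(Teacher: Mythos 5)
Your proposal is correct and follows essentially the same route as the paper: both decompose the state into the buffer $W$ plus $\bO{k^2}$ width-$\bO{w}$ timestamp windows (one per edge internal to the size-$\le k$ component of $v$), and bound the number of edges in such a window by $\bO{b+mw}$ up to log factors. The ``main obstacle'' you flag --- that the window locations $\tb_e$ are correlated with the other batches' timestamps, so a per-window expectation is not immediate --- is resolved in the paper exactly as you suggest, by instead bounding the \emph{maximum} number of edges in any width-$\bO{w}$ window via Bernstein's inequality plus a union bound, giving $\bO{(b+wm)\log n}$ with probability $1-1/m$ and charging the trivial $m$ bound to the failure event.
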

\begin{proof}
At all times in the execution of $\ColComp$, $W$ contains, at most, edges with
timestamps up to $w$ before that of the last edge processed. Other than $W$,
the algorithm has to keep, up to $\binom{k}{2}$ times, all edges with
timestamps within $w$ of some specified edge.

Therefore, the space usage of the algorithm is at most \[
\bO{k^2 \cdot M^* \cdot \log n}
\]
bits, where $M^*$ is the largest number of edges with timestamps within any
width-$2w$ window in the stream. To bound the expectation of $M^*$, we start by
noting that, as each batch $B$ has at most $b$ edges, all with timestamps in $\brac{\tb_B, \tb_B + w}$,
\begin{align*}
M^* &\le \max_{B \in \Bc} \paren*{\abs{B} + \sum_{\substack{B' \in \Bc:\\
\abs{\tb_{B'} - \tb_B} \le 3w}}\abs{B'}}\\
&\le b + \max_{B \in \Bc}\sum_{\substack{B' \in \Bc:\\ \abs{\tb_{B'} - \tb_B}
\le 3w}}\abs{B'}
\end{align*}
Now, fix some $B \in \Bc$. For all $B' \in \Bc \setminus \set{B}$, let
$\Ab_{B'}$ be the random variable that is $\abs{B'}$ if $\abs{\tb_{B'} - \tb_B}
\le 3w$ and $0$ otherwise. Then the variables $\Ab_{B'}$ are independent, are size at most $b$, and the sum of their expectations is at most $wm$ while
\begin{align*}
\sum_{B' \in \Bc \setminus \set{B}} \E{\Ab_{B'}^2} &= \sum_{B' \in \Bc
\setminus \set{B}}3w\abs{B'}^2\\
&\le 3wmb\text{.}
\end{align*}
So by the Bernstein inequalities, for all $t$, $\sum_{B' \in \Bc
\setminus \set{B}} \Ab_{B'} \le m + t$ with probability at least \[
e^{-\bOm{\frac{t^2}{wmb + bt}}}
\]
and so in particular, it is at most $\bO{\paren{wm + b} \log n}$ with
probability at least $1 - m^{-2}$, and so by a union bound $\abs{M^*} =
\bO{\paren{wm + b} \log n}$ with probability at least $1 - m^{-1}$.

We therefore have
\begin{align*}
\E{M^*} &\le \bO{\paren{wm + b} \log n} + m \cdot m^{-1}\\
&= \bO{\paren{wm + b} \log n}
\end{align*}
and so the lemma follows.
\end{proof}

\subsection{Properties of Idealized Component Collection}
In this section we will establish some properties of $\ColCompIdeal$ that will
be useful for both \emph{counting} and \emph{collecting} components.

Let $\lambda > 0$ be some parameter to be defined later. 
\begin{definition}
\label{dfn:batchorder}
For any $v \in V$, $H \subseteq G$, $\lambda > 0$, the batch probability
$p_v^\lambda(H)$ is the probability that both of the following happen:
\begin{itemize}
\item The batches intersecting $H$ arrive in \emph{batch order}---any order
such that, for every batch $B$ intersecting $H$, there is a path from $v$ to an
edge in $B$ consisting entirely of edges in batches intersecting $H$ with
timestamps before $\tb_B$.
\item $H$ is covered by time $\lambda$---for every $w \in V(H)$, there is a
path from $v$ to $w$ consisting entirely of edges in batches intersecting $H$
with timestamps before $\lambda$.
\end{itemize}
\end{definition}
We will use this to define a family of random variables $\Xb_v$. Let 
$$
(\Sb_v, \Sc_v, \Ub_v) =
\ColCompIdeal(v, k).
$$ Then we define \[
\Xb_v = \begin{cases}
0 &\text{~if~} \Sb_v = \perp\\
0 &\text{~if~} \Ub > \lambda\\
1/p_v^\lambda(\Sb_v) &\mbox{otherwise.}
\end{cases}
\]
Note that this can be determined entirely from the output of $\ColCompIdeal$,
without knowing anything else about the stream.

We then define \[
\Xb_v = \Yb_v + \Zb_v
\]
where $\Yb_v$ is $\Xb_v$ when $\Sb_v$ is the component of $G$ containing $v$,
and zero otherwise. Note that while $\Yb_v$, $\Zb_v$ are determined by the
stream, they cannot be identified from the output of \ColCompIdeal{}
alone.

We want to prove that the variables $\Yb_v$, corresponding to $\Sb_v$ being
``correct'', have ``nice'' properties---good expectation, bounded variance, and
approximate independence. Meanwhile, we want to prove that the ``error''
variables $\Zb_v$ are small in expectation.

\subsubsection{Correct Component Contribution}
\begin{lemma}
\label{lm:Yexp}
\[
\E{\Yb_v} = \begin{cases}
1 & \mbox{if $v$ is in a component with $\le k$ vertices.}\\
0 & \mbox{otherwise.}
\end{cases}
\]
\end{lemma}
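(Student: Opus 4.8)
The plan is to show that $\ColCompIdeal(v,k)$ recovers the true component $K_v$ exactly as long as the batches of $K_v$ happen to arrive in a ``growable'' order (one where every batch connects back to $v$ through previously-arrived batches of $K_v$), that no such growth order ever causes the vertex count to exceed $k$ when $|V(K_v)| \le k$, and that the reciprocal-probability weighting $1/p_v^\lambda(\Sb_v)$ is designed precisely to make the indicator of ``correct recovery and covered by time $\lambda$'' have expectation $1$. First I would handle the easy case: if $v$ is in a component of size more than $k$, then $\ColCompIdeal$ either returns $\perp$ (once its running subgraph exceeds $k$ vertices) or returns a subgraph that is a \emph{strict} subset of $K_v$. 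In the former case $\Xb_v = 0$; in the latter $\Sb_v \ne K_v$ by definition, so $\Yb_v = 0$ regardless. Hence $\E{\Yb_v} = 0$, as claimed. (I should double-check the edge case where $K_v$ has exactly more than $k$ vertices but some sub-collection of its batches never grows past $k$ — but then that sub-collection is still a proper subgraph of $K_v$, so $\Sb_v \ne K_v$ and $\Yb_v = 0$ still.)

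For the main case $|V(K_v)| \le k$: here $\ColCompIdeal$ never returns $\perp$, because the running subgraph $\Sb$ is always a subgraph of $K_v$ and thus has at most $|V(K_v)| \le k$ vertices. So $\Sb_v = K_v$ if and only if every batch of $K_v$ eventually gets added, which happens if and only if the batches of $\Kc_v$ arrive in \emph{batch order} in the sense of Definition~\ref{dfn:batchorder} — this is exactly the first condition in $p_v^\lambda$. Moreover, conditioned on batch order, $\Ub_v \le \lambda$ (the last time a \emph{vertex} is added to the component) is exactly the event that $K_v$ is ``covered by time $\lambda$,'' the second condition in $p_v^\lambda$. So on the event $\{\Sb_v = K_v\} \cap \{\Ub_v \le \lambda\}$, which has probability exactly $p_v^\lambda(K_v)$, we get $\Yb_v = 1/p_v^\lambda(\Sb_v) = 1/p_v^\lambda(K_v)$, and on the complement $\Yb_v = 0$. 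Therefore
\[
\E{\Yb_v} = p_v^\lambda(K_v) \cdot \frac{1}{p_v^\lambda(K_v)} = 1.
\]
The one subtlety to nail down is that when $\Sb_v = K_v$, the quantity $p_v^\lambda(\Sb_v)$ that the algorithm computes equals $p_v^\lambda(K_v)$; this is immediate since $\Sb_v$ as a subgraph \emph{is} $K_v$, and $p_v^\lambda$ depends only on the subgraph (and the fixed batch structure / timestamp distribution), not on how it was discovered.

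The main obstacle — really the only place that needs care — is the equivalence ``$\Sb_v = K_v \iff$ batches arrive in batch order'' and the parallel statement ``$\Ub_v \le \lambda \iff$ covered by time $\lambda$,'' which require unwinding the pseudocode of Algorithm~\ref{alg:collectcompid}: one must argue that a batch $B$ with $B \cap K_v \ne \emptyset$ is added to $\Sb$ at time $\tb_B$ precisely when $B$ contains an edge incident to the already-built component, and then induct on the order of batch arrivals to conclude that all of $\Kc_v$ is absorbed iff the arrival order is a valid growth order rooted at $v$. The ``covered by $\lambda$'' direction similarly requires observing that $\Ub$ is updated exactly when a batch \emph{adds a new vertex}, so $\Ub \le \lambda$ iff every vertex of $K_v$ is reachable from $v$ via batches with timestamp $< \lambda$ — matching Definition~\ref{dfn:batchorder} verbatim. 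Everything else is bookkeeping against the definitions already in place.
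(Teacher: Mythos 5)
Your proposal is correct and follows essentially the same route as the paper's proof: identify the event $\{\Sb_v = K_v,\ \Ub \le \lambda\}$ with the conjunction of ``batch order'' and ``covered by time $\lambda$'' from Definition~\ref{dfn:batchorder}, note that this event has probability exactly $p_v^\lambda(K_v)$, and conclude that the reciprocal weighting gives expectation $1$ (with the large-component case giving $0$ because $\Sb_v$ can never equal $K_v$ there). The extra care you take in unwinding Algorithm~\ref{alg:collectcompid} and in the $|V(K_v)|>k$ case is sound but not a departure from the paper's argument.
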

\begin{proof}
If $v$ is \emph{not} in a component with $\le k$ vertices it is $0$ by
definition. Otherwise, let $H$ be the component containing it and $\Hc = \set{B
: B \in \Bc, B \cap H \not=\emptyset}$. Then it will be $1/p_v^\lambda(H) $ if
$\Sb_v = H$, $\Ub \le \lambda$, and $0$ otherwise. $\Sb_v = H$ iff the batches
in $\Hc$ arrive in batch order (as defined in Definition~\ref{dfn:batchorder}),
while $\Ub \le \lambda$ iff $H$ is covered by time $\lambda$. The probability
that both of these happen is exactly $p_v^\lambda(H)$.
\end{proof}

To bound the variance of $\Yb_v$, we will need some lower bounds on
$p_v^\lambda(H)$ when $H$ is the component containing $v$.

\begin{lemma}
\label{lm:pmin}
For any $v \in V$, let $H$ be the component of $G$ containing $v$. Then \[
p_v^\lambda(H) \ge (\lambda/k^2)^k\text{.}
\]
\end{lemma}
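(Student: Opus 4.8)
The plan is to exhibit a single, carefully chosen event $\mathcal{E}$ of probability at least $(\lambda/k^2)^k$ on which both conditions defining $p_v^\lambda(H)$ hold. Since the random variable $\Yb_v$ we ultimately care about is supported on the event that the component has at most $k$ vertices, we may assume $|V(H)|\le k$ (the case $V(H)=\{v\}$ being trivial), and we may assume $\lambda\le 1$ since $\lambda$ is a threshold in $[0,1]$. Then $H$ has at most $\binom{k}{2}\le k^2$ edges, so the set $\Hc=\{B\in\Bc:B\cap E(H)\neq\emptyset\}$ of batches meeting $H$ has size $m:=|\Hc|\le k^2$. Fix a spanning tree $T$ of $H$ rooted at $v$, together with a BFS ordering $e_1,\dots,e_{q-1}$ of its edges (so $\{e_1,\dots,e_j\}$ is connected and contains $v$ for every $j$, and $e_j$ attaches a new vertex to $\{e_1,\dots,e_{j-1}\}\cup\{v\}$). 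Let $\Hc_0=\{B\in\Bc:B\cap T\neq\emptyset\}$ be the batches containing a tree edge; since distinct batches are disjoint, $r:=|\Hc_0|\le|E(T)|=q-1\le k-1$. Order $\Hc_0$ as $B_{\sigma(1)},\dots,B_{\sigma(r)}$ by the smallest index of a tree edge they contain; a short check shows this is a valid batch order for building $T$: when $B_{\sigma(i)}$ appears, its first tree edge $e_j$ has all of $e_1,\dots,e_{j-1}$ already present (these lie in batches with strictly smaller first-edge index, i.e.\ among $B_{\sigma(1)},\dots,B_{\sigma(i-1)}$), and they connect $v$ to the $v$-side endpoint of $e_j$.

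Next I would take $\mathcal{E}$ to be the event that \emph{the batches of $\Hc_0$ are exactly the $r$ batches of $\Hc$ with smallest timestamps, they appear in the order $\sigma$, and the $r$-th smallest timestamp $\tb_{(r)}$ among the batches of $\Hc$ satisfies $\tb_{(r)}<\lambda$.} On $\mathcal{E}$, coverage by time $\lambda$ holds because $T\subseteq\bigcup\Hc_0$ spans $V(H)$ while every batch of $\Hc_0$ has timestamp at most $\tb_{(r)}<\lambda$. The batch-order condition holds for every $B\in\Hc_0$ by the previous paragraph (the earlier batches supplying the needed tree edges have strictly smaller timestamps, since the order is $\sigma$), and for every $B\in\Hc\setminus\Hc_0$ because such a $B$ arrives after all of $\Hc_0$, by which time all of $T$ — hence a path from $v$ to an endpoint of any edge of $B$ lying in $H$ — is present in batches with strictly smaller timestamps. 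Here I would use, as elsewhere in the paper, that a path between two vertices of the component $H$ stays inside $H$, so every edge invoked lies in a batch meeting $H$.

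Finally I would compute $\Pr[\mathcal{E}]$. The relative order of the $m$ i.i.d.\ uniform batch timestamps of $\Hc$ is a uniform permutation, independent of the sorted values $\tb_{(1)}<\cdots<\tb_{(m)}$, so
\[
\Pr[\mathcal{E}] \;=\; \Pr\!\big[\,\Hc_0\text{ fills the first }r\text{ slots in order }\sigma\,\big]\cdot\Pr[\tb_{(r)}<\lambda] \;=\; \frac{(m-r)!}{m!}\cdot\Pr[\mathrm{Bin}(m,\lambda)\ge r] \;\ge\; \frac{\lambda^{r}}{m^{r}},
\]
where the last bound uses $(m-r)!/m!\ge m^{-r}$ and the fact that $\tb_{(r)}<\lambda$ is implied by any fixed $r$ of the $m$ timestamps being $<\lambda$. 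Since $r\le k-1\le k$, $m\le k^2$, and $0<\lambda/k^2\le 1$, this yields $\Pr[\mathcal{E}]\ge(\lambda/m)^r\ge(\lambda/k^2)^r\ge(\lambda/k^2)^k$, as required. The step I expect to be the real obstacle is the \emph{choice} of $\mathcal{E}$: the naive event ``all of $\Hc$ lands inside $[0,\lambda)$ in some valid batch order'' only gives roughly $(\lambda/k^2)^{k^2}$, because $H$ may have $\Theta(k^2)$ edges spread over $\Theta(k^2)$ distinct batches; the point is that only the $\le k-1$ batches meeting a fixed spanning tree need a genuine timing constraint, while the remaining $m-r$ batches merely have to fall after them in the otherwise-free ordering, which costs only the harmless combinatorial factor $(m-r)!/m!\ge m^{-r}$.
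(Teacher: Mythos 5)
Your proof is correct and takes essentially the same route as the paper's: fix a spanning tree of $H$ rooted at $v$, condition on the (at most $k-1$) batches meeting the tree arriving first among the at most $k^2$ batches meeting $H$, in a valid tree-building order, and before time $\lambda$, which yields the $(\lambda/k^2)^k$ bound. The only differences are cosmetic --- the paper uses a DFS tree and argues the ordering and timing probabilities slightly more informally, whereas you use a BFS tree and make explicit the independence of the rank permutation from the order statistics; both implicitly restrict to $|V(H)|\le k$, which is the only regime in which the lemma is applied.
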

\begin{proof}
Consider a depth-first search tree for $H$. Suppose that the batches
intersecting the edges in this tree arrive in the order corresponding to the
depth-first search, before every other batch intersecting $H$, and before time
$\lambda$. Then:
\begin{itemize}
\item The batches intersecting $H$ arrive in batch order.
\item Every vertex in $H$ is covered by a tree of edges in these batches, each
of which has timestamp before $\lambda$.
\end{itemize}
Recall that batch order is defined in Definition~\ref{dfn:batchorder}.

So $p_v^\lambda(H)$ is at least the probability that this occurs. Now, as there
are no more than $\binom{k}{2} \le k^2$ distinct batches intersecting $H$, and
at most $k-1 < k$ batches intersecting the tree, the probability that the
batches intersecting the tree arrive in the depth-first search order and before
every other batch intersecting $H$ is at least $(1/k^2)^k$.

Now note that unconditionally, the probability that a given set of fewer than
$k$ batches would all arrive before time $\lambda$ is at least $\lambda^k$, and
conditioning on them being the first $k$ of the batches intersecting $H$ to
arrive only increases this probability. So the probability that both events
hold is at least $(\lambda/k^2)^k$, completing the proof.
\end{proof}

\begin{lemma}
\label{lm:Yvar}
\[
\Var{\Yb_v} \le (k/\lambda)^{\bO{k}}
\]
\end{lemma}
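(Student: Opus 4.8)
The plan is to bound the variance by the second moment and exploit the fact that $\Yb_v$ is supported on a single, \emph{deterministic} nonzero value. First I would dispose of the trivial case: if $v$ is not in a component of size at most $k$, then $\Yb_v \equiv 0$ by definition, so $\Var{\Yb_v} = 0$ and there is nothing to prove. So assume $v$ lies in a component $H$ with $\abs{V(H)} \le k$. Recall from the construction that $\Yb_v = 1/p_v^\lambda(\Sb_v)$ exactly on the event where $\Sb_v$ equals the true component containing $v$ and $\Ub_v \le \lambda$, and is zero otherwise; on this event $\Sb_v = H$, so the value taken is the fixed quantity $1/p_v^\lambda(H)$. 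Moreover, the proof of Lemma~\ref{lm:Yexp} already identifies the probability of the event $\set{\Sb_v = H} \cap \set{\Ub_v \le \lambda}$ as exactly $p_v^\lambda(H)$ (the batches intersecting $H$ arriving in batch order and $H$ being covered by time $\lambda$). Hence $\Yb_v$ is a random variable equal to $1/p_v^\lambda(H)$ with probability $p_v^\lambda(H)$ and $0$ otherwise.

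Given this, I would compute directly: $\E{\Yb_v^2} = \paren*{1/p_v^\lambda(H)}^2 \cdot p_v^\lambda(H) = 1/p_v^\lambda(H)$, and therefore $\Var{\Yb_v} = \E{\Yb_v^2} - \paren*{\E{\Yb_v}}^2 \le \E{\Yb_v^2} = 1/p_v^\lambda(H)$. Finally I would invoke Lemma~\ref{lm:pmin}, which gives $p_v^\lambda(H) \ge (\lambda/k^2)^k$, so that $\Var{\Yb_v} \le (k^2/\lambda)^k$. Using $\lambda \le 1$ and $k \ge 1$ we have $(k^2/\lambda)^k = k^{2k}\lambda^{-k} \le k^{2k}\lambda^{-2k} = (k/\lambda)^{2k} = (k/\lambda)^{\bO{k}}$, which is the claimed bound.

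There is no substantive obstacle here; essentially all the work was already done in Lemma~\ref{lm:Yexp} (pinning down the law of $\Yb_v$) and Lemma~\ref{lm:pmin} (the lower bound on $p_v^\lambda(H)$). The only point requiring care is that the value $\Yb_v$ takes on its support is genuinely deterministic---because on that event $\Sb_v$ is forced to equal the actual component $H$---so that the second-moment computation collapses to $1/p_v^\lambda(H)$ rather than an average of $1/p_v^\lambda(\Sb_v)^2$ over a random $\Sb_v$. I would state this explicitly to make the one-line variance computation rigorous.
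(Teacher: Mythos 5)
Your proof is correct and follows essentially the same route as the paper: dispose of the large-component case, bound $\Var{\Yb_v}$ by $\E{\Yb_v^2}$, and invoke Lemma~\ref{lm:pmin}. The only difference is that you use the exact law of $\Yb_v$ (nonzero value $1/p_v^\lambda(H)$ attained with probability exactly $p_v^\lambda(H)$, as established in the proof of Lemma~\ref{lm:Yexp}) to get the sharper bound $1/p_v^\lambda(H) \le (k^2/\lambda)^{k}$, whereas the paper simply bounds $\E{\Yb_v^2}$ by the square of the maximum value, giving $(k^2/\lambda)^{2k}$ --- both are $(k/\lambda)^{\bO{k}}$, so the refinement is immaterial.
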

\begin{proof}
If $v$ is in a component with more than $k$ vertices, $\Yb_v$ is always 0.
Otherwise, by Lemma~\ref{lm:pmin}, we have
\begin{align*}
\Var{\Yb_v} &\le \E{\Yb_v^2}\\
&\le 1/(\lambda/k^2)^{2k}
\end{align*}
completing the proof.
\end{proof}

\begin{lemma}
\label{lm:Yindepcon}
For any vertex $v \in V$, if $v$ is in a component with more than $k$
vertices, let $\Kc_v = \emptyset$, otherwise let it be the set of batches that
contain at least one edge in the component of $G$ containing $v$. For any $U
\subseteq V$, if the sets $(\Kc_v)_{v \in U}$ are disjoint, the variables
$(\Yb_v)_{v \in U}$ are independent.
\end{lemma}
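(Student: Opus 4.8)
The plan is to reduce the statement to a purely probabilistic fact: that functions of disjoint sub-families of an i.i.d.\ collection are mutually independent. Concretely, I would show that each $\Yb_v$ is a deterministic function of the batch timestamps $\paren*{\tb_B}_{B\in\Kc_v}$ alone. Since for distinct $v\in U$ the sets $\Kc_v$ are disjoint and the family $\paren*{\tb_B}_{B\in\Bc}$ is mutually independent, the vectors $\paren*{\tb_B}_{B\in\Kc_v}$, $v\in U$, are then mutually independent, and hence so are the $\Yb_v$.

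First I would dispose of the case where $v$ lies in a component $K_v$ with more than $k$ vertices: then $\Kc_v=\emptyset$ and $\Yb_v\equiv 0$ (the part of the subgraph grown by $\ColCompIdeal(v,k)$ that is reachable from $v$ always stays inside $K_v$, so either it eventually has more than $k$ vertices and the procedure returns $(\perp,\perp,\perp)$, or it is never all of $K_v$; either way $\Sb_v\ne K_v$ and $\Yb_v=0$), so the claim is trivial there. Next, assuming $K_v$ has at most $k$ vertices, I would track which batches can influence $\ColCompIdeal(v,k)$. The procedure reads only the batch timestamps $\tb_B$ — never the adversarially chosen edge timestamps — processes batches in increasing order of $\tb_B$, and adds a batch $B$ to the running subgraph $\Sb$ only when $B$ contains an edge incident to the component of $v$ in $\Sb$. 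Because that component is always contained in $K_v$ and a component of $G$ is closed under incident edges, any batch that is ever added contains an edge of $K_v$, i.e.\ lies in $\Kc_v$; batches outside $\Kc_v$ are merely scanned and leave $\Sb$, $\Sc$, $\Ub_v$ unchanged. By induction on the processing order, the entire trajectory $(\Sb,\Sc,\Ub_v)$ — in particular the final output $(\Sb_v,\Sc_v,\Ub_v)$, and note the test for more than $k$ vertices never fires — is a deterministic function of the relative order of $\paren*{\tb_B}_{B\in\Kc_v}$ together with the indicators $\1bb(\tb_B\le\lambda)$, $B\in\Kc_v$. Finally I would unwind the definition of $\Yb_v$: it is nonzero only when $\Sb_v=K_v$, in which case $\Yb_v=\Xb_v$ equals $1/p_v^\lambda(K_v)$ if $\Ub_v\le\lambda$ and $0$ otherwise, where $p_v^\lambda(K_v)$ is a fixed number (a probability determined by $G$, $\Bc$, $\lambda$), not a random variable. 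By the computation in the proof of Lemma~\ref{lm:Yexp}, $\set{\Sb_v=K_v}$ is exactly the event that the batches of $\Kc_v$ arrive in batch order and $\set{\Ub_v\le\lambda}$ is exactly the event that $K_v$ is covered by time $\lambda$ (Definition~\ref{dfn:batchorder}); both are measurable with respect to $\paren*{\tb_B}_{B\in\Kc_v}$. Hence $\Yb_v$ is a deterministic function of $\paren*{\tb_B}_{B\in\Kc_v}$, and the independence conclusion follows as in the first paragraph.

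I expect the one genuinely non-routine point to be the middle step — verifying that $\ColCompIdeal(v,k)$ never reads or reacts to any batch outside $\Kc_v$, with the mild subtlety that a batch may straddle several components of $G$. This is handled entirely by the observation that a connected component is closed under taking incident edges, so any batch one of whose edges becomes reachable from $v$ already counts as belonging to $\Kc_v$; everything else is bookkeeping.
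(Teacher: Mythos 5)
Your proposal is correct and follows the same route as the paper: the paper's proof likewise handles the large-component case by noting $\Yb_v\equiv 0$, and otherwise observes that a batch can only affect $\ColCompIdeal(v,k)$ if it is connected to $v$ in $G$, so $\Yb_v$ is a function of the timestamps $(\tb_B)_{B\in\Kc_v}$ alone and independence follows from disjointness. Your write-up simply makes explicit the closure-under-incidence argument that the paper states in one line.
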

\begin{proof}
Suppose $v$ is in a component with more than $k$ vertices. Then $\Yb_v = 0$
always and therefore it is independent of $\Yb_u$ for all $u \in V$. 

Otherwise, for a batch to affect the output of $\ColCompIdeal(v,k)$, there
must be a path from it to $v$ in $G$. Therefore, $\Yb_v$ depends only on the
timestamps of batches intersecting its component, that is $\Kc_v$. So the
variables $(\Yb_v)_{v \in U}$ are independent provided the sets $(\Kc_v)_{v
\in U}$ are disjoint.
\end{proof}
\begin{lemma}
\label{lm:Yindep}
Let $S$ be a set of $r$ vertices sampled uniformly (with replacement) from
$V$.  Then with probability at least $1 - r^2bk^3/n$ over the choice of $S$,
the variables $(\Yb_v)_{v \in U}$ are independent conditioned on $S$.
\end{lemma}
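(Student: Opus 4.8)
The plan is to deduce the statement from Lemma~\ref{lm:Yindepcon}, which already guarantees that $(\Yb_v)_{v\in S}$ are independent whenever the batch sets $(\Kc_v)_{v\in S}$ are pairwise disjoint. Since $G$ and the batch partition $\Bc$ are fixed (the only randomness relevant here is the choice of $S$, and the claim is made conditionally on $S$), it therefore suffices to show that, with probability at least $1-r^2bk^3/n$ over $S$, the sets $(\Kc_v)_{v\in S}$ are pairwise disjoint; this reduces to a short union bound.

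First I would control the ``reach'' of a single sampled vertex. If $v$ lies in a component $K_v$ with at most $k$ vertices, then $K_v$ has at most $\binom{k}{2}$ edges, so the batch set $\Kc_v$ has at most $\binom{k}{2}$ elements; as each batch has at most $b$ edges and hence at most $2b$ endpoints, the vertex set $N(v) := \bigcup_{B\in\Kc_v} V(B)$ satisfies $|N(v)|\le 2b\binom{k}{2}\le bk^2$. If instead $v$ lies in a component of size more than $k$, then $\Kc_v=\emptyset$ by definition and there is nothing to control.

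Next I would observe that if $\Kc_u\cap\Kc_v\neq\emptyset$ and $\Kc_u\neq\emptyset$, then $u$ lies in a component meeting $N(v)$: a batch $B\in\Kc_u\cap\Kc_v$ contains an edge of $K_u$, so $K_u$ contains a vertex of $B$, while $B\in\Kc_v$ forces $V(B)\subseteq N(v)$; hence $K_u\cap N(v)\neq\emptyset$. Since any vertex $u$ with $\Kc_u\neq\emptyset$ lies in a component of at most $k$ vertices, the set $T_v:=\{u : \Kc_u\neq\emptyset,\ \Kc_u\cap\Kc_v\neq\emptyset\}$ is contained in the union of the ($\le k$-vertex) components hit by $N(v)$, so $|T_v|\le k\,|N(v)|\le bk^3$. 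Consequently, for any two fixed indices $i\neq j$, after conditioning on $v_i$, the probability over the uniform choice of $v_j$ that $v_j\in T_{v_i}$ --- which is exactly the event $\Kc_{v_i}\cap\Kc_{v_j}\neq\emptyset$ under the convention that large-component vertices have empty $\Kc$ --- is at most $bk^3/n$.

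Finally, a union bound over the at most $r(r-1)\le r^2$ ordered pairs $(i,j)$ shows that with probability at least $1-r^2bk^3/n$ no pair $v_i\neq v_j$ has $\Kc_{v_i}\cap\Kc_{v_j}\neq\emptyset$; on this event the family $(\Kc_v)_{v\in S}$ is pairwise disjoint (empty sets coming from large components are harmless), and Lemma~\ref{lm:Yindepcon} applied with $U$ the underlying vertex set of $S$ gives that $(\Yb_v)_{v\in S}$ are independent conditioned on $S$. I do not foresee a genuine obstacle; the only point requiring a little care is the sampling-with-replacement subtlety, but a repeated vertex $v_i=v_j$ in a small component is correctly charged to the same $r^2bk^3/n$ budget, since $v_i\in N(v_i)$ and hence $v_i\in T_{v_i}$.
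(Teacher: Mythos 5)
Your proposal is correct and follows essentially the same route as the paper's proof: reduce to pairwise disjointness of the batch sets via Lemma~\ref{lm:Yindepcon}, bound the number of vertices $w$ with $\Kc_v\cap\Kc_w\neq\emptyset$ by $bk^3$, and union bound over the $\le r^2$ pairs. Your explicit handling of the with-replacement collision case is a small point of extra care that the paper's proof leaves implicit.
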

\begin{proof}
For each $v \in U$, let $K_v$ be the component containing $v$. By
Lemma~\ref{lm:Yindepcon}, it will suffice to show that the sets $\set{\Kc_v
: v \in U, V(H_v) \le k}$ are disjoint with this probability. 

For each vertex $v$ in a component with at most $k$ vertices, there are at
most $k^2$ different batches intersecting this component, and therefore at
most $k^2b$ edges in these batches, and therefore at most $k^2b$ components
such that if $w$ is in that component, $\Kc_v \cap \Kc_w \not= \emptyset$.

So if we fix a $v$ in a component with at most $k$ vertices and then select a
$w$ from $V$, there are at most $k^3b$ choices of $w$ such that $w$ is in a
component with at most $k$ vertices such that $\Kc_v \cap \Kc_w \not=
\emptyset$. So for a randomly selected pair $v, w$ this happens with probability at most $\frac{k^3b}{n}$.

The proof then follows by taking a union bound over the $\le r^2$ pairs of
vertices in $U$.
\end{proof}
\subsubsection{Bounding the Contribution of Bad Components}
We want to prove that the expectation of the variables $\Zb_v$ is very small.
For this we need the fact that, for any vertex $v$, if there are too many
different ``wrong'' components that we might find connecting $v$, as can be the
case when $v$ is in a large component, each of these components also has a
large boundary, and so is unlikely to be found.
\begin{lemma}
\label{lm:boundary}
For any $\Hc \subset \Bc$, $v \in V$, let the $v$-\emph{boundary} of $\Hc$ be
the set of batches in $\Bc \setminus \Hc$ that contain at least one edge that
is connected to $v$ by a path of edges in $\bigcup \Hc$.

For any $v \in V$, $h, a \in \Nbb$, the number of size-$h$ subsets $\Hc$ of
$\Bc$ such that
\begin{enumerate}
\item every batch in $\Hc$ contains an edge $e$ such that there is a path from $v$ to an endpoint of $e$ in $\bigcup \Hc$
\item the $v$-boundary of $\Hc$ contains exactly $a$ batches
\end{enumerate}
is at most \[
\binom{h + a}{a}\text{.}
\]
\end{lemma}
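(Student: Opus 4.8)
The plan is to encode each valid subset $\Hc$ as a pair consisting of a rooted tree-like exploration order together with a binary string recording, at each step, whether the currently examined batch lies in $\Hc$ or in the $v$-boundary. Concretely, run a canonical BFS/DFS-style exploration starting from $v$: maintain a queue of ``frontier'' batches, i.e., batches containing an edge incident (through $\bigcup\Hc$) to the already-explored region. At each of the $h+a$ steps we pop the next frontier batch in canonical order and decide whether it belongs to $\Hc$ (there are $h$ such steps) or to the $v$-boundary (there are $a$ such steps, and boundary batches are not explored further). The key point is that, given $v$ and the sequence of $h+a$ in/out decisions, the set $\Hc$ is completely determined: the canonical order in which frontier batches are considered depends only on $v$, the underlying graph $G$, the fixed batch partition $\Bc$, and the decisions made so far, so there is no additional freedom. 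Hence the number of such $\Hc$ is at most the number of binary strings of length $h+a$ with exactly $a$ ones, namely $\binom{h+a}{a}$.

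First I would fix the canonical exploration procedure precisely. Order all batches in $\Bc$ by some fixed total order (say, lexicographically by their minimum-index edge). Initialize the explored edge set to $\emptyset$ and the frontier as the list of batches containing an edge incident to $v$, sorted by the fixed order. Repeatedly pop the first frontier batch $B$; it is \emph{claimed} (put in $\Hc$) or \emph{rejected} (put in the $v$-boundary), according to the next symbol of the decision string. If claimed, add all of $B$'s edges to the explored set and append to the frontier (in fixed order, avoiding duplicates and already-processed batches) every batch that now contains an edge incident to the explored region; if rejected, do nothing further with $B$. Stop when the frontier is empty. One checks by induction that (i) every batch ever placed on the frontier contains an edge connected to $v$ through $\bigcup(\text{claimed batches})$, matching condition 1; and (ii) at termination the claimed set is exactly $\Hc$ and the rejected set is exactly the $v$-boundary of $\Hc$, matching condition 2. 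This is essentially the standard ``a connected subgraph with boundary is determined by its exploration trace'' argument.

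Second, I would verify the converse: every valid $\Hc$ of size $h$ with $v$-boundary of size $a$ arises from exactly one decision string. Given such an $\Hc$, run the procedure and observe that at each step the batch $B$ popped from the frontier is forced to be claimed iff $B\in\Hc$ and rejected iff $B$ is in the $v$-boundary — and by conditions 1 and 2 every popped batch is in one of these two sets, so the decision is well-defined and the procedure never gets stuck. This run produces exactly $h$ claims and $a$ rejections, hence a string in $\{0,1\}^{h+a}$ with $a$ ones, and it recovers $\Hc$. Thus the map from valid $\Hc$'s to such strings is injective, giving the bound $\binom{h+a}{a}$.

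The main obstacle I anticipate is making the ``canonical frontier order is determined by the decisions so far'' claim airtight — in particular ensuring the frontier list is updated deterministically and that a rejected batch is never re-examined (otherwise the counting double-counts or the determinism fails). This is a bookkeeping issue rather than a deep one: it suffices to fix, once and for all, a total order on $\Bc$ and to maintain the frontier as an ordered list with each batch appearing at most once over the whole run, which makes the popped batch at every step a deterministic function of $v$, $G$, $\Bc$, and the prefix of decisions. With that in place the injection is immediate and the lemma follows.
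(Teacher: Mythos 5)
Your proof is correct in its essentials but takes a genuinely different route from the paper's. The paper argues by induction on $h+a$ via a deletion--contraction argument: it picks a batch $B$ containing an edge incident to $v$, bounds the number of valid $\Hc$ not containing $B$ by $\binom{h+a-1}{a-1}$ (working in $G$ with $B$ deleted, which removes one boundary batch), bounds the number containing $B$ by $\binom{h+a-1}{a}$ (working in the multigraph obtained by contracting $B$'s edges, which removes one batch from $\Hc$), and concludes by Pascal's identity. Your exploration-trace encoding is the constructive counterpart of the same bound: it avoids having to generalize the statement to multigraphs with self-loops (which the contraction step forces the paper to do), at the cost of having to pin down a deterministic frontier discipline. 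The two arguments give the identical bound and either would serve in the application.

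One point in your formalization needs care beyond the bookkeeping you already flag. Batches are arbitrary edge sets and need not induce connected subgraphs, and condition 1 only guarantees that \emph{some} edge of each batch of $\Hc$ is reachable from $v$ inside $\bigcup\Hc$. If, after claiming $B$, you add \emph{all} of $B$'s edges to the explored region and then enqueue every batch containing an edge incident to that region, you may enqueue a batch whose only contact with the explored region is through an edge of $B$ that is \emph{not} connected to $v$ inside $\bigcup\Hc$; such a batch can lie in neither $\Hc$ nor the $v$-boundary, so the trace would exceed length $h+a$ (or the claim/reject decision would not be forced), breaking the injection. The fix is to define the explored region as the connected component of $v$ in the union of the claimed batches and to enqueue only batches containing an edge incident to that component. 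With that definition, every enqueued batch contains an edge connected to $v$ through claimed (hence $\bigcup\Hc$-) edges, so it is forced to be in $\Hc$ or in the $v$-boundary; the path-induction argument still shows that every batch of $\Hc$ and of the boundary is eventually enqueued, and the injection into the $\binom{h+a}{a}$ binary strings goes through.
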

\begin{proof}
In this proof we will make use of edge \emph{contraction}---to contract by an
edge $uv$, $uv$ is removed from the graph and the vertices $u,v$ are
identified with each other, and so any edge incident to either is now incident
to the merged vertex. This may result in the graph becoming a multigraph (if
$u,v$ are incident to the same edge) and having self-loops (if the graph is
already a multigraph, and one of multiple edges between $u$ and $v$ is
contracted). We will therefore prove the lemma for multigraphs with
self-loops, and it will follow for simple graphs as a special case.

Note that when contracting multiple edges, it does not matter in which order
we contract them.

We proceed by induction on $h + a$. If $h+a = 0$, the result follows
automatically. So suppose $h + a > 0$ and the result holds for all smaller
values of $h+a$. Then there is at least one batch $B$ containing an edge
incident to $v$. We will use the inductive hypothesis to bound the number of choices of $\Hc$ that contain $B$, and the number that do not.

First, consider any choice of $\Hc$ that does not contain $B$. Consider the
graph $G' = (V, E \setminus B)$ with batching $\Bc' = \Bc \setminus \set{B}$.
Then, each such choice of $\Hc$ is a subset of $\Bc'$ whose $v$-boundary with
respect to $G', \Bc'$ contains $a - 1$ batches. Moreover, each batch in $\Hc$
still contains an edge $e$ with a path from $v$ to an endpoint of $e$ in
$\bigcup \Hc$. So by applying our inductive hypothesis to $G', \Bc'$, there are
at most $\binom{h + a - 1}{a - 1}$ such choices of $\Hc$.

Secondly, consider any choice of $\Hc$ that \emph{does} contain $B$. Consider
the graph $G^*$ obtained by contracting every edge in $B$, with batching $\Bc^*$
given by removing $B$ and contracting each of its edges for the other batches.
Then there is a one-to-one correspondence between such choices of $\Hc$ and
subsets $\Hc^*$ of $\Bc^*$, again given by removing $B$ and contracting each of
its edges for the other batches. Each such subset will have a $v$-boundary
containing exactly $a$ batches, but will only contain $h - 1$ batches.
Furthermore, as contracting edges preserves connectedness, every batch in
$\Hc^*$ will contain an edge $e$ such that there is a path from $v$ to an
endpoint of $e$ in $\bigcup \Hc^*$. So, by applying our inductive hypothesis to
$G^*, \Bc^*$, there are at most $\binom{h + a - 1}{a}$ such choices of
$\Hc^*$, and therefore of $\Hc$.

The lemma then follows from the fact that \[
\binom{h + a - 1}{a - 1} + \binom{h + a - 1}{a} = \binom{h + a}{a}\text{.}
\]
\end{proof}
Now we are ready to bound the expectation of the ``bad'' contributions $\Zb_v$,
corresponding to $\Xb_v$ when the subgraph $\Sb$ returned by \ColCompIdeal\ is
\emph{not} equal to the actual component $K_v$.
\begin{lemma}
\label{lm:Zexp}
As long as $\lambda \le 1/2ek^2$, \[
\E{\Zb_v} = O(\lambda k^4)\text{.}
\]
\end{lemma}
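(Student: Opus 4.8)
The plan is to evaluate $\E{\Zb_v}$ by expanding it as a sum over the possible ``wrong'' subgraphs $H$ that $\ColCompIdeal(v,k)$ could return, weighting each by $1/p_v^\lambda(H)$ exactly as in the definition of $\Xb_v$, and then using Lemma~\ref{lm:boundary} to control how many such $H$ have a given boundary size. Since $\Zb_v=\Xb_v\cdot\1bb(\Sb_v\neq K_v)$ and $\Xb_v=1/p_v^\lambda(\Sb_v)$ precisely when $\Sb_v\neq\perp$ and $\Ub_v\le\lambda$,
\[
\E{\Zb_v}=\sum_{H}\frac{\Pb{\Sb_v=H,\ \Ub_v\le\lambda}}{p_v^\lambda(H)},
\]
the sum ranging over connected subgraphs $H$ of $G$ with $v\in V(H)$, $|V(H)|\le k$, $H\neq K_v$, and $p_v^\lambda(H)>0$ (terms with $p_v^\lambda(H)=0$ will contribute nothing). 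For such an $H$, let $\Hc$ be the set of batches intersecting $H$, so that $H$ is the component of $v$ in $\bigcup\Hc$; let $h=|\Hc|$ and let $a\ge 1$ be the size of the $v$-boundary of $\Hc$ in the sense of Lemma~\ref{lm:boundary} (note $a\ge 1$ because $H\neq K_v$).

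The key step is to prove
\[
\Pb{\Sb_v=H,\ \Ub_v\le\lambda}\ \le\ p_v^\lambda(H)\cdot\lambda^{a},
\]
which gives $\Pb{\Sb_v=H,\ \Ub_v\le\lambda}/p_v^\lambda(H)\le\lambda^a$. I would establish this by showing the event $\{\Sb_v=H,\ \Ub_v\le\lambda\}$ is contained in the intersection of two events defined on disjoint sets of batch timestamps: (i) the event defining $p_v^\lambda(H)$ — the batches of $\Hc$ arrive in batch order and $H$ is covered by time $\lambda$ — which depends only on $(\tb_B)_{B\in\Hc}$; and (ii) the event that all $a$ boundary batches of $\Hc$ arrive before time $\lambda$, which depends only on those $a$ batches' timestamps. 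For (i): whenever $\ColCompIdeal$ adds a batch $B\in\Sc=\Hc$, it does so because some edge of $B$ is already connected to $v$ through previously added batches, all of which lie in $\Hc$ — this is exactly the batch-order condition; and since $\Ub_v\le\lambda$, every vertex of $H$ became connected to $v$ via $\Hc$-edges by time $\lambda$, which is coverage. For (ii): if a boundary batch $B'\notin\Hc$ arrived after $\Ub_v\le\lambda$, then at the moment $B'$ is processed its edge incident to $V(H)$ is already attached to $v$ in $\Sb$, so $B'$ would be added (or the $k$-vertex cap would be exceeded and $\perp$ returned), contradicting $\Sb_v=H$; hence every boundary batch precedes $\Ub_v\le\lambda$. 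As batch timestamps are i.i.d.\ uniform on $[0,1]$, events (i) and (ii) are independent, $\Pb{\text{(ii)}}=\lambda^a$, and the displayed bound follows.

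It then remains to sum $\lambda^a$ over all $H$. Grouping by $h$ and $a$ and invoking Lemma~\ref{lm:boundary} (the number of $\Hc$ with $|\Hc|=h$ and boundary size $a$ is at most $\binom{h+a}{a}$), together with $h\le\binom k2$ (since $H$ has at most $\binom k2$ edges), I get
\[
\E{\Zb_v}\ \le\ \sum_{h=1}^{\binom k2}\sum_{a\ge1}\binom{h+a}{a}\lambda^{a}
\ =\ \sum_{h=1}^{\binom k2}\Bigl((1-\lambda)^{-(h+1)}-1\Bigr).
\]
Using $-\ln(1-\lambda)\le\lambda/(1-\lambda)\le 2\lambda$ for $\lambda\le 1/2$ gives $(1-\lambda)^{-(h+1)}\le e^{2(h+1)\lambda}$; since $\lambda\le 1/(2ek^2)$ forces $2(h+1)\lambda$ to be below a small absolute constant for every $h\le\binom k2$, the elementary bound $e^x-1\le 2x$ yields $(1-\lambda)^{-(h+1)}-1\le 4(h+1)\lambda$. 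Hence $\E{\Zb_v}\le 4\lambda\sum_{h=1}^{\binom k2}(h+1)=O(\lambda k^4)$.

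The main obstacle is the middle paragraph: making the containment $\{\Sb_v=H,\ \Ub_v\le\lambda\}\subseteq\text{(i)}\cap\text{(ii)}$ rigorous requires carefully tracking the dynamics of $\ColCompIdeal$ — verifying that no batch outside $\Hc$ is ever added, that each vertex of $H$ is connected to $v$ through $\Hc$-edges at the moment it is first included, and that the boundary batches genuinely must precede the last vertex-addition time $\Ub_v$. Everything else is bookkeeping with Lemma~\ref{lm:boundary} and elementary estimates.
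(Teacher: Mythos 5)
Your proposal is correct and follows essentially the same route as the paper's proof: the same decomposition of $\E{\Zb_v}$ over candidate subgraphs $H$, the same key bound $\Pb{\Sb_v=H,\ \Ub_v\le\lambda}\le p_v^\lambda(H)\lambda^a$ via the boundary batches all arriving before $\lambda$, and the same application of Lemma~\ref{lm:boundary} to the resulting sum $\sum_{h}\sum_{a\ge 1}\binom{h+a}{a}\lambda^a$. The only (immaterial) difference is in evaluating that sum — you use the exact identity $\sum_{a\ge 0}\binom{h+a}{a}\lambda^a=(1-\lambda)^{-(h+1)}$ where the paper bounds $\binom{h+a}{a}\le(e(h/a+1))^a$ and sums a geometric series — and you spell out the independence argument behind the $\lambda^a$ factor, which the paper leaves implicit.
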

\begin{proof}
$\Zb_v$ is non-zero precisely when $\ColCompIdeal(v,k)$ returns $(\Sb_v, \Sc,
\Ub)$ such that $\Ub \le \lambda$ and $\Sb_v \not= C_v$, where $C_v$ is the
component containing $v$ in $G$. For any possible value of $\Sb_v$, this
requires that
\begin{enumerate}
\item the set of batches $\Hc$ that intersect $\Sb_v$ arrive in batch order (as defined in Definition~\ref{dfn:batchorder})
\item $\Sb_v$ is covered by time $\lambda$
\item every batch $B$ in the $v$-boundary of $\Hc$ has $\tb_B < \lambda$
\end{enumerate}
with the latter being necessary because as $\Sb_v$ is covered by time
$\lambda$, any batch with timestamp at least $\lambda$ and connected to $v$ by
edges in $\bigcup \Hc$ has an edge incident to $\Sb_v$, and will therefore be
in $\Hc$ and therefore not in its $v$-boundary.

So the probability that it happens is at most $p_v^\lambda(\Sb_v)\lambda^a$ if
the $v$-boundary of $\Hc$ contains $a$ batches. So for each possible $\Sb_v$
with a batch set with boundary $a$, the expected contribution to $\Zb_v$ from
it is at most $\lambda^a$.

As $\Hc$ determines $\Sb_v$ exactly (since $\Sb_v$ is the component containing
$v$ in $\bigcup \Hc$, we can therefore use Lemma~\ref{lm:boundary} to bound the
expected contribution to $\Zb_v$ from collecting size-$h$ sets of batches $\Hc$
with size-$a$ boundaries by $\binom{h + a}{a}\lambda^a$.

As $\Sb_v \not= C_v$ requires $\Hc$ to have at least one batch in its
$v$-boundary, we can therefore bound $\E{\Zb_v}$ by
\begin{align*}
\sum_{h=0}^{\binom{k}{2}} \sum_{a = 1}^\infty \binom{h + a}{a}\lambda^a & \le
\sum_{h=0}^{\binom{k}{2}} \sum_{a = 1}^\infty (\lambda e(h/a + 1))^a\\
&\le k^2 \sum_{a = 1}^\infty (\lambda e(k^2 + 1))^a\\
&= \bO{\lambda k^4}
\end{align*}
provided $\lambda \le 1/2ek^2$.
\end{proof}
\subsection{Component Counting}
For any $v \in V$, we will use $c_v$ to denote the size of the component
containing $v$. In particular this means that $c(G)$, the number of components
in $G$ is $\sum_{v \in V} \frac{1}{c_v}$. 

Let $\lambda \in (0,1/2)$ and $r \in \Nbb$.  We now present an algorithm for
component counting based on $r$ copies of $\ColComp$. Informally, the algorithm
works as follows:
\begin{itemize}
\item Sample $r$ vertices $(\vb_i)_{i=1}^r$.
\item For each vertex $v$, run $\ColComp(v, k)$, rejecting the answer if the
timestamp $\Tb$ returned is greater than $\lambda$.
\item Approximate the component count $c(G)$ by $\frac{n}{r}\sum_{i=1}^r
\frac{\Xb_{\vb_i}}{\cb_{\vb_i}}$, where $\cb_{\cb_i}$ is the number of vertices
in the component that \ColComp\ sampled at $\vb_i$.  We don't have direct access
to the variables $\Xb_v$ but we can do this (with good enough probability)
because \ColComp\ normally almost-matches \ColCompIdeal\ and therefore $\Xb_v$
is usually $1/p_v^\lambda(\Db)$.
\end{itemize}
This will usually give us a good approximation to
$\frac{n}{r}\sum_{i=1}^r\frac{1}{c_{\vb_i}}$, because $\Xb_v = \Yb_v + \Zb_v$
is dominated by $\Yb_v$, corresponding to the case when the returned component
$\Db$ is actually $K_v$ (and so $\cb_{v} = c_v$), provided $\lambda$ is small
enough (so that $\Zb_v$ is small in expectation) and $r$ is large enough (so
that $\sum_{i=1}^r \frac{1}{c_v}\Yb_{\vb_i}$ concentrates around its
expectation over $(\Xb_{\vb_i})_{i=1}^r$).

Then, if $r$ is big enough, $\frac{n}{r}\sum_{i=1}^r\frac{1}{c_{\vb_i}}$ will
usually approximate $\sum_{v \in V} \frac{1}{c_v} = c(G)$ well enough, so we
are done.

We now formally describe the algorithm.
\begin{algorithm}[H]
\begin{algorithmic}[1]
\Procedure{CountComponents}{$k, \lambda, r$}
\State $\Cb \gets 0$
\For{$i \in \brac{r}$}
\State $\vb_i \gets \Uc(V)$
\State $(\Db_i, \Dc_i, \Tb_i) \gets \ColComp(\vb_i, k)$
\If{$\Tb_i \le \lambda$}
\State $\cb_{\vb_i} \gets \abs{V(\Db_i)}$
\State $\Cb \gets \Cb +
\frac{n}{r\cb_{\vb_i}}\cdot\frac{1}{p_{\vb_i}^\lambda(\Db_i)}$ \Comment{Usually $\frac{n\Xb_{\vb_i}}{r\cb_{\vb_i}}$.}
\EndIf
\EndFor
\EndProcedure
\end{algorithmic}
\caption{Counting the number of components in a graph.}
\label{alg:countcomps}
\end{algorithm}
\noindent
Recall that $\frac{1}{p_{\vb_i}^\lambda(\Db_i)} = \Xb_{\vb_i}$ whenever
$(\Db_i, \Dc_i)$ match the subgraph and batching given by $(\Sb,\Sc, \Ub) =
\ColCompIdeal(\vb_i,k)$ and the time stamps $\Tb_i$, $\Ub$ are either both
smaller than or both larger than $\lambda$. We start by showing that when this
is the case, the algorithm approximates $c(G)$ with good probability.

We start by showing that, if our inner loop simply added
$\frac{n}{r}\cdot\frac{1}{c_v}$ for each $v$ (or zero if $c_v > k$), we would
get a good approximation to $c(G)$ with good probability.

\begin{lemma}
\label{lm:csample}
With probability at least
$ 1 - k^2/r$ over $(\vb_i)_{i=1}^r$, \[
\abs{c(G) - \frac{n}{r}\sum_{i=1}^r\frac{1}{c_{\vb_i}}\Ibb(c_{\vb_i} \le k)} <
\frac{2}{k}n\text{.} 
\]
\end{lemma}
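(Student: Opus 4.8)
The plan is to treat this as a textbook second-moment (Chebyshev) sampling estimate. Write $c(G)=\sum_{v\in V}\frac1{c_v}$ and introduce the truncated count $c_{\le k}(G):=\sum_{v\in V}\frac1{c_v}\Ibb(c_v\le k)$, which equals the number of connected components of $G$ of size at most $k$. Every component of size greater than $k$ uses more than $k$ vertices, so there are fewer than $n/k$ of them; hence $\abs{c(G)-c_{\le k}(G)}\le n/k$. It therefore suffices to prove that the random quantity $Z:=\frac nr\sum_{i=1}^r\frac1{c_{\vb_i}}\Ibb(c_{\vb_i}\le k)$ satisfies $\abs{Z-c_{\le k}(G)}<n/k$ with probability at least $1-k^2/r$, since then the triangle inequality gives $\abs{c(G)-Z}<2n/k$.

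For a single uniform sample $\vb\sim\Uc(V)$, set $\xi:=\frac n{c_\vb}\Ibb(c_\vb\le k)$, so that $Z=\frac1r\sum_{i=1}^r\xi_i$ with the $\xi_i$ i.i.d.\ copies of $\xi$. Grouping the sum $\frac1n\sum_{v\in V}\frac n{c_v}\Ibb(c_v\le k)$ by components (a component $C$ with $\abs{C}\le k$ contributes $\abs{C}\cdot\frac1{\abs{C}}=1$) shows $\E{\xi}=c_{\le k}(G)$, and the same grouping applied to the second moment gives $\E{\xi^2}=n\sum_{v\in V}\frac1{c_v^2}\Ibb(c_v\le k)=n\sum_{C:\abs{C}\le k}\frac1{\abs{C}}\le n\,c_{\le k}(G)\le n^2$, using that $G$ has at most $n$ components. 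Consequently $\Var{Z}=\frac1r\Var{\xi}\le\frac1r\E{\xi^2}\le n^2/r$.

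By Chebyshev's inequality, $\Pb{\abs{Z-c_{\le k}(G)}\ge n/k}\le\frac{n^2/r}{(n/k)^2}=k^2/r$, and on the complementary event $\abs{Z-c(G)}\le\abs{Z-c_{\le k}(G)}+\abs{c_{\le k}(G)-c(G)}<n/k+n/k=2n/k$, which is exactly the claim. The only step that needs any care is the second-moment computation: one must pass from the per-vertex sum to the per-component sum to see that $\sum_v\frac1{c_v^2}\Ibb(c_v\le k)$ collapses to $\sum_{C:\abs{C}\le k}\frac1{\abs{C}}\le n$, which is what keeps $\Var{\xi}$ at $O(n^2)$ rather than larger and makes the stated sample size $r$ suffice; there is no genuine obstacle beyond this bookkeeping.
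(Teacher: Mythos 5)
Your proposal is correct and follows essentially the same route as the paper: bound the truncation error $\abs{c(G)-\E{\xi}}\le n/k$ from the fact that components of size more than $k$ number fewer than $n/k$, bound the variance of each sample by $n^2$, and apply Chebyshev plus the triangle inequality. The per-component second-moment bookkeeping you do is harmless but unnecessary, since $0\le\xi\le n$ already gives $\Var{\xi}\le n^2$, which is all the paper uses.
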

\begin{proof}
As the $\vb_i$ are sampled independently, the variables \[
\paren*{\frac{n}{c_{\vb_i}}\Ibb(c_{\vb_i} \le k)}_{i=1}^r
\]
are independent. Each has expectation at least $c(G) - \frac{1}{k}n$, as \[
\E{\frac{n}{\cb_{\vb_i}}} = c(G)
\]
and any time $c_{\vb_i} > k$, $\frac{n}{\cb_{\vb_i}} \le \frac{1}{k}n$.

Moreover, each has variance at most $n^2$, and so by Chebyshev's inequality
their average will be within $\frac{1}{k}n$ of their expectation with
probability $1 - k^2/r$.
\end{proof}
Next, we use the fact that the ``good'' part of $\Xb_{\vb_i}$, $\Yb_{\vb_i}$,
is zero whenever $\Db_i$ is not the right guess for the component containing
$\vb_i$ to show that $\frac{1}{\cb_v}\Yb_{\vb_i}$ is a good enough substitute
for $\frac{1}{c_v}\Ibb(c_v \le k)$.
\begin{lemma}
\label{lm:YaccC}
With probability at least $1 - r^2bk^3/n - (k/\lambda)^{O(k)}/\sqrt{r}$ over
$(\vb_i)_{i=1}^r$ and the order of the stream, \[
\abs{\frac{n}{r}\sum_{i=1}^r \frac{1}{\cb_v}\Yb_{\vb_i} -
\frac{n}{r}\sum_{i=1}^r\frac{1}{c_v}\Ibb(c_v \le k)} \le \frac{1}{k}n\text{.}
\]
\end{lemma}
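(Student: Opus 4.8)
\textbf{Proof proposal for Lemma~\ref{lm:YaccC}.}
The plan is to reduce the left‑hand side to a sum of conditionally independent, mean‑zero, bounded‑variance random variables, and then apply Chebyshev's inequality together with Markov's inequality to get the $1/\sqrt r$ rate. The first step is a bookkeeping reduction. Whenever $\Yb_{\vb_i}\neq 0$, by definition the idealized run $\ColCompIdeal(\vb_i,k)$ returned $\Sb_{\vb_i}=K_{\vb_i}$ with $\Ub\le\lambda$; in particular $c_{\vb_i}\le k$, and since $\Ub\le\lambda$ rules out the ``both $>\lambda$'' alternative, Lemma~\ref{lm:DapproxS} gives $\Db_i=\Sb_{\vb_i}=K_{\vb_i}$ (up to the $\bO{k^4w\log^2 1/w}$ per‑sample failure probability, which for the value of $w$ used in Theorem~\ref{theorem:compcounting} is dominated by the terms in the statement), hence $\cb_{\vb_i}=|V(\Db_i)|=c_{\vb_i}$. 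Consequently $\tfrac{1}{\cb_{\vb_i}}\Yb_{\vb_i}=\tfrac{1}{c_{\vb_i}}\Ibb(c_{\vb_i}\le k)\,\Yb_{\vb_i}$ for every $i$, so the quantity to be bounded equals $\tfrac{n}{r}\bigl|\sum_{i=1}^r \tfrac{1}{c_{\vb_i}}\Ibb(c_{\vb_i}\le k)(\Yb_{\vb_i}-1)\bigr|$.

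Next I would condition on the sample $S=(\vb_i)_{i=1}^r$. By Lemma~\ref{lm:Yindepcon} and Lemma~\ref{lm:Yindep}, with probability at least $1-r^2bk^3/n$ over $S$ the batch sets $(\Kc_{\vb_i})_i$ are pairwise disjoint, and for such $S$ the variables $(\Yb_{\vb_i})_i$ are genuinely independent, since each $\Yb_{\vb_i}$ is a function only of the arrival times of the batches in $\Kc_{\vb_i}$. Fix one such ``good'' realization of $S$. Then the summands $T_i:=\tfrac{1}{c_{\vb_i}}\Ibb(c_{\vb_i}\le k)(\Yb_{\vb_i}-1)$ are independent; by Lemma~\ref{lm:Yexp} each has mean $0$; and by Lemma~\ref{lm:Yvar} each has $\Var{T_i}\le\Var{\Yb_{\vb_i}}\le (k/\lambda)^{\bO{k}}$. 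Hence $\Var{\tfrac1r\sum_i T_i\mid S}\le (k/\lambda)^{\bO{k}}/r$, so $\E\bigl[\,\bigl|\tfrac1r\sum_i T_i\bigr|\,\bigm|\,S\bigr]\le\sqrt{(k/\lambda)^{\bO{k}}/r}$, and Markov's inequality yields
\[
\Pb{\Bigl|\tfrac1r\textstyle\sum_i T_i\Bigr|>\tfrac1k\;\Bigm|\;S}\le k\sqrt{(k/\lambda)^{\bO{k}}/r}=(k/\lambda)^{\bO{k}}/\sqrt r .
\]
Multiplying by $n$ and taking a union bound over the bad‑$S$ event (probability $\le r^2bk^3/n$) gives the claimed bound.

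The only substantive probabilistic content is the Chebyshev/Markov concentration in the last paragraph, which works precisely because Lemmas~\ref{lm:Yexp}, \ref{lm:Yvar}, and~\ref{lm:Yindep} hand us independent, correctly centered, $(k/\lambda)^{\bO{k}}$‑variance summands. I expect the main obstacle to be purely organizational: keeping the two layers of conditioning straight (randomness over $S$ for independence, then over the stream order given $S$ for concentration) and being careful in the reduction that $\Yb_{\vb_i}$ is defined from $\ColCompIdeal$ while $\cb_{\vb_i}$ is produced by $\ColComp$, so that Lemma~\ref{lm:DapproxS} is invoked only on the event $\{\Yb_{\vb_i}\neq 0\}$ where $\Ub\le\lambda$ already holds.
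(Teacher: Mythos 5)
Your argument is, in substance, the paper's: both proofs reduce the left-hand side to the independent, correctly centered, $(k/\lambda)^{\bO{k}}$-variance summands supplied by Lemmas~\ref{lm:Yexp}, \ref{lm:Yvar}, and~\ref{lm:Yindep}, and then finish with a second-moment concentration bound (your first-moment-plus-Markov step is just Chebyshev in disguise and yields exactly the stated $1/\sqrt r$ rate) together with a union bound over the bad-sample event. The one place you genuinely diverge is the identification $\cb_{\vb_i}=c_{\vb_i}$ on $\{\Yb_{\vb_i}\neq 0\}$: you route this through Lemma~\ref{lm:DapproxS}, comparing the output of $\ColComp$ with that of $\ColCompIdeal$, which drags in an extra $\bO{rk^4w\log^2(1/w)}$ failure probability that does not appear in the lemma's statement and cannot simply be declared ``dominated'' (for general $w$ it need not be). The paper avoids this by reading $\cb_{\vb_i}$ in this lemma as the idealized component size $\abs{V(\Sb_{\vb_i})}$: on $\{\Yb_{\vb_i}\neq 0\}$ one has $\Sb_{\vb_i}=K_{\vb_i}$ by definition of $\Yb$, so $\cb_{\vb_i}=c_{\vb_i}$ with no extra event, and the entire real-versus-ideal comparison is deferred to Lemma~\ref{lm:compcoutput}, where the $\bO{rk^4w\log^21/w}$ term is accounted for once. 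Your version is defensible as a reading of the statement, but to keep the probability bound as written you should either adopt the paper's convention or move the Lemma~\ref{lm:DapproxS} invocation out of this lemma.
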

\begin{proof}
For any $v$, $\Yb_{v}$ is zero whenever $c_v > k$ and otherwise by
Lemmas~\ref{lm:Yexp},~\ref{lm:Yvar}, it has expectation $1$ and variance
$(k/\lambda)^{\bO{k}}$. Furthermore, whenever $\Yb_v$ is non-zero, $\cb_v =
c_v$.

So \[
\frac{n}{r}\sum_{i=1}^r \E{\frac{1}{\cb_v}\Yb_{\vb_i} \middle| (\vb_i)_{i=1}^r}
=
\frac{n}{r}\sum_{i=1}^r\frac{1}{c_v}\Ibb(c_v \le k)
\]
and by Lemma~\ref{lm:Yindep}, with probability at least $1 - r^2bk^3/n$ over
$(\vb_i)_{i=1}^r$, the variables $\paren*{\frac{1}{\cb_v}\Yb_{\vb_i}}_{i=1}^r$
are independent with variances at most $(k/\lambda)^{\bO{k}}$. So by taking a
union bound with Chebyshev's inequality, the lemma follows.
\end{proof}
This leaves an error term $\abs{\frac{n}{r}\sum_{i=1}^r
\frac{1}{\cb_v}\Zb_{\vb_i}}$ to deal with. As the expectation of $\Zb_{\vb_i}$
is small when $\lambda$ is small enough, we can show that this is usually small
through applying Markov's inequality.
\begin{lemma} 
\label{lm:ZsmallC}
Fix any value of $(\vb_i)_{i=1}^r$. As long as $\lambda \le 1/2ek^2$, with
probability at least $1 - O(\lambda k^5)$ over the order of the stream, \[
\abs{\frac{n}{r}\sum_{i=1}^r \frac{1}{\cb_v}\Zb_{\vb_i}} \le
\frac{1}{k}n\text{.}
\]
\end{lemma}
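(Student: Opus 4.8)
The plan is to bound the expectation of the (nonnegative) quantity $\frac{n}{r}\sum_{i=1}^r \frac{1}{\cb_{\vb_i}}\Zb_{\vb_i}$ over the randomness of the stream order, with $(\vb_i)_{i=1}^r$ held fixed, and then conclude via Markov's inequality. First I would observe that $\Zb_v \ge 0$ for every $v$: by definition $\Xb_v \in \set*{0, 1/p_v^\lambda(\Sb_v)}$ is always nonnegative, and $\Zb_v$ is just $\Xb_v$ restricted to the event that $\ColCompIdeal(v,k)$ returns a subgraph different from the component of $G$ containing $v$. Moreover, whenever $\Zb_{\vb_i} \neq 0$ the collected subgraph contains at least the seed vertex $\vb_i$, so $\cb_{\vb_i} \ge 1$ and hence $\frac{1}{\cb_{\vb_i}}\Zb_{\vb_i} \le \Zb_{\vb_i}$ termwise.

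Next I would apply Lemma~\ref{lm:Zexp}, which applies since we assume $\lambda \le 1/2ek^2$: for every $v$ one has $\E{\Zb_v} = O(\lambda k^4)$. By linearity of expectation (with $(\vb_i)_{i=1}^r$ fixed, the expectation being over the stream order only), this gives
\[
\E{\frac{n}{r}\sum_{i=1}^r \frac{1}{\cb_{\vb_i}}\Zb_{\vb_i}} \le \frac{n}{r}\sum_{i=1}^r \E{\Zb_{\vb_i}} = \frac{n}{r}\cdot r\cdot O(\lambda k^4) = O(\lambda k^4 n).
\]
Since the random variable $\frac{n}{r}\sum_{i=1}^r \frac{1}{\cb_{\vb_i}}\Zb_{\vb_i}$ is nonnegative, Markov's inequality then yields
\[
\Pb{\frac{n}{r}\sum_{i=1}^r \frac{1}{\cb_{\vb_i}}\Zb_{\vb_i} \ge \frac{1}{k}n} \le \frac{O(\lambda k^4 n)}{n/k} = O(\lambda k^5),
\]
so the complementary event, which is exactly the event in the lemma statement, holds with probability at least $1 - O(\lambda k^5)$, as required.

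I do not expect any serious obstacle here: the lemma is essentially a one-line consequence of Lemma~\ref{lm:Zexp} together with Markov's inequality once nonnegativity and the crude bound $1/\cb_{\vb_i} \le 1$ are in place. The only point requiring a small amount of care is confirming that $\Zb_v$ (and hence the summand) is genuinely nonnegative and that the conditioning structure is right — namely that the expectation is taken over the stream order with the sampled vertices fixed, which matches both the statement of Lemma~\ref{lm:ZsmallC} and the hypothesis of Lemma~\ref{lm:Zexp}.
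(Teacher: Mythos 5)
Your proposal is correct and matches the paper's argument exactly: the paper likewise derives the bound as a direct application of Markov's inequality to the termwise expectation bound $\E{\Zb_v} = \bO{\lambda k^4}$ from Lemma~\ref{lm:Zexp}. Your extra care about nonnegativity of $\Zb_v$ and the bound $1/\cb_{\vb_i}\le 1$ just makes explicit what the paper leaves implicit.
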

\begin{proof}
By Lemma~\ref{lm:Zexp}, whenever $\lambda \le 1/2ek^2$, $\E{\Zb_v} = \bO{\lambda
k^4}$ for all $v \in V$, so this follows by a direct application of Markov's
inequality.
\end{proof}
This tells us that $\frac{n}{r}\sum_{i=1}^r \frac{1}{\cb_v}\Xb_{\vb_i}$ is a
good approximation to $c(G)$ with good probability, and so we can use
Lemma~\ref{lm:DapproxS} to lower bound our success probability.
\begin{lemma}
\label{lm:compcoutput}
As long as $\lambda \le 1/2ek^2$, with probability at least \[
1 - \bO{rk^4w\log^21/w + r^2bk^3/n + (k/\lambda)^{O(k)}/\sqrt{r} +
\lambda k^5}
\]
over $(\vb_i)_{i=1}^r$ and the order of the stream, \[
\abs{\Cb - c(G)} \le \frac{4}{k}n\text{.}
\]
\end{lemma}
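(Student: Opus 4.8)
The plan is to show that, on a suitable high-probability event, the quantity $\Cb$ computed by Algorithm~\ref{alg:countcomps} is \emph{exactly} $\frac{n}{r}\sum_{i=1}^r \frac{\Xb_{\vb_i}}{\cb_{\vb_i}}$ (with the $i$-th term read as $0$ whenever $\Xb_{\vb_i}=0$), and then to bound this sum against $c(G)$ by splitting $\Xb_v = \Yb_v + \Zb_v$ and invoking Lemmas~\ref{lm:csample},~\ref{lm:YaccC} and~\ref{lm:ZsmallC}.

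First I would fix the event $\Ec^*$ from Lemma~\ref{lm:DapproxS} holding simultaneously for all $\vb_i$, $i \in \brac*{r}$, which by a union bound fails with probability at most $\bO{rk^4w\log^2(1/w)}$. On $\Ec^*$, for each $i$ either (a) both $\Tb_i$ and $\Ub_i$ exceed $\lambda$, in which case Algorithm~\ref{alg:countcomps} adds nothing to $\Cb$ for index $i$ and also $\Xb_{\vb_i}=0$ by definition; or (b) both are at most $\lambda$ and $(\Db_i,\Dc_i)=(\Sb_i,\Sc_i)$, the subgraph and batching returned by $\ColCompIdeal(\vb_i,k)$, so that $\cb_{\vb_i}=\abs{V(\Db_i)}=\abs{V(\Sb_i)}$ and $\frac{1}{p_{\vb_i}^\lambda(\Db_i)}=\Xb_{\vb_i}$ by the definition of $\Xb_v$, whence the contribution to $\Cb$ is exactly $\frac{n}{r\cb_{\vb_i}}\Xb_{\vb_i}$. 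Summing over $i$ gives
\[
\Cb = \frac{n}{r}\sum_{i=1}^r \frac{\Xb_{\vb_i}}{\cb_{\vb_i}} = \frac{n}{r}\sum_{i=1}^r \frac{\Yb_{\vb_i}}{\cb_{\vb_i}} + \frac{n}{r}\sum_{i=1}^r \frac{\Zb_{\vb_i}}{\cb_{\vb_i}}
\]
on $\Ec^*$.

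Next I would bound the two sums. By Lemma~\ref{lm:csample}, with probability at least $1 - k^2/r$ we have $\abs{c(G) - \frac{n}{r}\sum_i \frac{1}{c_{\vb_i}}\Ibb(c_{\vb_i}\le k)} < \frac{2}{k}n$; by Lemma~\ref{lm:YaccC}, with probability at least $1 - r^2bk^3/n - (k/\lambda)^{O(k)}/\sqrt{r}$ we have $\abs{\frac{n}{r}\sum_i \frac{1}{\cb_{\vb_i}}\Yb_{\vb_i} - \frac{n}{r}\sum_i \frac{1}{c_{\vb_i}}\Ibb(c_{\vb_i}\le k)} \le \frac{1}{k}n$; and by Lemma~\ref{lm:ZsmallC} (applicable since $\lambda \le 1/2ek^2$), with probability at least $1 - \bO{\lambda k^5}$ over the stream order, $\abs{\frac{n}{r}\sum_i \frac{1}{\cb_{\vb_i}}\Zb_{\vb_i}} \le \frac{1}{k}n$. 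Intersecting all of these with $\Ec^*$ and applying the triangle inequality to the displayed identity yields $\abs{\Cb - c(G)} \le \frac{1}{k}n + \frac{2}{k}n + \frac{1}{k}n = \frac{4}{k}n$. Collecting the failure probabilities gives $\bO{rk^4w\log^2(1/w)} + k^2/r + r^2bk^3/n + (k/\lambda)^{O(k)}/\sqrt{r} + \bO{\lambda k^5}$, and absorbing $k^2/r \le (k/\lambda)^{O(k)}/\sqrt{r}$ produces the stated bound.

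The only point that requires care — and the main (minor) obstacle — is the verification that $\Cb = \frac{n}{r}\sum_i \Xb_{\vb_i}/\cb_{\vb_i}$ on $\Ec^*$: one must track that the algorithm's thresholding on $\Tb_i$ lines up with the defining thresholding of $\Xb_v$ on $\Ub_v$, and that the ``real'' outputs $(\Db_i,\Dc_i,\cb_{\vb_i})$ coincide with the ``ideal'' ones $(\Sb_i,\Sc_i,\abs{V(\Sb_i)})$ used to define $\Yb_v$ and $\Zb_v$, which is exactly the content of Lemma~\ref{lm:DapproxS}. Everything else is a union bound together with the triangle inequality.
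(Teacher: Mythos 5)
Your proposal is correct and follows essentially the same route as the paper's proof: a union bound over Lemma~\ref{lm:DapproxS} to get $\Cb = \frac{n}{r}\sum_{i=1}^r \Xb_{\vb_i}/\cb_{\vb_i}$, then the decomposition $\Xb = \Yb + \Zb$ combined with Lemmas~\ref{lm:csample}, \ref{lm:YaccC}, and~\ref{lm:ZsmallC} via the triangle inequality. You spell out two details the paper leaves implicit — the case analysis matching the algorithm's thresholding on $\Tb_i$ with the definition of $\Xb_v$ via $\Ub_v$, and the absorption of the $k^2/r$ term from Lemma~\ref{lm:csample} into $(k/\lambda)^{O(k)}/\sqrt{r}$ — both of which are handled correctly.
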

\begin{proof}
By Lemma~\ref{lm:DapproxS} and a union bound, with probability $1 -
\bO{k^4w\log^21/w}$, taking the output of an instance of $\ColComp$ and
proceeding iff the timestamp output is at most $\lambda$ will give the same
result as doing so with an instance of $\ColCompIdeal$. Therefore, by taking
a union bound over the $r$ iterations of the inner loop, with probability $1 -
\bO{rk^4w\log^21/w}$, \[
\Cb = \frac{n}{r}\sum_{i=1}^r \frac{1}{\cb_v}\Xb_{\vb_i}\text{.}
\]
So recalling that $\Xb = \Yb + \Zb$, and taking a union bound over
Lemmas~\ref{lm:csample}, \ref{lm:YaccC}, and~\ref{lm:ZsmallC}, the lemma
follows.
\end{proof}
We now prove Theorem~\ref{theorem:compcounting}, restated here for convenience of the reader. The theorem follows by carefully choosing our algorithm
parameters in terms of $w, b$, and $m$.

\compcounting*
\begin{proof}
Assume that $\delta, \varepsilon \le 1/2$ (if they are in $(1/2,1)$, the result
will follow from the $1/2$ case). We start by setting $k = 4/\varepsilon$, and
$\lambda = \bTh{\delta/k^5}$ such that the $\bO{\lambda k^5}$ term in
Lemma~\ref{lm:compcoutput} is at most $\delta/4$ and $\lambda \le 1/2ek^2$.
Then, we set $r = \paren{1/\varepsilon\delta}^{\bTh{1/\varepsilon}}$ such that
the $\bO{(k/\lambda)^{O(k)}/\sqrt{r}}$ term is at most $\delta/4$.

Now, consider the $\bO{r^2bk^3/n}$ term. If it is greater than $\delta/4$, we have \[
n\log n \le \paren{1/\varepsilon\delta}^{\bO{1/\varepsilon}}b\log n
\]
and so the theorem follows immediately by using a union-find to exactly calculate the components of $G$.

If the $\bO{rk^4w\log^21/w}$ term is greater than $\delta/4$, we start by
noting that \[
w \log^2 1/w =  \bO{(1/m + w)\log^2 1/n}
\] as $w \log^2 1/w =
\bO{m^{-1}\log m}$ when $w \le 1/m$), so we have \[
n \log n \ge \paren{1/\varepsilon\delta}^{\bO{1/\varepsilon}}(1 + wm)\log^3 n
\]
and so the theorem again follows from using a union-find.

If neither of these hold, Lemma~\ref{lm:compcoutput} tells us that running
$\CountComp(k, \lambda, r)$ will give a $\varepsilon n$ additive approximation
to $c(G)$ with probability $1 - \delta$. As the space needed is that required
to run $r$ copies of $\ColComp$, by Lemma~\ref{lm:compcollectionspace} we
achieve the desired space.
\end{proof}
\subsection{Component Collection}
Let $\lambda \in (0,1/2)$ and $r \in \Nbb$.  We now present an algorithm for
collecting components  based on $r$ copies of $\ColComp$. Informally, the algorithm
works as follows:
\begin{itemize}
\item Sample $r$ vertices $(\vb_i)_{i=1}^r$.
\item For each vertex $v$, run $\ColComp(v, k)$, rejecting the answer if the
timestamp $\Tb$ returned is greater than $\lambda$.
\item Sample one of the components $\Db$ returned by these with probability
weighted by approximately $\Xb_v$ (using the fact that \ColComp\ normally
almost-matches \ColCompIdeal\ and therefore $\Xb_v$ is usually
$1/p_v^\lambda(\Db)$.
\end{itemize}
This will usually give us an actual component, because $\Xb_v = \Yb_v + \Zb_v$
is dominated by $\Yb_v$, corresponding to the case when the returned component
$\Db$ is actually $K_v$, provided $\lambda$ is small enough (so that $\Zb_v$ is
small in expectation) and $r$ is large enough (so that $\sum_{i=1}^r
\Yb_{\vb_i}$ concentrates around its expectation).

We now formally describe the algorithm.
\begin{algorithm}[H]
\begin{algorithmic}[1]
\Procedure{FindComponent}{$k, \lambda, r$}
\State $p \gets 0$
\For{$i \in \brac{r}$}
\State $\vb_i \gets \Uc(V)$
\State $(\Db_i, \Dc_i, \Tb_i) \gets \ColComp(\vb_i, k)$
\If{$\Db_i \not= \perp \wedge \Tb_i < \lambda$}
\State $p_i \gets 1/p_{\vb_i}^\lambda(\Db_i))$ \Comment{Usually $\Xb_{\vb_i}$.}
\State $p \gets p + p_i$
\Else
\State $p_i \gets 0$
\EndIf
\EndFor
\If{$p = 0$}
\State \return $\perp$
\EndIf
\State $(\vb^*, \Db^*) \gets (\vb_i, \Db_i)$ with probability $p_i/p$ for each
$i$.
\State \return $(\vb^*, \Db^*)$
\EndProcedure
\end{algorithmic}
\caption{Collecting a component in a graph.}
\label{alg:findcomp}
\end{algorithm}
Recall that $\frac{1}{p_{\vb_i}^\lambda(\Db_i)} = \Xb_{\vb_i}$ whenever
$(\Db_i, \Dc_i)$ match the subgraph and batching given by $(\Sb,\Sc, \Ub) =
\ColCompIdeal(\vb_i,k)$ and the time stamps $\Tb_i$, $\Ub$ are either both
smaller than or both larger than $\lambda$. We start by showing that when this
is the case, the algorithm returns an actual component of $G$ with good
probability.

When it holds (and assuming at least one run does not return $\perp)$, the
probability of returning a real component will be proportional to
$\sum_{i=1}^r\Yb_{\vb_i}$, as these are the $\Xb_{\vb_i}$ such that $\Db_i$ is
the component containing $\vb_i$. Meanwhile the probability of returning a bad
component will be proportional to $\sum_{i=1}^r \Zb_{\vb_i}$. So we need to
prove that $\sum_{i=1}^r\Yb_{\vb_i}$ is non-zero and large relative to
$\sum_{i=1}^r \Zb_{\vb_i}$.

First, we need a good enough fraction of the vertices sampled to be in size
$\le k$ components, as otherwise the $\Yb_{\vb_i}$ will be identically zero.
\begin{lemma}
\label{lm:betafrac}
Suppose a $\beta$ fraction of vertices of $G$
are in components of size at most $k$. Then with probability $1 -
e^{-r\lambda/2\beta}$ over $(\vb_i)_{i=1}^r$, at least a
$\beta - \sqrt{\lambda}$ fraction of the vertices $(\vb_i)_{i=1}^r$ are in
components of size at most $k$.
\end{lemma}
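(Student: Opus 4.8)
The plan is to recognize Lemma~\ref{lm:betafrac} as a direct lower-tail Chernoff bound. Since the vertices $\vb_1,\dots,\vb_r$ are drawn independently and uniformly from $V$, and by hypothesis exactly a $\beta$ fraction of $V$ lies in components of size at most $k$, the indicators $\Ibb(c_{\vb_i}\le k)$ are i.i.d.\ Bernoulli random variables with mean $\beta$. First I would set $N = \sum_{i=1}^r \Ibb(c_{\vb_i}\le k)$, the number of sampled vertices in components of size at most $k$, so that $\E{N} = \beta r$ and the fraction of sampled vertices in small components is exactly $N/r$. The goal is then to show $\Pb{N < (\beta-\sqrt{\lambda})r} \le e^{-r\lambda/2\beta}$.

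Next I would split into two cases. If $\sqrt{\lambda} \ge \beta$, then $\beta - \sqrt{\lambda} \le 0$ and, since $N/r \ge 0$ deterministically, the claimed event holds with probability $1 \ge 1 - e^{-r\lambda/2\beta}$, so there is nothing to prove. Otherwise $\delta := \sqrt{\lambda}/\beta$ lies in $(0,1)$, and the standard multiplicative Chernoff bound for the lower tail of a sum of independent $\{0,1\}$ variables gives
\[
\Pb{N \le (1-\delta)\E{N}} \le \exp\!\paren*{-\frac{\delta^2\,\E{N}}{2}}.
\]
Substituting $\E{N} = \beta r$ and $\delta = \sqrt{\lambda}/\beta$, the left-hand side becomes $\Pb{N \le (\beta-\sqrt{\lambda})r}$ and the exponent becomes $\delta^2\beta r/2 = r\lambda/(2\beta)$, which is precisely the bound asserted in the lemma. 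Taking complements then shows that with probability at least $1 - e^{-r\lambda/2\beta}$ at least a $\beta-\sqrt{\lambda}$ fraction of $\vb_1,\dots,\vb_r$ are in components of size at most $k$, as required.

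There is no genuine obstacle here: the statement is a one-line Chernoff estimate. The only points that need (minor) care are verifying that $\delta \le 1$ so that the multiplicative Chernoff bound is applicable (handled by the case split above) and checking the algebraic identity $(1-\delta)\beta = \beta - \sqrt{\lambda}$ so that the tail event in the Chernoff bound matches the event in the lemma statement.
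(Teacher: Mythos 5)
Your proof is correct and matches the paper's argument, which simply states that the bound "follows directly by the Chernoff bounds" from the independence of the sampled vertices; you have filled in exactly the intended calculation (setting $\delta=\sqrt{\lambda}/\beta$ in the multiplicative lower-tail bound) plus the degenerate case $\sqrt{\lambda}\ge\beta$.
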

\begin{proof}
The vertices are sampled independently, so this follows directly by the
Chernoff bounds.
\end{proof}
Given this, we show that $\sum_{i=1}^r \Yb_{\vb_i}$ is reasonably large.
\begin{lemma}
\label{lm:findYacc}
Suppose a $\beta$ fraction of vertices of $G$
are in components of size at most $k$. Then with probability at least $1 -
e^{-r\lambda/2\beta} - r^2bk^3/n - (k/\lambda)^{O(k)}/\sqrt{r}$ over
$(\vb_i)_{i=1}^r$ and the order
of the stream, \[
\sum_{i=1}^r \Yb_{\vb_i} \ge r(\beta - 2\sqrt{\lambda}) \text{.}
\]
\end{lemma}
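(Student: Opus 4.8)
The plan is to combine Lemma~\ref{lm:betafrac} with a concentration argument for $\sum_{i=1}^r \Yb_{\vb_i}$ around its conditional expectation. First I would condition on the sampled vertices $(\vb_i)_{i=1}^r$. By Lemma~\ref{lm:Yexp}, for each $i$ we have $\E{\Yb_{\vb_i} \mid (\vb_i)_{i=1}^r} = \Ibb(c_{\vb_i} \le k)$, so \[ \E{\textstyle\sum_{i=1}^r \Yb_{\vb_i} \,\middle|\, (\vb_i)_{i=1}^r} = \#\set*{i : c_{\vb_i} \le k}. \] Lemma~\ref{lm:betafrac} tells us that, with probability at least $1 - e^{-r\lambda/2\beta}$ over the sampling, this count is at least $(\beta - \sqrt{\lambda})r$. (I would note that the $e^{-r\lambda/2\beta}$ bound in Lemma~\ref{lm:betafrac} is what we inherit here; in the relevant parameter regime it is dominated by the other error terms.)

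Next I would invoke Lemma~\ref{lm:Yindep}: with probability at least $1 - r^2 b k^3 / n$ over $(\vb_i)_{i=1}^r$, the variables $(\Yb_{\vb_i})_{i=1}^r$ are independent conditioned on the sample. By Lemma~\ref{lm:Yvar} each has variance at most $(k/\lambda)^{\bO{k}}$, so conditioned on both good events the sum $\sum_{i=1}^r \Yb_{\vb_i}$ has variance at most $r \cdot (k/\lambda)^{\bO{k}}$. Chebyshev's inequality then gives that, except with probability $(k/\lambda)^{\bO{k}}/(\sqrt{\lambda}\, r) \le (k/\lambda)^{\bO{k}}/\sqrt{r}$ (absorbing the $\sqrt\lambda$ into the exponent, or keeping $\lambda = 1/\poly(k)$ in mind), the sum deviates from its expectation by at most $\sqrt{\lambda}\, r$. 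Combining the three events via a union bound, with probability at least $1 - e^{-r\lambda/2\beta} - r^2 b k^3/n - (k/\lambda)^{\bO{k}}/\sqrt{r}$ we get \[ \sum_{i=1}^r \Yb_{\vb_i} \ge (\beta - \sqrt{\lambda})r - \sqrt{\lambda}\, r = r(\beta - 2\sqrt{\lambda}), \] which is exactly the claimed bound.

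I do not expect a serious obstacle here — the statement is essentially a mean-plus-deviation bound and all the ingredients (expectation, variance, conditional independence, and the sampling concentration) are already established in the preceding lemmas. The one point requiring a little care is bookkeeping the Chebyshev failure probability: the natural bound is $\Var / (\sqrt{\lambda} r)^2 = (k/\lambda)^{\bO{k}}/(\lambda r)$, and one must check this is subsumed by the stated $(k/\lambda)^{\bO{k}}/\sqrt{r}$ term (it is, since $1/(\lambda r) \le 1/\sqrt{r}$ once $r \ge 1/\lambda^2$, and for smaller $r$ the bound is vacuous as the claimed probability is already negative or the whole statement trivial). The other mild subtlety is that Chebyshev must be applied conditionally on the two sampling-dependent good events, which is fine because conditioning on $(\vb_i)_{i=1}^r$ fixes which $\Yb_{\vb_i}$ are nonzero and, on the independence event, leaves them independent — so the conditional variance bound still applies. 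Everything else is a union bound.
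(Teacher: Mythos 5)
Your proposal is correct and follows essentially the same route as the paper's proof: Lemma~\ref{lm:Yexp} and Lemma~\ref{lm:Yvar} for the conditional mean and variance, Lemma~\ref{lm:Yindep} for conditional independence, Lemma~\ref{lm:betafrac} to lower-bound the conditional expectation by $(\beta-\sqrt{\lambda})r$, and Chebyshev plus a union bound to finish. Your extra bookkeeping of the Chebyshev failure probability (absorbing the $1/\lambda$ factor into $(k/\lambda)^{\bO{k}}$ and using $1/r\le 1/\sqrt{r}$) is a detail the paper leaves implicit but handles identically.
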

\begin{proof}
For any $v$, $\Yb_{v}$ is zero whenever it is in a component of size greater
than $k$ and otherwise by Lemmas~\ref{lm:Yexp},~\ref{lm:Yvar}, it has
expectation $1$ and variance $(k/\lambda)^{\bO{k}}$. 

So \[
\sum_{i=1}^r \E{\Yb_{\vb_i} \middle| (\vb_i)_{i=1}^r}
=
\sum_{i=1}^r\Ibb(c_v \le k)
\]
and by Lemma~\ref{lm:Yindep}, with probability at least $1 - r^2bk^3/n$ over
$(\vb_i)_{i=1}^r$, the variables $\paren*{\Yb_{\vb_i}}_{i=1}^r$ are independent
with variances at most $(k/\lambda)^{\bO{k}}$. So by taking a union bound with
Chebyshev's inequality and the result of Lemma~\ref{lm:betafrac}, the lemma
follows.
\end{proof}
We show that $\sum_{i=1}^r \Zb_{\vb_i}$ is small (when $\lambda$ is small
enough) by invoking the bound on the expectation of individual $\Zb_v$.
\begin{lemma}
\label{lm:findZsmall}
Fix any value of $(\vb_i)_{i=1}^r$. As long as $\lambda \le 1/2ek^2$, with
probability at least $1 - O(\sqrt{\lambda} k^4)$ over the order of the stream,
\[
\sum_{i=1}^r\Zb_{\vb_i} \le r\sqrt{\lambda} \text{.}
\]
\end{lemma}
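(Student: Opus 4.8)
The plan is to reduce the statement to a one-line application of Markov's inequality, with all the real content imported from Lemma~\ref{lm:Zexp}. Fix $(\vb_i)_{i=1}^r$ once and for all; the only remaining randomness is the order of the stream (equivalently, the batch timestamps). Since $\Zb_v$ is by construction nonnegative for every $v$ (it equals the nonnegative weight $1/p_v^\lambda(\Sb_v)$ on the event $\Sb_v \neq K_v$ and is $0$ otherwise), the sum $\sum_{i=1}^r \Zb_{\vb_i}$ is nonnegative, so it suffices to bound its expectation.

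By linearity of expectation, $\E{\sum_{i=1}^r \Zb_{\vb_i}} = \sum_{i=1}^r \E{\Zb_{\vb_i}}$. The hypothesis $\lambda \le 1/2ek^2$ is exactly the one required by Lemma~\ref{lm:Zexp}, which gives $\E{\Zb_v} = \bO{\lambda k^4}$ for \emph{every} vertex $v$; in particular it holds for each realized $\vb_i$, which is why it is legitimate to have conditioned on $(\vb_i)_{i=1}^r$ first. Hence $\E{\sum_{i=1}^r \Zb_{\vb_i}} = \bO{r \lambda k^4}$, and Markov's inequality yields
\[
\Pb{\sum_{i=1}^r \Zb_{\vb_i} \ge r\sqrt{\lambda}} \le \frac{\bO{r\lambda k^4}}{r\sqrt{\lambda}} = \bO{\sqrt{\lambda}\, k^4}.
\]
Taking the complement gives $\sum_{i=1}^r \Zb_{\vb_i} \le r\sqrt{\lambda}$ with probability at least $1 - \bO{\sqrt{\lambda}\, k^4}$, which is the claim.

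I do not expect any genuine obstacle here: this lemma is the component-collection analogue of Lemma~\ref{lm:ZsmallC} for component counting, and both are immediate corollaries of the per-vertex bound Lemma~\ref{lm:Zexp} (whose proof, via the boundary-batch counting bound of Lemma~\ref{lm:boundary} and the observation that a ``wrong'' subgraph with an $a$-batch $v$-boundary is collected and covered by time $\lambda$ with probability at most $p_v^\lambda(\Sb_v)\lambda^a$, is where the actual work lies). The only point that needs a word of care is the order of quantifiers — conditioning on $(\vb_i)_{i=1}^r$ before invoking Lemma~\ref{lm:Zexp} — and this is harmless precisely because that lemma's bound is uniform over the choice of vertex.
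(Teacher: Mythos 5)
Your proof is correct and is exactly the paper's argument: the paper also invokes Lemma~\ref{lm:Zexp} for the per-vertex bound $\E{\Zb_v} = \bO{\lambda k^4}$ and then applies Markov's inequality to the (nonnegative) sum. Your added remarks on nonnegativity and on the order of conditioning are sound and merely make explicit what the paper leaves implicit.
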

\begin{proof}
By Lemma~\ref{lm:Zexp}, whenever $\lambda \le 1/2ek^2$, $\E{\Zb_v} = \bO{\lambda
k^4}$ for all $v \in V$, so this follows by a direct application of Markov's
inequality.
\end{proof} 
So now we have that, with good enough probability, $\sum_{i=1}^r
\Yb_{\vb_i}$ is large relative to $\sum_{i=1}^r\Zb_{\vb_i}$ and so we use
the fact that the output of $\ColComp$ usually almost matches the output of
$\ColCompIdeal$ to lower bound the probability with which our algorithm outputs
a valid component.
\begin{lemma}
\label{lm:compfindoutput}
As long as $\lambda \le 1/2ek^2$, with probability at least \[
1 - \bO{e^{-r\lambda/2\beta} + rk^4w\log^21/w + r^2bk^3/n +
(k/\lambda)^{O(k)}/\sqrt{r} + \sqrt{\lambda}k^4 + \sqrt{\lambda}/\beta}
\]
over $(\vb_i)_{i=1}^r$ and the order of the stream, $\Db^*$ is the component of
$G$ containing $\vb^*$.
\end{lemma}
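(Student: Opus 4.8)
The plan is to condition on a handful of good events and then show that, on their intersection, the weighted-sampling step at the end of $\FindComp$ lands on a ``correct'' index with high probability. First I would apply Lemma~\ref{lm:DapproxS} together with a union bound over the $r$ iterations of the inner loop: with probability $1 - \bO{rk^4w\log^2 1/w}$, for every $i \in \brac*{r}$ the output of $\ColComp(\vb_i,k)$ ``almost matches'' that of $\ColCompIdeal(\vb_i,k)$ — meaning that either both of the timestamps $\Tb_i, \Ub_i$ exceed $\lambda$ (so the run is discarded by $\FindComp$ and $\Xb_{\vb_i} = 0$ by definition), or $(\Db_i,\Dc_i) = (\Sb_i,\Sc_i)$ and both timestamps lie below $\lambda$. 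On this matching event the weight $p_i$ computed by $\FindComp$ equals $\Xb_{\vb_i}$ for every $i$, so $p = \sum_{i=1}^r \Xb_{\vb_i}$, and moreover the event ``$\Db_i$ is the true component $K_{\vb_i}$ of $G$'' coincides with the event ``$\Zb_{\vb_i} = 0$'' (equivalently $p_i = \Yb_{\vb_i}$), by the decomposition $\Xb_v = \Yb_v + \Zb_v$ and the definitions of $\Yb_v$ and $\Zb_v$.

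Given this, I would use that decomposition directly. Conditioned on the matching event and on the realizations $(\vb_i,\Db_i)_{i=1}^r$, the algorithm outputs a pair $(\vb^*,\Db^*)$ with $\Db^*$ equal to the true component exactly when the weighted choice falls on an index $i$ with $\Zb_{\vb_i} = 0$; since those are precisely the indices contributing to $\sum_i \Yb_{\vb_i}$, this probability is $\frac{\sum_i \Yb_{\vb_i}}{\sum_i \Yb_{\vb_i} + \sum_i \Zb_{\vb_i}}$ whenever $p > 0$. So I need $\sum_i \Yb_{\vb_i}$ to be bounded below and large relative to $\sum_i \Zb_{\vb_i}$. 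Lemma~\ref{lm:findYacc} supplies the former: with probability at least $1 - e^{-r\lambda/2\beta} - r^2bk^3/n - (k/\lambda)^{O(k)}/\sqrt r$ over $(\vb_i)_{i=1}^r$ and the order of the stream, $\sum_i \Yb_{\vb_i} \ge r(\beta - 2\sqrt\lambda)$. Lemma~\ref{lm:findZsmall} — which holds for \emph{every} fixed $(\vb_i)_{i=1}^r$ — supplies the latter: with probability at least $1 - \bO{\sqrt\lambda k^4}$ over the order of the stream, $\sum_i \Zb_{\vb_i} \le r\sqrt\lambda$.

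Finally I would combine these. We may assume $\sqrt\lambda \le \beta/2$, since otherwise the $\sqrt\lambda/\beta$ term already makes the claimed bound vacuous. On the intersection of the matching event, the lower bound on $\sum_i \Yb_{\vb_i}$, and the upper bound on $\sum_i \Zb_{\vb_i}$, we have $p \ge \sum_i \Yb_{\vb_i} \ge r(\beta - 2\sqrt\lambda) > 0$ and
\[
\frac{\sum_i \Yb_{\vb_i}}{\sum_i \Yb_{\vb_i} + \sum_i \Zb_{\vb_i}} \ge \frac{\beta - 2\sqrt\lambda}{\beta - \sqrt\lambda} \ge 1 - \frac{2\sqrt\lambda}{\beta},
\]
so, conditioned on these events, the weighted choice made by $\FindComp$ produces a ``bad'' index with probability $\bO{\sqrt\lambda/\beta}$. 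A union bound over the matching event (failure probability $\bO{rk^4w\log^2 1/w}$), the events of Lemmas~\ref{lm:findYacc} and~\ref{lm:findZsmall}, and this final $\bO{\sqrt\lambda/\beta}$ term yields the stated bound. The only mildly delicate step is the bookkeeping in the first paragraph — verifying that on the matching event $p_i$ really does equal $\Xb_{\vb_i}$ and that returning a correct component is exactly the event that the sampled index has $\Zb_{\vb_i} = 0$ — but this is a routine unwinding of the definitions of $\ColComp$, $\ColCompIdeal$, $\Xb_v$, $\Yb_v$, and $\Zb_v$ rather than a genuine obstacle; the ``hard'' ingredients (the concentration of $\sum_i\Yb_{\vb_i}$ and the smallness of $\E{\Zb_v}$) have already been isolated in the preceding lemmas.
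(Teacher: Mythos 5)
Your proposal is correct and follows essentially the same route as the paper's proof: condition on the $\ColComp$/$\ColCompIdeal$ matching event via Lemma~\ref{lm:DapproxS} and a union bound over the $r$ runs, then combine Lemmas~\ref{lm:findYacc} and~\ref{lm:findZsmall} to show the $\Yb$-mass dominates the $\Zb$-mass, so the weighted sample is correct with probability $1-\bO{\sqrt{\lambda}/\beta}$. Your writeup of the final ratio $\sum_i\Yb_{\vb_i}/(\sum_i\Yb_{\vb_i}+\sum_i\Zb_{\vb_i})\ge(\beta-2\sqrt{\lambda})/(\beta-\sqrt{\lambda})$ is in fact slightly cleaner than the paper's.
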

\begin{proof}
By Lemma~\ref{lm:DapproxS} and a union bound, with probability $1 -
\bO{k^4w\log^21/w}$, taking the output of an instance of $\ColComp$ and
proceeding iff the timestamp output is at most $\lambda$ will give the same
result as doing so with an instance of $\ColCompIdeal$. Therefore, by taking
a union bound over the $r$ iterations of the inner loop, with probability $1 -
\bO{rk^4w\log^21/w}$, \[
(p_i)_{i=1}^r = \paren{\Xb_{\vb_i}}_{i=1}^r\text{.}
\]
So recalling that $\Xb = \Yb + \Zb$, and taking a union bound over
Lemmas~\ref{lm:findYacc} and~\ref{lm:findZsmall}, we have that with probability at least \[
1 - \bO{e^{-r\lambda/2\beta} +  k^4w\log^21/w + r^2bk^3/n +
(k/\lambda)^{O(k)}/\sqrt{r} + \sqrt{\lambda}k^4}
\]
over $(\vb_i)_{i=1}^r$ and the order of the stream,
\begin{align*}
\frac{\sum_{i=1}^r \Yb_{\vb_i}}{\sum_{i=1}^r\Zb_{\vb_i}} &\ge \frac{r(\beta -
2\sqrt{\lambda})}{r(\beta - \sqrt{\lambda})}\\
&= 1 - \bO{\sqrt{\lambda} / \beta}
\end{align*}
So if this holds, the algorithm will output a correct component with
probability $1 - \bO{\sqrt{\lambda}/\beta}$, as $\Yb_i > 0$ iff $\Db_i$ is the
component of $G$ containing $\vb_i$. The lemma therefore follows from taking
one final union bound.
\end{proof}

Finally Theorem~\ref{theorem:compfinding} follows by carefully choosing the algorithm parameters in
terms of $w$, $b$, and $m$.
\compfinding*

\begin{proof}
Assume that $\delta \le 1/2$ (if it is in $(1/2,1)$, the result will follow from
the $1/2$ case). We start by setting $k = \ell$, and $\lambda =
\bTh{\delta^2/k^8 + \delta^2/\beta^2}$ such that the
$\bO{\sqrt{\lambda}/\beta}$ and $\bO{\sqrt{\lambda k^4}}$ terms in
Lemma~\ref{lm:compfindoutput} sum to at most $\delta/3$ and $\lambda \le
1/2ek^2$.  Then, we set $r = \paren{\ell/\beta\delta}^{\bTh{1/\varepsilon}}$
such that the $\bO{(k/\lambda)^{O(k)}/\sqrt{r}}$ and $e^{-r\lambda/2\beta}$
terms are at most $\delta/3$.

Now, consider the $\bO{r^2bk^3/n}$ term. If it is greater than $\delta/4$, we have \[
\ell n\log n \le \paren{\ell/\beta\delta}^{\bO{1/\varepsilon}}b\log n
\]
and so the theorem follows immediately by keeping the first $\ell$ edges
incident to each vertex we see.

If the $\bO{rk^4w\log^21/w}$ term is greater than $\delta/4$, we start by
noting that \[
w \log^2 1/w =  \bO{(1/m + w)\log^2 1/n} 
\] as $w \log^2 1/w =
\bO{m^{-1}\log m}$ when $w \le 1/m$, so we have \[
\ell n \log n \ge \paren{\ell/\beta\delta}^{\bO{1/\varepsilon}}(1 + wm)\log^3 n
\]
and so the theorem again follows.
 
If neither of these hold, Lemma~\ref{lm:compfindoutput} tells us that running
$\FindComp(k, \lambda, r)$ will give a vertex in $G$ and the component
containing it with probability $1 - \delta$. As the space needed is that
required to run $r$ copies of $\ColComp$, by Lemma~\ref{lm:compcollectionspace}
we achieve the desired space.
\end{proof}

\section*{Acknowledgements}
Ashish Chiplunkar was partially supported by the Pankaj Gupta New Faculty
Fellowship. John Kallaugher and Eric Price were supported by NSF Award
CCF-1751040 (CAREER). Michael Kapralov was supported by ERC Starting Grant
759471.

John was also supported by Laboratory Directed Research and Development program
at Sandia National Laboratories, a multimission laboratory managed and operated
by National Technology and Engineering Solutions of Sandia, LLC., a wholly
owned subsidiary of Honeywell International, Inc., for the U.S. Department of
Energy's National Nuclear Security Administration under contract DE-NA-0003525.
Also supported by the U.S.  Department of Energy, Office of Science, Office of
Advanced Scientific Computing Research, Accelerated Research in Quantum
Computing program.
\bibliographystyle{alpha}
\bibliography{paper}

\newcommand{\etalchar}[1]{$^{#1}$}
\begin{thebibliography}{BGM{\etalchar{+}}19}

\bibitem[AKL17]{AssadiKL17}
Sepehr Assadi, Sanjeev Khanna, and Yang Li.
\newblock On estimating maximum matching size in graph streams.
\newblock In Philip~N. Klein, editor, {\em Proceedings of the Twenty-Eighth
  Annual {ACM-SIAM} Symposium on Discrete Algorithms, {SODA} 2017, Barcelona,
  Spain, Hotel Porta Fira, January 16-19}, pages 1723--1742. {SIAM}, 2017.

\bibitem[AKSY20]{AssadiKSY20}
Sepehr Assadi, Gillat Kol, Raghuvansh~R. Saxena, and Huacheng Yu.
\newblock Multi-pass graph streaming lower bounds for cycle counting,
  {MAX-CUT}, matching size, and other problems.
\newblock In {\em 61st {IEEE} Annual Symposium on Foundations of Computer
  Science, {FOCS} 2020, Durham, NC, USA, November 16-19, 2020}, pages 354--364.
  {IEEE}, 2020.

\bibitem[AMS96]{AlonMS96}
Noga Alon, Yossi Matias, and Mario Szegedy.
\newblock The space complexity of approximating the frequency moments.
\newblock In Gary~L. Miller, editor, {\em Proceedings of the Twenty-Eighth
  Annual {ACM} Symposium on the Theory of Computing, Philadelphia,
  Pennsylvania, USA, May 22-24, 1996}, pages 20--29. {ACM}, 1996.

\bibitem[AN21]{DBLP:journals/corr/abs-2104-04908}
Sepehr Assadi and Vishvajeet N.
\newblock Graph streaming lower bounds for parameter estimation and property
  testing via a streaming {XOR} lemma.
\newblock {\em STOC}, 2021.

\bibitem[BGM{\etalchar{+}}19]{BuryGMMSVZ19}
Marc Bury, Elena Grigorescu, Andrew McGregor, Morteza Monemizadeh, Chris
  Schwiegelshohn, Sofya Vorotnikova, and Samson Zhou.
\newblock Structural results on matching estimation with applications to
  streaming.
\newblock {\em Algorithmica}, 81(1):367--392, 2019.

\bibitem[BS15]{BuryS15}
Marc Bury and Chris Schwiegelshohn.
\newblock Sublinear estimation of weighted matchings in dynamic data streams.
\newblock In Nikhil Bansal and Irene Finocchi, editors, {\em Algorithms - {ESA}
  2015 - 23rd Annual European Symposium, Patras, Greece, September 14-16, 2015,
  Proceedings}, volume 9294 of {\em Lecture Notes in Computer Science}, pages
  263--274. Springer, 2015.

\bibitem[CCE{\etalchar{+}}16]{ChitnisCEHMMV16}
Rajesh Chitnis, Graham Cormode, Hossein Esfandiari, MohammadTaghi Hajiaghayi,
  Andrew McGregor, Morteza Monemizadeh, and Sofya Vorotnikova.
\newblock Kernelization via sampling with applications to finding matchings and
  related problems in dynamic graph streams.
\newblock In Robert Krauthgamer, editor, {\em Proceedings of the Twenty-Seventh
  Annual {ACM-SIAM} Symposium on Discrete Algorithms, {SODA} 2016, Arlington,
  VA, USA, January 10-12, 2016}, pages 1326--1344. {SIAM}, 2016.

\bibitem[CCM08]{ChakrabartiCM08}
Amit Chakrabarti, Graham Cormode, and Andrew McGregor.
\newblock Robust lower bounds for communication and stream computation.
\newblock In Cynthia Dwork, editor, {\em Proceedings of the 40th Annual {ACM}
  Symposium on Theory of Computing, Victoria, British Columbia, Canada, May
  17-20, 2008}, pages 641--650. {ACM}, 2008.

\bibitem[CFPS20]{CzumajF0S20}
Artur Czumaj, Hendrik Fichtenberger, Pan Peng, and Christian Sohler.
\newblock Testable properties in general graphs and random order streaming.
\newblock In Jaroslaw Byrka and Raghu Meka, editors, {\em Approximation,
  Randomization, and Combinatorial Optimization. Algorithms and Techniques,
  {APPROX/RANDOM} 2020, August 17-19, 2020, Virtual Conference}, volume 176 of
  {\em LIPIcs}, pages 16:1--16:20. Schloss Dagstuhl - Leibniz-Zentrum f{\"{u}}r
  Informatik, 2020.

\bibitem[CGV20]{CGV20}
Chi-Ning Chou, Sasha Golovnev, and Santhoshini Velusamy.
\newblock Optimal streaming approximations for all boolean max-2csps and
  max-ksat.
\newblock In {\em 2020 IEEE 61st Annual Symposium on Foundations of Computer
  Science (FOCS)}, pages 330--341, 2020.

\bibitem[CJMM17]{CormodeJMM17}
Graham Cormode, Hossein Jowhari, Morteza Monemizadeh, and S.~Muthukrishnan.
\newblock The sparse awakens: Streaming algorithms for matching size estimation
  in sparse graphs.
\newblock In Kirk Pruhs and Christian Sohler, editors, {\em 25th Annual
  European Symposium on Algorithms, {ESA} 2017, September 4-6, 2017, Vienna,
  Austria}, volume~87 of {\em LIPIcs}, pages 29:1--29:15. Schloss Dagstuhl -
  Leibniz-Zentrum fuer Informatik, 2017.

\bibitem[CJP08]{ChakrabartiJP08}
Amit Chakrabarti, T.~S. Jayram, and Mihai Patrascu.
\newblock Tight lower bounds for selection in randomly ordered streams.
\newblock In Shang{-}Hua Teng, editor, {\em Proceedings of the Nineteenth
  Annual {ACM-SIAM} Symposium on Discrete Algorithms, {SODA} 2008, San
  Francisco, California, USA, January 20-22, 2008}, pages 720--729. {SIAM},
  2008.

\bibitem[Dav03]{DN03}
H.~A. (Herbert~Aron) David.
\newblock {\em Order statistics H.A. David, H.N. Nagaraja.}
\newblock John Wiley, Hoboken, N.J, 3rd ed. edition, 2003.

\bibitem[EHL{\etalchar{+}}15]{EsfandiariHLMO15}
Hossein Esfandiari, Mohammad~Taghi Hajiaghayi, Vahid Liaghat, Morteza
  Monemizadeh, and Krzysztof Onak.
\newblock Streaming algorithms for estimating the matching size in planar
  graphs and beyond.
\newblock In Piotr Indyk, editor, {\em Proceedings of the Twenty-Sixth Annual
  {ACM-SIAM} Symposium on Discrete Algorithms, {SODA} 2015, San Diego, CA, USA,
  January 4-6, 2015}, pages 1217--1233. {SIAM}, 2015.

\bibitem[EHM16]{EsfandiariHM16}
Hossein Esfandiari, MohammadTaghi Hajiaghayi, and Morteza Monemizadeh.
\newblock Finding large matchings in semi-streaming.
\newblock In Carlotta Domeniconi, Francesco Gullo, Francesco Bonchi, Josep
  Domingo{-}Ferrer, Ricardo~A. Baeza{-}Yates, Zhi{-}Hua Zhou, and Xindong Wu,
  editors, {\em {IEEE} International Conference on Data Mining Workshops,
  {ICDM} Workshops 2016, December 12-15, 2016, Barcelona, Spain.}, pages
  608--614. {IEEE} Computer Society, 2016.

\bibitem[GKK{\etalchar{+}}07]{GavinskyKKRW07}
Dmitry Gavinsky, Julia Kempe, Iordanis Kerenidis, Ran Raz, and Ronald de~Wolf.
\newblock Exponential separations for one-way quantum communication complexity,
  with applications to cryptography.
\newblock In David~S. Johnson and Uriel Feige, editors, {\em Proceedings of the
  39th Annual {ACM} Symposium on Theory of Computing, San Diego, California,
  USA, June 11-13, 2007}, pages 516--525. {ACM}, 2007.

\bibitem[GT19]{GuruswamiT19}
Venkatesan Guruswami and Runzhou Tao.
\newblock Streaming hardness of unique games.
\newblock In Dimitris Achlioptas and L{\'{a}}szl{\'{o}}~A. V{\'{e}}gh, editors,
  {\em Approximation, Randomization, and Combinatorial Optimization. Algorithms
  and Techniques, {APPROX/RANDOM} 2019, September 20-22, 2019, Massachusetts
  Institute of Technology, Cambridge, MA, {USA}}, volume 145 of {\em LIPIcs},
  pages 5:1--5:12. Schloss Dagstuhl - Leibniz-Zentrum f{\"{u}}r Informatik,
  2019.

\bibitem[GVV17]{GuruswamiVV17}
Venkatesan Guruswami, Ameya Velingker, and Santhoshini Velusamy.
\newblock Streaming complexity of approximating max {2CSP} and max acyclic
  subgraph.
\newblock In Klaus Jansen, Jos{\'{e}} D.~P. Rolim, David Williamson, and
  Santosh~S. Vempala, editors, {\em Approximation, Randomization, and
  Combinatorial Optimization. Algorithms and Techniques, {APPROX/RANDOM} 2017,
  August 16-18, 2017, Berkeley, CA, {USA}}, volume~81 of {\em LIPIcs}, pages
  8:1--8:19. Schloss Dagstuhl - Leibniz-Zentrum f{\"{u}}r Informatik, 2017.

\bibitem[KK19]{KapralovK19}
Michael Kapralov and Dmitry Krachun.
\newblock An optimal space lower bound for approximating {MAX-CUT}.
\newblock In Moses Charikar and Edith Cohen, editors, {\em Proceedings of the
  51st Annual {ACM} {SIGACT} Symposium on Theory of Computing, {STOC} 2019,
  Phoenix, AZ, USA, June 23-26, 2019}, pages 277--288. {ACM}, 2019.

\bibitem[KKL88]{KKL88}
J.~Kahn, G.~Kalai, and N.~Linial.
\newblock The influence of variables on boolean functions.
\newblock In {\em Proceedings of the 29th Annual Symposium on Foundations of
  Computer Science}, SFCS '88, pages 68--80, USA, 1988. IEEE Computer Society.

\bibitem[KKP18]{KallaugherKP18}
John Kallaugher, Michael Kapralov, and Eric Price.
\newblock The sketching complexity of graph and hypergraph counting.
\newblock In Mikkel Thorup, editor, {\em 59th {IEEE} Annual Symposium on
  Foundations of Computer Science, {FOCS} 2018, Paris, France, October 7-9,
  2018}, pages 556--567. {IEEE} Computer Society, 2018.

\bibitem[KKP22]{KallaugherKP21}
John Kallaugher, Michael Kapralov, and Eric Price.
\newblock Simulating random walks in random streams.
\newblock In {\em Proceedings of the 2022 {ACM-SIAM} Symposium on Discrete
  Algorithms, {SODA} 2022 (to appear)}, 2022.

\bibitem[KKS14]{KapralovKS14}
Michael Kapralov, Sanjeev Khanna, and Madhu Sudan.
\newblock Approximating matching size from random streams.
\newblock In Chandra Chekuri, editor, {\em Proceedings of the Twenty-Fifth
  Annual {ACM-SIAM} Symposium on Discrete Algorithms, {SODA} 2014, Portland,
  Oregon, USA, January 5-7, 2014}, pages 734--751. {SIAM}, 2014.

\bibitem[KKS15]{KapralovKS15}
Michael Kapralov, Sanjeev Khanna, and Madhu Sudan.
\newblock Streaming lower bounds for approximating {MAX-CUT}.
\newblock In Piotr Indyk, editor, {\em Proceedings of the Twenty-Sixth Annual
  {ACM-SIAM} Symposium on Discrete Algorithms, {SODA} 2015, San Diego, CA, USA,
  January 4-6, 2015}, pages 1263--1282. {SIAM}, 2015.

\bibitem[KKSV17]{KapralovKSV17}
Michael Kapralov, Sanjeev Khanna, Madhu Sudan, and Ameya Velingker.
\newblock {(1} + {\(\Omega\)}(1))-approximation to {MAX-CUT} requires linear
  space.
\newblock In Philip~N. Klein, editor, {\em Proceedings of the Twenty-Eighth
  Annual {ACM-SIAM} Symposium on Discrete Algorithms, {SODA} 2017, Barcelona,
  Spain, Hotel Porta Fira, January 16-19}, pages 1703--1722. {SIAM}, 2017.

\bibitem[KMNT20]{KapralovMNT20}
Michael Kapralov, Slobodan Mitrovic, Ashkan Norouzi{-}Fard, and Jakab Tardos.
\newblock Space efficient approximation to maximum matching size from uniform
  edge samples.
\newblock In Shuchi Chawla, editor, {\em Proceedings of the 2020 {ACM-SIAM}
  Symposium on Discrete Algorithms, {SODA} 2020, Salt Lake City, UT, USA,
  January 5-8, 2020}, pages 1753--1772. {SIAM}, 2020.

\bibitem[MMPS17]{MonemizadehMPS17}
Morteza Monemizadeh, S.~Muthukrishnan, Pan Peng, and Christian Sohler.
\newblock Testable bounded degree graph properties are random order streamable.
\newblock In Ioannis Chatzigiannakis, Piotr Indyk, Fabian Kuhn, and Anca
  Muscholl, editors, {\em 44th International Colloquium on Automata, Languages,
  and Programming, {ICALP} 2017, July 10-14, 2017, Warsaw, Poland}, volume~80
  of {\em LIPIcs}, pages 131:1--131:14. Schloss Dagstuhl - Leibniz-Zentrum fuer
  Informatik, 2017.

\bibitem[MV16]{McGregorV16}
Andrew McGregor and Sofya Vorotnikova.
\newblock Planar matching in streams revisited.
\newblock In Klaus Jansen, Claire Mathieu, Jos{\'{e}} D.~P. Rolim, and Chris
  Umans, editors, {\em Approximation, Randomization, and Combinatorial
  Optimization. Algorithms and Techniques, {APPROX/RANDOM} 2016, September 7-9,
  2016, Paris, France}, volume~60 of {\em LIPIcs}, pages 17:1--17:12. Schloss
  Dagstuhl - Leibniz-Zentrum fuer Informatik, 2016.

\bibitem[MV18]{V18}
Andrew McGregor and Sofya Vorotnikova.
\newblock A simple, space-efficient, streaming algorithm for matchings in low
  arboricity graphs.
\newblock In Raimund Seidel, editor, {\em 1st Symposium on Simplicity in
  Algorithms, {SOSA} 2018, January 7-10, 2018, New Orleans, LA, {USA}},
  volume~61 of {\em {OASICS}}, pages 14:1--14:4. Schloss Dagstuhl -
  Leibniz-Zentrum fuer Informatik, 2018.

\bibitem[PS18]{PS18}
Pan Peng and Christian Sohler.
\newblock Estimating graph parameters from random order streams.
\newblock In {\em Proceedings of the Twenty-Ninth Annual ACM-SIAM Symposium on
  Discrete Algorithms}, SODA '18, pages 2449--2466, USA, 2018. Society for
  Industrial and Applied Mathematics.

\bibitem[VY11]{VerbinY11}
Elad Verbin and Wei Yu.
\newblock The streaming complexity of cycle counting, sorting by reversals, and
  other problems.
\newblock In Dana Randall, editor, {\em Proceedings of the Twenty-Second Annual
  {ACM-SIAM} Symposium on Discrete Algorithms, {SODA} 2011, San Francisco,
  California, USA, January 23-25, 2011}, pages 11--25. {SIAM}, 2011.

\end{thebibliography}

\begin{appendix}
\section{Generating Timestamps in the Stream}
\label{app:streamstamps}
In this section we show how an algorithm can generate $n$ timestamps in a
streaming manner, corresponding to drawing $n$ uniform random variables from
$(0,1)$ and then presenting each in order with $\poly(1/n)$ precision, using
$\bO{\log n}$ bits of space.

Let $(\Xb_i)_{i=1}^n$ denote $n$ variables drawn independently from
$\Uc(0,1)$ and then ordered so that $\Xb_i \le \Xb_{i+1}$ for all $i \in
\brac{n-1}$. By standard results on the order statistics (see
e.g.\ page 17 of~\cite{DN03}), the distribution of $(\Xb_i)_{i=j+1}^n$ depends
only on $\Xb_j$, and in particular they are distributed as drawing $(n-j)$
samples from $(\Xb_j, 1)$.

So then, to generate $(\Xb_i)_{i=1}^n$ with $\poly(1/n)$ precision in the
stream it will suffice to, at each step $i+1$, use $\Xb_{i}$ to generate
$\Xb_{i+1}$ (as sampling from the minimum of $k$ random variables to
$\poly(1/n)$ precision can be done in $\bO{\log n}$ space). We will only need
to store one previous variable at a time, to $\poly(1/n)$ precision, and so
this algorithm will require only $\bO{\log n}$ space.

\end{appendix}

\end{document}